\newcommand{\oursystem}{{our framework}}
\newcommand{\attr}{{\mathcal{A}}}
\newcommand{\allclusters}{{\mathcal{C}}}
\newcommand{\soln}{{\mathcal {O}}}
\newcommand{\maxval}{{\tt Max{-}Avg}}
\newcommand{\minsize}{{\tt Min{-}Size}}
\newcommand{\val}{{\mathtt{val}}}
\newcommand{\avg}{{\mathtt{avg}}}
\newcommand{\cov}{{\mathtt{cov}}}
\newcommand{\bottomup}{{\tt Bottom-Up}}
\newcommand{\bottomupindiv}{{\tt Bottom-Up-Indiv}}
\newcommand{\fixedorder}{{\tt Fixed-Order}}
\newcommand{\randomorder}{{\tt Random-Order}}
\newcommand{\levelbased}{{\tt Bottom-Up-D}}
\newcommand{\hybrid}{{\tt Hybrid}}
\newcommand{\rhybrid}{{\tt R-Hybrid}}
\newcommand{\avgmax}{{\tt AvgMax}}
\newcommand{\deltajudgment}{{\tt Delta-Judgment}}
\definecolor{LightCyan}{rgb}{0.88,1,1}
\definecolor{Gray}{gray}{0.9}
\newcommand{\greedyavg}{{\tt Level-Greedy-Avg}}
\newcommand{\greedyavgL}{{\tt Level-Greedy-Avg-L}}
\newcommand{\mergeindiv}{{\tt Merge-Greedy-Indiv}}
\newcommand{\mergeall}{{\tt Merge-Greedy-All}}
\newcommand{\merge}{{\tt Merge}}
\newcommand{\updatesolution}{{\tt UpdateSolution}}
\newcommand{\yuhao}[1]{{\texttt {\color{magenta} Yuhao: [{#1}]}}}
\newcommand{\sudeepa}[1]{{{\tt \color{blue} Sudeepa: [{#1}]}}}
\newcommand{\red}[1]{#1} 
\newcommand{\FindClusters}{{\tt FindClusters}}
\newcommand{\proj}[1]{{\Pi}}
\newcommand{\sel}[1]{{\sigma}}
\newcommand{\cut}[1]{}
\newcommand{\cutfull}[1]{}
\newcommand{\commentresolved}[1]{}
\newcommand{\ie}{{\it i.e.}} 
\newcommand{\eg}{{\it e.g.}} 
\newtheorem{theorem}{Theorem}[section]          	
\newaliascnt{lemma}{theorem}				
\newaliascnt{conjecture}{theorem}			
\newaliascnt{remark}{theorem}				
\newaliascnt{corollary}{theorem}			
\newaliascnt{definition}{theorem}			
\newtheorem{definition}[definition]{Definition}    
\newaliascnt{proposition}{theorem}			
\newtheorem{proposition}[proposition]{Proposition}  
\newaliascnt{example}{theorem}			
\newtheorem{example}[example]{Example}  	
\newaliascnt{observation}{theorem}			
\newcommand{\reva}[1]{#1}
\newcommand{\revb}[1]{#1}
\newcommand{\revc}[1]{#1}
\newcommand{\ansa}[1]{#1}
\newcommand{\ansb}[1]{#1}
\newcommand{\ansc}[1]{#1}
\begin{document}

\clearpage


\title{Interactive Summarization and Exploration of\\ Top Aggregate Query Answers}



%
%
%
%

\numberofauthors{1}


%
%


\author{
\alignauthor
Yuhao Wen, Xiaodan Zhu, Sudeepa Roy, Jun Yang\\
       \affaddr{Department of Computer Science, Duke University}\\
       \email{{\{ywen, xdzhu, sudeepa, junyang\}}@cs.duke.edu}
}




\maketitle

\begin{abstract}

We present a system for summarization and interactive exploration of high-valued aggregate query answers to make a large set of possible answers more informative to the user.
Our system outputs a set of clusters on the high-valued query answers showing their common properties such that the clusters are diverse as much as possible to avoid repeating information, and cover a certain number of top original answers as indicated by the user. Further, the system facilitates interactive exploration of the query answers by helping the user (i) choose combinations of parameters for clustering, (ii) inspect the clusters as well as the elements they contain, and (iii) visualize how changes in parameters affect clustering. We define optimization problems, study their complexity, explore properties of the solutions investigating the semi-lattice structure on the clusters, and propose efficient algorithms and optimizations to achieve these goals. We evaluate our techniques experimentally and discuss our prototype with a graphical user interface that facilitates this interactive exploration. \ansa{A user study is conducted to evaluate the usability of our approach.}
\end{abstract}

\cut{
Making top-ranked query answers more meaningful to users has attracted much attention recently. Previous research considered three aspects of answers returned to the user -- relevance, diversity, and coverage, although focused only on one or two of them at the same time. In this work, we propose a novel framework for summarizing aggregate query answers by clustering result tuples with identical values for subsets of the attributes, while considers all three aspects simultaneously. In particular, given a distance parameter D and coverage parameter L, instead of returning the top-k original result tuples ranked by their values, our framework outputs up to k clusters such that (a) these clusters are at least at distance D from each other (diversity), (b) together they cover the top-L original result tuples (coverage), and (c) the value of the clusters is maximized (relevance). We explore properties of the solutions investigating the semi-lattice structure on the clusters, present complexity results for the optimization problems, propose efficient algorithms to find the clusters, and evaluate them with experiments on real data. We have also built a prototype with an intuitive interface for users to explore query answer in a layered fashion (with clusters and the original result tuples they contain) and to adjust the input parameters interactively.

}

\setcounter{page}{1}
\begin{sloppypar}
\begin{figure*}[t]
\centering
\begin{minipage}[t]{.48\textwidth}
  \centering
\subfloat[Top-8 and bottom-8 tuples with values as score]{{\scriptsize
\begin{tabular}{|c|c|c|c|c|c|}
\hline
{\tt Rank} & {\tt hdec} & {\tt agegrp} & {\tt gender} & {\tt occupation} & {\tt val}   \\\hline
1 & 1975 	& 20s 	&  M 	& Student 	& 4.24 \\\hline
2 & 1980 	& 20s 	& M  	&  Programmer 	& 4.13 \\\hline
3 & 1980 	& 10s 	& M 	& Student  	&  3.96 \\\hline
4 & 1980 	& 20s		 & M 	& Student  	&  3.91  \\\hline
5 & 1985 	& 20s 	& M 	& Programmer  	&  3.86  \\\hline
6 & 1980 	& 20s 	& M & Engineer  	&  3.83  \\\hline
7 & 1985 	& 10s 	& M 	& Student 	& 3.77  \\\hline
8 & 1985 	& 20s 	& M 	& Student 	& 3.76  \\\hline
\multicolumn{6}{|c|}{$\ldots\ldots\ldots\ldots\ldots\ldots$}\\\hline
43 & 1995 & 30s  &  M  &  Marketing  &  3.02  \\\hline
44 & 1995 & 20s  &  M  &  Technician  &  2.92  \\\hline
45 & 1995 & 30s  &  M  &  Entertainment  &  2.91  \\\hline
46 & 1995 & 20s  &  M  &  Executive  &  2.91  \\\hline
47 & 1995 & 30s  &  F  &  Librarian  &  2.84  \\\hline
48 & 1995 & 30s  &  M  &  Student  &  2.81  \\\hline
49 & 1995 & 20s  &  M  &  Writer  &  2.51  \\\hline
50 & 1995 & 20s  &  F  &  Healthcare  &  1.98 \\\hline
\end{tabular}
}\label{fig:eg-intro-top-bottom}}
\end{minipage}~~
\begin{minipage}[t]{.48\textwidth}
 \vspace*{\fill}
\subfloat[Clusters with average score]{{\scriptsize
\begin{tabular}{|c|c|c|c||c|c|}
\hline

 {\tt hdec} & {\tt agegrp} & {\tt gender} & {\tt occupation} & {\tt avg val}  &  \\\hline
\rowcolor{Gray}
{\bf 1975}    & {\bf 20s}    & {\bf M}    & {\bf  Student}  &   {\bf 4.24 }   & $\blacktriangledown$  \\\hline
\rowcolor{Gray}
{\bf 1980}    & {\bf $*$}    & {\bf M}    & {\bf $*$}   & {\bf  3.96}  & $\blacktriangledown$   \\\hline
\rowcolor{Gray}
{\bf 1985}  &  {\bf 20s}  &   {\bf  M }   & {\bf Programmer}   & {\bf  3.86}  & $\blacktriangledown$   \\\hline
\rowcolor{Gray}
{\bf 1985}   & {\bf $*$}  &   {\bf M} &    {\bf Student} &   {\bf  3.76  }   & $\blacktriangledown$\\\hline
\end{tabular}
}
\label{fig:eg-intro-clusters}
}
\vfill
\subfloat[Original result tuples (with ranks) in the clusters]{{\scriptsize
\begin{tabular}{|c|c|c|c||c|c|}
\hline
 {\tt hdec} & {\tt agegrp} & {\tt gender} & {\tt occupation} & {\tt avg val} & {\tt rank}  \\\hline
\rowcolor{Gray}
{\bf 1975}    & {\bf 20s}    & {\bf M}    & {\bf  Student}  &   {\bf 4.24 } & $\blacktriangledown$  \\\hline
1975 	& 20s 	&  M 	& Student 	& 4.24 & 1 \\\hline
\rowcolor{Gray}
{\bf 1980}    & {\bf $*$}    & {\bf M}    & {\bf $*$}   & {\bf  3.96}  &  $\blacktriangledown$ \\\hline
1980 	& 20s 	& M  	&  Programmer 	& 4.13  & 2\\\hline
1980 	& 10s 	& M 	& Student  	&  3.96 & 3\\\hline
1980 	& 20s		 & M 	& Student  	&  3.91 & 4 \\\hline
1980 	& 20s 	& M & Engineer  	&  3.83 & 6 \\\hline
\rowcolor{Gray}
{\bf 1985}  &  {\bf 20s}  &   {\bf  M }   & {\bf Programmer}   & {\bf  3.86} &   $\blacktriangledown$\\\hline
1985 	& 20s 	& M 	& Programmer 	& 3.86 & 5  \\\hline
\rowcolor{Gray}
{\bf 1985}   & {\bf *}  &   {\bf M} &    {\bf Student} &   {\bf  3.76  }  & $\blacktriangledown$\\\hline
1985 	& 10s 	& M 	& Student  	& 3.77 & 7  \\\hline
1985 	& 20s 	& M 	& Student 	& 3.76 & 8  \\\hline
\end{tabular}
}
\label{fig:eg-intro-two-layer}
}
\end{minipage}
\vspace{-0.3cm}
\caption{Illustrating our framework for $k = 4$, $L = 8$, $D = 2$. In general, original result tuples outside top-$L$ can belong to the output clusters, although here we happen to have a solution that covers just the top-$L$ result tuples.}
\label{fig_sim}
\vspace{-0.5cm}
\end{figure*}

\section{Introduction}\label{sec:introduction}

Summarization and diversification of query results have recently drawn significant attention in databases and other applications 
such as keyword search, recommendation systems, and online shopping. The goal of both result summarization and result diversification is to make a large set of possible answers more informative to the user, since the user is likely not to view results beyond a small number. 
This brings the need to make the \emph{top-$k$} results displayed to the user \emph{summarized} (the results should be grouped and summarized to reveal high-level patterns among answers), \emph{relevant} (the results should have high \emph{value} or \emph{score} with respect to user's query or a database query), \emph{diverse} (the results should avoid repeating information), and also providing \emph{coverage} (the results should cover top answers from the original non-summarized result set).
In this paper, we present a framework to summarize and explore high valued aggregate query answers to understand their common properties easily and efficiently while meeting the above competing goals simultaneously. We illustrate the challenges and our contributions using the following example:
\begin{example}\label{eg:intro}
Suppose an analyst is using the movie ratings data from the \emph{MovieLens} website \cite{movielensdata} to investigate average ratings of different genres of movies by different groups of users over different time periods. So the analyst first joins several relations from this dataset (information about movies, ratings, users, and their occupations) to one relation $R$, extracts some additional features from the original attributes (age group, decade, half-decade), and then runs the following SQL aggregate query on  $R$ (the join is omitted for simplicity). In this query, {\tt hdec} denotes disjoint five-year windows of half-decades, \eg, 1990 (=1990-94), 1995 (=1995-99), etc.; {\tt agegrp} denotes age groups of the users in their teens or 10s (\ie, 10-19), 20s (\ie, 20-29), etc. 

\vspace*{-0.5ex}
{\small
\begin{verbatim} 
SELECT hdec, agegrp, gender, occupation, avg(rating) as val
FROM R
GROUP BY hdec, agegrp, gender, occupation 
WHERE genres_adventure = 1
HAVING count(*) > 50
ORDER BY val DESC
\end{verbatim}
}
\vspace*{-0.5ex}
The  top 8 and bottom 8 results from this query are shown in Figure~\ref{fig:eg-intro-top-bottom}. {To have a quick summary of these 50 result tuples, The data analyst is interested in seeing the summary in at most four rows to have an idea of the 
 viewers and time periods with a high rating for the adventure genre. }
\end{example}
One straightforward option is to output the top 4 result tuples from Figure~\ref{fig:eg-intro-top-bottom}, but they do not summarize the common properties of the intended viewers/times periods. In addition,  despite having high scores, they have attribute values that are close to each other (\eg, male students in their 20s) leading to repetition of information and sub-optimal use of the designated space of $k = 4$ rows. More importantly, \emph{the top $k$ original tuples may give a wrong impression on the common properties of high-valued tuples even if they all share those properties}. For instance, three out of top four tuples share the property {\tt (20s, M)}, but it is misleading, since a closer look at Figure~\ref{fig:eg-intro-top-bottom} reveals that many tuples with low values (49th, 46th, 44th) share this property too, suggesting that male viewers in their 20s 
may or may not give high ratings to the adventure genre. \reva{Therefore, we aim to achieve a summarization with the following desiderata: (i) it should be simple and memorable (\eg, male students or {\tt (M, Students)}), (ii) it should be diverse (\eg, {\tt (1975, 20s, M, Student)} and {\tt (1980, 20s, M, Student)} might be too similar), and (iii) it should be discriminative  (\eg, the properties like {\tt (20s, M)} covering both high and low valued tuples should be avoided). Furthermore, it should be achieved at an interactive speed and displayed using a user-friendly interface.}
\par
In recent years, work has been done to \emph{diversify} a set of result tuples by selecting a subset of them (discussed further in Section~\ref{sec:related}), \eg, \emph{diversified top-$k$} \cite{QinYC12} takes account of diversity and relevance while selecting top result tuples; \emph{DisC diversity} \cite{drosou2012disc}  takes into account similarity with the tuples that have not been selected, and diversity and relevance in the selected ones. In contrast, we intend to output summarized information on the result tuples by displaying the common attribute values in each cluster to give the user a holistic view of the result tuples with high value. In this direction, the \emph{smart drill-down} \cite{JoglekarGP16} framework helps the user explore summarized  ``interesting'' tuples in a database, but it does not focus on aggregate answers, or helping the user choose input parameters and understand consecutive solutions, which are two key features of our framework. \red{As discussed in Section~\ref{sec:related} and observed in experiments in our initial exploration, standard clustering or classification approaches do not give a meaningful summary of high-valued results as well.} In particular, we support summarization and interactive exploration of aggregate answers in the following ways each posing its own technical challenges.\\

\noindent\textbf{(1) Summarizing Aggregate Answers with Relevance, Diversity, and Coverage.~~}
 \red{To meet the desiderata of a good summarization, the basic operation of our framework involves 
generating a set of \emph{clusters}  summarizing the common properties or common attribute values of high-valued answers (Section~\ref{sec:prelim}). If all elements in a cluster do not share the same value for an attribute, then the value of that attribute is replaced with a `$*$' \footnote{\red{Our framework and algorithms can be extended to more fine-grained generalizations of values beyond $*$ (by introducing a concept hierarchy over the domain). Details in Appendix~\ref{app:range}. 
}}.
The 
clusters can be expanded to show the elements contained in them to the user.} To compute the clusters, our framework can take (up to) three parameters as input: (i) \emph{size constraint} $k$ denotes the number of  rows or clusters to be displayed 
($k$ = 4 in Example~\ref{eg:intro}), (ii) \emph{coverage parameter $L$},  requiring that the top-$L$ tuples in the original ranking must be covered by the $k$ chosen clusters, and (iii) \emph{distance parameter $D$}, requiring that the summaries should be at least distance $D$ from each other to avoid repeating similar information. 
\cut{
Once the results are summarized given the values of $k, L, D$, the user can see the original result tuples in the \emph{second layer} of the framework by expanding the summaries or \emph{clusters} shown in the top layer. We  illustrate the concepts here using the running example (further details in Section~\ref{sec:framework}):
}
\begin{example}\label{eg:intro-clusters}
Suppose we run \oursystem\ for the query in Example~\ref{eg:intro} with parameters $k = 4, L = 8$, and $D = 2$, \ie, the user would like to see at most 4 clusters, 
these clusters should cover top $8$ tuples from Figure~\ref{fig:eg-intro-top-bottom}, and any two clusters should not have identical values for more than two attributes.
Our framework first displays the four clusters shown in Figure~\ref{fig:eg-intro-clusters} 
along with 
the average scores of result tuples contained in them.

\cut{
\begin{figure}[h]
{\centering
{\scriptsize
\begin{tabular}{|c|c|c|c||c|c|}
\hline
 {\tt hdec} & {\tt agegrp} & {\tt gender} & {\tt occupation} & {\tt avg val}  &  \\\hline
{\bf 1975}    & {\bf 20s}    & {\bf M}    & {\bf  Student}  &   {\bf 4.24 }   & $\blacktriangledown$  \\\hline
{\bf 1980}    & {\bf $*$}    & {\bf M}    & {\bf $*$}   & {\bf  3.96}  & $\blacktriangledown$   \\\hline
{\bf 1985}  &  {\bf 20}  &   {\bf  M }   & {\bf Programmer}   & {\bf  3.86}  & $\blacktriangledown$   \\\hline
{\bf 1985}   & {\bf $*$}  &   {\bf M} &    {\bf Student} &   {\bf  3.76  }   & $\blacktriangledown$\\\hline
\end{tabular}
}
\caption{{\small \bf Clusters with average score.}} 
\vspace{-0.5cm}
\label{fig:eg-intro-clusters}
}
\end{figure}
}

The user may choose to investigate any of these clusters by expanding the cluster on \oursystem\ (clicking  $\blacktriangledown$). If all four clusters are expanded by the user, the second-layer  will reveal all original result tuples they cover, as shown in Figure~\ref{fig:eg-intro-two-layer}. In this particular example, no other tuples outside top 8 have been chosen by our algorithm (which is also the optimal solution), but in general, the selected clusters may contain other tuples (high-valued but not necessarily in top $L$). 

\cut{
\begin{figure}[h]
{\centering
{\scriptsize
\begin{tabular}{|c|c|c|c||c|c|}
\hline
 {\tt hdec} & {\tt agegrp} & {\tt gender} & {\tt occupation} & {\tt avg val} & {\tt rank}  \\\hline
\rowcolor{Gray}
{\bf 1975}    & {\bf 20s}    & {\bf M}    & {\bf  Student}  &   {\bf 4.24 } & $\blacktriangledown$  \\\hline
1975 	& 20s 	&  M 	& Student 	& 4.24 & 1 \\\hline
\rowcolor{Gray}
{\bf 1980}    & {\bf $*$}    & {\bf M}    & {\bf $*$}   & {\bf  3.96}  &  $\blacktriangledown$ \\\hline
1980 	& 20s 	& M  	&  Programmer 	& 4.13  & 2\\\hline
1980 	& 10s 	& M 	& Student  	&  3.96 & 3\\\hline
1980 	& 20s		 & M 	& Student  	&  3.91 & 4 \\\hline
1980 	& 20s 	& M & Engineer  	&  3.83 & 6 \\\hline
\rowcolor{Gray}
{\bf 1985}  &  {\bf 20}  &   {\bf  M }   & {\bf Programmer}   & {\bf  3.86} &   $\blacktriangledown$\\\hline
1985	& 20s 	& M 	& Programmer  	&  3.86 & 5 \\\hline
\rowcolor{Gray}
{\bf 1985}   & {\bf $*$}  &   {\bf M} &    {\bf Student} &   {\bf  3.76  }  & $\blacktriangledown$\\\hline
1985 	& 20s 	& M 	& Student 	& 3.76 & 7  \\\hline
1985 	& 10s 	& M 	& Student  	& 3.76 & 8 \\\hline
\end{tabular}
}
\vspace{-2mm}
\caption{{\small \bf Original result tuples (with ranks) in the clusters if all clusters are expanded.}}
\vspace{-6mm}
\label{fig:eg-intro-two-layer}
}
\end{figure}
}

\end{example}
The above example illustrates several advantages and features of \oursystem\ in providing a meaningful and holistic summary of high-valued aggregate query answers. {\em First,} the original top 8 result tuples are not lost thanks to the second layer, whereas the properties that combine multiple top result tuples are clearly highlighted in the clusters in the first layer. {\em Second,} the chosen clusters are diverse, each contributing some extra novelty to the answer. {\em Third,} the clustering captures the properties of the top result tuples that distinguish them from those with low values. \reva{For instance, the cluster for  {\tt (20s, M)} does not appear in the solution, since this property is prevalent in both high-valued and low-valued tuples as discussed before. Clearly, this could not be achieved by simply clustering top $L$ tuples by $k$ clusters. This is ensured by our objective function that aims to maximize the {\em average} value of the tuples covered by all clusters (instead of maximizing the sum).}  
\par
To achieve the solution as described above, we make the following technical contributions in the paper:
\begin{itemize}[leftmargin=*]
\itemsep0em
\item To ensure that the chosen clusters cover answer tuples with high values, we formulate an optimization problem that takes $k, L, D$ as input, and outputs clusters such that the \emph{average value} of the tuples covered by these clusters is maximized.
We study the complexity of the above problem (both decision and optimization versions) and show NP-hardness results (Section~\ref{sec:properties}).
\item We design efficient heuristics \reva{satisfying all constraints} using properties 
of the 
\emph{semi-lattice structure} on the clusters imposed by the attributes (Section~\ref{sec:algorithms}). 
\item We perform extensive experimental evaluation using the MovieLens \cite{movielensdata} and TPC-DS benchmark \cite{nambiar2006making} datasets (Section~\ref{sec:experiments}).
\end{itemize}
\noindent\textbf{(2) Interactive Clustering and Parameter Selection.~~}
The intended application of our framework is an interactive exploration of query results where the user 
may keep updating $k, L$, or $D$ to understand the key properties of the high-valued aggregate answers. \ansa{One challenge in this exploration is to select values of $k, L, D$ while ensuring interactive speed, since straightforward implementations of our algorithms would not be fast enough. To support parameter selection, we provide the user with a holistic view of how the objective varies with different choices of parameters. This view helps users identify ``flat regions'' (uninteresting for parameter changes) vs.\ ``knee points'' (possibly interesting for parameter changes) in the parameter space.
One example is shown in Figure~\ref{fig:intro-guid}, where given selected values of $L = 15$ as in  Section~\ref{sec:guid-visualization}, how the average value of the solutions ($y$-axis) varies with different $k$ ($x$-axis) is shown. This figure illustrates that if $k$ is changed from  $k = 11$ to $k = 7$, there will be a drop in the overall value. The user can select different legends for different lines and check the value in detail by hovering over a point. This visualization also helps the user validate the choice of parameters, \eg, if a smaller value of $k$ can give a similar quality result, the user may want to reduce the value of $k$ to have a more compact solution. 
\cut{
Indeed, an alternative approach of automatically selecting ``good'' parameter values would be an interesting direction of future work, whereas this work focuses on helping users choose parameter values since the ``goodness'' of parameters often depends on the tasks and preferences of different users (more in Section~\ref{sec:guid-visualization}).}
}
\reva{This feature not only helps in guiding the user select parameter values\footnote{\ansc{
To further assist in parameter selection, our system also allows visual comparison of two successive solutions showing how the clusters are redistributed. We formulated an optimization problem to enable clean visualization and provided optimal solutions. 
The details are in Appendix~\ref{sec:visualization} and in our demonstration paper in SIGMOD 2018~\cite{qagviewdemo2018}.}}, 
it also serves as a \emph{precomputation} step to retrieve the actual solutions 
for different combinations of input parameters $k, L$, and $D$ at an interactive speed.}
(Section~\ref{sec:guidance}). 
\begin{itemize}[leftmargin=*]
\vspace{-0.5mm}
\itemsep0em
\item We develop techniques for \emph{incremental computation and efficient storage} for solutions for multiple combinations of input parameters 
using an \emph{interval tree} data structure.
\item We implement multiple optimization techniques to further speed up computation of these solutions. We evaluate the effect of these optimizations experimentally. \reva{Eventually we achieved 30x-1000x speed up from these optimizations, which helped us achieve our goal of interactive speed.}
\vspace{-0.5mm}
\end{itemize}

\cut{
(b) However, the straightforward implementations of these heuristics did not give interactive speed of our algorithms, which we solved by developing multiple optimizations (precomputation, batch computation, interval tree data structure, incremental updates, using hash values for non-numeric fields etc.) (also see Section~\ref{sec:guidance}). Eventually we achieved 30x-1000x speed up from these optimizations, which helped us achieve our goal of interactive speed.\\
(c) As an additional feature, these optimizations provided the users a visual guide for parameter selection showing a holistic view of how the objective varies with different choices of parameters, and identify ``flat regions'' (uninteresting for parameter changes) vs. ``knee points'' (possibly interesting for parameter changes).\\
}

\begin{figure}[t]
\includegraphics[width=0.9\linewidth]{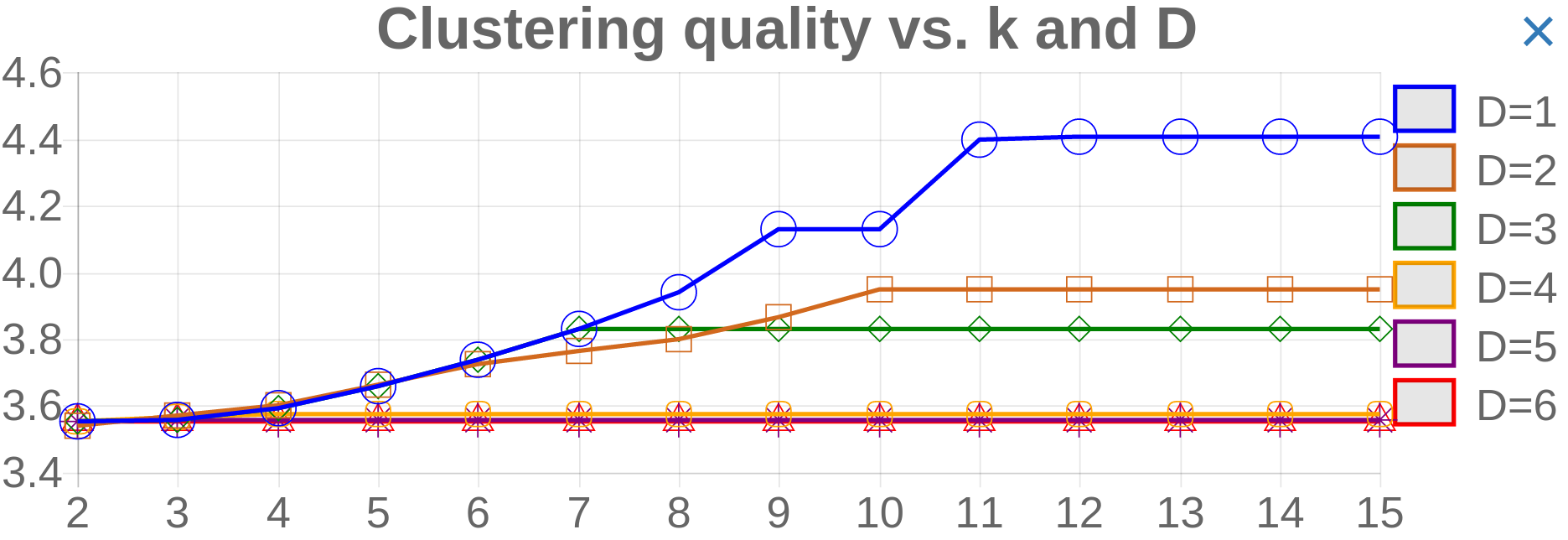}
\vspace*{-2ex}
\caption{\ansc{Visualization for parameter selection: how results vary for different $k$ and $D$ (some lines overlap).}}
\vspace{-2mm}
\label{fig:intro-guid}
\end{figure} 



\cut{
\noindent\textbf{(3) Visualizing Changes in Clustering.~~}
When the user explores the solution space updating an input parameter, in some scenarios the solution can change marginally, whereas in others it can change drastically. To help the user understand how two consecutive solutions compare with each other, our framework produces a visualization showing the old and new solution, and how the tuples in these clusters are redistributed (also the size of the cluster, the fraction of top-$L$ tuples contained in them, etc.). For example, Figure~\ref{fig:intro-vizcompare} shows that  if $k = 4$ in Example~\ref{eg:intro-clusters} is changed to  $k = 3$, then two of the clusters will merge to form the new solution (Section~\ref{sec:visualization}). 
\begin{itemize}[leftmargin=*]
\itemsep0em
\item To achieve a \emph{clean} visualization, we formulate an optimization problem that minimizes the crossing of the bands showing flow of tuples from old to new clusters.
\item We give an optimal poly-time algorithm by reducing this problem to min-cost perfect matching in bipartite graphs. 
\end{itemize}
}


\cut{
\begin{figure}[t]
\includegraphics[width=0.9\linewidth]{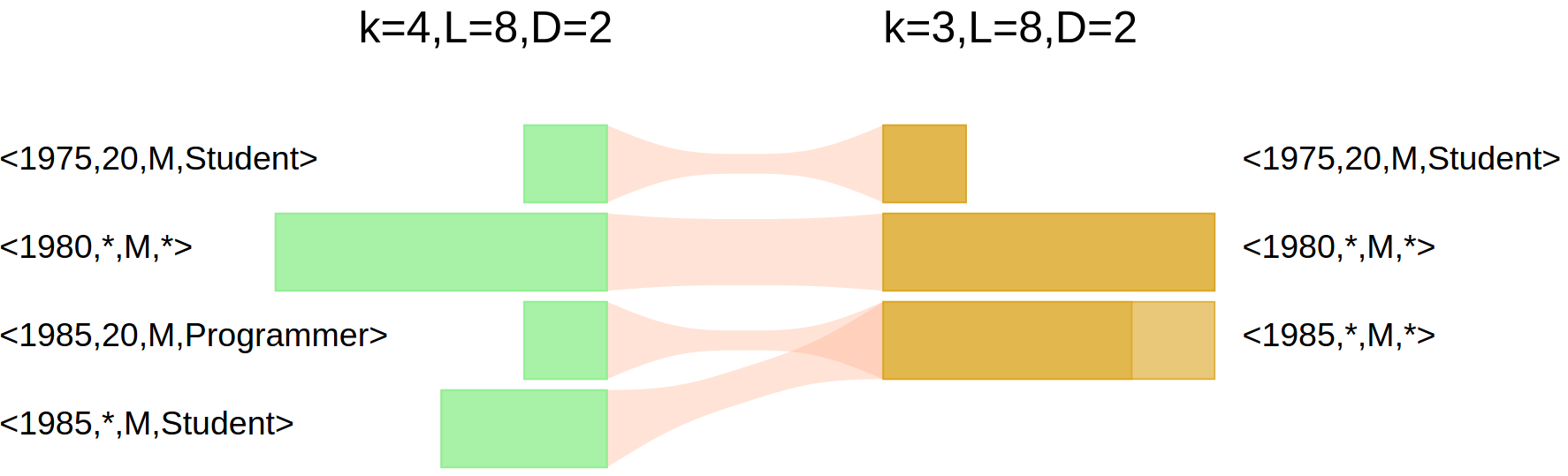}
\vspace{-2ex}
\caption{Visualizing changes between two consecutive clustering in Example~\ref{eg:intro-clusters}: from $k=4$ to $k=3$.}
\vspace{-0.5cm}
\label{fig:intro-vizcompare}
\end{figure} 
}

\vspace{0.5ex}
\noindent\textbf{Roadmap.~~}  We discuss the related work in Section~\ref{sec:related} and define some  preliminary concepts in Section~\ref{sec:prelim}. The above sets of results are discussed in Sections~\ref{sec:properties}, \ref{sec:algorithms}and \ref{sec:guidance}.  
The experimental results are presented in Section~\ref{sec:experiments}. 
\ansa{A user study is conducted on Section~\ref{sec:user-study}.} We conclude in Section~\ref{sec:conclusions} with scope of future work. Some details are deferred to Appendix. 
\cut{
Over the past decade, summarization and diversification of query results have drawn significant attention in the database and other research communities  due to 
their impact on various applications  \cite{HadjieleftheriouT09}, such as key word search \cite{Agrawal+2009}, recommendation systems \cite{Ziegler+WWW2005, Yu+EDBT2009, DBLP:conf/edbt/GkorgkasVDN15}, online shopping \cite{VeeSSBA08},
top-$k$ patterns \cite{Xin+2006}, graph pattern matching \cite{FanWW13}, text summarization \cite{Carbonell+1998, ZhuGGA07}, social network analysis \cite{ZhuGGA07}, facility dispersion \cite{RaviRT94, Borodin+2012, DBLP:conf/kdd/AbbassiMT13}, skyline computation \cite{Tao09}, and also for results of database queries \cite{GollapudiD2009, DengFan2014, Vieira+ICDE2011, QinYC12, drosou2012disc, JoglekarGP16}. The goal of both result summarization and result diversification is to make a huge set of results (\eg, from a search engine query or a recommendation algorithm) more informative to the user. One of the main motivations to make the outputs in these applications diversity-aware is to increase user satisfaction. For example, a keyword query can be ambiguous; different users asking the same query may have different intents. Returning a diverse set of answers help cover these different intents, while simply returning top answers by their \emph{relevance} with respect to the query risks looking repetitive and missing particular user intents.

\par
This paper focuses on \emph{summarizing top answers to relational aggregate queries, taking into account diversity of the summary}. Unlike keyword search queries, our queries are unambiguous. Nevertheless, even if there are many result tuples with high aggregate values (scores), a user may still prefer just the ``top-$k$ answers'' for a small $k$. Returning the $k$ highest ranked result tuples may not be helpful, as they may be very similar to each other, and they may fail to represent many other result tuples with high scores.
\cut{
For instance, suppose an international company is doing analysis of total annual sales in different regions. If all the top-10 answers correspond to different stores in the same city $X$, 
(with the same country, state, and city attribute values), 
redundant information is provided to the user by displaying the same values of three attributes in all the top-10 tuples. In addition, another geographic region that is also doing 
good will be totally missing from the top-10 answers. On the other hand, if many stores in the city $X$ are performing poorly, this fact is not conveyed in the top-10 answers.   
}
Here is a concrete example from a real dataset:
\begin{example}\label{eg:intro}
Suppose an analyst is using the movie ratings data from the \emph{MovieLens} website \cite{movielens, movielensdata, movielenspaper} to investigate average ratings of different genres of movies by different groups of users over different time periods. So the analyst first joins several relations from this dataset (information about movies, ratings, users, and their occupations) to one relation $R$, extracts some additional features from the original attributes (age group, decade, half-decade), and then runs the following SQL aggregate query on  $R$ (the join is omitted for simplicity as it is not relevant to this paper). In this query, {\tt hdecade} denotes disjoint five-year windows of half-decades, \eg, 1990 (=1990-94), 1995 (=1995-99), etc.; {\tt agegrp} denotes age groups of the users in their teens or 10s (\ie, 10-19), 20s (\ie, 20-29), 30s (\ie, 30-39) etc.

\cut{
}

{\small
\begin{verbatim} 
SELECT hdecade, agegrp, gender, 
       occupation, avg(rating) as score
FROM R
GROUP BY hdecade, agegrp, gender, occupation 
WHERE genres_adventure = 1
HAVING count(*) > 50
ORDER BY val DESC
\end{verbatim}
}

The  top-8 and bottom-8 results from this query are shown in Figure~\ref{fig:eg-intro-top-bottom}. The top query answers in the figure, despite having high scores, have attribute values that are close to each other. For instance, the first, fourth, and eighth tuples correspond to male student viewers in their 20s, the second, fourth, sixth tuples correspond to male viewers in their 20s in early 1980s, and so on. Displaying the standard, say top-8, results  does not convey them in a compact fashion, also does not convey whether these common characteristics as they suggest are actually significant for the viewers who liked romantic movies; \eg, for male programmers in their 20-s, there are two tuples with higher value (second and fifth) and there is one with lower value (14-th).
%

\end{example}

\cut{
\par
To avoid such redundancy in query answers, models for diversification of database query answers have been studied in \cite{Vieira+ICDE2011, QinYC12, drosou2012disc}. For diversifying documents as results of keyword searches, Gollapudi and Sharma \cite{GollapudiD2009} studied three objective functions (max-sum, max-min, and mono) that balance the \emph{relevance} (scores of the results with respect to the input query) and their \emph{diversity} (denoted by a user-defined distance measure between any pair of results) using a trade-off parameter $\lambda \in [0, 1]$\footnote{For two integers $A, B$, where $A \leq B$, $[A, B]$ denotes the range $A, A+1, \cdots, B-1, B$} (originally proposed in \cite{Carbonell+1998}). 
Vieira et al. \cite{Vieira+ICDE2011} conducted an experimental study of existing and new algorithms for the max-sum objective defined in \cite{GollapudiD2009} with some small modifications. In particular, the objective in \cite{Vieira+ICDE2011} is as follows: given a set of elements $S$, trade-off parameter $\lambda$, size parameter $K$, weight function $w$, and distance function $d$, choose a subset $S' \subseteq S$, such that $|S'| = K$, and $S'$ maximizes $(K-1)(1 - \lambda) \sum_{t \in S'} w(t) + 2\lambda \sum_{t, t' \in S'}d(t, t')$. In contrast to \cite{GollapudiD2009, Vieira+ICDE2011}, Qin et al. \cite{QinYC12} formulated the top-$K$ result diversification  as  a bi-criteria optimization problem: given also a distance threshold parameter $\tau$, output a subset $S' \subseteq S, |S'| \leq K$, such that any two selected elements $t,t' \in S'$ are dissimilar, \ie, $d(t, t') \geq \tau$, and the sum of the scores of the selected elements $\sum_{t \in S'} w(t)$ is maximized.  Drosou and 
Pitoura \cite{drosou2012disc} considered \emph{diversity and coverage} in their notion of \emph{DisC} diversity: given a distance threshold $\tau$, the goal is to select a subset $S' \subseteq S$ of minimum size such that all elements in $S$ are similar to at least one element in $S'$ (distance $\leq \tau$), whereas no two elements in $S'$ are similar (distance $> \tau$). 
\par
\red{In an interactive setting, however, if a user wants to understand the top query results with the help of a comprehensive but compact summary, these approaches have several limitations. \emph{First,} none of these approaches consider all three aspects (a) \emph{relevance} of the results for the input query, (b) \emph{diversity} of the results displayed to the user, and (c) \emph{coverage} of the results that are of interest to the user; \eg, \cite{Vieira+ICDE2011, QinYC12} consider diversity and relevance, whereas \cite{drosou2012disc} considers diversity and coverage.  \emph{Second}, a subset of the original tuples is returned and therefore the other high ranked tuples are lost from the solution.  \emph{Finally}, a subset of the original tuples as solution is unable to capture the high level properties that are prevalent in the tuples with high scores  but are infrequent in tuples with low scores. For instance, when many top ranked tuples share the same value for a number of attributes, by displaying only one of them in the answer-set, we lose this important information.}
}

\cut{
\par
One related approach to data exploration is that of \emph{OLAP data cube} \cite{Gray+:cube:1997}, which is now an in-built operator in many commercial and open-source relational database systems. Given an aggregate query on a relation (with {\tt sum, min, max, avg}, etc.), the data cube evaluates the aggregate function for all possible subsets of attributes used in the group-by clause. For instance, in Example~\ref{eg:intro}, computing the SQL query without the limit clause and with cube operator will result in a much larger answer set. The answers will include average scores for all subsets of  attribute-value pairs, like {\tt( age\_range = 20-29 $\wedge$ gender = `M')} or {\tt (occupation = `Student')}, with \emph{don't care} or {\tt all} ($*$) values for the rest of the attributes. However, the data cube approach may also fail to provide an informative summary of the top aggregate answers. Running the data cube operator on the original database with the same aggregate function loses the top-$K$ original tuples without cube (\eg, if the aggregate operation is {\tt sum}, always the sum over all the tuples with all $*$-values will have the maximum sum). On the other hand, if the data cube operation is run over the top-$K$ query answers, we lose the big picture of all the tuples (\eg, if we see the value of the {\tt year = 2010} in all the top-$L$ tuples, this may not be a good characteristic of tuples with top scores, since all the tuples may have this value of the {\tt year} attribute if the entire dataset corresponds to data collected in year 2010\footnote{As an example, the Centers for Disease Control and Prevention (CDC) and National Center for Health Statistics (NCHS) publish yearly data \cite{cdc} where the value of the year attribute is a constant.}).
}


As we will further discuss in Section~\ref{sec:related}, several approaches have been proposed for exploring, summarizing, and diversifying results of database queries. For instance, the approaches to result diversification (\eg, \cite{QinYC12, drosou2012disc, GollapudiD2009, Vieira+ICDE2011}) selects a diverse subset of the answers considering some of the criteria among relevance (when values of scores of the answers are taken into consideration), coverage (whether the chosen answers can represent the answers that are omitted and/or other top answers of interests), and size bound (size of the chosen subset). A recent work  \cite{JoglekarGP16} on  interactive data exploration proposes a ``smart drill-down'' operator for data exploration that summarizes tuples from a database with $k$ rules using don't-care ($*$) values  that the user  can investigate further. However, in these approaches, either some of the desired criteria are omitted, or the output may not be able to capture the high level properties that are prevalent in the tuples with high scores  but are infrequent in tuples with low scores.

\textbf{Our layered framework.~~}

In this work, we propose a novel framework for summarizing top aggregate query answers that simultaneously considers all three aspects of \emph{relevance, diversity}, and \emph{coverage}. We present the summary to the user with two layers.  The first layer consists of a small number of clusters, each covering a subset of original result tuples with identical values for a subset of the attributes.  We ensure that these clusters are dissimilar to each other, and together they cover all relevant original result tuples. In particular, given a \emph{distance parameter} $D$ and a \emph{coverage parameter} $L$, instead of returning top-$k$ original result tuples by score, our framework outputs up to $k$ \emph{clusters} such that (a) these clusters are at least at distance $D$ from each other (diversity), (b) together they cover the top-$L$ original result tuples (coverage), and (c) the \emph{value} of the clusters is maximized according to an optimization criterion (relevance).

The second layer consists of the original result tuples contained in the clusters in the first layer. By expanding a cluster (using a graphical user interface of our system prototype), the user reveals the relevant original result tuples therein, with those among the original top $L$ highlighted. In other words, the two-layered framework allows us to preserve more detailed information, which also helps the user understand why the particular clusters have been chosen. We illustrate this two-layered framework using Example~\ref{eg:intro}.

\begin{example}\label{eg:intro-clusters}
Suppose we run our system for the aggregate query in Example~\ref{eg:intro} with parameters $k=4. L=8$ and $D = 2$, \ie, the user would like to see the top-8 result tuples as before, using not more than 4 clusters, and 
no two clusters can have identical values for two attributes
(We defer the formal definition of distance to Sections~\ref{sec:prelim}) .
Our system first displays the four clusters shown in Figure~\ref{fig:eg-intro-clusters},
where the average scores of all result tuples contained in the clusters are displayed. The user may choose to investigate any of these clusters by expansion. If all these clusters are expanded by the user, the second-layer of our solution will reveal all original result tuples they cover, as shown in Figure~\ref{fig:eg-intro-two-layer}.
\end{example}

The above example illustrates several advantages and features of our framework in summarizing the top query answers. {\em First,} the original top-8 result tuples are not lost thanks to the second layer, whereas the properties that combine multiple top result tuples are clearly highlighted in the clusters in the first layer. {\em Second,} the summery clusters are diverse, each contributing some extra novelty to the answer. {\em Third,} the clustering captures the properties of the top result tuples that distinguish them from those with low values. Although in this example the clusters do not include any result tuples outside the top 8, in general, if a top result tuple shares characteristics with other result tuples with low values, we will get a cluster with lower average score. Therefore, our output does not simply summarize the common characteristics of top-$L$ elements. For instance, five of the top-8 elements correspond to {\tt (M, Student)}, but this cluster is not included since it also contains tuples with low values (\eg, with rank 48).
Our framework is particularly useful for interactive exploration of query results where the user starts with original top-$L$ answers (with $D = 1$), and by noticing many overlapping attributes, can gradually increase $D$ or update $k$ and $L$ to understand the key properties of the top aggregate answers.

\cut{
 \red{ In addition, unlike \cite{GollapudiD2009, Vieira+ICDE2011}, the user does not have to tune the trade-off parameter $\lambda$ that lacks a natural meaning. Even if the diversity and the weight measures are normalized in the objective function (by multiplying them with 2 and  $(k-1)$), this parametric approach adds two quantities from two domains that may have very different meaning and distributions, potentially making it difficult for the user to choose a good value of $\lambda$. }
 
 For instance, the top two original result tuples are contained in the fourth cluster that has a relatively smaller average value, since if this cluster is expanded, it becomes apparent that male students in their 20s (rank 1 and 2) not always highly rated romantic movies (due to other tuples with lower scores in this cluster).
On the other hand, female administrators in their 30s (rank 3 and 6) contribute more to the top tuples in the original resultset. The first cluster captures these two top-10 tuples and no tuple outside top-10, and therefore has a high average value. A  top-10 tuple may appear in multiple clusters contributing to multiple common properties of the top tuples. {\bf (3)} 
 }

\vspace*{0.2ex}
\textbf{Our contributions.~~} 
\vspace*{-1.2ex}
\begin{itemize} 
\itemsep0em
\item We propose a two-layered framework for summarizing top aggregate query answers, and formalize two optimization problems (\maxval, where the average value of the elements covered by the output clusters has to be maximized, and \minsize, where the number of redundant elements outside top-$L$ elements covered by the output clusters has to be minimized). 
\item We study properties of the clusters and solutions to these optimization problems by investigating the natural \emph{semi-lattice structure} on the clusters imposed by the attribute values.
\item We study the complexity of the above decision and optimization problems and show NP-hardness results.
\cut{
\study the complexity of the above optimization problems and show that, for $k \geq L$, checking feasibility of a solution is easy but the optimizations problems are NP-hard. On the other hand, 
for $k < L$ (fewer clusters are allowed to cover the top-$L$ elements) even the decision problem is NP-hard. Of course, when $k \geq L$ and $D = 0$, the original top-$L$ tuples form the optimal solution to our optimization problems. 
}
\item 
We study efficient heuristics that use the semi-lattice structure and work well in practice.
\item We develop an end-to-end prototype with a graphical user interface that implements our two-layered framework and facilitates interactive exploration of the top query answers (snapshot in Figure~\ref{fig:snapshot}).
\item We 
evaluate our algorithms using the MovieLens dataset \cite{{movielens, movielensdata, movielenspaper}} and also compare our results qualitatively with other related approaches.
\end{itemize}

\textbf{Roadmap.~~} In Section~\ref{sec:related}, we discuss the related work. We define the problem and the other preliminary concepts in Section~\ref{sec:prelim} and study the properties of the clusters using the semi-lattice structure in Section~\ref{sec:properties}. Section~\ref{sec:complexity} discusses the complexity results and efficient heuristics are discussed in Section~\ref{sec:algorithms}. We 
present experimental results using our prototype in Section~\ref{sec:experiments}. \ansa{We conduct a user study and analyze its result in Section~\ref{sec:user-study}.} We conclude in Section~\ref{sec:conclusions} with scope of future work. All proofs and further details appear in the full version of the paper \cite{fullversion}.
}

\section{Related Work}\label{sec:related}
First we discuss three recent papers relevant to our work that consider {\em result diversification} or {\em result summarization}: smart drill-down \cite{JoglekarGP16}, diversified top-$k$ \cite{QinYC12}, and DisC diversity\cite{drosou2012disc}. We explored using or adapting the approaches proposed in these papers for our problem, but since they focus on different problems, as expected, the optimization, objective, and the setting studied in \cite{JoglekarGP16, QinYC12, drosou2012disc} do not suffice to meet the goals in our work; 
There are several other related work in the literature 
that we briefly mention 
below. Qualitative comparison results and details are discussed in Appendix~\ref{sec:more-related}. 
\par
\textbf{Smart drill-down \cite{JoglekarGP16}:} 
In a recent work, Joglekar et al. \cite{JoglekarGP16} proposed the {\em smart drill-down} operator for interactive exploration and summarizing interesting tuples in a given table. The outputs show top-$k$ rules (clusters) with don't-care $*$-values. The goal is to find an ordered set of rules with maximum score, which is given by the sum of product of the  {\em marginal coverage} (elements in a rules that are not covered by the previous rules) and {\em weight}  of the rules (a ``goodness'' factor, \eg, a rule with fewer $*$ is better as it is more specific).
In Appendix~\ref{sec:expt-compare-others}, we show with examples that this approach is not suitable for summarizing aggregate query answers, since it will prefer common attribute values prevalent in many tuples and may select rules containing both high- and low-valued tuples.
\par
\cut{
In a recent work, Joglekar et al. \cite{JoglekarGP16} proposed the {\em smart drill-down} operator for interactive exploration and summarizing interesting tuples in a given table. The outputs show top-$k$ rules with don't-care $*$-values (clusters in our framework), and either a rule $r \in R$ or an attribute with the $*$-value can be expanded further  (rule and star drill downs). The goal is to find an ordered set of rules with maximum score with respect to one of these drill down operations, where the score is given by the sum (over all rules) of product of the  {\em marginal count or coverage} (how many original elements a rule covers that are not covered by the previous rules in the list) and {\em weight}  of the rules (a ``goodness'' factor independent of the count, \eg, a rule with fewer $*$ is better as it is more specific).
Unlike our work that considers summarizing aggregate answers with {\em values or scores} attached to individual elements, \cite{JoglekarGP16} focuses on summarizing the input table, and  uses another notion of diversity by using {\em marginal counts} instead of distance between clusters. As we discuss in Section~\ref{sec:expt-compare-others}, this approach is not suitable for summarizing aggregate query answers, since it will prefer common attribute values prevalent in many tuples, and in this process may select attribute values that are common to both high-valued and low-valued tuples.\\}
\textbf{Diversified top-$k$ \cite{QinYC12}:} 
Qin et al. \cite{QinYC12} formulated the top-$k$ result diversification problem: given a relation $S$ where each element has a score and any two elements have a similarity value between them, output at most $k$ elements such that any two selected elements are dissimilar (similarity $>$ a threshold $\tau$), and maximize the sum of the scores of the selected elements. \cite{QinYC12} considers diversification, but it does not consider result summarization using $*$-values (it chooses individual representative elements instead). In addition to lacking high level properties, 
this adapted process would possibly lose the holistic picture since some low-valued elements may be assigned to the chosen representatives from the top elements. 
\cut{
Qin et al. \cite{QinYC12} formulated the top-$k$ result diversification problem as
follows: given a relation $S$ where each element has a score and any two elements have a similarity value between them, the goal is to output at most $k$ elements such that any two selected elements are dissimilar (similarity or distance $> $ a threshold $\tau$) and the sum of the scores of the selected elements is maximized. They proposed procedures for diversification that can extend many existing top-$k$ query processing frameworks to handle diversified top-$k$ search.
Unlike \cite{JoglekarGP16}, \cite{QinYC12} considers diversification of aggregate answers with scores like our work. It does not consider result summarization using $*$ attribute values (unlike us or \cite{JoglekarGP16}), but elements with similarity $\leq \tau$ with some of the chosen representative elements can be assigned to those representatives forming a \emph{cluster}. However, in addition to lacking high level properties with $*$ values, this adapted process would pick representatives from the top elements but lose the holistic picture since some of the lower ranked and low-valued elements may be assigned to the chosen representatives from the top elements. Our objective functions of maximizing average value or minimizing redundant elements address this concern and return clusters with overall high scores. (further discussed in   Section~\ref{sec:expt-compare-others}).\\
}
\par
\textbf{DisC diversity \cite{drosou2012disc}:} 
Drosou and Pitoura \cite{drosou2012disc} proposed 
\emph{DisC} diversity: 
given a set of elements $S$, the goal is to output a subset $S'$ of smallest size such that all the elements in $S$ are \emph{similar to} at least one element in $S'$ (\ie, have distance at most a given threshold $\tau$), whereas no two elements in $S'$ are similar to each other (distance is $> \tau$). Here 
diversification can be achieved similar to \cite{QinYC12}. However, it ignores the values or relevance of the elements (unlike us or \cite{QinYC12}), and has no bound on the number of elements returned (unlike us, \cite{QinYC12, JoglekarGP16}). Therefore this approach may not be useful when the user wants to investigate a small set of answers, and it does not provide a summary of common properties of high valued tuples. 
\cut{
Drosou and Pitoura \cite{drosou2012disc} proposed the notion of \emph{DisC} diversity: 
given a set of elements $S$, the goal is to output a subset $S'$ of smallest size such that all the elements in $S$ are \emph{similar to} at least one element in $S'$ (\ie, have distance at most a given threshold $\tau$), whereas no two elements in $S'$ are similar to each other (distance is $> \tau$). Similar to \cite{QinYC12}, we can assign the elements to a nearest chosen representative to achieve both summarization and diversification. However,  it ignores the values or relevance of the elements (unlike us or \cite{QinYC12}) and has no bound on the number of elements returned (unlike us, \cite{QinYC12, JoglekarGP16}), and therefore may not be useful in scenarios where the user is interested in investigating a small set of answers first.  
If this approach is run only over a small number top-valued elements that we intend to cover, if those results are 
very similar, it may end up returning very few elements losing interesting information from the entire result set. In addition, the chosen elements can also cover low-valued elements 
as discussed above for \cite{QinYC12} (also see  Section~\ref{sec:expt-compare-others}).
}

\textbf{Classification and clustering:}
  Classification and clustering have been extensively studied in the literature. Various classification algorithms like Naive Bayes Classifier~\cite{murphy2006naive} and decision trees~\cite{quinlan1986induction} are widely used and are easy to implement. A simpler variation of our problem---separating top-$L$ elements from others---can be cast as a classification problem. However, this formulation would completely ignore values of elements outside top $L$, whereas our problem considers all element values and uses the top-$L$ elements only as a coverage constraint. 
  One could also formulate the problem of clustering the top-$L$ elements and apply the standard $k$-means algorithm~\cite{hartigan1975clustering} and its variants (\eg, \cite{huang1998extensions, wagstaff2001constrained}).  However, such algorithms do not produce clusters with simple and concise descriptions, and their clustering criteria do not consider values of elements outside top $L$.
  Therefore, it is necessary to find a new approach other than traditional clustering and classification. 

\textbf {Other work on result diversification, summarization, and exploration:}
Diversification of query results has been extensively studied in the literature for both query answering in databases and other applications \cite{Carbonell+1998, agrawal2009diversifying, GollapudiD2009, Ziegler+WWW2005, Yu+EDBT2009, DBLP:conf/edbt/GkorgkasVDN15, Xin+2006, FanWW13, ZhuGGA07, RaviRT94, Borodin+2012, DBLP:conf/kdd/AbbassiMT13, Tao09, DengFan2014, Vieira+ICDE2011, QinYC12, drosou2012disc, Fraternali+2012, Zheng2012, VeeSSBA08, Zaharioudakis:2000, Sarawagi00,jagadish2004itcompress,jagadish1999semantic}. 
These include the \emph{MMR (Maximal Marginal Relevance)}-based approaches, the \emph{dispersion} problem studied in the algorithm community, 
\emph{diverse skyline}, summarization in \emph{text and social networks}, relational data summarization and OLAP data cube exploration among others. \ansc{The MMR-based and dispersion approaches consider diversification of results, outputting a small, diverse subset of relevant results, but do not summarize all relevant results. Others focus on various application domains and all have problem definitions different from this work.} 


\cut{
One of the formalisms to capture both \emph{diversity and relevance} of a result set is to balance these two objectives using a trade-off parameter $\lambda$ specified by the user. This approach, called \emph{MMR (Maximal Marginal Relevance)} aims to reduce redundancy while maintaining relevance of the chosen outputs for the input query, and was first used for re-ranking retrieved documents and in selecting appropriate passage for text summarization \cite{Carbonell+1998}.  Gollapudi and Sharma \cite{GollapudiD2009} studied three variants of the objective function (max-sum, max-min, mono)  
based on the MMR criterion. 
Deng and Fan  \cite{DengFan2014} studied the data complexity and combined complexity for these problems.
Vieira et al. \cite{Vieira+ICDE2011} conducted an experimental study of existing and new algorithms for the max-sum objective defined in \cite{GollapudiD2009} with some small modifications. 
\cut{
In particular, the objective in \cite{Vieira+ICDE2011} is as follows: given a set of elements $S$, trade-off parameter $\lambda$, size parameter $k$, weight function $w$, and distance function $d$, choose a subset $S' \subseteq S$, such that $|S'| = k$, and $S'$ maximizes $(k-1)(1 - \lambda) \sum_{t \in S'} w(t) + 2\lambda \sum_{t, t' \in S'}d(t, t')$.
}
Fraternali et al. \cite{Fraternali+2012} studied this objective for diversification of objects in a low-dimensional vector space. 
The \emph{diversity} criterion has been studied algorithmically as the \emph{facility dispersion problem} \cite{RaviRT94}. 
\cut{
place $k$ facilities on $N$ nodes such that some function of the distances between the facilities (max-min or max-avg) is maximized. Minimizing the distance function between facilities nodes gets the standard \emph{$k$-center} or \emph{facility location} problems that handle the \emph{coverage} criterion. Borodin et al.
}
\cite{Borodin+2012} studied the \emph{max-sum dispersion problem} and the \emph{max-sum diversification problem} (as in \cite{GollapudiD2009}) 
when the value of a subset of elements $w(S')$  is given by a monotone submodular function. Abbassi et al. \cite{DBLP:conf/kdd/AbbassiMT13} studied the diversity maximization of a set of points under matroid constraints. \emph{Diverse skyline} \cite{Tao09} is another related direction.\\
\cut{
where given a size constraint $k$, the goal is to select at most $k$ skyline points that best describe the whole skyline such that each points in the skyline has a close representative point among the chosen points.\\
}
Zhu et al. \cite{ZhuGGA07} proposed a ranking algorithm with applications in  text summarization and social network analysis. 
\cut{
that uses random walk on a graph, prefers elements that are similar to many other items and cover many different groups of elements, and can incorporate a pre-specified ranking as a prior knowledge. 
}
\cut{
They discuss applications of this approach in text summarization and social network analysis. Here the focus is on re-ranking all elements such that the top elements are different from each other and give a broad coverage of the whole element set. \red{Therefore, this approach may not be suitable for database queries when the user is interested in the top elements according to their values.} 
}
Vee et al. \cite{VeeSSBA08} studied the problem of computing diverse query results for non-aggregate queries in online shopping applications.
\cut{
, where the goal is to return a representative diverse set of top answers (with minimum total pairwise similarity) from all the tuples that satisfy the selection conditions entered by the user in a non-aggregate query (\eg, the user can search for different \emph{Honda cars} or different \emph{2007 Honda civic cars}). \cite{VeeSSBA08} assumes that there is a total \emph{diversity ordering} on the attributes, and aims to choose a set that minimizes the sum of \emph{a similarity measure} between all pairs of elements with respect to all possible prefixes of this diversity ordering (the diversity ordering ensures that the higher ordered attributes are varied first before varying the less preferred attributes).  
}
 In the area of Information Retrieval (IR), Zheng et al. \cite{Zheng2012} studied search result diversification. using  $\lambda$-parameterized MMR objective function  \cite{GollapudiD2009, Vieira+ICDE2011}, but their ``diversity score'' is defined as the sum (over possible topics) of product of importance of a subtopic to the input query and how much a document covers this topic.\\ 
 Chen and Li \cite{ChenLi2007} considered the problem of categorizing query answers using clusters on a navigational tree by exploiting the query history of the users when different users have diverse preferences. Other approaches include relational data summarization \cite{Zaharioudakis:2000} and Web table search  taking into account schema/instance diversity, table popularity, and redundancy \cite{Nguyen+15}.
For result summarization and exploration in databases, Gebaly et al. \cite{ElGebaly:2014} considered summarization of attributes using $*$ values to find factors that affect a binary (non-aggregate) attribute. Sarawagi explored (\eg, \cite{Sarawagi00}) sophisticated OLAP operators for helping the user visit unvisited interesting parts in a data cube.\\
}

%
%
%
%
%
%
%
%
%
%
%
%
%
%
%
%

\cut{

Diversification of query results 
has been extensively studied in the literature for both query answering in databases as well as for other applications \cite{Carbonell+1998, Agrawal+2009, GollapudiD2009, Ziegler+WWW2005, Yu+EDBT2009, DBLP:conf/edbt/GkorgkasVDN15, Xin+2006, FanWW13, ZhuGGA07, RaviRT94, Borodin+2012, DBLP:conf/kdd/AbbassiMT13, Tao09, DengFan2014, Vieira+ICDE2011, QinYC12, drosou2012disc}. 
One of the formalisms to capture both \emph{diversity and relevance} of a resultset is to balance these two objectives using a trade-off parameter $\lambda$ specified by the user. This approach, called \emph{MMR (Maximal Marginal Relevance)} aims to reduce redundancy while maintaining relevance of the chosen outputs for the input query, and was first used for re-ranking retrieved documents and in selecting appropriate passage for text summarization \cite{Carbonell+1998}.  Gollapudi and Sharma \cite{GollapudiD2009} studied three variants of the objective function (max-sum, max-min, mono)  
based on the MMR criterion. For instance,  given a set of elements $S$, size constraint $k$, a value function $w$ for elements, and a distance function $d$ for pairs of elements,  the max-sum diversification problem aims to select a subset $S' \subseteq S$ of size $k = |S'|$ that maximizes $f(S') = (k-1) \sum_{u \in S'}w(S') + 2\lambda \sum_{u, v \in S'} d(u, v)$ (here $k-1$ and 2 are used for balancing the different number of elements in the first component for relevance and the second component for diversity). 
  Deng and Fan  \cite{DengFan2014} studied the data complexity and combined complexity for these problems.
Vieira et al. \cite{Vieira+ICDE2011} conducted an experimental study of existing and new algorithms for the max-sum objective defined in \cite{GollapudiD2009} with some small modifications. 
\cut{
In particular, the objective in \cite{Vieira+ICDE2011} is as follows: given a set of elements $S$, trade-off parameter $\lambda$, size parameter $k$, weight function $w$, and distance function $d$, choose a subset $S' \subseteq S$, such that $|S'| = k$, and $S'$ maximizes $(k-1)(1 - \lambda) \sum_{t \in S'} w(t) + 2\lambda \sum_{t, t' \in S'}d(t, t')$.
}
Fraternali et al. \cite{Fraternali+2012} studied this objective for diversification of objects in a low-dimensional vector space. 
Although the parameter $\lambda$  intends to balance between relevance and diversity of the answers, (a) the results are not summarized by representatives, and (b) $\lambda$ combines two different measures that have different meanings and domains, so even with normalization, adding these two measures may not be meaningful. Therefore, it is not suitable for our goal of achieving summarization and diversity including relevance and coverage. We compare results from Vieira et al. \cite{Vieira+ICDE2011} with our work in the full version \cite{fullversion}.\\
The \emph{diversity} criterion has been studied algorithmically as the \emph{facility dispersion problem} \cite{RaviRT94}: place $k$ facilities on $N$ nodes such that some function of the distances between the facilities (max-min or max-avg) is maximized. Instead, when the distance function is minimized between the facilities and all the nodes, we get the standard \emph{$k$-center} or \emph{facility location} problems that handle the \emph{coverage} criterion. Borodin et al. \cite{Borodin+2012} studied the \emph{max-sum dispersion problem} (choose $k$ points that maximizes the total pairwise distances) and the \emph{max-sum diversification problem} (as in \cite{GollapudiD2009}) 
when the value of a subset of elements $w(S')$ 
is given by a monotone submodular function. Abbassi et al. studied the diversity maximization of a set of points under matroid constraints \cite{DBLP:conf/kdd/AbbassiMT13}. Another related direction is \emph{diverse skyline} \cite{Tao09}, where given a size constraint $k$, the goal is to select at most $k$ skyline points that best describe the whole skyline such that each points in the skyline has a close representative point among the chosen points.\\
\cut{
Drosou and Pitoura \cite{drosou2012disc} proposed the notion of \emph{DisC} diversity: 
given a set of elements $S$, the goal is to output a subset $S'$ of smallest size such that all the elements in $S$ are \emph{similar to} at least one element in $S'$ (\ie, have distance at most a given threshold $\tau$), whereas no two elements in $S'$ are similar to each other (distance is $> \tau$). 
This formalism takes into account \emph{diversity and coverage}; however, it ignores the values or relevance of the elements and has no bound on the number of elements returned (see comparison below and Section~\ref{sec:expt-compare-others}). If this approach is run only over top-$L$ elements that we want to cover and if the top-$L$ results are very similar, it may end up returning very few elements losing other interesting information from the entire result set.
Further, the size of the returned solution $S'$ may be large (even for the optimal solution), and may not be useful in scenarios where the user is interested in investigating a small set of answers first. 
}
Zhu et al. \cite{ZhuGGA07} proposed a ranking algorithm with applications in  text summarization and social network analysis 
that uses random walk on a graph, prefers elements that are similar to many other items and cover many different groups of elements, and can incorporate a pre-specified ranking as a prior knowledge. 
\cut{
They discuss applications of this approach in text summarization and social network analysis. Here the focus is on re-ranking all elements such that the top elements are different from each other and give a broad coverage of the whole element set. \red{Therefore, this approach may not be suitable for database queries when the user is interested in the top elements according to their values.} 
}
Vee et al. \cite{VeeSSBA08} studied the problem of computing diverse query results for non-aggregate queries in online shopping applications.
\cut{
, where the goal is to return a representative diverse set of top answers (with minimum total pairwise similarity) from all the tuples that satisfy the selection conditions entered by the user in a non-aggregate query (\eg, the user can search for different \emph{Honda cars} or different \emph{2007 Honda civic cars}). \cite{VeeSSBA08} assumes that there is a total \emph{diversity ordering} on the attributes, and aims to choose a set that minimizes the sum of \emph{a similarity measure} between all pairs of elements with respect to all possible prefixes of this diversity ordering (the diversity ordering ensures that the higher ordered attributes are varied first before varying the less preferred attributes).  
}
 In the area of Information Retrieval (IR), Zheng et al. \cite{Zheng2012} studied search result diversification. using  $\lambda$-parameterized MMR objective function  \cite{GollapudiD2009, Vieira+ICDE2011} discussed above, but their ``diversity score'' is defined as the sum (over possible topics) of product of importance of a subtopic to the input query and how much a document covers this topic.\\ 
 Chen and Li \cite{ChenLi2007} considered the problem of categorizing query answers using clusters on a navigational tree by exploiting the query history of the users when different users have diverse preferences. Other approaches include relational data summarization \cite{Zaharioudakis:2000} and Web table search  taking into account schema/instance diversity, table popularity, and redundancy \cite{Nguyen+15}.
For result summarization and exploration in databases, Gebaly et al. \cite{ElGebaly:2014} considered summarization of attributes using $*$ values to find factors that affect a binary (non-aggregate) attribute. Sarawagi explored (\eg, \cite{Sarawagi00}) sophisticated OLAP operators for helping the user visit unvisited interesting parts in a data cube.\\
As a summary, although there is a multitude of work on result diversification and summarization, some of them (\eg, in IR) are unrelated to our work on summarizing aggregate query answers in databases, while the others do not consider some of the criteria from summarization, diversity, coverage, and relevance, which we focus on addressing all at the same time in our work.


\cut{
\par
{\bf OLAP and interactive data exploration.~~} 
One related approach to data exploration is that of \emph{OLAP data cube} \cite{Gray+:cube:1997}, which is now an in-built operator in many commercial and open-source relational database systems. Given an aggregate query on a relation (with {\tt sum, min, max, avg}, etc.), the data cube evaluates the aggregate function for all possible subsets of attributes used in the group-by clause. For instance, in Example~\ref{eg:intro}, computing the SQL query without the limit clause and with cube operator will result in a much larger answer set. The answers will include average scores for all subsets of  attribute-value pairs, like {\tt( age\_range = 20-29 $\wedge$ gender = `M')} or {\tt (occupation = `Student')}, with \emph{don't care} or {\tt all} ($*$) values for the rest of the attributes. \red{However, the data cube approach may also fail to provide an informative summary of the top aggregate answers. Running the data cube operator on the original database with the same aggregate function loses the top-$k$ original tuples without cube (\eg, if the aggregate operation is {\tt sum}, always the sum over all the tuples with all $*$-values will have the maximum sum). On the other hand, if the data cube operation is run over the top-$k$ query answers, we lose the big picture of all the tuples (\eg, if we see the value of the {\tt year = 2010} in all the top-$L$ tuples, this may not be a good characteristic of tuples with top scores, since all the tuples may have this value of the {\tt year} attribute if the entire dataset corresponds to data collected in year 2010\footnote{As an example, the Centers for Disease Control and Prevention (CDC) and National Center for Health Statistics (NCHS) publish yearly data \cite{cdc} where the value of the year attribute is a constant.}).}\\

\par
{\bf Result summarization.~~} 
 Chen and Li \cite{ChenLi2007} considered the problem of categorizing query answers using clusters on a navigational tree by exploiting the query history of the users when different users have diverse preferences. There are papers on relational data summarization \cite{Zaharioudakis:2000}, Web table search  taking into account schema/instance diversity, table popularity, and redundancy \cite{Nguyen+15}, but these are unrelated to this paper.

}

\cut{
http://www.vldb.org/pvldb/2/vldb09-1025.pdf
http://dl.acm.org/citation.cfm?id=1646177&dl=ACM&coll=DL&CFID=669822683&CFTOKEN=20330980 -- web result
http://ieeexplore.ieee.org/document/7113287/ -- web icde
http://www.vldb.org/pvldb/2/vldb09-784.pdf
http://dl.acm.org/citation.cfm?id=335390
}

}


\cut{

-- DONE complexity of diversification.. very relevant: http://www.vldb.org/pvldb/vol6/p577-deng.pdf

-- DONEMarina Drosou and Karthik Ramachandra for Honorable Mention for the 2015 SIGMOD Jim Gray Doctoral Dissertation Award. Marina completed her dissertation titled “Relevance and Diversity-based Ranking in Network-Centric Information Management Systems” at University of Ioannina under the supervision of Evaggelia Pitoura

-- DONE different user preference: http://users.cis.fiu.edu/~taoli/tenure/p641-chen-li-sigmod.pdf
proposes a two-step solution to address the diversity issue of user preferences for the categorization approach. The proposed solution does not require explicit user involvement. The first step analyzes query history of all users in the system offline and generates a set of clusters over the data, each corresponding to one type of user preferences.

-- NONEED rank aggregation: http://researcher.watson.ibm.com/researcher/files/us-fagin/sigmod03.pdf

-- DONE disc: http://www.dmod.eu/diver/

-- DONE diversifyig top-k results http://vldb.org/pvldb/vol5/p1124_luqin_vldb2012.pdf

-- NONEED -- time series http://www.vldb.org/pvldb/vol7/p109-eravci.pdf

-- NO NEED search: http://www.vldb.org/pvldb/vol4/p451-capannini.pdf
NONEED http://link.springer.com/chapter/10.1007

-- DONE top-k pattern matching http://www.vldb.org/pvldb/vol6/p1510-fan.pdf

-- DONE https://papers.nips.cc/paper/4647-gender-a-generic-diversified-ranking-algorithm.pdf

-- DONE ranking -- naacl http://www.david-andrzejewski.com/publications/hlt-naacl_2007.pdf

-- DONE [4] A. Borodin, H. C. Lee, and Y. Ye. Max-sum diversification, monotone submodular functions and dynamic updates. In PODS, pages 155–166. ACM, 2012.
-- DONE [13] S. Gollapudi and A. Sharma. An axiomatic approach for result diversification. In WWW, 2009.
-- DONE [30] L. Qin, J. X. Yu, and L. Chang. Diversifying top-k results. PVLDB, 5(11), 2012.
-- [7] E. Demidova, P. Fankhauser, X. Zhou, and W. Nejdl. DivQ: Diversification for keyword search over structured databases. In SIGIR, 2010.
COMMENTEDOUT-- [17] J. He, H. Tong, Q. Mei, and B. Szymanski. Gender: A generic diversified ranking algorithm. In Advances in Neural Information Processing Systems 25, pages 1151–1159, 2012.

P. Fraternali, D. Martinenghi, and M. Tagliasacchi. 2012. Top-k bounded diversification. In Proceedings of
the ACM SIGMOD International Conference on Management of Data (SIGMOD’12). 421–432.

Z. Liu, P. Sun, and Y. Chen. 2009. Structured search result differentiation. Proc. VLDB Endow. 2, 1, 313–324.

}

\section{Preliminaries}\label{sec:prelim}
Let $R$ be a relation with attributes $\attr$, which can either be an input table (base relation) 
\revb{or a derived relation possibly coming from a complex sub-query involving multiple tables.}
Let $\attr_{groupby} \subseteq \attr$ 
be a set of \emph{grouping attributes} used in the group-by clause where $|\attr| = m$.
Let $aggr$ be any aggregate function allowed by SQL that outputs a real number.
Therefore, we are considering a query $Q$ of the form: 
{\small
\newsavebox\sqlone
\begin{lrbox}{\sqlone}\begin{minipage}{\textwidth}
\lstset{language=SQL, basicstyle=\ttfamily, tabsize=2}
\begin{lstlisting}[mathescape]
SELECT $\attr_{groupby}$, aggr as val          
FROM R <base relation or output of a sub-query>
GROUP BY $\attr_{groupby}$
ORDER BY val DESC	
\end{lstlisting}
\end{minipage}\end{lrbox}
\resizebox{0.9\textwidth}{!}{\usebox\sqlone}
}
We denote the output of this query as $S$, where each tuple in $S$ is called an \emph{original element}. Here {\tt val} denotes the \emph{score} or \emph{value} of each output tuple in $S$ \reva{signifying relevance or importance of the tuple in response to the input query\footnote{\reva{Our work is also applicable to the settings where the scores of tuples do not come from an SQL query (\eg, are given by a domain expert).}}}. 
Usually, the query will output $n$ tuples (\ie, $|S| = n$). Even if the number of attributes $m$ in the group-by clause is small, $n$ might be large due to large domains of the participating attributes.  
Therefore, a user frequently runs a top-$L$ query to retrieve the top-$L$ tuples (denoted by $S_L^*$) with highest scores (adding a \texttt{LIMIT L} clause to the above query). 
\cut{
{\small
\newsavebox\sqltwo
\begin{lrbox}{\sqltwo}\begin{minipage}{\textwidth}
\lstset{language=SQL, basicstyle=\ttfamily, tabsize=2}
\begin{lstlisting}[mathescape]
SELECT $\attr_{aggr}$, aggr(B) as val         -- $\pmb{(Q_k)}$
FROM R
GROUP BY $\attr_{aggr}$
ORDER BY val DESC LIMIT $k$
\end{lstlisting}
\end{minipage}\end{lrbox}
\resizebox{0.9\textwidth}{!}{\usebox\sqltwo}
}
}

\cut{
We denote the subset of $S$ containing the top-$L$ tuples as $S_L^*$. 
Since such top-$k$ tuples may contain 
 redundant information, 
 instead of $S_k^*$, our goal is to output an answer set $\soln$ that achieves a good balance of 
 (i) \emph{relevance}: the answers returned have high scores, (ii) \emph{diversity}: the answers returned are not too similar to each other, and (iii) \emph{coverage}: the answers contain a desired number of top tuples from $S$.

}

\textbf{Clusters.~~} To display a solution with relevance, diversity, and coverage, our output is provided in \emph{two layers}: the top layer displays a set of {\em clusters} that hide the values of some attributes by replacing them with \emph{don't-care} ($*$) values, and the second layer contains the original elements covered by them.

\cut{
\begin{enumerate} 
\itemsep0em
\item The top layer displays a set of { \bf clusters} (defined below), that hide the values of some attributes by replacing them with \emph{don't-care} ($*$) values. 
\item The users can expand any cluster and see the {\bf original elements} covered by the cluster along with their value in the relation $S$. 
\end{enumerate}
}

\par 
For every original element $t$ in the output $S$ of $Q$, 
let $\val(t)$ denote the value or score of $t$. Other than the value, each $t \in S$ has $m$ attributes $A_1, \cdots, A_m$ with active domains $D_1, \cdots, D_m$ respectively. 
A \emph{cluster} $C$ on $S$ has the form: ${C \in \prod_{i = 1}^{m} D_{i}\cup \{*\}}$.
 Let $\allclusters$ denote the set of all clusters for relation $S$. 
We assume that the $m$ attributes $A_1, \cdots, A_m$ 
have a predefined order, and therefore we omit their names to specify a cluster.
For instance, for $m = 4$ attributes $A_1, A_2, A_3, A_4$, $(a_1, b_1, *, *)$ implies that $(A_1 = a_1) \wedge (A_2 = b_1)$, and the values of $A_3$ and $A_4$ are don't-care ($*$). We denote the value of an attribute $A_i$ of $C$ by $C[A_i]$; where $C[A_i]  \in D_i \cup \{*\}$, $i \in [1, m]$. 
In particular, each element $t$ in $S$ also qualifies as a cluster, which is called a \emph{singleton cluster}. 
\cut{
\begin{definition}\label{def:cov}
A cluster $C$ \emph{covers} another cluster $C'$ if 
$\forall i \in \left[ 1,m \right]$, $C\left[ A_{i} \right] = *$ or $C \left[ A_{i} \right] = C'\left[ A_{i} \right]$.
\end{definition}
}
\par
A cluster $C$ \emph{covers} another cluster $C'$ if 
$\forall i \in \left[ 1,m \right]$, $C\left[ A_{i} \right] = *$ or $C \left[ A_{i} \right] = C'\left[ A_{i} \right]$.
Since each element $t$ in $S$ is also a cluster, each cluster $C$ covers some elements from $S$. Further, the notion of coverage naturally extends to a subset of clusters $\soln$. For $C \in \allclusters$, $\cov(C) \subseteq S$ denotes the elements covered by $C$, and 
for $\soln \subseteq \allclusters$, $\cov(\soln) \subseteq S$ denotes the elements covered by at least one cluster in $\soln$, \ie, $\cov(\soln) = \cup_{C \in \soln} \cov(C)$. Figure~\ref{fig:clusters} shows two clusters $C_1 = (*,*, c_1, d_1)$, $C_2 = (a_2, b_1, *, d_1)$, and the elements they cover. 
Note that two clusters may have overlaps in elements they cover. Here $C_1, C_2$ 
 have overlap on the tuple $(a_{2}, b_{1}, c_{1}, d_{1})$.

\begin{figure}[t]
\centering
\vspace{-2ex}
\subfloat[{\scriptsize }]{
\tiny
\begin{tabular}{|c||c|c|c|c|}
\hline
& $A$ & $B$ & $C$ & $D$ \\\hline 
$C_1$ & $*$ & $*$ & $c_{1}$ & $d_{1}$ \\\hline 
    & $a_{1}$ & $b_{2}$ & $c_{1}$ & $d_{1}$ \\ 
 & $a_{1}$ & $b_{3}$ & $c_{1}$ & $d_{1}$ \\ 
 & $a_{1}$ & $b_{4}$ & $c_{1}$ & $d_{1}$ \\ 
 & $a_{2}$ & $b_{1}$ & $c_{1}$ & $d_{1}$ \\\hline 
$C_2$ & $a_{2}$ & $b_{1}$ & $*$ & $d_{1}$ \\\hline 
 & $a_{2}$ & $b_{1}$ & $c_{1}$ & $d_{1}$ \\ 
    & $a_{2}$ & $b_{1}$ & $c_{4}$ & $d_{1}$ \\\hline 
\end{tabular}
\label{fig:clusters}}
\subfloat[{\scriptsize}]{\raisebox{-7ex}
{
\includegraphics[width=0.50\linewidth]{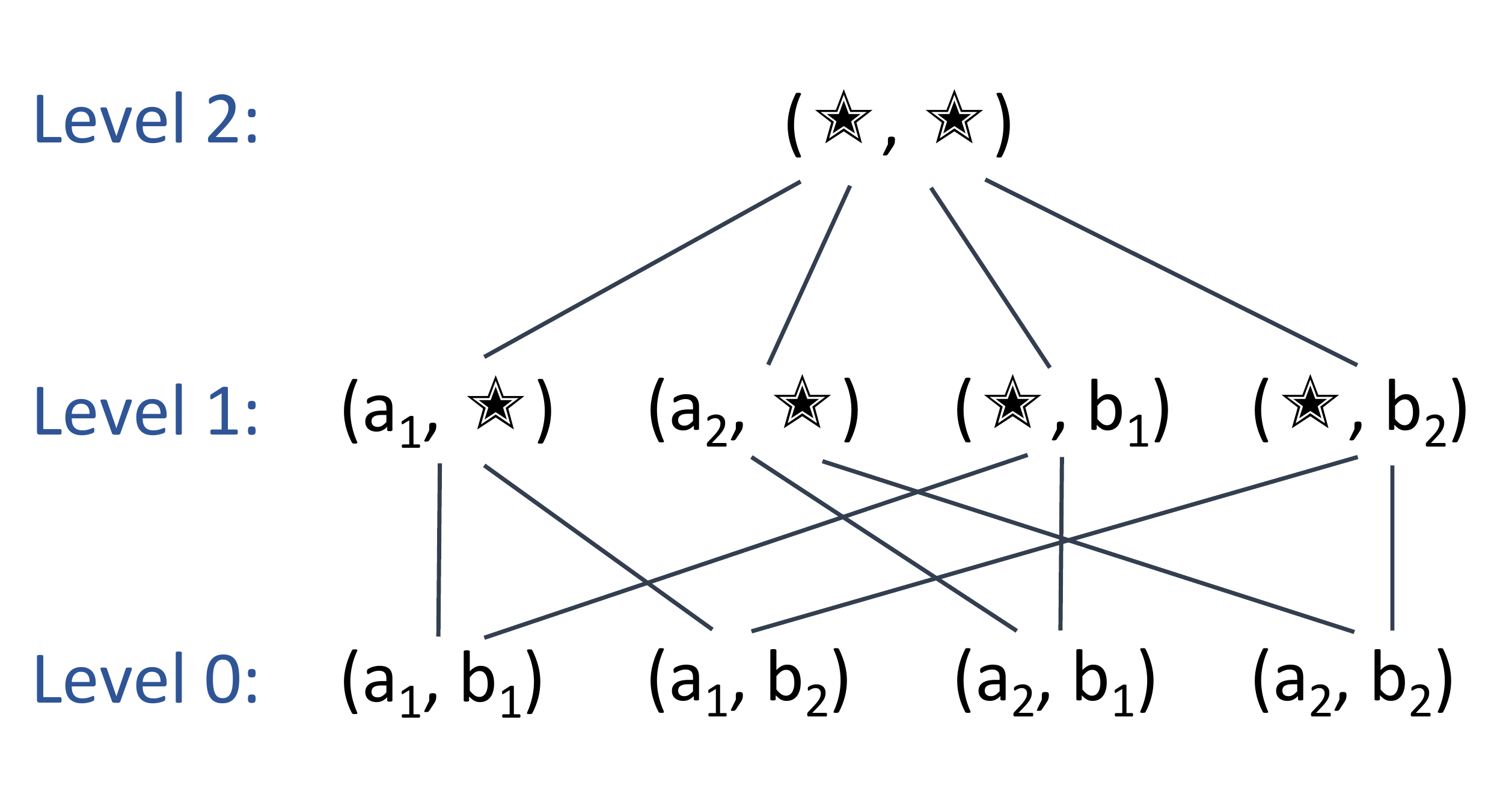}}
\label{fig:semilattice}}
\vspace{-3mm}
\caption{(a) Example clusters, and (b) semilattice on clusters.}
\vspace{-1mm}
\label{fig:cluster-lattice}
\end{figure}


\cut{
\begin{example}\label{eg:clusters}
The following table illustrates two clusters and tuples they cover. 

\begin{center}
{\small
\begin{tabular}{|c||c|c|c|c|c|}
\hline
& $A$ & $B$ & $C$ & $D$ & $\val$\\\hline
$C_1$ & $*$ & $*$ & $c_{1}$ & $d_{1}$ & \\\hline 
    & $a_{1}$ & $b_{2}$ & $c_{1}$ & $d_{1}$ & 49\\
 & $a_{1}$ & $b_{3}$ & $c_{1}$ & $d_{1}$ & 48\\
 & $a_{1}$ & $b_{4}$ & $c_{1}$ & $d_{1}$ & 47 \\
 & $a_{2}$ & $b_{1}$ & $c_{1}$ & $d_{1}$ & 46\\\hline
$C_2$ & $a_{2}$ & $b_{1}$ & $*$ & $d_{1}$\\\hline 
 & $a_{2}$ & $b_{1}$ & $c_{1}$ & $d_{1}$ & 46\\
    & $a_{2}$ & $b_{1}$ & $c_{4}$ & $d_{1}$ & 40\\
 & $a_{2}$ & $b_{1}$ & $c_{2}$ & $d_{1}$ & 35\\
 & $a_{2}$ & $b_{1}$ & $c_{3}$ & $d_{1}$ & 34 \\\hline
\end{tabular}
}
\end{center}

Both clusters have four attributes $A_1, A_2, A_3, A_4$. The first cluster is $C_1 = (*,*, c_1, d_1)$, and the second cluster is $C_2 = (a_2, b_1, *, d_1)$. For any tuple $t = (a, b, c_{1}, d_{1})$ where $a \in D_1, b \in D_2$, $t$ is covered by $C_1$. Also note that two clusters may have overlaps in elements they cover. Here $C_1, C_2$ 
 have overlap on the tuple $t(a_{2}, b_{1}, c_{1}, d_{1})$.
 \end{example}
 }

\textbf{Distance function.~~}
While the distance between two elements is straightforward (the number of attributes where their values differ\footnote{\reva{In this paper we focus on categorical attributes; other distance functions suitable for numeric attributes is a direction for future work (Section~\ref{sec:conclusions}).}}, the distance between two clusters has several alternatives due to the presence of the don't care ($*$) values. We define the distance between two clusters as the number of attributes where they do not have the same value from the domain. The distance function can be shown to be a metric and it exhibits monotonicity property (discussed in Section~\ref{sec:properties}) that we use in our algorithms. 

\begin{definition}\label{def:distance}
The distance $d(t, t')$ between two elements $t, t'$ is the number of attributes where their values differ, i.e., $d(t, t') = |\{i \in [1, m]:  t[A_i] \neq t'[A_i]\}|$.
The distance between two clusters $C, C'$ is the 
number of attributes where either \revb{(i) at least one one of the values is $*$, or (ii) the values are different in $C, C'$}:~
$d(C, C') = |\{i \in [1, m]: C[A_i] = *, {\rm or}, C'[A_i] = *, {\rm or}, C[A_i] \neq C'[A_i]\}|$.
\end{definition}
In Figure~\ref{fig:clusters}, the distance between $C_1 = (*, *, c_1, d_1)$ and $C_2 = (a_2, b_1, *, d_1)$ is 3 due to the presence of $*$-s in $A_1, A_2, A_3$. 
Intuitively, the distance between two clusters is the maximum possible distance between any two elements that these two clusters may contain, and therefore is measured by counting the number of attributes where they do not agree on a value from the domain. The distance function can also be explained in terms of similarity measures between two tuples or clusters:
if the distance between two clusters is $\geq D$, then the number of common attribute values between them is $\leq m - D$ where $m$ is the total number of attributes.

\cut{
\textbf{Size, diversity, and coverage constraints~~} Given a size parameter $k$ and a coverage parameter $L$, instead of top-$k$ original elements from $S$, 
\oursystem\ outputs (at most) $k$ clusters $\soln$ satisfying the following properties:

%
\begin{itemize} 
\itemsep0em
\item \textbf{Diversity  $D$:} The \emph{distance} $d(C_1, C_2)$ between any two output clusters $C_1, C_2 \in \soln$ should be at least $D$. Equivalently, not more than $m-D$ of the attributes of any two clusters should have the same values. The distance function $d$ is discussed in Section~\ref{sec:distance} using the \emph{semilattice} structure on the clusters.
\item \textbf{Coverage $L$:} The output clusters should \emph{cover} the original top-$L$ elements in $S$, \ie, $\soln \supseteq S_L^*$.
\end{itemize}
}


\cut{\section{A Two-Layered Framework}\label{sec:properties}}

\section{Framework}\label{sec:properties}

\cut{
The incomparability condition and the distance function in Definition~\ref{def:problem} 
can be analyzed by viewing the clusters as a natural semilattice. This semilattice structure provides additional insights into the optimization problems in Definition~\ref{def:problem}, and 
helps design efficient heuristics in Section~\ref{sec:algorithms}.  
}

In this section, we discuss the technical details for our framework: Section~\ref{sec:optimization} formally defines the optimization problem, Section~\ref{sec:semilattice} discusses the semilattice structure and properties of the clusters, and Section~\ref{sec:complexity} discusses the complexity of the optimization problem.

\subsection{Optimization Problem Definition}\label{sec:optimization}
For a cluster $C$, let $\avg(C)$ denote the average value of all the elements contained in $C$, \ie, $\avg(C) = \frac{\sum_{t \in \cov(C)} \val(t)}{\ansc{|\cov(C)|}}$. Similarly, for a set of clusters $\soln$, $\avg(\soln)$ denotes the average value of the elements covered by $\soln$. 

\begin{definition}\label{def:problem}
\vspace{-1.5mm}
Given relation $S$ with original tuples and their values, size constraint $k$, coverage constraint $L$, distance constraint $D$, and set $\allclusters$ of possible clusters for $S$, a subset $\soln \subseteq \allclusters$ is called a \emph{feasible solution} if all the following conditions hold:
(1) \textbf{(Size $k$)} The number of clusters in $\soln$ is at most $k$, \ie, $|\soln| \leq k$.
(2) \textbf{(Coverage $L$)} $\soln$ covers all top-$L$ elements in $S$, \ie, $S_L^* \subseteq \cov(\soln)$.
(3) \textbf{(Distance $D$)} The distance between any two clusters $C_1, C_2$ in $\soln$ is at least $D$, \ie, $d(C_1, C_2) \geq D$.
(4) \textbf{(Incomparability)} \ansb{No clusters in $\soln$ cover any other cluster in $\soln$} (equivalently, the clusters should form an \emph{antichain} in the semilattice discussed in Section~\ref{sec:semilattice}).
The objective (called \maxval) is to find a feasible solution $\soln$ with maximum average value $\avg(\soln)$.
\vspace{-1.5mm}
\end{definition}

The first three conditions in the above definition correspond to the input parameters, whereas the last condition eliminates unnecessary information from the returned solution. All these three parameters, $k, D, $ and $L$, are optional and can have a default value; \eg, the default value of $k$ can be $n$, if there is no constraint on the maximum number of clusters that can be shown. If maintaining diversity in the answer set is not of interest, then $D$ can be set to 0. Similarly, if coverage is not of interest, $L$ can be set to 0 (to display a set of clusters with high overall value), or 1 (to cover the element with the highest value in $S$), or to $k$ (to cover the original top-$k$ elements from $S$). To maintain all the constraints, the chosen clusters may pick up some \emph{redundant elements} $t \notin S_L*$ that do not belong to the top-$L$ elements.
\par
The optimization objective, called \maxval, intuitively highlights the important attribute-value pairs across all tuples with high values in $S$, even if they are outside the top-$L$ elements.\footnote{ We also investigated an alternative objective called \minsize\ that minimizes the number of redundant elements. 
However it may miss some interesting global properties covering many high-valued elements in $S$, and is less useful for summarization.} 
In any solution, the value of each covered element contributes only once to the objective function, hence the selected clusters in $\soln$ do not get any benefit by covering the elements with high value multiple times. In fact, the optimal solution when $D = 0$ and $k \geq L$ 
is obtained by selecting top-$k$ original elements. The optimal solution considers the average value instead of their sum since otherwise, always the \emph{trivial solution} $(*, *, \cdots, *)$ 
covering all elements 
and satisfying all constraints will be chosen. 
\cut{
\begin{definition}\label{def:redundant}
Given coverage constraint $L$, an element $t$ in $S$ is called a {\bf redundant element} if it is not one of the top-$L$ elements with highest values in $S$.
\end{definition}
Next we define two optimization problems with two orthogonal objectives that try to minimize the effect of these redundant elements while looking for a feasible solution:
\begin{definition}\label{def:opt} {\bf Optimization criteria:~} The two optimization criteria that we consider are as follows:
\begin{itemize} 
\itemsep0em
\item \textbf{\maxval}:
Here the objective is to find a feasible solution $\soln$ with maximum value $\val(\soln)$, where  the value of a set of clusters $\soln$ is the average value of the elements covered by $\soln$: 
$\val(\soln) = \frac{\sum_{t \in \cov(\soln)} \val(t)}{|\cov(\soln)|}$.
\item \textbf{\minsize}: Here the objective is to find a feasible solution $\soln$ that covers as few redundant elements outside the required top-$L$ elements as possible, \ie, $\soln$ should minimize $|\cov(\soln) \setminus S_L^*|$.
\end{itemize}
\end{definition}
The above two optimization criteria may be useful in different scenario: in general, \maxval\ intuitively highlights the important attribute-value pairs across all tuples with high values in $S$, even if they are outside the top-$L$ elements.
However, in doing so, a number of additional high valued elements may be collected in the two-layered solution. On the other hand, \minsize\ aims to display fewer elements outside the top-$L$ elements and work better if the user intends to inspect fewer elements in the second layer, although it may miss some interesting global properties covering many high-valued elements in $S$. Here is a toy example: 
}

\cut{
Note that there may be multiple optimal solutions with different values of the actual number of clusters ($\leq k$). These solutions are feasible and are equivalent w.r.t. the objective criteria, and in such cases we may break the tie arbitrarily.
}

\cut{
In any solution, the value of each covered element contributes only once to the objective function of both \maxval\ and \minsize. Therefore, the selected clusters in $\soln$ do not get any benefit by covering the elements with high value multiple times. In fact, the optimal solution for both objectives coincides with the top-$k$ original elements when $D = 0$ and $k \geq L$ (proof is in full version \cite{fullversion}).
\par
In Definition~\ref{def:opt}, for the \maxval\ objective, the \emph{average} value of the elements instead of \emph{sum} has been chosen. Otherwise, always the \emph{trivial solution} $(*, *, \cdots, *)$ (discussed below) covering all elements in $S$ will be chosen. 
}
\cut{
\par
Another natural question is whether we need $k > L$ clusters to obtain a good solution for either of the objectives, since $L$ clusters are sufficient to cover the top-$L$ elements with any distance constraint (Proposition~\ref{prop:feasible-exists}). For the \minsize\ objective, any solution $\soln$ with $k > L$ clusters can be reduced to a solution of size at most $L$ by discarding unnecessary clusters while still covering top-$L$ elements; this process can only eliminate some redundant elements and thereby improve the objective value. For the \maxval\ objective, adding unnecessary clusters may improve the objective value.   
}
\cut{
The cluster  $(*, *, \cdots, *)$ with don't care value of all attributes $A_1, \cdots, A_m$ and containing all elements 
 is a trivial feasible solution satisfying all the properties in Definition~\ref{def:problem}. However, 
 the average value of the trivial solution can be low
 due to the presence of many redundant elements. 
 In addition, inclusion of this cluster prohibits inclusion of any other cluster due to the incomparability condition. 
 However, in some cases, the trivial solution may form an optimal solution for \maxval\ (details deferred to the full version \cite{fullversion} due to space constraints).
 }
 



\subsection{Semilattice on Clusters and Properties}\label{sec:semilattice}
A \emph{partially ordered set} (or, \emph{poset}) is a binary relation $\leq$ over a set of elements that is reflexive ($a \leq a$), antisymmetric ($a \leq b, b \leq a \Rightarrow a = b$), and transitive ($a \leq b, b \leq c \Rightarrow a \leq c$). 	A poset is a \emph{semilattice} if it has a \emph{join or least upper bound} for any non-empty finite subset. The coverage of elements described in Section~\ref{sec:prelim} naturally induces a semilattice structure
on our clusters $\allclusters$, where for any two clusters $C, C' \in \allclusters$, $C \leq C'$ if and only if $C'$ covers $C$, \ie, $\cov[C] \subseteq \cov[C']$. If $C \leq C'$, then $C'$ is called an \emph{ancestor} of $C$ in the semilattice, and $C$ is a \emph{descendant} of $C'$.
Equivalently, if a cluster $C_{up}$ covers another cluster $C_{down}$ by replacing exactly one attribute value of $C_{down}$ by the don't care value ($*$), then we draw an edge between them, and put $C_{up}$ at one level higher than $C_{down}$ in the semilattice (this gives a \emph{transitive reduction} of the poset). 
Level $\ell$ of the semilattice  is the set of clusters with exactly $\ell$ $*$ values. 
\cut{
Since there are $m$ attributes in the relation $S$, there are $m+1$ levels in the semilattice, the top-most level has $m$ $*$-s ($m$-th level), \ie, the trivial solution, whereas the bottom-most level has 0 $*$-s (0-th level), \ie, the singleton clusters containing the original elements. }
Figure \ref{fig:semilattice} shows the semilattice structure of $\allclusters$ that has two attributes $A_1$ and $A_2$, where the domains are $D_1 = \{a_1, a_2\}$ and $D_2 = \{b_1, b_2\}$.
\cut{
\begin{figure}[t]
\centering
\includegraphics[scale=0.2]{figures/semilattice.pdf}
\vspace{-3ex}
\caption{Semilattice on clusters; $m = 2$.}
\vspace{-0.5cm}
\label{fig:semilattice}
\end{figure}
}
The distance function described in Section~\ref{sec:prelim} has a nice monotonicity property 
that we use in devising our algorithms in Section~\ref{sec:algorithms} (proof is in Appendix~\ref{app:prop:distance:mono}):

\begin{proposition}\label{prop: distance:mono}
\vspace{-1.5mm}
{\bf (Monotonicity)~} Let $\soln$ be a set of clusters. 
Let $\lambda$ be the minimum distance between any two clusters in $\soln$ as defined in Definition~\ref{def:distance}, i.e. 
$\lambda = \min_{C, C' \in SC} d(C, C')$.
Let $SC' = (SC \setminus \{C_1\}) \cup \{C_2\}$, where a cluster $C_1$ is replaced by another cluster $C_2$ such that $C_2$ covers $C_1$ (\ie,  $C_2$ is an ancestor of $C_1$ in the semilattice). Let $\lambda^{'}$ be the minimum distance in $SC^{'}$, \ie,
$\lambda^{'} = \min_{C, C^{'} \in SC^{'}} d(C, C^{'})$.
Then
$\lambda^{'} \geq \lambda$. 
\vspace{-1.5mm}
\end{proposition}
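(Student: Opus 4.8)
The plan is to reduce the claim to a \emph{pointwise} distance inequality: whenever a cluster $C_2$ covers a cluster $C_1$, then for \emph{every} cluster $C'$ we have $d(C_2, C') \ge d(C_1, C')$. Granting this, the proposition follows at once. Indeed, consider any pair of distinct clusters in $SC'$. Either (i) both lie in $SC \setminus \{C_1\} \subseteq SC$, so their distance is at least $\lambda$ by the definition of $\lambda$; or (ii) the pair is $\{C_2, C'\}$ for some $C' \in SC \setminus \{C_1\}$, and then $d(C_2, C') \ge d(C_1, C') \ge \lambda$, again using the definition of $\lambda$ applied to the pair $\{C_1, C'\}$ in $SC$. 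Taking the minimum over all pairs of $SC'$ yields $\lambda' \ge \lambda$.

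To prove the pointwise inequality I would first recast the distance of Definition~\ref{def:distance} in complementary form. For clusters $C, C'$, say attribute $i \in [1,m]$ is an \emph{agreement index} if $C[A_i] = C'[A_i] \neq *$, i.e.\ both clusters carry the \emph{same concrete} (non-$*$) value on $A_i$; let $G(C, C')$ denote the set of agreement indices. By inspection of Definition~\ref{def:distance}, an index $i$ contributes to $d(C, C')$ exactly when it is \emph{not} an agreement index (a $*$ on either side, or two distinct concrete values, all contribute). Hence $d(C, C') = m - |G(C, C')|$.

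The crux is the set containment $G(C_2, C') \subseteq G(C_1, C')$. Take $i \in G(C_2, C')$, so $C_2[A_i] = C'[A_i] \neq *$. Since $C_2$ covers $C_1$, for this $i$ we have $C_2[A_i] = *$ or $C_2[A_i] = C_1[A_i]$; the first is excluded because $C_2[A_i] \neq *$, so $C_2[A_i] = C_1[A_i]$, and therefore $C_1[A_i] = C'[A_i] \neq *$, i.e.\ $i \in G(C_1, C')$. This gives $|G(C_2, C')| \le |G(C_1, C')|$, hence $d(C_2, C') = m - |G(C_2, C')| \ge m - |G(C_1, C')| = d(C_1, C')$, as needed.

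The only point requiring care — and the one I would treat as the ``main obstacle'', though it is light — is the $*$-bookkeeping in the complementary characterization: agreement must be recorded only for matching \emph{concrete} values, since two $*$'s (or a $*$ paired with a value) always count toward the distance; the covering relation is precisely what lets a concrete agreement of $C_2$ with $C'$ be ``pushed down'' to $C_1$. No semilattice join or other machinery is needed: once $d$ is written through $G(\cdot,\cdot)$, monotonicity is a one-line containment argument, and the general statement of the proposition follows by iterating or, more directly, by the case analysis above.
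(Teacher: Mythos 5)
Your proposal is correct and follows essentially the same route as the paper's proof: both rest on the pointwise inequality $d(C_2,C)\geq d(C_1,C)$ for every other cluster $C$, which the paper verifies by an attribute-by-attribute case analysis showing that each index contributing to $d(C_1,C)$ also contributes to $d(C_2,C)$, and then concludes exactly as you do by taking the minimum over pairs. Your complementary formulation via agreement indices ($d(C,C')=m-|G(C,C')|$ and $G(C_2,C')\subseteq G(C_1,C')$) is just a tidier packaging of that same case analysis, not a different argument.
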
 
\cut{
\begin{proof} \red{move to appendix}
The proof is similar to that of Proposition~\ref{prop:metric}. Consider any cluster $C \in SC$ other than $C_1$. Suppose $d(C_1, C) = \ell$, \ie, $\ell$ of the attributes contribute to the distance function. Fix such an attribute $A$. There are three possibilities. {\bf (1)} $C_1[A] = C[A] = *$. If $C_2[A] = *$, it contributes 1 to $d(C_2, C)$. If $C_2[A] \neq *$, \ie, an attribute value, then also it contributes 1 to $d(C_2, C)$.~~ {\bf (2)} $C_1[A] \neq *$, $C[A] \neq *$, and $C_1[A] \neq C[A]$. If $C_2[A] = *$, it contributes 1 to  $d(C_2, C)$. If $C_2[A] \neq *$, \ie, an attribute value, then it must be the same as $C_1[A]$, and therefore contributes to $d(C_2, C)$.~~ {\bf (3)} $C_1[A] = *$ and $C_1[A] \neq *$. Then $C_2[A] = C_1[A] = *$ and  it contributes 1 to  $d(C_2, C)$. ~~ {\bf (3)} $C_1[A] \neq *$ and $C[A] = *$. Then either $C_2[A] = C_1[A]$ or, $C_2[A] = *$. In either case, it contributes 1 to $d(C_2, C)$. Summing over all $A$ and considering all $C$, $\lambda' \geq \lambda$.
\end{proof}
}

Assuming the semilattice structure in Figure \ref{fig:semilattice}, note that $\{(a_1, b_2), (*,b_1)\}$ satisfies the distance constraint for $D = 2$. If we replace $(a_1, b_1)$ by one of its ancestors $(a_1, *)$, the new two clusters $\{(a_1, *), (*,b_1)\}$ also satisfies the constraint for $D = 2$.

%

\cut{
The clusters in the same level satisfy the incomparability property as well as a distance constraint (proof in \cite{fullversion}):

\begin{proposition}\label{prop:clusters-at-same-level}
The clusters $C_{\ell}$ in any level $\ell$, $\ell \in [0, m-1]$, are incomparable, and have mutual distance $\geq \ell +1 $. 
\end{proposition}
\cut{
\begin{proof}
Incomparability of the clusters at a level is obvious, since they must differ in at least one position. Any such cluster has exactly $\ell$ $*$-s, and exactly $m-\ell$ attribute values. For any two clusters at a level $\ell$, at least one of the non-$*$ attribute values must be different. Taking into account $\ell$ $*$-s and one different attribute value, the distance between any two clusters is at least $\ell + 1$.
\end{proof}
}

Let $\soln^*_{\ell} \subseteq \allclusters_{\ell}$ denote an optimal solution for the \maxval\ or \minsize\ objectives satisfying the $k$ and $L$ constraints (and ignoring the distance constraint $D$) when the clusters are restricted to level $\ell$. When $\ell \geq D-1$, by Proposition~\ref{prop:clusters-at-same-level}, the distance constraint is satisfied as well. 
}



%

\cut{
However, since the other objective \maxval\ focuses on average value, the value may increase or decrease as we go upward or downward in the cluster. \eg, even the above property (A) for \minsize\  does not hold for the \maxval\ objective function. 
}
\cut{
Note that the \maxval\ objective does not show monotonicity in the value of the optimal objective function as we go upward or downward in the lattice.
Further, for two levels $\ell, \ell+1$ with both having a feasible solution for the distance constraint $D$, it may happen that the average value of optimal solutions $\soln^*_{\ell}$ is less than that of $\soln^*_{\ell+1}$ even when $k \geq L$ (unlike \minsize).

In the full version \cite{fullversion} we give an example illustrating that even the trivial solution may be optimal (and therefore should not be ignored) for $D = 0$ and $k < L$, and even for $k \geq L$ for arbitrary $D$ (for $D = 0$ and $k \geq L$, the original top-$k$ elements themselves form the optimal solution, see \cite{fullversion}). 
}

\subsection{Complexity Analysis}\label{sec:complexity}
The optimization problem can be solved in polynomial time in \emph{data complexity} \cite{Vardi1982} if the size limit $k$ is a constant. This is because we can iterate over all possible subsets of the clusters of size at most $k$, check if they form a feasible solution, and then return the one with the maximum average value. 
However, this does not give us an efficient algorithm to meet our goal of interactive performance. For example, if $k=10$, the domain size of each attribute is 9, and the number of attributes is 4, the number of clusters (say $N$) can be $10^{4}$, and the number of subsets will be of the order of $N^k = 10^{40}$. 
\par
When $k$ is variable, the complexity of the problem may arise due to any of the four factors in Definition~\ref{def:problem}: the size constraint $k$, the coverage parameter $L$, the distance parameter $D$, and the incomparability requirement that the output clusters should form an antichain. Due to multiple constraints, it is not surprising that in general, even checking if there is a non-trivial feasible solution is NP-hard. 
In particular, when $k \leq L$, simply the  requirement of covering $L$ original elements by $k$ clusters in a feasible solution lead to NP-hardness without any other constraints. However, in the case when $k \geq L$ (the user is willing to see $L$ clusters),
the decision and optimization problems become relatively easier. 

In Appendix~\ref{sec:app-nphard} we show the following: 
\cut{
In this section, we show the following for both \maxval\ and \minsize\ objectives. (note that $D = 0$ implies that there is no diversity constraint; the maximum value of $D$ can be $m-1$ according to Definition~\ref{def:distance}); all proofs appear in the full version \cite{fullversion}.}
(1) $k \geq L$, $D = 0$:~ Top-$k$ elements give the optimal solution, 
since adding any redundant element worsens the \maxval\ objective. 
(2) $k \geq L$, arbitrary $D$:~ A non-trivial feasible solution always exists,
since we can pick arbitrary ancestors of each top-$L$ element from level $D-1$ satisfying all the constraints.  
However, the optimization problems are NP-hard. 
(3) $k < L$, $D = 0$:~ Even checking  whether a non-trivial feasible solution exists is NP-hard.
(4) $k < L$, arbitrary $D$: The same hardness as above holds.

\cut{Indeed, the optimization problems, even the decision problem of finding a non-trivial solution for $k < L$, might appear at least as hard as the \emph{set cover} problem (given a set of elements and subsets, find the minimum number of subsets that can cover all elements), which is known to be NP-hard, and therefore NP-hardness results are as expected. However, an arbitrary instance of the set cover problem may not give us a valid instance of the lattice structure required in or problem.}

Although the optimization problem shows similarity with set cover,  for a formal reduction, we need to construct an instance of our problem by creating a set of tuples and ensure that the `sets' in this reduction conform to a semi-lattice structure. To achieve this, we give reductions from the \textbf{tripartite  vertex cover} problem that is known to be NP-hard \cite{LlewellynTT93}, and construct instances $S$ with only $m  = 3$ attributes. The NP-hardness proof the optimization problem for $k \geq L$ is more involved than the NP-hardness proof for the decision problem for $k < L$, since in the former case the coverage constraint with $k \geq L$ does not lead to the hardness. 


\cut{

\cut{
\begin{proposition}\label{prop:feasible-exists}
For $k \geq L$ and non-zero $D$, a non-trivial feasible solution always exists.  (proof in Appendix~\ref{app:prop:feasible-exists})
\end{proposition}

\begin{proof}
For each of the top-$L$ elements, pick one of its ancestors at level $D-1$). Let us call this set of clusters $\soln_L$. Clearly, $\soln_L$ covers the top-$L$ elements and has size $\leq L \leq k$ (two top-$L$ elements may pick the same ancestor). Any two clusters in level $D-1$ (and therefore in $\soln_L$) are incomparable  and have distance $\geq D$ (from Proposition~\ref{prop:clusters-at-same-level}). Hence $\soln_{L}$ is a feasible solution.
\end{proof}
}

For $k \geq L$, although finding a feasible solution is easy (Proposition~\ref{prop:feasible-exists}), 
the optimization problems for both \maxval\ and \minsize\  are NP-hard for an arbitrary value of $D$. Indeed, the optimization problem, even the decision problem of finding a non-trivial solution for $k < L$, might appear at least as hard as the \emph{set cover} problem (given a set of elements and subsets, find the minimum number of subsets that can cover all elements), which is known to be NP-hard. However, for a reduction, we need to create a set of attributes and assign their values to form a set of tuples in a database. Here we give reductions from the \emph{tripartite  vertex cover} problem that is known to be NP-hard\footnote{\cite{GolabKLSS15} gives a reduction from the tri-partite vertex cover problem for \emph{size-constrained weighted set cover} (given weights on the subsets, a size constraint $k$, a coverage fraction $s$, the goal is to return up to $k$ sets that together contain at least $sn$ elements and whose sum of weights is \emph{minimal}). In contrast, in our setting, the weights are assigned on elements (not on subsets); the goal is to select at most $k$ subsets with \emph{maximum} value with the distance and other restrictions,  that cover top-$L$ original  elements; and we show NP-hardness of the decision problem even if the elements are unweighted.}.
\begin{theorem}\label{thm:opt-NP-hard}
The optimization problems for the objectives of \maxval\ and \minsize\ for the case when $k \geq L$ and $D > 0$ is NP-hard. (proof in full version \cite{fullversion}). 
\end{theorem}
\cut{
\begin{proof}
The reduction is from the problem of finding a minimum vertex cover in a tri-partite graph 
$G$ with partitions $(X, Y, Z)$, which has shown to be NP-hard in \cite{..}. The goal in this problem 
is to decide if the input graph has a vertex cover of size $\leq M$, \ie, a subset of vertices $T \subseteq X \cup Y \cup Z$ and  $|S| \leq M$ such that any edge in $G$ has at least one endpoint in $T$. First we give the proof for {\bf \maxval}. Suppose $G$ has $N_e$ edges and $N_v$ vertices.
\par
Given such a graph $G$ and a bound $M$, we construct a database instance $S$ as follows. There are three attributes $A_X, A_Y, A_Z$.
(i) \emph{Top-$L$ tuples from the edges of $G$:} Any edge of the form $(x, y), x \in X, y \in Y$ forms two tuples $(x, y, Z^1_{xy})$ and $(x, y, Z^2_{xy})$, where $Z^1_{xy}, Z^2_{xy}$ are two unique values of the $A_Z$ attribute for the edge $(x, y)$ in $G$. Similarly, an edge $(y, z), y\in Y, z \in Z$ forms two tuples $(X^1_{yz}, y, z)$, $(X^2_{yz}, y, z)$, and an edge $(x, z), x \in X, z \in Z$ forms a tuple $(x, Y^1_{xz}, z)$. $(x, Y^2_{xz}, z)$. The weights of these tuples are 1. (ii) \emph{Redundant  tuples from the  vertices of $G$:} For any vertex $x \in A_X$ in $G$, create a redundant tuple $(x, \gamma^2_{x}, \gamma^3_{x})$. These redundant tuples have weight 0. Similarly, form redundant tuples for vertices $y \in A_Y$: $(\gamma^1_{y}, y, \gamma^3_{y})$, and for vertices $z \in A_Z$: $(\gamma^1_{z}, \gamma^2_{z}, z)$ with weight 0. 
(iii) \emph{More redundant  tuples:} For each $Z^1_{xy}$, form $N_r = 2*N_e *N_v$ redundant tuples of the form $(-, -, Z^1_{xy})$ with weight 0, where the positions with $-$ are filled with unique attribute values. Similarly, form redundant $N$ tuples for each of $Z^2_{xy}, Y^1_{xz}, Y^2_{xz}, X^1_{xz}$, and $X^2_{yz}$, placing these attribute values in their corresponding positions.
\par
 We set $k = M$, $L = 2 \times N_e$, $D = 3$. Note that only the tuples from the edges of the form $(x, y, Z^1_{xy}), (x, y, Z^2_{xy}), (x, Y^1_{xz}, z), (x, Y^2_{xz}, z), (X^1_{yz}, y, z), (X^2_{yz}, y, z)$ form the top-$L$ original elements and have to be covered. We claim that $G$ has a vertex cover of size $\leq M$ if and only if $S$ has a solution, a set of clusters $SC$, of value $\geq \frac{2N_e}{2N_e + M}$, where $N_e$ is the number of edges in $G$.
\par
(only if) Suppose $G$ has a vertex cover $T$ of size $M' \leq M$. For any $x \in X \cap T$, choose the cluster $(x, *, *)$ in $SC$; similarly for $y \in Y \cap T$ and $z \in Z \cap T$, choose the clusters $(*, y, *)$ or $(*, *, z)$ respectively in $SC$. These clusters have mutual distance $= 3$, are incomparable, have size $\leq M$, and cover all top-$L$ elements. Each such cluster also covers a redundant element (with $\gamma$ attribute value) of value 0. Therefore, the value of the solution is $\frac{2N_e \times 1 + M' \times 0}{2N_e + M'} \geq \frac{2N_e}{2N_e + M}$. 
\par
(if) Suppose $S$ has a solution $SC$ of value $\geq \frac{2N_e}{2N_e + M}$. Without loss of generality, any cluster in $SC$ covers at least one of the top-$L$ elements with value 1, otherwise it can be discarded without increasing the value of the average or size/distance/coverage of the solution (the redundant elements have value 0 and can only reduce the average). Also note that the trivial solution $(*,*, *)$ cannot be chosen, since it has value $\frac{2N_e + 0}{2N_e + N_v + 2N_e * N_r} $ $\leq \frac{2N_e}{2N_e + 2N_e * N_r}$ $= \frac{1}{1 + N_r} = \frac{1}{1 + 2N_e N_v}$, which is strictly less than the assumed value of $SC$ $\geq \frac{2N_e}{2N_e + M}$. 
\par
\emph{(A) None of the chosen clusters in $SC$ can be of the form  $(*, *, Z^1_{xy})$ (similarly for $Z^2_{xy}, Y^1_{xz}, Y^2_{xz}, X^1_{xz}$).} Suppose one cluster in $SC$ is $(*, *, Z^1_{xy})$. Then it covers $N_r$ redundant tuples that are not covered by any other cluster in $SC$. Suppose $SC$ has $N'$ redundant tuples all together from all other clusters. Then the average value of $SC$ is $\frac{2N_e + 0}{2N_e + N' + N_r}$ $\leq \frac{2N_e}{2N_e + N_r}$ $= \frac{2N_e}{2N_e + 2N_e N_v}$ = $\frac{1}{1 + N_v}$, which is strictly less than $\frac{2N_e}{2N_e + M}$, the assumed value of $SC$ since $M \leq N_v$.
\par
Next we argue that each cluster in $SC$ can have exactly two $*$-s, combining with (A) above, must be of the form $(x, *, *), (*, y, *), (*,*, z)$.
\par
\emph{(B) The clusters in $SC$ cannot have zero $*$-s.} Suppose without loss of generality that for a top-$L$ tuple  $(x, y, Z^1_{xy})$, the singleton cluster  $(x, y, Z^1_{xy})$ has been chosen in $SC$. Due to the incomparability condition, none of $(x, y, *), (x, *, *), (*, y, *)$ can belong to $SC$. Hence, to cover the other top-$L$ tuple $(x, y, Z^2_{xy})$, one of $(x, y, Z^2_{xy}), (x, *, Z^2_{xy}), (*, y, Z^2_{xy})$ has to be chosen $(*, *, Z^2_{xy})$ cannot be chosen due to (A) above. However, these three clusters have distance 1, 2, 2 respectively from $(x, y, Z^1_{xy})$ violating the distance constraint $D = 3$.
\par
\emph{(C) The clusters in $SC$ cannot have one $*$-s.} (i) Suppose for a top-$L$ tuple  $(x, y, Z^1_{xy})$, the cluster  $(x, y, *)$ has been chosen in $SC$. Due to the incomparability condition, none of $(x, *, *), (*, y, *)$ can belong to $SC$, and any cluster with 1 or zero $*$ to cover top-$L$ tuples from edges of the form $(x, y')$ or $(x', y)$ in $G$ will have distance $\leq 2$ with $(x, y, *)$, violating $D = 3$. If $(x, *, Z^1_{xy})$ is chosen (same for $(*, y, Z^1_{xy})$), to cover the other top-$L$ tuple $(x, y, Z^2_{xy})$, one of $(x, y, Z^2_{xy}), (x, *, Z^2_{xy}), (*, y, Z^2_{xy})$ has to be chosen. Since the first two have distance 1 and 2 respectively from $(x, *, Z^1_{xy})$ violating the distance constraint $D = 3$, $(*, y, Z^2_{xy}$ must also belong to $SC$. However, $(x, *, Z^1_{xy})$ and $(*, y, Z^2_{xy})$ together rule out covering clusters for other top-$L$ tuples from edges of the form $(x, y')$ or $(x', y)$ in $G$, which will have distance $\leq 2$ from either of these two clusters, violating $D = 3$.
\par
Hence the clusters in $SC$ must be of the form $(x, *, *)$, $(*, y, *)$, or $(*, *, z)$, which corresponds to a vertex cover in $G$. Suppose there are $K$ clusters in $SC$. Then the value of $SC$ is $\frac{2N_e + 0}{2N_e + K} \geq \frac{2N_e}{2N_e + M}$ by our assumption, hence $K \leq M$, and $G$ has a vertex cover of size at most $M$.
\par
The proof for \textbf{\minsize} is almost the same and simpler, where we argue that $G$ has a vertex cover of size $M$ if and only if $S$ has a feasible solution $SC$ covering $M$ redundant elements. 
\end{proof}
}

Unlike the case when $k \geq L$, if there are fewer clusters than the desired number of top elements to be covered $k < L$, even the decision version becomes NP-hard.

\begin{theorem}\label{thm:decision-nphard}
The decision version of whether a feasible non-trivial solution exists for the problem defined in Definition~\ref{def:problem} is NP-hard,even with three attributes, $D = 0$, $L = n$, and uniform weights of the elements\footnote{The top-$k$ original elements may not constitute an optimal solution for $D = 0$ when $L > k$.} (proof in full version \cite{fullversion}). 
\end{theorem}
\cut{
\begin{proof}[of Theorem~\ref{thm:decision-nphard}]
The reduction is again from the problem of finding a minimum vertex cover in a tri-partite graph 
$G$ with partitions $(X, Y, Z)$ (see the proof of Theorem~{thm:opt-NP-hard}).
\cut{
 which has shown to be NP-hard in \cite{..}. The goal in this problem 
is to decide if the input graph has a vertex cover of size $\leq M$, \ie, a subset of vertices $T \subseteq X \cup Y \cup Z$ and  $|S| \leq M$ such that any edge in $G$ has at least one endpoint in $T$.
}
\par
Given such a graph $G$ and a bound $M$, we construct a database instance $S$ as follows. There are three attributes $A_X, A_Y, A_Z$.
Any edge of the form $(x, y), x \in X, y \in Y$ forms a tuple $(x, y, Z_{xy})$, where $Z_{xy}$ is a unique value of the $A_Z$ attribute for the edge $(x, y)$ in $G$. Similarly, an edge $(y, z), y\in Y, z \in Z$ forms a tuple $(X_{yz}, y, z)$, and an edge $(x, z), x \in X, z \in Z$ forms a tuple $(x, Y_{xz}, z)$. The total number of tuples in the relation $S$ is $n$ and each tuple has the same weight $1$. We set $k = M$, $L = $ the number of edges in $G$, and claim that $G$ has a vertex cover of size $\leq M$ if and only if $S$ has a non-trivial solution, a set of clusters $SC$, of size $\leq k$.
\par
(only if) Suppose $G$ has a vertex cover $T$ of size $\leq M$. For any $x \in X \cap T$, choose the cluster $(x, *, *)$ in $SC$; similarly for $y \in Y \cap T$ and $z \in Z \cap T$, choose the clusters $(*, y, *)$ or $(*, *, z)$ respectively in $SC$. Since $T$ is a vertex cover, $SC$ will cover all tuples in $S$ and has size $\leq M = k$.
\par
(if) Suppose $S$ has a non-trivial solution $SC$ of size $\leq k = M$. The clusters in $SC$ can have $*$ in zero, one, or two positions (all three positions cannot be $*$ since $SC$ is a non-trivial solution). We will argue that any cluster in $SC$ can be replaced by a cluster with two $*$ and a vertex from $G$ forming another feasible non-trivial solution without increasing the size of $SC$ (the size may decrease). Consider any tuple of the form $t = (x, y, Z_{xy})$ in $S$ (the other two cases $(x, Y_{xz}, z)$ and $(X_{yz}, y, z)$ follow similarly). If $t$ is covered by a cluster of the form $(x, y, Z_{xy})$, $(x, y, *)$, $(x, *, Z_{xy})$, $(*, y, Z_{xy})$, or, $(*, *, Z_{xy})$, such clusters cannot cover any other tuple in $S$ since $Z_{xy}$ is unique, so replace such clusters by $(x, *, *)$ or $(*, y, *)$. After this is repeated for all tuples in $S$, the only types of clusters remaining in $SC$ be of the form $(x, *, *)$, $(*, y, *)$, or $(*, *, z)$, which corresponds to a vertex cover in $G$, and the size of the solution has not increased.
\end{proof}
}
}


\cut{

\begin{figure}
\subfigure[\small Graph formulation]{\label{fig:general-graph}\includegraphics[width=0.45\linewidth]{general-graph.pdf}}~
\subfigure[\small Reduction from independent set]{\label{fig:indepset}\includegraphics[width=0.45\linewidth]{indepset.pdf}}

\end{figure}

\textbf{A simple graph formulation:} The optimization problem stated in Definition~\ref{def:formalDef} has multiple constraints, and therefore, it is not surprising that the problem is NP-hard. In fact, there is a simple graph formulation of this problem where we have two sets of vertices, $V_c$ for clusters and $V_t$ for elements, with edges in between if a cluster covers an element. In addition, there are edges between clusters if either one of them covers the other, or if their distance is $< D$. In this setting, our problem can be considered as finding a subset of the clusters with maximum value (i) of size at most $k$, (ii) forms an independent subset of $V_c$ (no edges between any two clusters), and (iii) covers top-$L$ elements\footnote{Our problem has a coverage constraint but is different from the max-cover problem (cover maximum elements using $k$ sets) in terms of the cost -- if two sets cover the same element, the element is counted once in the max-cover problem, but will contribute the overall score twice in our problem.}.    Even ignoring $k$ and $L$ (only one element in $V_t$ with weight 1 covered by all clusters in $V_c$, and $k = |V_c|, L = 1$), there is a simple reduction from the independent set problem that makes this formulation NP-hard. However, if we take the semilattice structure of the clusters into account, this reduction does not work.
\par
\red{THE COMPLEXITY OF THIS PROBLEM IS UNSOLVED -- look at the related work folder under Box -- finding a max weight antichain in a lattice is solvable in poly-time from the min-flow max-cut theorem -- see Mohring85, page 66-67. However, the complexity of this problem is unclear :~ find  a max weight antichain among the one having total width at most k -- see the np-hard.pdf in the related work folder -- but this is an old paper (1996) -- need to look for newer papers that have cited this paper}.\\

\yuhao{I am adding the NP-hardness proof here}.\\

\begin{theorem}
This.....non-trivial....
\end{theorem}

\begin{proof}
Proof here
\end{proof}

}

\renewcommand{\algorithmicrequire}{\textbf{Input:}} 
\renewcommand{\algorithmicensure}{\textbf{Output:}}
\section{Algorithms}\label{sec:algorithms}
Given that the optimization problem for the case $k \geq L$, and even the decision problem for the case $k < L$, are NP-hard, we design efficient heuristics that are implemented in our prototype and are evaluated by experiments later. Not only finding provably optimal solutions for our objectives is computationally hard, but designing efficient heuristics for these optimization problems is also non-trivial. The optimization problem in Definition~\ref{def:problem}
has four orthogonal objectives for feasibility: incomparability, size constraint $k$, distance constraint $D$, coverage constraint $L$. In addition, the chosen clusters should have high quality in terms of their overall average value.
In Section~\ref{sec:bottom-up}, we discuss the \bottomup\ algorithm that starts with $L$ singleton clusters satisfying the coverage constraint, and merges clusters greedily when they violate the distance, incomparability, or the size constraints. 
Then in Section~\ref{sec:fixedorder}, we discuss an alternative to \bottomup\ that we call the \fixedorder\ algorithm that builds a feasible solution incrementally considering each of the top-$L$ elements one by one. In general, \bottomup\ gives better quality solution \revc{and as discussed in Section~\ref{sec:guidance}, is amenable to processing of multiple parameter settings as precomputation}, whereas \fixedorder\ is more efficient, hence in Section~\ref{sec:hybrid} we describe a \hybrid\ algorithm combining these two.


\subsection{The Bottom-Up Greedy Algorithm}\label{sec:bottom-up}
%
Here we start with $L$ singleton clusters with the top-$L$ elements as our current solution $\soln$, which satisfies the coverage and incomparability constraints, but may violate size and distance constraints. Then we iteratively merge clusters in two phases: the \emph{first phase} ensures that no two clusters in  $\soln$ are within distance $D$ of each other, the \emph{second phase} ensures that the number of clusters is $k$ or less. The following invariants are 
 maintained by the algorithm at all time steps: (1) {\em (Coverage)} Clusters in $\soln$ cover the top-$L$ answers. (2) {\em (Incomparability)} No cluster in $\soln$ covers another. (3) {\em (Distance)} The minimum distance among the pairs of clusters in $\soln$ never decreases. 
During the execution of the algorithm, the only operation is \emph{merging of clusters}, therefore, the coverage invariant above is always maintained. Further, the \merge\ procedure described below maintains the incomparability invariant.
\par
\textbf{The $\merge(\soln, C_1, C_2)$ procedure.~~} Given two clusters $C_1, C_2 \in \soln$, the $\merge(\soln, C_1, C_2)$ procedure replaces $C_1, C_2$ by a new cluster $C_{new} = \mathrm{LCA}(C_1, C_2)$, their {\em least common ancestor}, and also removes any other cluster in $\soln$ that is also covered by $C_{new}$.  $\mathrm{LCA}(C_1, C_2)$ is computed simply by replacing by $*$ any attribute whose values in $C_1, C_2$ differ.   For instance, the LCA of $(a_1, *, c_1, *)$ and $(a_1, b_2, c_2, *)$ is $(a_1, *, *, *)$. Further, if another cluster $(a_1, b_3, *, *)$ belongs to $\soln$, $\merge$ would also remove this cluster, since it is covered by $(a_1, *, *, *)$. 
\par
In addition to maintaining the coverage condition, the merging process does not add any new violations to the distance condition in $\soln$. This follows from the monotonicity of the distance condition given in Proposition~\ref{prop: distance:mono}. However, due to the merging process, the value of the solution may decrease, 
since $\mathrm{LCA}(C_1, C_2)$ covers all the elements covered by $C_1, C_2$ and all other clusters that are removed from $\soln$, and can potentially cover some more.

\par
The bottom-up algorithm is given in Algorithm~\ref{algo:bottomup}. The $\updatesolution(\soln, P)$ procedure used in this algorithm  takes the current solution $\soln$ and a set of pairs of clusters $P$ to be considered for merging, and greedily merges a pair.
The first and second phases of Algorithm~\ref{algo:bottomup} are very similar, the only difference being the pairs of clusters $P$ they consider for merging. In the first phase, only the pairs with distance $< D$ are considered, whereas in the second phase, all pairs of clusters in $\soln$ are considered for merging.

 \begin{algorithm}[t] 
 {\small
\begin{algorithmic}[1]
\Require Size, coverage, and distance constraints $k, L, D$
\State {$\soln$ = set of $L$ singleton clusters with the top-$L$ elements.}
\State {\emph{/* First phase to enforce distance */}}
\While{$\soln$ has two clusters with distance $< D$}
\State{Let $P_D$ be the pairs of clusters in $\soln$ \emph{at distance $< D$}.}
\State{Perform $\updatesolution(\soln, P_D)$.}
\EndWhile
\State {\emph{/* Second phase to enforce size limit $k$, almost the same as above except all pairs of clusters are considered. */}}
\While{$|soln| > k$}
\State{Let $P_{all}$ be the \emph{all} pairs of clusters in $\soln$.}
\State{Perform $\updatesolution(\soln, P_{all})$.}
\EndWhile
\State\Return $\soln$\\
\vspace{2mm}
\textbf{Procedure $\updatesolution(\soln, P)$}
\State{\textbf{Input}: current solution $\soln$, a set of pairs $P$ of clusters}
\State{$(C_1, C_2) = {\tt argmax}_{(C_1, C_2) \in P} \avg(\soln \cup \mathrm{LCA}(C_1, C_2))$}
\State{Perform $\merge(\soln, C_1, C_2)$.}
\caption{The \bottomup\ algorithm}
\label{algo:bottomup}
\end{algorithmic}
}
\end{algorithm}

 \cut{
\begin{algorithm}[t] 
{\small
\begin{algorithmic}[1]
\Require Current set of clusters $\soln$, and a set $P$ of pairs of clusters to be considered for merging.
\State{$(C_1, C_2) = {\tt argmax}_{(C_1, C_2) \in P} \avg(\soln \cup \mathrm{LCA}(C_1, C_2))$}
\State{Perform $\merge(\soln, C_1, C_2)$.}
\caption{$\updatesolution(\soln, P)$  for \maxval.}
\label{algo:mergeprocedure}
\end{algorithmic}
}
 \end{algorithm}
 }

\par
We also implemented and evaluated other variants of bottom-up algorithms: (i) when we start at the clusters at level $D-1$ (instead of individual top-$L$ tuples that satisfy the distance constraint), and (ii) when we greedily merge pairs $C_1, C_2$ with maximum value of $\avg(\mathrm{LCA}(C_1C_2))$ (instead of maximum average value of the overall solution after merging). Both these variants had efficiency and quality comparable or worse than the basic \bottomup\ algorithm as observed in our experiments.

\cut{
\textbf{Two variations of the Bottom-Up Algorithm.~~} Here we informally discuss two variants of the \bottomup\ algorithm.
\par
(1) {\bf \levelbased}: The algorithm runs as follows. (i) Instead of starting with top-$L$ tuples, start with clusters $\allclusters_\ell$ at level $\ell = D-1$.
(ii) Run the standard greedy algorithm for the {\em weighted set cover problem} \cite{Vazirani:2001} to choose a set of clusters $\soln \subseteq \allclusters_\ell$ that cover all top-$L$ elements using $w(C) = $ the number of redundant elements \footnote{Given a set of clusters $\soln$ already chosen, the algorithm chooses the next cluster that minimizes cost per new element covered: $\frac{w(C)}{|\cov(\soln \cup \{C\} \setminus \cov(\soln))|}$ until all top-$L$ elements are covered.}.
(iii) If $|\soln| \leq k$, return $\allclusters$; else while $|\soln| > k$, perform $\updatesolution(\soln, P_{all})$ (discussed above) where $P_{all}$ denotes the \emph{all} pairs of clusters in $\soln$. 
\par
By Proposition~\ref{prop:clusters-at-same-level}, the clusters at level $D-1$ satisfy the distance constraint $D$, and by Proposition~\ref{prop: distance:mono}, merging two clusters does not violate the distance constraint. Top-$L$ elements always remain covered, the clusters remain incomparable, and finally $\leq k$ clusters are returned, thus giving a feasible solution. Here we skip one of the $\updatesolution$ steps of Algorithm~\ref{algo:bottomup}, therefore \levelbased\ can have better running time in some scenarios. However, it considers a smaller solution space, and therefore is likely to have worse value than \bottomup. For instance in Example~\ref{eg:intro} instead of the more specific ({\tt 1995, 30s, F, Educator}) cluster (average score = 3.7) in Figure~\ref{fig:eg-intro-clusters} by \bottomup, \levelbased\ collects the more general  ({\tt 1995, 30s, *, Educator}) cluster (average score = 3.54) since this algorithm starts at level 1 for $D = 2$ where each cluster has  one $*$.
\par
(2) {\bf \bottomupindiv}: This is a minor modification to the \bottomup\ algorithm where instead of choosing the pair of clusters $C_1, C_2$ that gives highest average value {\em of the entire solution $\soln \cup\mathrm{LCA}(C_1, C_2)$ after merging}, we choose the pair $C_1, C_2$ that has the highest (static) average value of $\mathrm{LCA}(C_1C_2)$. Our experiments show that the results of \bottomup\ and \bottomupindiv\ are comparable.
}

\cut{
We describe two versions of greedy merging procedures in  Algorithm~\ref{algo:mergeprocedure}. In \mergeindiv, the pair $(C_1, C_2)$ is chosen having the highest average value $\avg(\mathrm{LCA}(C_1, C_2))$ of the smallest cluster containing $C_1, C_2$. In \mergeall, the highest average value of the entire current solution after merging $C_1, C_2$, \ie, $\avg(\soln \cup \mathrm{LCA}(C_1, C_2))$ is considered. The \mergeindiv\ version exhibits better running time, since the average value of each cluster can be computed as a pre-processing step and remains constant over the iterations of the algorithm. On the other hand, the \mergeall\ version may exhibit better value, since it directly optimizes for the objective function of \maxval, and considers contributions of each element only once.
}

\subsection{The Fixed-Order Greedy Algorithm}\label{sec:fixedorder}

\cut{The \fixedorder\ algorithm (Algorithm~\ref{algo:fixedorder}) maintains a set of clusters $\soln$, considers top-$L$ elements in order, }
\red{The \fixedorder\ algorithm maintains a set of clusters $\soln$, and considers top-$L$ elements in descending order by value.}
It decides whether the next element is already covered by an existing cluster in $\soln$ or can be added as is (satisfying $D$ and $k$ constraints); otherwise \red{\fixedorder\ merges the element} with one of the existing clusters in greedy fashion. All the constraints ($k, D,$ and incomparability of clusters) are maintained after each of the top-$L$ is processed, so at the end the coverage on top-$L$ is satisfied too. \fixedorder\ considers a smaller solution space than \bottomup, since it processes each top-$L$ element in an online fashion, and therefore may return a solution with worse value. However, instead of all pairs of initial clusters (quadratic in number of clusters) it considers each cluster only once (linear), so has better running time than \bottomup. 
Details and pseudocode for \fixedorder\ 
are shown in Appendix~\ref{sec:app-fixedorder}.

\reva{We also consider two variants of \fixedorder\ and evaluate them
  later in experiments: i)~\emph{k-means}-\fixedorder, where we first
  run the $k$-means clustering algorithm~\cite{hartigan1975clustering}
  (with random seeding) on the top $L$ elements, find the minimum
  pattern covering all elements in each of the resulting clusters, and
  make \fixedorder\ process these $k$ patterns first before moving on
  to the top $L$ elements (in descending-value order);
  ii)~\emph{random}-\fixedorder, where we first pick $k$ element at
  random from the top $L$ elements to process first, before moving on
  to the remaining top $L$ elements (still in descending-value order).
  Both variants introduce some randomness in the results, and
  \emph{k-means}-\fixedorder\ has considerable higher initial
  processing overhead.  However, as we shall see in
  Section~\ref{sec:experiments}, they do not produce higher-quality
  results.}

\subsection{The Hybrid Greedy Algorithm}\label{sec:hybrid}

\bottomup\ tends to produce results with higher quality than \fixedorder, and can process multiple $k, D$ values at the same time as discussed in Section~\ref{sec:guidance}, but usually requires more iterations than \fixedorder. In order to get a good trade-off between these factors,
we introduce the hybrid algorithm. It has two  phases - the \fixedorder\ phase and \bottomup\ phase. For a given $k, L,$ and $D$, the first phase for \hybrid\ is the same as \fixedorder, but with a larger number of \ansc{$c \times k$}, $c > 1$ is a constant, initial singleton clusters. After covering all top-$L$ elements in \ansc{$c \times k$} clusters, \hybrid\ goes into the \bottomup\ phase to reduce the number of clusters from \ansc{$c \times k$} to $k$ using the $\merge$ procedure that can collect redundant elements. Like \bottomup, \hybrid\ also helps in incremental computation for different choices of parameters as discussed in the next section.
\cut{
\par
We also considered other variants of hybrid, \eg, where the initial clusters are chosen by the randomized {\tt k-Means++} algorithm \cite{arthur2007k} to ensure a good set of initial clusters with multiple runs of clustering in the fixed-order phase. However, the above version of hybrid algorithm gave comparable results and better running time compared to the other variants. In addition to producing solutions with larger values, \bottomup\ and
}

\cut{
\sub section{Level-Based Algorithms}\label{sec:level-algo}
\par
The key idea of the level-based approach is to output clusters from the \emph{same level of the semi-lattice}, so that the clusters are guaranteed to be incomparable and at least at distance $\ell + 1$ from each other when picked from level $\ell$ (Proposition~\ref{prop:clusters-at-same-level}). This takes care of two of the constraints, and reduces the problem to a simpler one: select a `good' solution from level $\ell$ that covers top-$L$ elements with at most $k$ clusters. This reduced objective for \maxval\ is given by the following procedures.
\par
The procedure $\FindClusters{-}\maxval(\ell, k, L)$  intends to find a set of at most $k$ clusters from level $\ell$ of the semi-lattice to cover top-$L$ original elements such that the solution has a high average value.
The corresponding $\FindClusters{-}\minsize(\ell, k, L)$  procedure for \minsize\ is similar (not discussed further): find a solution that has a small number of redundant elements.
Given the distance constraint $D$, we search for solutions from levels $\ell = D-1$ to $m$ using the above procedures and return the best solution across these levels.
\par
For the \maxval\ objective, as discussed in Section~\ref{sec:opt-mono}, there is no monotonicity guarantee of values in terms of the levels.
Therefore, each of the levels from $D-1$ to $m$ are searched.
For the \minsize\ objective, the value of the optimal solution restricted to level $\ell$ (called $S^*_{\ell}$) is at least as good as the value of $S^*_{\ell+1}$ at the upper level $\ell+1$ when $k \geq L$ (ref. Proposition~\ref{prop:minsize-mono}), property (B)). However, finding the value of an optimal solution for the \minsize\ objective even for $k \geq L$ is NP-hard (Theorem~\ref{thm:opt-NP-hard}), and is likely to be NP-hard even to get a good approximation factor given the \emph{label-cover-hardness of the red-blue set cover problem} \cite{Carr+2000} (ref. Appendix~\ref{app:red-blue}).
Therefore, the heuristic $\FindClusters{-}\minsize(\ell, k, L)$ might get better solution for higher values of $\ell$ even when $k \geq L$. Our experiments show that searching one level vs. searching multiple levels has a very small time difference (after the pre-processing step, both are of the order of a few milliseconds), so we search all $\ell = D-1$ to $m$ levels to find a solution with good value.
The pseudocode of the level-based algorithm is given in Algorithm~\ref{algo:generic-level-based} ($n$ is the number of all original tuples in $S$).
%


 
\begin{algorithm}[!ht]
\begin{algorithmic}[1]
\Require Size, coverage, and distance constraints $k, L, D$
\State $v^* = 0, \soln^* = \emptyset$. 
\For{$\ell$ from $D-1$ to $m$}
\State{$\soln =\FindClusters{-}\maxval(\ell, k, L)$.}
\State{$v$ = $\avg(\soln)$.}
\If{$v > v^*$}
\State{$v^*$ = $v$, $\soln^* = \soln$}
\EndIf
\EndFor
\State\Return $\soln^*$
\caption{The Level-Based Algorithm for \maxval\ }
\label{algo:generic-level-based}
\end{algorithmic}
 \end{algorithm}

 
 \begin{algorithm}[!ht]
\begin{algorithmic}[1]
\Require Level $\ell$, and size and coverage constraints $k, L$
\State $SC$ = set of all clusters $\allclusters$.
\State $SL$ = set of top-$L$ elements $S_L^*$. \emph{/* all top-$L$ elements are initially uncovered */}
\State $\soln = \emptyset$ \emph{/* current solution */}
\For{$i = $ $1$ to $k$}
\State{Let $SC' \subseteq SC$ be the clusters that cover at least one uncovered top-$L$ element from $SL$.}
\If{the algorithm is \greedyavg}
\State{Let $C = {\tt argmax}_{C \in SC'} \avg(C)$ \emph{/* the cluster with the highest average value in $SC'$*/}}
\Else\emph{/* the algorithm is \greedyavgL */}
\State{Let $C = {\tt argmax}_{C \in SC'} \avg(C) \times |C \cap SL|$ \emph{/* the cluster with the highest product of average value in $SC'$ and the number of uncovered elements in top-$L$ so far that are covered by $C$ */}}
\EndIf
\State{ $\soln = \soln \cup C$, ~ $SC = SC \setminus C$, ~ $SL = SL \setminus C$} 
\EndFor
\State\Return $\soln$
\caption{$\FindClusters{-}\maxval(\ell, k, L)$  for \maxval.}
\label{algo:greedyavg}
\end{algorithmic}
 \end{algorithm}
 
For the $\FindClusters{-}\maxval(\ell, k, L)$  procedure for \maxval, we describe two versions of greedy cluster selection procedures in  Algorithm~\ref{algo:greedyavg}. In \greedyavg, the cluster $C$ is chosen having the highest individual average value $\avg(C)$\footnote{We implemented and evaluated several other options of greedy algorithms, including the one that directly optimizes for the final objective $\avg(\soln \cup C)$. The versions in Algorithm~\ref{algo:greedyavg} gives comparable to the best possible results of these algorithms and have better running time, so we present them here.}. In \greedyavgL, both the  $\avg(C)$ and the number of uncovered top-$L$ elements covered by $C$ are considered in their product, with a goal of covering multiple top-$L$ elements at the same time and thereby catering to both coverage and value. Both these variants may not return a feasible solution, since they may not satisfy the coverage condition $L$. The \greedyavgL\ variant is likely to cover more top-$L$ elements possibly collecting less value than \greedyavg. 
 
%
 
 \cut{
\sub subsection{\FindClusters\ for $k \geq L$}\label{sec:algo:k-geq-L}
If $k \geq L$, we can pick any arbitrary ancestor of each of the top-$L$ elements at level $\ell$ to find a feasible solution for both  \FindClusters{-}\maxval\ and \FindClusters{-}\minsize. However, our goal is to find a set of at most $k$ clusters with a good value, \ie, that includes as few redundant elements with small values as possible. We consider only the clusters at level $\ell$, say $\allclusters_{\ell}$.  We will evaluate the following variants of a greedy algorithm experimentally for \maxval, the alternatives for \minsize\ are similar:

\emph{
\begin{itemize}
\item {\tt Greedy:} For $i = 1$ to $L$, choose the cluster that covers the $i$-th largest element uncovered so far and maximizes the average value of the solution.
\item {\tt Greedy-Random:} Randomly order the top-$L$ element. For $i = 1$ to $L$, choose the cluster in $\allclusters_{\ell}$ that covers the $i$-th element in this order uncovered so far and maximizes the average value of the solution.
\item {\tt Greedy-Indiv:} For $i = 1$ to $L$, choose the cluster in $\allclusters_{\ell}$ that covers the $i$-th element in this order uncovered so far and has the maximum average value (individually) among such sets.
\item \red{other ideas?}
\end{itemize}
}


\sudeepa{Another conjecture: if $k \geq L$, some redundant clusters can be eliminated so that an optimal solution is returned always with at most $L$ clusters, \ie, can assume $k = L$}

\cut{
\begin{proof}
First, we argue that for $k \geq L$, it suffices to assume that the optimal solution has at most $L$ clusters (\ie, $k = L$).
Consider an optimal solution $SC$ for $k \geq L$. For each of the top-$L$ elements $t_1, \cdots, t_L$, pick the cluster in $SC$ that has the highest total value. Let the new set of at most $L$ clusters is $SC'$. 	

\begin{proposition}
The value of $SC'$ is at least as large as that of $SC$, and $SC'$ is a feasible solution.
\end{proposition}  

Clearly $SC'$ maintains the constraints of minimum distance and incomparability as it is a subset of $SC$, also it covers the top-$L$ elements and contains at most $k \geq L$ clusters. So $SC'$ is feasible. Suppose $SC' \subsetneq SC$. Then there is a $C \in SC \setminus SC'$. If $\cov(C) = \{t_1, \cdots, t_{L}\}$, removing $C$ does not change the value of the solution. Otherwise, there are elements $e_1, \cdots, e_p$ in $\cov(C) \setminus \{t_1, \cdots, t_L\}$. For all such $e_j$, $j \in [1, p]$,
$\val(e_j) \leq \val(t_L) \leq\frac{\sum_{i=1}^L{\val(t_i)}}{k}$. 
Then the average value of $SC$ is
\begin{eqnarray*}
\frac{\sum_{i=1}^k{\val(t_i)} + \sum_{j = 1}^p \val(e_j)}{k + p} & \leq & \frac{\sum_{i=1}^k{\val(t_i)} + p * \frac{\sum_{i=1}^k{\val(t_i)}}{k}}{k + p}\\
& \leq & \frac{\sum_{i=1}^k{\val(t_i)}}{k}
\end{eqnarray*}
This shows that the top-$k$ singleton clusters also forms an optimal solution. Further, the inequality will be strict if any of the elements in $\cov(SC)$ has value strictly $< \val(t_k)$, contradicting that $SC$ is an optimal solution. 
\end{proof}
}

\red{report time, value of the solution}

\sub subsection{\FindClusters\ for $k < L$}\label{sec:algo:k-lt-L}
If $k < L$, even deciding whether a  feasible solution exists for 
$\FindClusters{-}\minsize(\ell, k, L)$ or $\FindClusters{-}\maxval(\ell, k, L)$ is NP-hard (Theorem~\ref{thm:decision-nphard}). Here, we need to relax either $k$ or $L$ and use a greedy set-cover/max-cover type of algorithm to return a solution. The algorithms we evaluate for \maxval\ are as follows (similar for \minsize):

\emph{
\begin{itemize}
\item {\tt Greedy{-}Fixed{-}$k${-}Num:} For $i = 1$ to $k$, choose the cluster in $\allclusters_{\ell}$ that maximizes the number of the uncovered top-$L$ elements.
\item {\tt Greedy{-}Fixed{-}$k${-}Weight:} For $i = 1$ to $k$, choose the cluster in $\allclusters_{\ell}$ that maximizes the total weight of the uncovered top-$L$ elements.
\item {\tt Greedy{-}Fixed{-}$k${-}Average:} For $i = 1$ to $k$, choose the cluster in $\allclusters_{\ell}$ that covers at least one additional top-$L$ elements and also maximizes the average value of the solution.
\item {\tt Greedy{-}Fixed{-}$L${-}Num:} While not all top-$L$ elements are covered, in each iteration, choose the cluster in $\allclusters_{\ell}$ that covers the maximum number of the uncovered top-$L$ elements.
\item {\tt Greedy{-}Fixed{-}$L${-}Weight:} While not all top-$L$ elements are covered, in each iteration, choose the cluster in $\allclusters_{\ell}$ that maximizes the total weight of the uncovered top-$L$ elements.
\item {\tt Greedy{-}Fixed{-}$L${-}Average:} While not all top-$L$ elements are covered, in each iteration, choose the cluster in $\allclusters_{\ell}$ that covers at least one additional top-$L$ elements and also maximizes the average value of the solution.
\item \red{other ideas?}
\end{itemize}
}
\red{report time, \#clusters, \#top elements covered, value of the solution}
}

\sub section{Level-Based vs. Bottom-Up}\label{sec:comparison-algo}
\textbf{(1) Feasibility:} The bottom-up approach always returns a feasible solution maintaining all the constraints $k, L, D$ and incomparability. On the other hand, the level-based approach, for the case when $k < L$ (\ie, when a feasible solution restricted to a level may not exist), satisfies $k, D$, and incomparability, but may not satisfy the coverage constraint $L$. The only guarantee we get from the algorithm is that it will cover at least $k$ of the top-$L$ elements. However, when $k \ll L$, \eg, $k = 10$ and $L = 100$, many top-$L$ elements might remain uncovered by the solution. 
\par
\textbf{(2) Running time:} The running time of level-based approach can be better or worse than that of the bottom-up approach depending on the input parameters and data distribution. The bottom-up approach considers all pairs of clusters to check for distance and size constraints; therefore each step incurs quadratic cost in terms of the solution size $|\soln|$. This does not incur much cost if $L$ is small, but if $L$ is large (close to $|S| = n$ and a high value), it may incur high quadratic cost in the first iteration itself. The other hand, the level-based approach iterates over each level only once, in each level has only $k$ steps where it selects a cluster greedily, and may executes faster for higher values of $L$.
\par
 \textbf{(3) Quality of solution:} (i) For $k \geq L$, when both approaches return a feasible solution, the bottom-up approach is likely to return a  solution with higher average value, since it starts with  a tighter solution containing the singleton top-$L$ clusters. If the top-$L$ elements themselves are far apart from each other, then jumping to a higher level in the semi-lattice may be unnecessary. For instance, if the top-3 elements are $(a_1, b_1, c_1), (a_2, b_2, c_2), (a_3, b_3, c_3)$, they are already at distance 3 apart from each other. With $D = 2$, the level-based approach will start from level 1 with at least one $*$, which may include many redundant element decreasing the value, whereas the bottom-up approach will return the optimal solution, \ie, these singleton clusters. (ii) Further, the bottom-up approach considers a larger solution space than the level-based approach, which only looks at solutions restricted to a given level. (iii) For $k < L$, since the level-based approach does not guarantee a feasible solution, it may return a solution with higher average value, since the bottom-up approach may end up with the trivial solution with a low value. (iv) Both approaches may make a sub-optimal greedy first choice. 
 When the bottom-up approach merges two clusters, it looks for the best possible merging, so it is likely that it would not merge two clusters far apart like $(a_1, b_1, c_1), (a_2, b_2, c_2)$ in the early steps leading to the trivial solution, as these clusters are likely to maintain the distance constraint as well as the resulting value may be small.  
 We further evaluate these algorithms in our experiments. 
 }

\cut{
\subsection{Optimizations}
\red{TODO}

\sub subsection{Incremental Computation and caching}\label{sec:incremental}
\red{what happens if $k, D, L$ are updated}

\begin{itemize}
\item cache solutions for all $D$.
\item the greedy algorithms for $k \geq L$ are incremental in $L$, $k$ does not matter.
\item for $k < L$, the fixed $k$ algorithms are incremental in $k$.
\end{itemize}

\subs subsection{Avoiding Explicit Computation of Clusters}\label{sec:algo:opt}

\sub subsection{Running on a DBMS}
\red{use SQL queries instead of computing all from scratch}

\sub subsection{Pruning}
\red{need to think about it}
}


\section{Interactive Parameter Selection}\label{sec:guidance}

One of the main challenges in a system with multiple input parameters is choosing the input parameter combination carefully to help the user explore new interesting scenarios in the answer space. 
To help the user choose interesting values of $k, L, D$, we provide an overall view of the values of the solutions (average value of all element covered by the clusters chosen by our algorithm) that at the same time precomputes the results for certain parameter combinations and helps in interactive exploration. In Section~\ref{sec:guid-visualization} we describe the visualization facilitating parameter selection, in Section~\ref{sec:precomputation} we discuss how the precomputation is achieved to plot these graphs,
and in Section~\ref{sec:guid-opt} we discuss a number of optimizations for interactive performance of our approach.
    
\cut{
Sometimes users may not be familiar with the database they are trying to explore. In this scenario, it is easy for them to get lost. A clean figure containing some precomputed data points can be greatly helpful and are capable of guiding users to their destinations. Figure~\ref{fig:guid-vizdetail} gives an example of our implementation to be introduced in this section. At first only the trends (lines) are shown . If users would like to explore more details, they can hover over the visualization and details will be shown as in Figure~\ref{fig:guid-vizdetail}. As a result, users have a clear view what is the trend varying different parameters, and avoid regions that they are not interested in. Besides, with affordable extent of precomputation, if users get interested in parameter combinations that have been precomputed, the system can instantly respond to the request rather than run from scratch. This section will introduce all aspects about the guidance visualization, including the visualization, implementation and some optimizations that have been applied.
}

\subsection{Visual Guide for Parameter Selection}\label{sec:guid-visualization}
\ansc{Figure~\ref{fig:intro-guid} gives an example of visualization showing the overview of the solutions that is generated for each chosen value of $L$, and illustrates the values of the solutions for a range of choices on $D$ and $k$.  
The $y$-axis shows the average value of the tuples covered by the chosen clusters by our algorithm (Definition~\ref{def:problem}), the value of $k$ (in a chosen range) varies along the $x$-axis, and different lines correspond to different values of $D$ (also in a chosen range).
\par
With the help of this visualization, the user can avoid selecting certain \emph{uninteresting} or \emph{redundant} parameter combinations.
\cut{
For example, the bottom-left region in Figure~\ref{fig:guid-vizdetail} can be considered as uninteresting with a low average value (only the trivial solution with all $"*"$ belongs to the solution. Also the user can avoid changing $k$ or $D$ that lead to the same value of the solution (\eg, for $D = 2$, the range of $k = [20, 30]$ gives almost the same values), which can make the exploration easy and more efficient for the user. }
For example, with the visualization in Figure~\ref{fig:intro-guid}, a user can quickly see that the bottom-left region (where $k=2,3$) is uninteresting, with low average values. The user also sees that certain ranges of parameter settings are not worth exploring as they do not affect the solution quality or in very predictable ways: \eg, for $D = 1$, the range of $k > 12$ yields almost the same solution quality, while for $k \in [2,9]$, the quality changes predictably with $k$.  On the other hand, the ``knee points'' (\eg, $k=9,11$ for $D=1$) suggest good choices of parameters.  The visualization also reveals the trade-off between different choices of $D$; \eg, at $k=9$, the user can decide between a solution set with a higher value ($D=1$) or more diversity ($D=2$).  Note that in Figure~\ref{fig:intro-guid}, curves for different $D$ values may overlap, which suggests ranges of $D$ values with little impact on solution quality, allowing the user to work on ``bundles'' of $D$ values instead of individually.  (If the user cares about curves for individual $D$ values, the user can click on the legend on the right to hide particular curves to reveal others that overlap.)
}



\subsection{Incremental Computation and Storage}\label{sec:precomputation}
To be able to generate plots in Figure~\ref{fig:intro-guid}, one obvious approach is running an algorithm from Section~\ref{sec:algorithms} for all combinations of $k$ and $D$ given an $L$ value. However, for interactive exploration, this approach is sub-optimal. The \hybrid\ algorithm (and \bottomup) exhibits \emph{two levels of} incremental properties that help in computing the solutions for a range of $k, D$ values in a batch. 
 
 In \hybrid, for a given value of $L$, the \fixedorder\ phase outputs a set of initial clusters that can be used for all combinations of $k, D$, and therefore, this step can run only once. 
Remembering this intermediate solution,  
 the \bottomup\ phase can run for all $D$ values from the stored status. For each $D$, it computes results for all $k$ values (ranging from the maximum to the minimum value) since in every round of iteration, two clusters are merged to reduce the number of clusters by one. The procedure for this incremental computation is shown in Figure~\ref{fig:guid-incremental}. In the following, we discuss how we materialize and index solutions for efficient retrieval.
 


\textbf{Retrieval Data Structure.~~} 
The computed solutions for different $k, D$ values serve as pre-computed solutions when the user wants to inspect the solution in detail for a certain choice of $k, L, D$. The obvious solution for storage is to record the set of output clusters for every choice of $(k,D)$.
However, we implemented a combined retrieval data structure for storage that is both space and time efficient based on the following observation in the execution of \hybrid\ (and \bottomup) algorithm: 

\begin{proposition}\label{prop:guid:int-consec}
{\bf (Continuity)~} Given solution cluster lists $\soln_1,\soln_2,...,\soln_r$ where $r$ rounds are executed, for any cluster $c \in \soln_a$ where $1\leq a<i$, once $c$ is removed from $\soln_i$ at the end of round $i$ (because of merging), for all $j > i$, $c \notin \soln_j$.
\end{proposition}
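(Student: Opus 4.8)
The plan is to use the incomparability invariant together with the fact that every round only makes clusters \emph{more general}, i.e.\ moves the surviving clusters up in the semilattice of \autoref{sec:semilattice}. Throughout, let $\soln_{t-1}$ denote the cluster list just before round $t$, and let $C_{new}^{(t)}$ denote the unique new cluster created in round $t$, namely the least common ancestor $\mathrm{LCA}(C_1,C_2)$ of the pair $C_1,C_2$ merged in that round; recall that round $t$ removes $C_1, C_2$ together with every other cluster of $\soln_{t-1}$ that is covered by $C_{new}^{(t)}$, and inserts $C_{new}^{(t)}$. Recall also (invariant (2), preserved by $\merge$) that every $\soln_t$ is an antichain under $\le$. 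By hypothesis $c\in\soln_{i-1}$ and $c\notin\soln_i$, and $c$ was dropped in round $i$.

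The first step is to establish a \emph{strict domination at the moment of removal}: $c < C_{new}^{(i)}$. There are two cases. If $c$ is one of the merged pair, say $c=C_1$ and $C_2\ne c$, then $c=C_1\le \mathrm{LCA}(C_1,C_2)=C_{new}^{(i)}$; moreover $\mathrm{LCA}(C_1,C_2)=C_1$ would force $C_1$ to cover $C_2$, contradicting that $\soln_{i-1}$ is an antichain, so $c<C_{new}^{(i)}$. If instead $c$ is a third cluster removed because $c\le C_{new}^{(i)}$, then the equality $c=C_{new}^{(i)}=\mathrm{LCA}(C_1,C_2)$ would give $c\ge C_1$ with $c,C_1\in\soln_{i-1}$ distinct, again contradicting the antichain property; hence $c<C_{new}^{(i)}$.

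The second step is an induction on $j\ge i$ showing that $\soln_j$ contains some cluster $C^{(j)}$ with $c<C^{(j)}$. For $j=i$, take $C^{(i)}=C_{new}^{(i)}$ by the previous step. For the inductive step, suppose $C^{(j)}\in\soln_j$ with $c<C^{(j)}$. If $C^{(j)}\in\soln_{j+1}$ we are done. Otherwise $C^{(j)}$ is touched in round $j+1$: it is either one of the two clusters merged in that round, or a cluster removed because it is covered by $C_{new}^{(j+1)}$; in every case $C^{(j)}\le C_{new}^{(j+1)}$, so $c<C^{(j)}\le C_{new}^{(j+1)}$ and we may take $C^{(j+1)}=C_{new}^{(j+1)}\in\soln_{j+1}$. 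Finally, for any $j>i$, if $c$ were in $\soln_j$, then $\soln_j$ would contain both $c$ and $C^{(j)}$, where $C^{(j)}$ covers $c$ and $C^{(j)}\ne c$ --- contradicting the incomparability invariant. Hence $c\notin\soln_j$ for all $j>i$.

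The same argument applies to the \fixedorder\ phase of \hybrid: processing a top-$L$ element either inserts a fresh singleton cluster (removing nothing) or merges the element into an existing cluster via an $\mathrm{LCA}$, which again produces a strict ancestor of the clusters it replaces and preserves incomparability, so the two steps above go through verbatim. The only delicate point is the two equality cases in the first step --- showing that the newly created $\mathrm{LCA}$ is \emph{strictly} above each cluster it replaces --- and this is precisely where the antichain invariant is invoked; once strictness is in hand, the remainder is a routine induction.
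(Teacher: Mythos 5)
Your proof is correct, but it is organized differently from the paper's own argument. The paper proves continuity by contradiction and reasons about the \emph{mechanism of re-entry}: a cluster can only re-enter the solution as the result of a greedy merge, hence it would have to cover clusters present in the later solution; tracing back when those covered clusters themselves last joined, the paper derives that an atomic (singleton) cluster would have to be introduced after greedy merging has begun, which is impossible. You never reason about how $c$ might be re-created; instead you establish that the $\mathrm{LCA}$ produced in the removal round is a \emph{strict} ancestor of $c$ (using the antichain invariant to rule out equality), propagate a strict-ancestor witness forward through every subsequent merge by a simple induction, and then observe that $c$'s reappearance alongside its witness would violate the incomparability invariant that \merge\ maintains. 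The trade-off: your route is more self-contained and rigorous — it avoids the paper's informal backward trace and its auxiliary observation about atomic clusters, needing only the monotone "merges go up the semilattice" fact plus the antichain invariant — while the paper's argument is more operational and explains directly why the algorithm can never regenerate a discarded pattern. Your closing remark about the \fixedorder\ phase of \hybrid\ is a harmless extension (and the strictness there follows because the merge branch is only reached when the new element is not covered by the existing cluster), but it is not needed for the proposition as used, since the continuity property is invoked only for the \bottomup\ merging rounds.
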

In other words, once a cluster is merged and therefore vanishes from the set of clusters in the solution, it never comes back. Hence, if $\soln_{L, D, k}$ denotes the solution for a given combination of $L, D, k$,  the set of values of $k$ for which a cluster $c \in \soln_{L, D, k}$
forms a continuous interval. 
Therefore, instead of storing the set of clusters for all values of $D, k$ given an $L$ value (where the solutions may have substantial overlap),  we use an \emph{interval tree}\cite{cormen2009introduction} 
$S_D$ for each value of $D$ to store the range of $k$ for which a cluster appears in $\soln_{L, D, k}$ storing only the {\tt maximum} (or starting) and {\tt minimum} (or ending) $k$ value for this cluster (see Figure~\ref{fig:guid-interval}). It reduces the number of solutions (sets of clusters) to be stored from $O(N_k \times N_D)$ (where $N_k$ and $N_D$ denote the total number of $k$ and $D$ values under consideration respectively)  to $O(N_d)$. Further, the interval tree data structure supports efficient retrieval in time $O(\log N_k)$\cite{cormen2009introduction}. .


\begin{figure}[t]
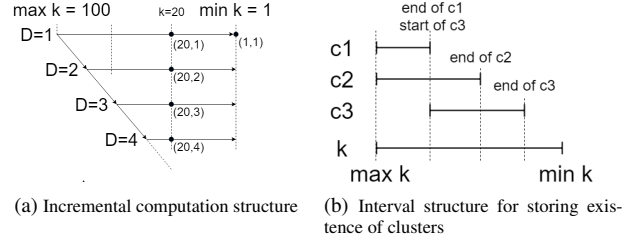

\centering
\begin{minipage}[t]{.47\textwidth}
  \centering
\subfloat[{\scriptsize Incremental computation structure}]{
\includegraphics[width=0.42\linewidth]{figures/guidIncremental.png}~~~~
\label{fig:guid-incremental}}
\hspace{0.02\linewidth}
\subfloat[{\scriptsize Interval structure for storing existence of clusters}]{
\includegraphics[width=0.42\linewidth]{figures/guid_intervalStructure.png}~~~~
\label{fig:guid-interval}}
\vspace{-0.3cm}
\end{minipage}\hfill
\caption{Incremental computation and interval structure.}
\label{fig:guid-intervalOverall}
\vspace{-0.2cm}
\end{figure}

\cut{


\begin{proof}[Proof of Proposition~\ref{prop:guid:int-consec}]
We define the specific answer cluster list at the beginning of round $i$ (i.e, at the end of round $i-1$) as $AL_i$. In the same round of the \emph{Bottom-Up} phase in \hybrid\ algorithm, after having the greedy choice, we add the new cluster (i.e., the greedy choice) first and then remove clusters which are covered by the new cluster.
\par
In round $i$, for the cluster which is brought into $AL$ in this round, since it is the merge result for some clusters inside $AL_i$, it has to cover some existing clusters in $AL_i$. In round $i$, one cluster is removed from the answer cluster list only when this cluster is covered by and merged into the greedy choice cluster which has been introduced into $AL$ previously in the same round.  
\par
If one cluster (denoted by $C_{rej}$ ) is merged and removed in round $i$ and rejoins in round $j$ ($j>i$), it means that this cluster covers some clusters (denoted by $C_{cov}$) that belong to $AL_j$. There are two possibilities when $C_{cov}$ last joined in $AL$:
\par
(1) If $C_{cov}$ last joined before round $i$, since it is covered by $C_{rej}$, it should have been merged and removed from $AL$ not later than round $i$ when $C_{rej}$ was removed as well. It contradicts with the assumption that $C_{cov}$ last joined before round $i$. As a result, this case is impossible to happen.
\par
(2) If $C_{cov}$ last joined later than round $i$, we can trace to smaller clusters (with less stars) that brought $C_{cov}$ in. By following the same discussion, we can find that an atomic cluster was brought in some time between round $i$ and round $j$. However, it is impossible for an atomic cluster to be introduced into $AL$ once the greedy selection begins since it does not cover any other cluster other than itself. As a result, this scenario is impossible as well.
\par
Given that neither case is possible, once a cluster has been removed from the answer cluster list, it cannot come back. All the appearances for clusters are consecutive.
\end{proof}

}

\subsection{Optimizations} \label{sec:guid-opt}

A number of additional optimizations are implemented to make the system efficient and interactive as described below. 

\textbf{Delta judgment.~~}
\cut{Several variables and concepts are listed here in order to help reminding and understanding following statements and proofs.
\par
(1) \textbf{Round}: Every iteration in the \bottomup\ phase aiming to greedily merge is counted as one round. A round counter is kept globally.
\par
(2) \textbf{Solution}: denoted by $\soln$, same as discussed in Section~\ref{sec:bottom-up}. $\soln_i$ denotes the solution cluster list at the end of Round $i$.
\par
(3) \textbf{Global covered tuple list}: The covered tuple list contains all tuples that are covered by clusters in the solution. It is denoted by $T$ and $T_i$ denotes the global covered tuple list at the end of Round $i$.
\par
(4) \textbf{Cluster tuple list}: denoted by $T_c$, it contains tuples that are covered by a certain cluster.
\par
}
In every iteration (called \emph{round}) of greedy cluster merging in the \hybrid\ (and \bottomup) algorithm, 
clusters are merged such that the average value  of the clusters in the resulting solution is maximized using the \updatesolution\ function (Algorithm~\ref{algo:bottomup}).
Let $\soln_{i}$ be the set of clusters at the end of a round $i$, $T_i = \cov(\soln_i)$ be the tuples covered by $\soln_i$, $v_i = \avg(\soln_i)$ be the average value of $\soln_i$, and $T_c = \cov(c)$ be the tuples covered by a given cluster $c$.
The naive way of executing \updatesolution\ in round $i+1$ involves comparing the tuple list $T_c$ of a given cluster $c$ ($=  LCA(C_1,C_2)$ as mentioned in Algorithm~\ref{algo:bottomup}) and the current set of covered tuples $T_i$, 
finding out new tuples in $T_c \setminus T_i$ to obtain $T_i \cup T_c$ as potential $T_{i+1}$, and recalculating the objective $\avg(T_i \cup T_c)$ based on the new tuples. However, it takes a huge amount of time doing all the tuple-wise comparison for all possible clusters that are eligible to be merged in this round. Instead, we incrementally keep track of the marginal benefit (as sum and count to compute the average) that a cluster $c$ brings to the new solution $\soln_{i+1}$ compared to $\soln_i$ as follows (pseudocode in Algorithm~\ref{algo:deltajudgment}).
\par
The basic idea is that the improvement in the total average value that a cluster $c$ brings to solution $\soln_i$ is due to the tuples in $T_c \setminus \soln_i$, and that it brings to $\soln_{i-1}$ is due to  the tuples in $T_c \setminus \soln_{i-1}$. The difference can be computed by keeping track of the new tuples  that appear in $T_{i} \setminus T_{i-1}$, and comparing them with the tuples in $T_c$. In addition, we incrementally store $\Delta_{i, c, sum}$ and $\Delta_{i, c, count}$ (the sum of values and the count of tuples in $T_c \setminus T_i$, incrementally computed from $\Delta_{i-1, c, sum}, \Delta_{i-1, c, count}$). Hence the tentative new average value of the solution $\soln_{i+1}$ if we add $c$ to $\soln_{i}$ can be computed as $v_{i+1} = \frac{v_i \times |T_i|  + \Delta_{i, c, sum}}{|T_i| +\Delta_{i, c, cnt}|}$.
This optimization evaluates the \updatesolution\ procedure efficiently since the above computations need comparisons between (i) the list containing $T_{i} \setminus T_{i-1}$ and (ii) $T_c$, and  $T_{i} \setminus T_{i-1}$ is likely to be much smaller than $T_{i}$. This gives $~30x$ speedup in our experiments. 


\begin{algorithm}[ht]
{\small
\begin{algorithmic}[1]
\Require Marginal score benefit $\Delta_{sum}$, marginal amount benefit $\Delta_{cnt}$, round indicator $i$ indicating when were $\Delta$ values last updated, current round $j+1$, difference list $T_{(j,j-1)} = T_j \setminus T_{j-1}$\
\State{/* $\Delta_{i-1,c,sum} = \Delta_{sum}$ and same for $\Delta_{i-1,c,cnt}$ in this case*/}
\State{$v_{j+1}$ = new score to be calculated.}
\State{$v_j =$ current score.}
\State {\emph{/* Marginal benefits are far outdated, -1 is default value; $i<=j-1$ means $\Delta_{i-1,c,sum}$ and $\Delta_{i-1,c,cnt}$ cannot be updated directly using $T_{(j,j-1)}$ */}}
\If{$i = -1$ \textbf{or} $(j-i\geq1)$}
\State{$\Delta_{cnt} = \sum(val(T_j \setminus T_c))$}
\State{$\Delta_{sum} = |T_j \setminus T_c|$}
\State {\emph{/* Marginal benefits were updated last round (round $j$) and can use $T_{(j,j-1)}$ for comparison */}}
\ElsIf{$i=j$}
\State{$\Delta_{cnt} = \sum(val(T_{(j,j-1)} \setminus T_c))$}
\State{$\Delta_{sum} = |T_{(j,j-1)} \setminus T_c|$}
\State {\emph{/* Marginal benefits were updated 
in the same round */}}
\ElsIf{$i=j+1$}
\State{\emph{/* Do nothing. It is up-to-date. */}}
\EndIf
\State{$v_{j+1} = (v_j\times|\soln_j| + \Delta_{sum})/(|\soln_j|+\Delta_{cnt}))$}
\State{\emph{/* Update the round indicator */}}
\State{$i=j+1$}
\State\Return $v_{j+1}$
\caption{The \deltajudgment\ Procedure}
\label{algo:deltajudgment}
\end{algorithmic}
}
\vspace{-1mm}
 \end{algorithm}


\cut{
For example, in round $i+1$, consider a cluster $c = LCA(c_1,c_2)$ where $c_1,c_2 \in \soln_i$. The \maxval\ score to be calculated is $\avg(\soln_i,c)$. Suppose the global tuple list is $T_i$ and \maxval\ value $v_i$ are  kept. Given that $v_{i+1} = \avg(\soln_i,c)$, the naive way will compare $T_i$ and $T_c$ to get the \maxval\ value. However, since $\avg(\soln_i,c_{merge}) = \avg(T_i, T_i-T_c)$, $v_{i+1}$ can be incrementally computed by the "marginal benefit" that $c$ brings to $\soln_i$, i.e. $\Delta_{(i,c)} = T_i - T_c$. Given $\Delta_{(i-1,c)} = T_{i-1} - T_c$, if the difference between two solutions $T_{(i.i-1)}=T_i - T_{i-1}$ and the marginal benefit that $c$ brings to $\soln_{i-1}$, i.e. $\Delta_{(i-1,c) = T_{i-1} - T_c}$ are available, then $\Delta_{i,c} = \Delta_{i-1,c}+T{(i,i-1)}$. As a result, we designed \deltajudgment, which is an incremental calculation for the benefit that a cluster provides of greedy merge. 
\par
\deltajudgment\ is a technique to calculate the \maxval\ value through the difference between two solutions $T_{(i,i-1)}$ and the marginal benefit $\Delta_{(i,c)}$. Instead of tuple list, $\Delta_{(i,c)}$ is kept in the form of marginal sum $\Delta_{sum} = \sum(val(\Delta_{(i,c)}))$ and marginal count $\Delta_{cnt} = |\Delta_{(i,c)}|$, together with the $i$ value showing the last round that $\Delta_{sum}$ and $\Delta_{cnt}$ are updated. The solution difference list $T_{j,j-1} = T_j - T_{j-1}$ for current round $j$ is kept as well. Algorithm~\ref{algo:deltajudgment} provides the procedure in detail.

\begin{algorithm}[!ht]
\begin{algorithmic}[1]
\Require Marginal score benefit $\Delta_{sum}$, marginal amount benefit $\Delta_{cnt}$, round indicator $i$, current round $j+1$, difference list $T_{(j,j-1)}$
\State{\red{THIS IS NOT STATING ANYTHING EXTRA -- SHOULD REMOVE}}
\State{$v_{j+1}$ = new score to be calculated.}
\State{$v_j =$ current score.}
\State {\emph{/* Marginal benefits are far outdated, -1 is default value; $(j+1)-i>=2$ means the delta values were updated before round $j$ and cannot use $T_{(j,j-1)}$ for comparison. */}}
\If{$i = -1$ \textbf{or} $(j-i\geq1)$}
\State{$\Delta_{cnt} = \sum(val(T_j - T_c))$}
\State{$\Delta_{sum} = |T_j - T_c|$}
\State {\emph{/* Marginal benefits were updated last round (round $j$) and can use $T_{(j,j-1)}$ for comparison */}}
\ElsIf{$(j=i$}
\State{$\Delta_{cnt} = \sum(val(T_{(j,j-1)} - T_c))$}
\State{$\Delta_{sum} = |T_{(j,j-1)} - T_c|$}
\State {\emph{/* Marginal benefits were updated before in the same round*/}}
\ElsIf{$Round_\Delta=i$}
\State{\emph{/* Do nothing. It is up-to-date. */}}
\EndIf
\State{$v_{j+1} = (v_j\times|\soln_j| + \Delta_{sum})/(|\soln_j|+\Delta_{cnt}))$}
\State{\emph{/* Update the round indicator */}}
\State{$i=j$}
\State\Return $v_(j+1)$
\caption{The \deltajudgment\ Procedure}
\label{algo:deltajudgment}
\end{algorithmic}
 \end{algorithm}

}

\cut{
Several necessary concepts are:
\par
(1)\emph{Delta Count}: \emph{Delta Count} (denoted by $\Delta_{cnt}$) indicates the number of tuples that current cluster differs from the covered cluster list.
\par
(2)\emph{Delta Sum}:\emph{Delta Sum} (denoted by $\Delta_{sum}$) shows the total value all "different" tuples where "different" is the same meaning with (1)'s.
\par
(3)\emph{Delta Round}:\emph{Delta Round} (denoted by $Round_\Delta$) represents for the last updated loop round for the current cluster.
\par
(4)\emph{Diff List}:\emph{Diff List}(denoted by $dlist$) is a constantly maintained and updated list where records all new tuples brought into the covered tuple list the previous round.
\par
$\Delta_{cnt}$, $\Delta_{sum}$ and $Round_\Delta$ reside in every cluster. $dlist$ is a global list which is updated every round. The system maintains the round counter. It adds one to itself once a merging of clusters takes place. The key point of \deltajudgment technique is to make judgment on whether a cluster's existing delta values should be updated or can be directly used based on two points: when did the last update for delta values in this cluster take place (indicated by $Round_\Delta$) and what is the global difference between the last round and the current round (indicated by $dlist$). Given the current round $i$, detail judgments are shown as follows:
\par
(1) If $Round_\Delta$ is default value, it means that this cluster has never been approached by previous runs and is required to do all the tuple-wise comparisons and computations.
\par
(2) If $i-Round_\Delta\geq2$, it means the cluster's delta values are far outdated, and thorough recomputation same as (1) is required as well.
\par
(3) If $Round_\Delta=i$, this cluster has been approached before in the same round. In this case, no updates are necessary - both $\Delta_{sum}$ and $\Delta_{cnt}$ are available.
\par
(4) If $i-Round_\Delta=1$, this cluster's delta values were updated exact in round $i-1$. Other than a thorough re-computation, the cluster only needs to compare its own tuple list with the $dlist$ to update $\Delta_{sum}$ and $\Delta_{cnt}$.    
\par
Once $\Delta_{cnt}$ and $\Delta_{sum}$ has been updated or judged to be no need to update, the \maxval\ value by merging this cluster is simple to calculate: Given the global tuple count in this round $N_i$ and current score $Score_i$, the \maxval\ value is $(N_i\times Score_i+\Delta_{sum})/(N_i+\Delta_{cnt}))$.
\par
}

\cut{
The benefit of  \deltajudgment\ is that it saves lot of computation by reducing tuple-wise operations. With the application of delta values, unnecessary comparisons between the cluster tuple list and the global covered tuple list are eliminated. Even comparison using the difference list saves lots of operations because difference list is much smaller than the global tuple list. A visible improvement is shown in the experiment section for \deltajudgment, Section~\ref{exp:opt-delta}.  
}

\textbf{Cluster generation and mapping to tuples.~~}
The semilattice structure on the clusters given an $L$ value is required to run our algorithms  that may contain a number of clusters in a naive implementation. 
To reduce this space to contain only the relevant clusters, clusters are first generated by each tuple in top-$L$, which ensures that each generated cluster is a possible cluster covering at least one tuple in top-$L$. Besides, we need to maintain mappings between clusters and the tuples they contain, for which tuples generate \emph{matching expressions} for their target clusters and search through the cluster list (instead of starting with a cluster and searching for matching tuples). Experiments in Section~\ref{sec:exp-opt} shows the benefit - $100x-1000x$ speedup in running time.

\textbf{Hash values for fields.}
The value of an attribute is often found to be text (or other non-numeric value).
While storing information on the clusters, we maintain hashmaps for each field between actual values and integer hash values, and store the hash values inside each cluster (mapped back to the original values in the output). 
This optimization reduces the running time of the order of $50x$. 

\cut{
\subsection{Benefits}

Two fields get benefit from this precomputation implementation: running time and user experience. The detailed discussion is shown as follows.

\subsubsection{Precomputation vs. Running from scratch}
There are two ways implemented to retrieve results: precompute a series of results on different parameter settings for a certain query then retrieve the one with current variables, or directly running the algorithm from scratch. Both ways can get the correct answers, but they differ both in the running time distribution and fitful scenarios. Both share the same initialization time since the preparation steps for both ways are identical. However, since precomputation takes a series of parameter combinations and gets multiple results instead of one single result, when dealing with the first output (and the only output if only one query and parameter combination issued), the precomputation way takes much more time than the direct way. When it comes to another case where the user wants to adjust parameter combinations to better comprehend result clusters, the precomputation way takes significantly shorter time than running from scratch. Because running from scratch means the system has to rerun all steps from phase 0, while precomputation way can simply retrieve precomputed data from the implemented data structure. In all, the direct way is better when dealing with a single ask, and precomputation is necessary when a series of queries on the same SQL query are issued.

\subsubsection{User Experience}

Precomputation is necessary for generating the guidance plot. The plot provides results on a series of possible combinations for $k$, $L$ and $D$. Since in \hybrid, different $L$ will cause difference in the initialization phase, the plot shows how $k$ and $D$ affects the \avgmax value.
It can lead the user to desired areas where the average score is high/slow rather than the user randomly test the combinations themselves. Besides,  in many cases, the line with go horizontally after some specific $k$ values (denoted by $k_0$) with $k$ going up since there are at most $k_0$ clusters after the distance ensuring phase. By displaying those $k_0$ values for different lines, the plot also saves the user from unnecessary parameter tries.
}

\section{Experiments}\label{sec:experiments}
We develop an end-to-end prototype with a graphical user interface (GUI) to help users interact with the solutions returned by our two-layered framework. The prototype is built using Java, Scala, and HTML/CSS/JavaScript as a web application based on Play Framework 2.4, and it uses PostgreSQL at the backend 
\revb{
(screenshots of the graphical user interface 
can be found in the demonstration paper for our system 
\cite{qagviewdemo2018}).} 
In this section we experimentally evaluate our algorithms using our prototype by varying different parameters (Section~\ref{sec:perf-params}), and then test the precomputation and guidance performance (Section~\ref{sec:exp-precomputation}). The effects of optimizations  are given in Section~\ref{sec:exp-opt}, scalability of our algorithms for a larger dataset is discussed in Section~\ref{sec:exp-tpcds}.
\cut{Additional discussion about the element selection order in \fixedorder\ is provided in Section~\ref{sec:app-exprandomorder}.}
\par
\textbf{Datasets.~~} In most of the experiments, we use the MovieLens 100K dataset \cite{movielens, movielensdata, movielenspaper}. We join all the tables in the database (for movie-ratings, users, their occupation, etc) and materialize the universal table as \emph{RatingTable}. 
Each tuple in this rating table has 33 attributes 
of three types: (a) \emph{binary} (\eg, whether or not the movie is a comedy or action movie), (b) \emph{numeric} (\eg, age of the user), and 
(c) \emph{categorical} (\eg, occupation of the user). We join the tables as a precomputation step to avoid any interference while measuring the running time of our algorithms. 

\cut{
The aggregate queries used in this section are of the following form.
\newsavebox\sqlexp
\begin{lrbox}{\sqlexp}\begin{minipage}{\textwidth}
\lstset{language=SQL, basicstyle=\ttfamily, deletekeywords={year,month,action},tabsize=2}
\begin{lstlisting}[mathescape]
SELECT $\langle$ grouping attributes$\rangle$, avg(rating) as $\val$ 
FROM RatingTable 
GROUP BY $\langle$ grouping attributes$\rangle$
HAVING count(*) > 50 
ORDER BY $\val$ DESC
\end{lstlisting}
\end{minipage}\end{lrbox}
\resizebox{0.85\textwidth}{!}{\usebox\sqlexp}
}

The other dataset we use is TPC-DS benchmark \cite{nambiar2006making} primarily for evaluating scalability of our algorithms. The table we materialized via generator is \emph{Store\_Sales}, which contains 23 attributes and 2,880,404 tuples in total. The aggregate queries used for these two datasets (average rating for MovieLens and average net profit for TPC-DS) can be found in Appendix~\ref{app:exp-queries}.
\cut{
The aggregate queries we use for TPC-DS related experiments 
share the same form with the example given above.

\begin{lrbox}{\sqlexp}\begin{minipage}{\textwidth}
\lstset{language=SQL, basicstyle=\ttfamily, deletekeywords={year,month,action},tabsize=2}
\begin{lstlisting}[mathescape]
SELECT $\langle$ grouping attributes$\rangle$, cast(avg(net_profit) 
        as int) as $\val$ 
FROM store_sales
GROUP BY $\langle$ grouping attributes$\rangle$
HAVING count(*) > 10 
ORDER BY $\val$ DESC
\end{lstlisting}
\end{minipage}\end{lrbox}
\resizebox{0.85\textwidth}{!}{\usebox\sqlexp}
}
All experiments were run on a 
64-bit Ubuntu 14.04.4 LTS machine, with Intel Core i7-2600 CPU (4096 MB RAM, 8-core, 3.40GHz).

\subsection{Varying Parameters}\label{sec:perf-params}
Unless mentioned otherwise, the three algorithms from Section~\ref{sec:algorithms} are compared in this section: (i) {\bf \bottomup}, (ii) {\bf \fixedorder}, (iii) {\bf \hybrid}.
In the plots showing the values, we also include (iv) {\bf \tt Lower Bound}: value of the trivial (feasible) solution (a single cluster with don't-care $*$ values for all attributes) as a baseline. 
\cut{
 and (v) {\bf \tt Upper Bound}: this is the average value of top-$L$ elements, which gives an upper bound on the value obtained by any solution, but \emph{it may be infeasible to achieve this value due to constraints on $k$ and $D$}. This is illustrated in Figure~\ref{fig:brute-force-val} by comparing the upper bound with the optimal (feasible) value returned by the brute-force algorithm which is close to our results, but the upper bound is much higher. The upper bound is only shown to display the range of the values in consideration since brute-force is practically infeasible for larger values of the parameters.
 }

\textbf{Comparison with baselines.~}
We compare our algorithms with two baselines: the brute-force algorithm considers all possible cluster combinations, and outputs the global optimal; the lower-bound algorithm simply returns the trivial answer containing one single cluster with all attributes as ``*''s, which is always feasible for any value of $k, L, D$. \ansa{We also consider two variants of \fixedorder: random and $k$-means, discussed in Section~\ref{sec:fixedorder}.}  Figure~\ref{fig:brute-force-time} shows the running time for $L=5, D=3$ and $k=2,3,4$ (lower-bound is omitted because it returns trivial answers).
Even with such small parameter values, the brute-force algorithm is not practical: \eg, at $k = 4$, it takes more than 2.5 hours. Figure~\ref{fig:brute-force-val} compares the average values produced by different algorithms. \ansa{Since the random and $k$-means variants of \fixedorder\ are randomized, we report their average values over 100 runs each.}  From Figure~\ref{fig:brute-force-val}, we see that the results of \fixedorder\ and its variants are comparable with brute-force's, and are much better than the trivial solution. \ansa{Another observation is that neither random nor $k$-means variant improves the quality of plain \fixedorder. Further, they introduce more variance in the result quality (0.033 for random and 0.045 for $k$-means in terms of combined standard deviation), and slightly increase the running time. Therefore, in the rest of the section, we focus on the plain \fixedorder\ algorithm.}

\begin{figure}[t]
\vspace{-3ex}
\centering
\subfloat[{\scriptsize Running time: $L{=}5,D{=}3$}]{
\includegraphics[width=0.48\linewidth]{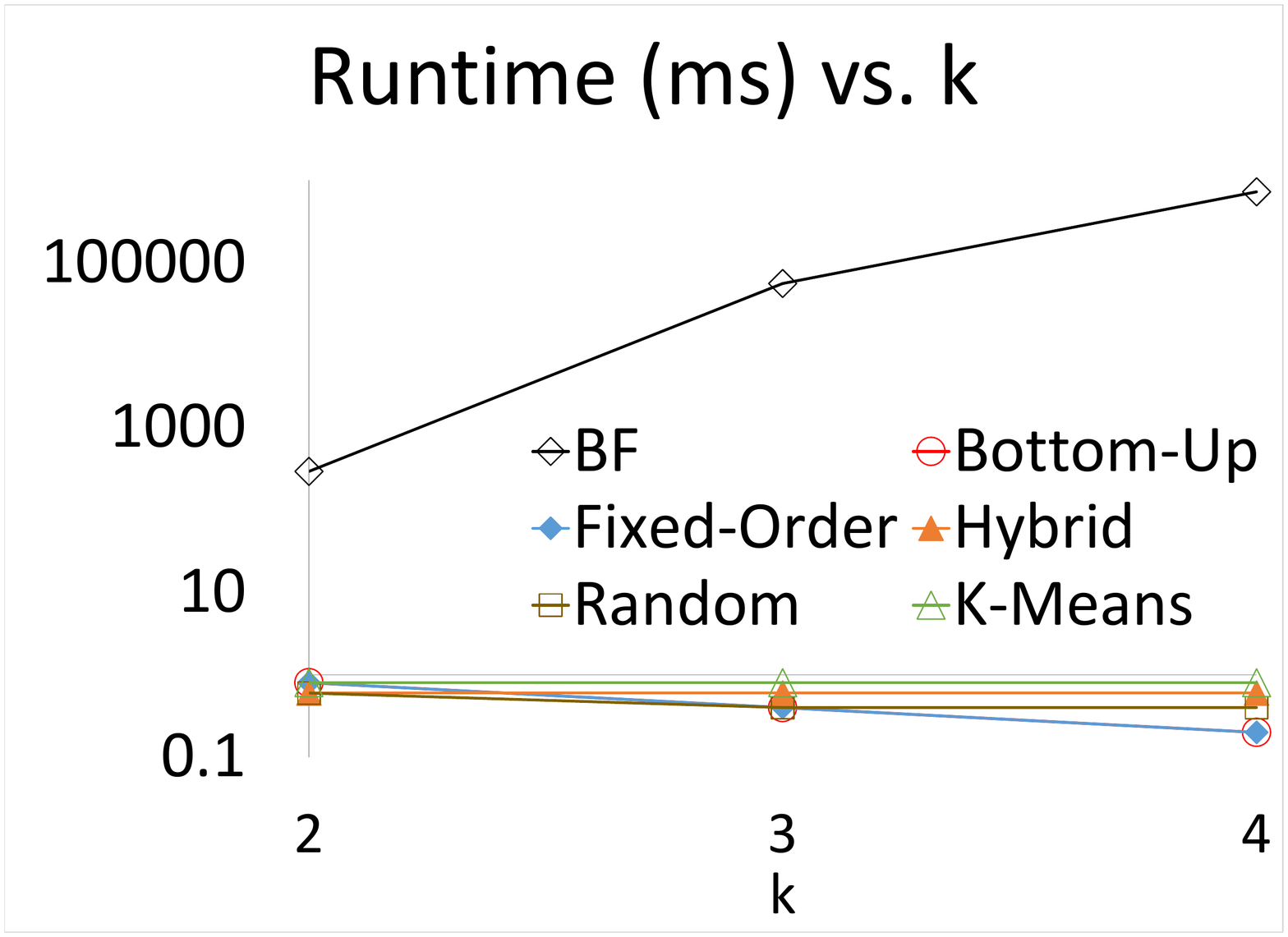}
\label{fig:brute-force-time}}
\hfill
\subfloat[{\scriptsize Value: $L{=}5,D{=}3$}]{
\includegraphics[width=0.48\linewidth]{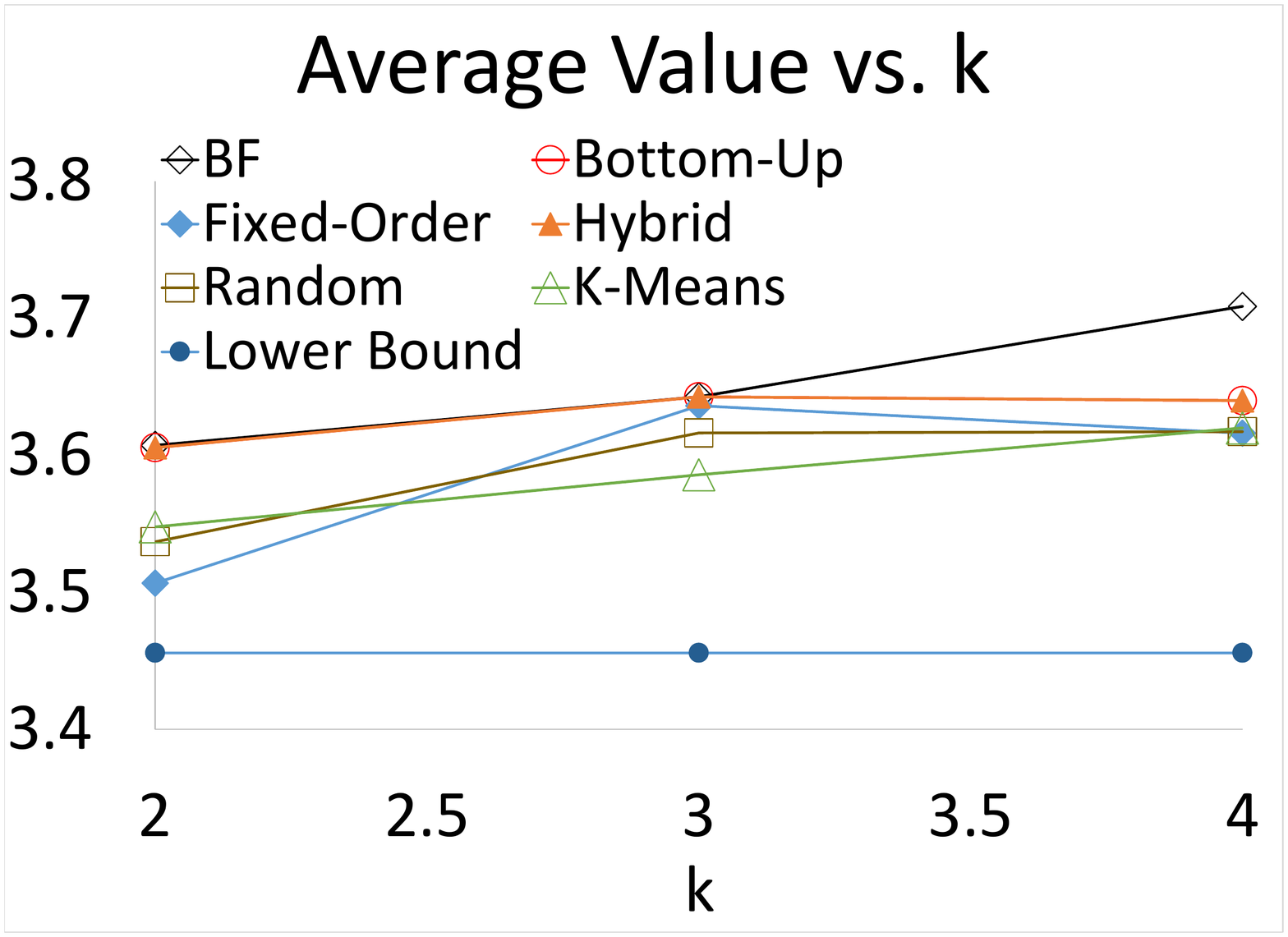}
\label{fig:brute-force-val}}
\vspace{-3ex}
\revb{\caption{Comparison with brute-force.}} 
\label{fig:allexpts-brute}
\end{figure}

\begin{figure*}[t]
\centering

\subfloat[{\scriptsize Running time vs.\ $k$}]{
\includegraphics[width=0.24\linewidth]{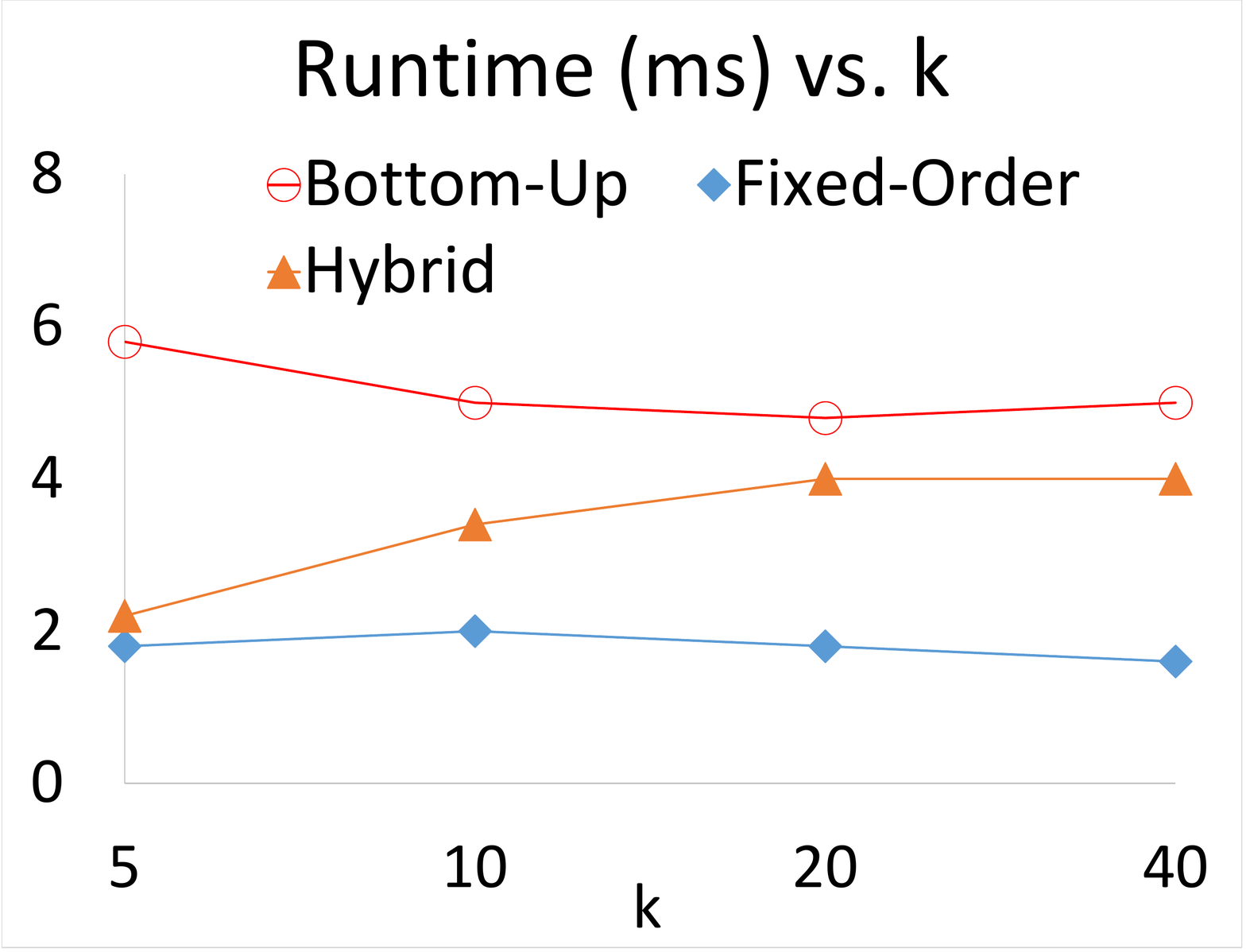}
\label{fig:VaryK-time}}
\hfill
\subfloat[{\scriptsize Value vs.\ $k$}]{
\includegraphics[width=0.24\linewidth]{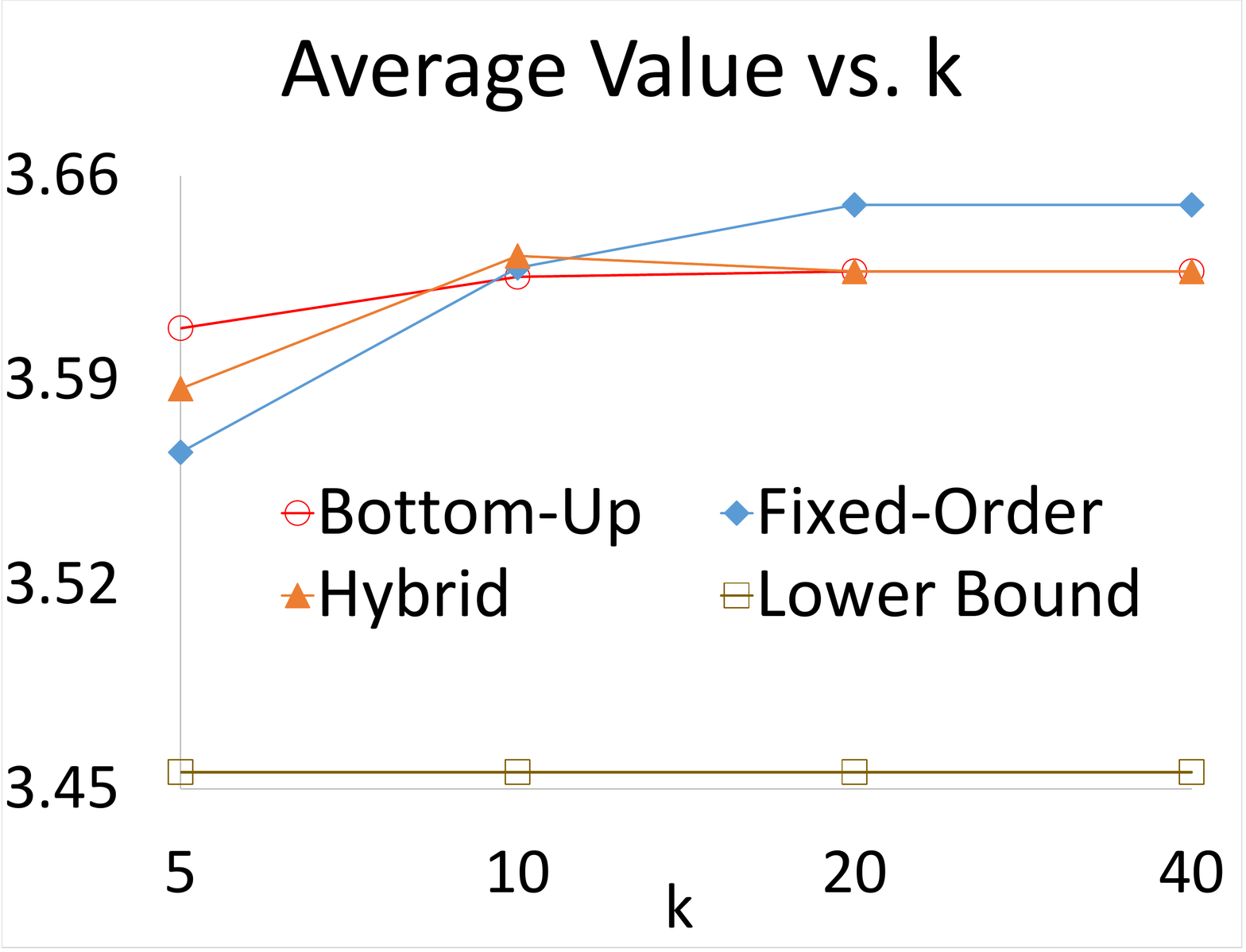}
\label{fig:VaryK-val}}
\subfloat[{\scriptsize Running time vs.\ $L$}]{
\includegraphics[width=0.24\linewidth]{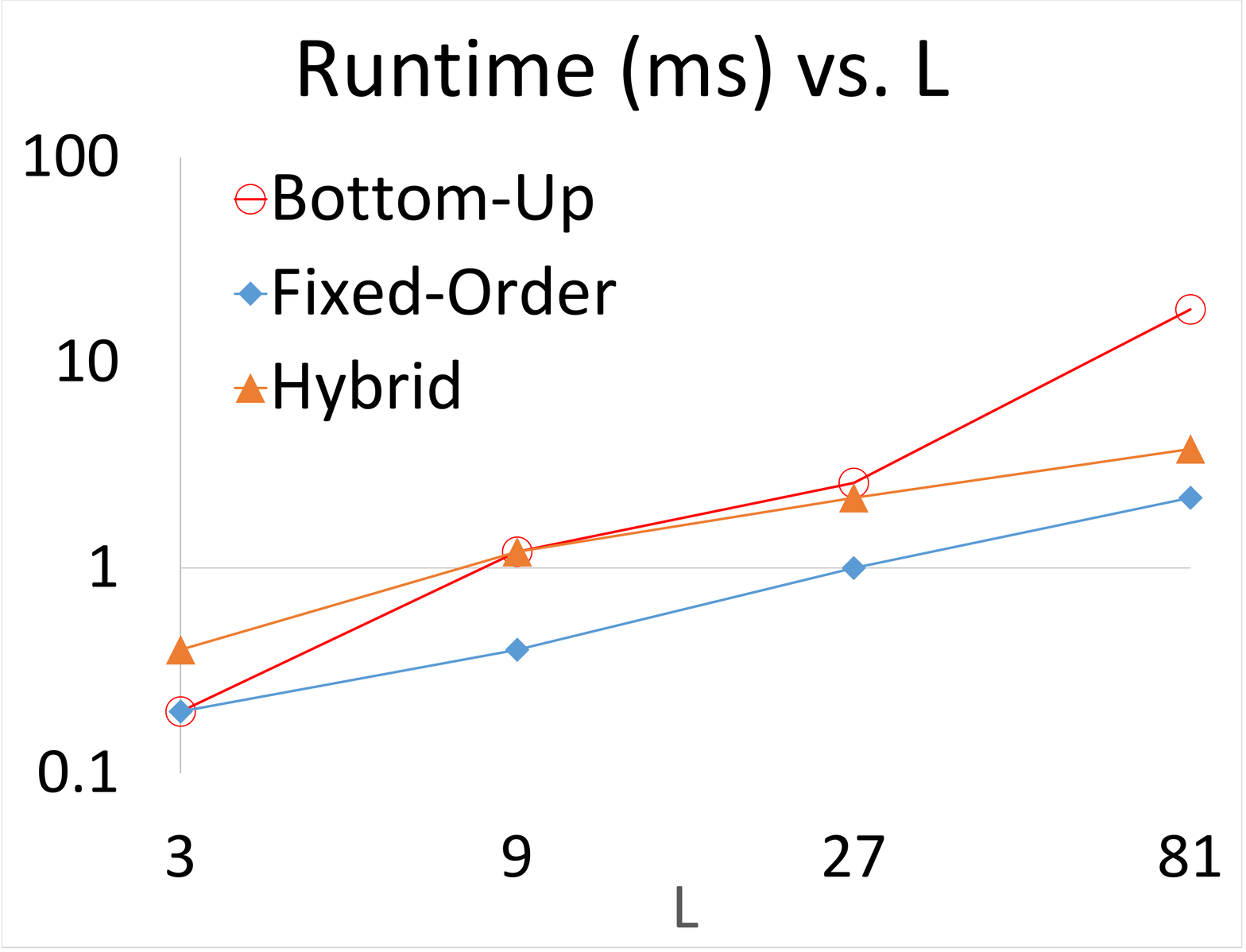}
\label{fig:VaryL-time}}
\hfill
\subfloat[{\scriptsize Value vs.\ $L$}]{
\includegraphics[width=0.24\linewidth]{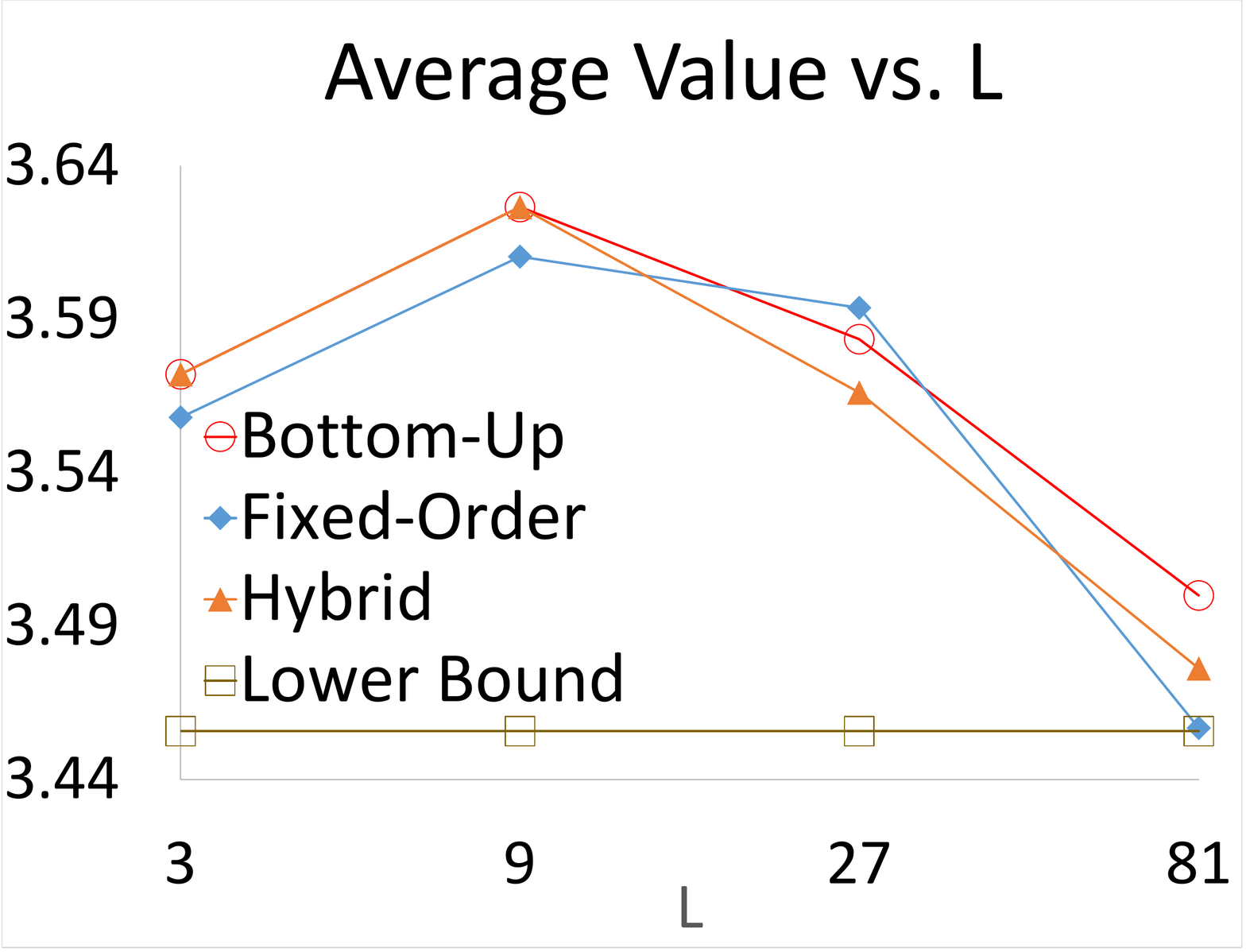}
\label{fig:VaryL-val}}
\vspace{-0.3cm}
\subfloat[{\scriptsize Running time vs.\ $D$ ($k{=}10, L{=}40$)}]{
\includegraphics[width=0.24\linewidth]{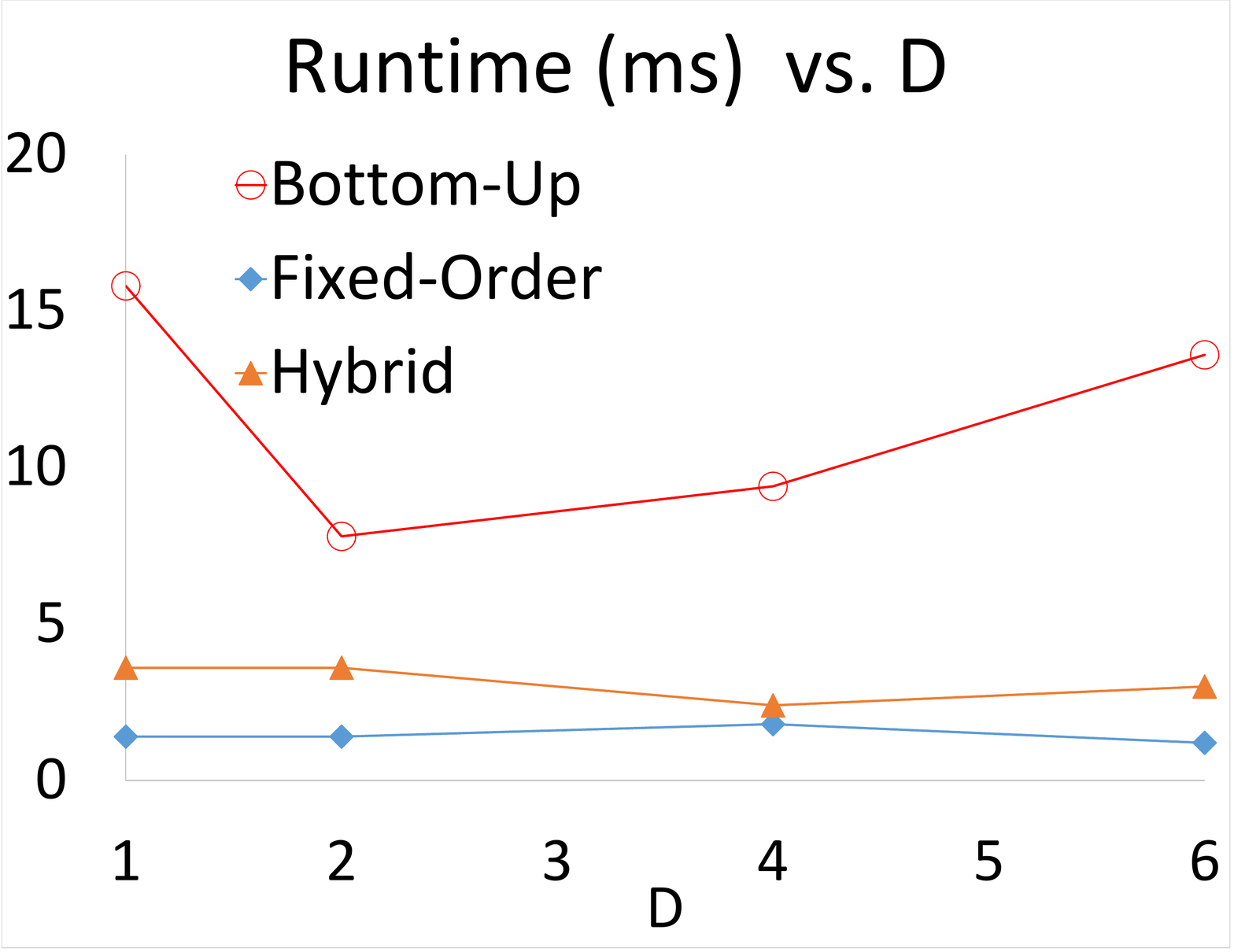}
\label{fig:VaryD-time}}
\hfill
\subfloat[{\scriptsize Value vs.\ $D$ ($k{=}10, L{=}40$)}]{
\includegraphics[width=0.24\linewidth]{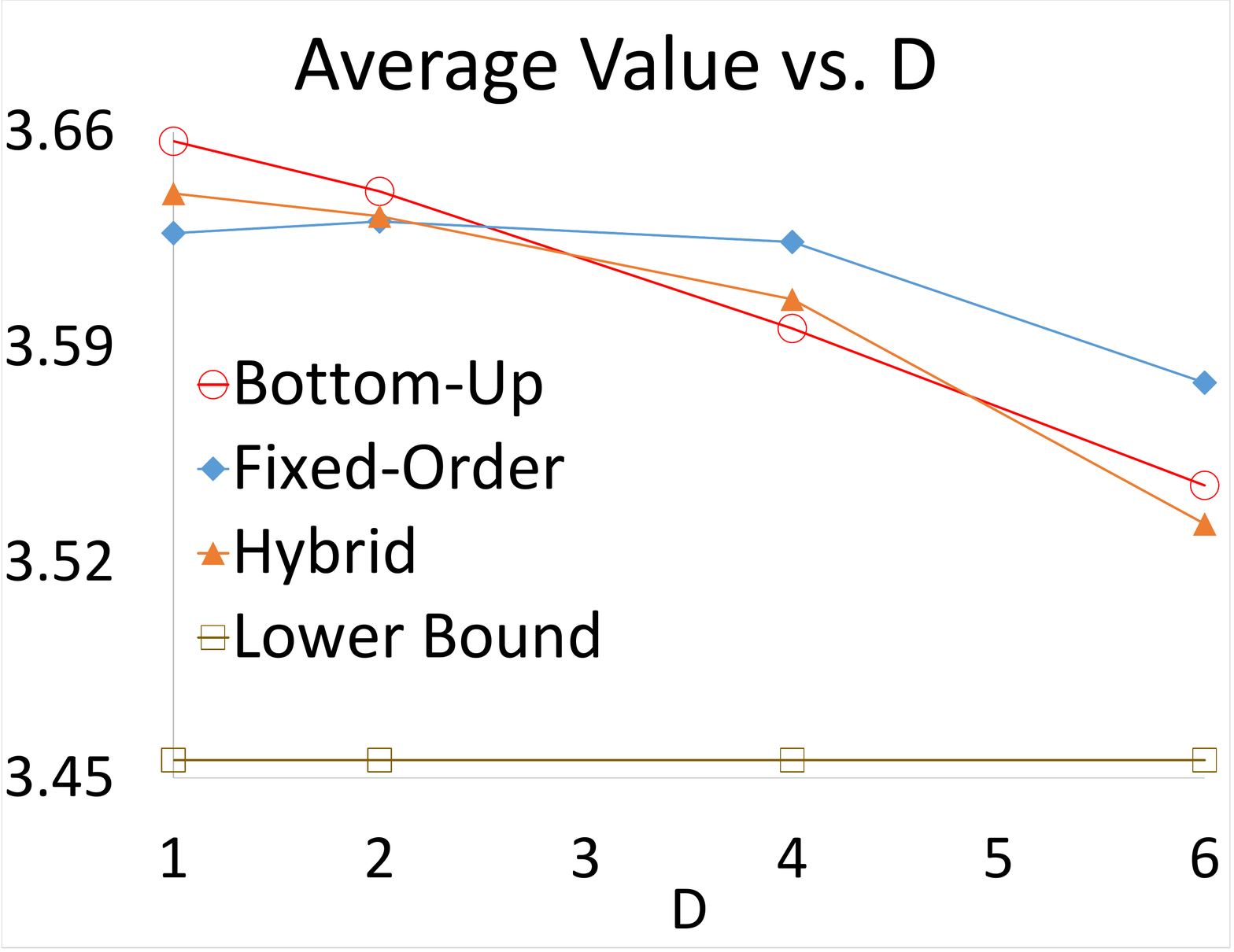}
\label{fig:VaryD-val}}
\subfloat[{\scriptsize Initialization time vs.\ $m$  ($k{=}L{=}20$)}]{
\includegraphics[width=0.24\linewidth]{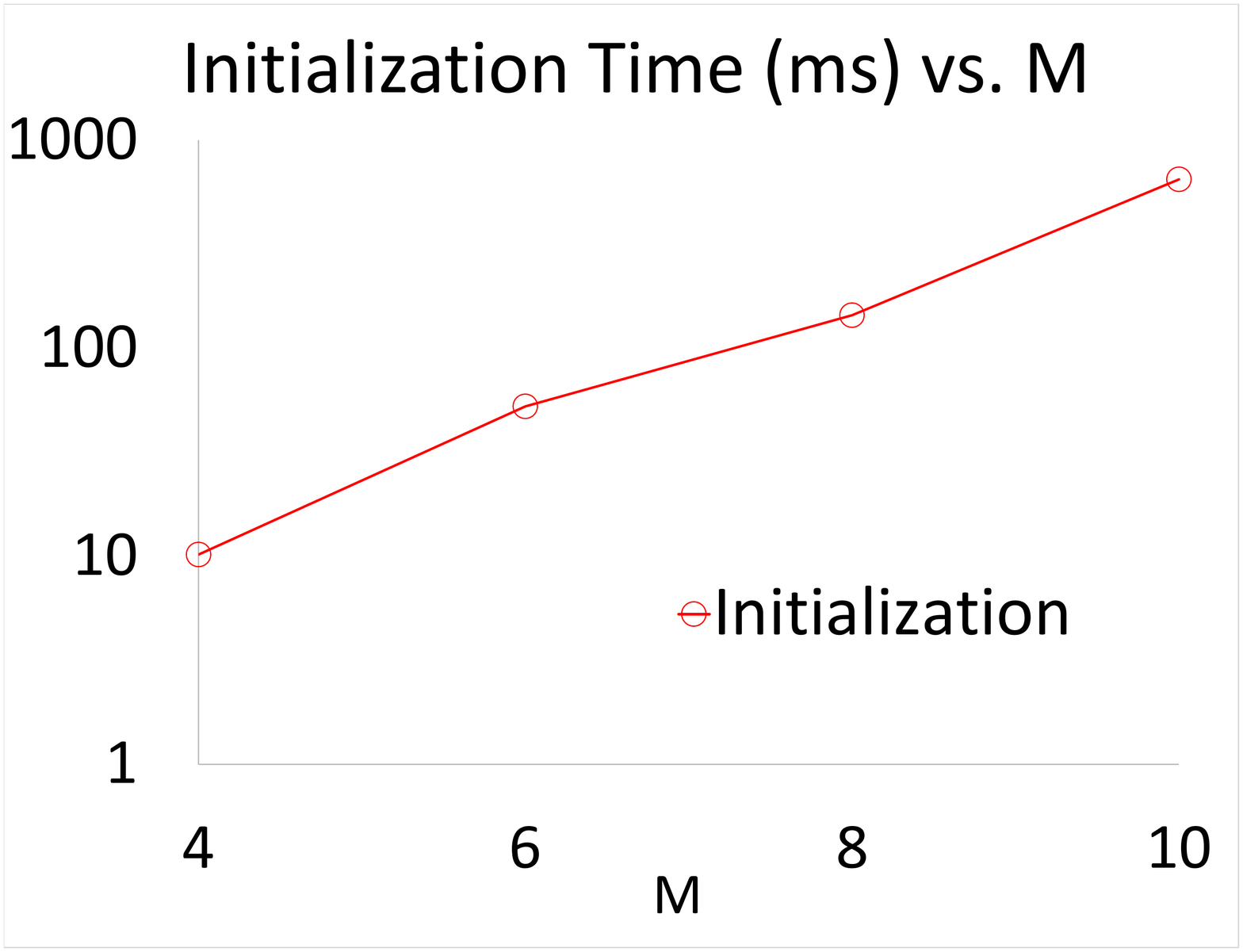}
\label{fig:VaryM-precomp}}
\hfill
\subfloat[{\scriptsize Running time vs.\ $m$  ($k{=}L{=}20$)}]{
\includegraphics[width=0.24\linewidth]{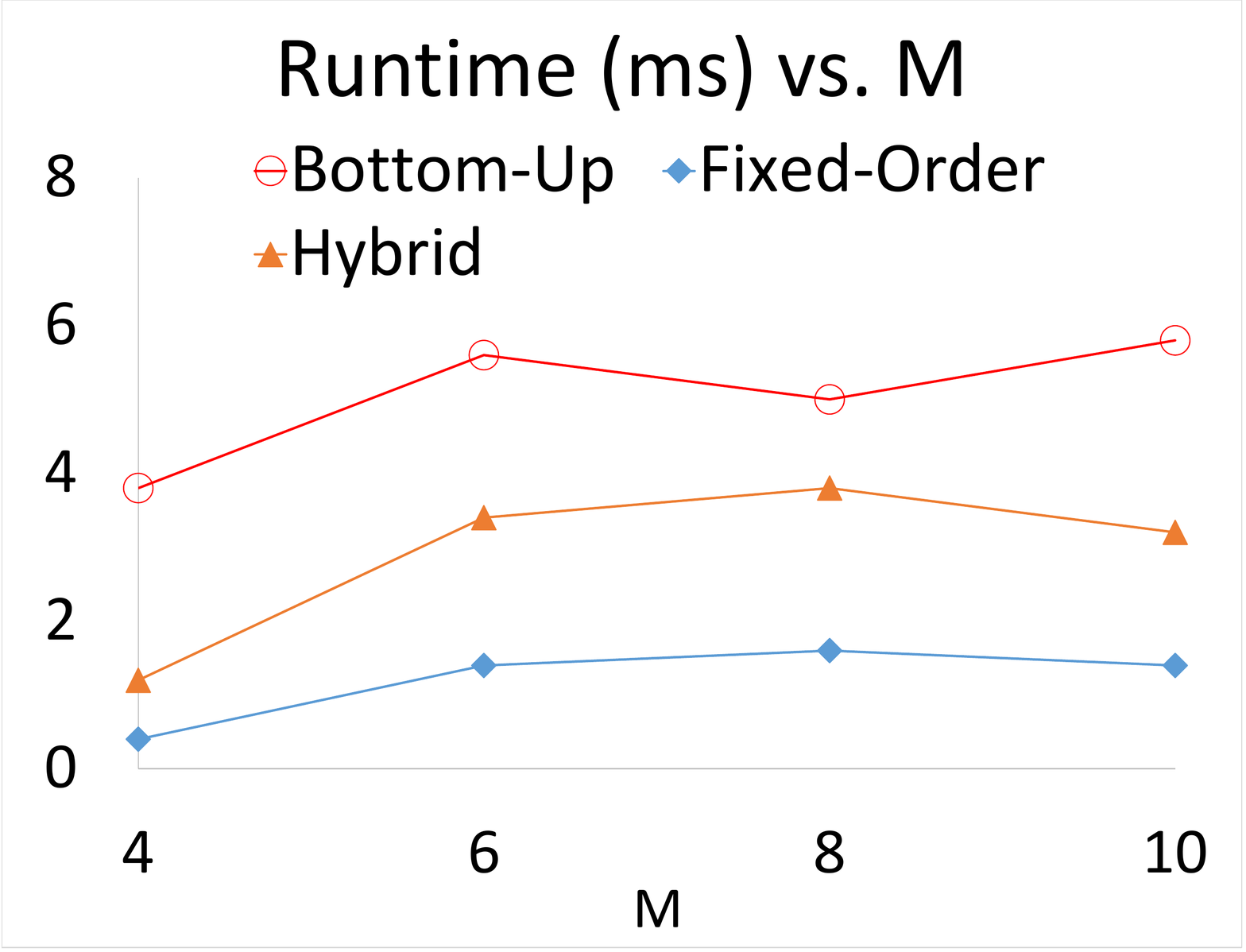}
\label{fig:VaryM-time}}

\cut{
\begin{minipage}[t]{.45\textwidth}
\centering
\subfloat[{\scriptsize Average value: \maxval\ vs. \minsize}]{
\includegraphics[width=0.45\linewidth]{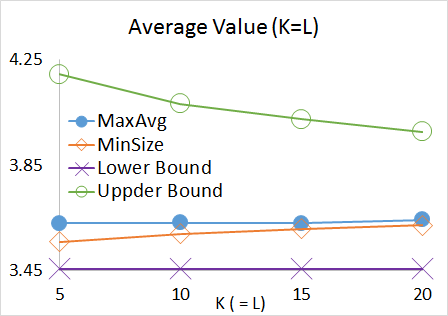}~~~~
\label{fig:maxmin-val}}
\hspace{0.05\linewidth}
\subfloat[{\scriptsize No. of elements covered:  \maxval\ vs. \minsize}]{
\includegraphics[width=0.45\linewidth]{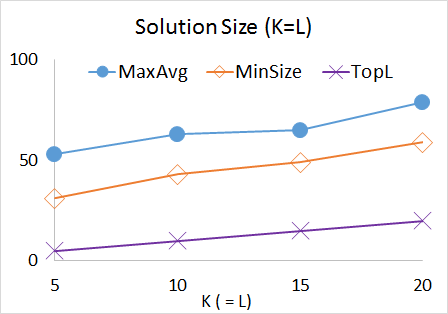}~~~~
\label{fig:maxmin-size}}
\vspace{-0.3cm}
\end{minipage}\\
}
\vspace{-2ex}
\revb{\caption{Experimental results varying parameters. 
The default values of parameters are $m = 8$, $k = 3$, $L = 40$, $D = 3$.}}
\label{fig:allexpts}
\vspace{-0.5cm}
\end{figure*}

\textbf{Effect of size parameter $k$.~}
Figure~\ref{fig:VaryK-time} shows the running time varying $k$. The running time of \fixedorder\ is the best as it never considers more than $k$ candidate merges per step; in contrast, \bottomup\ may consider a quadratic number of candidate merges per step and it is slower than \fixedorder\ as a consequence. \hybrid\ is in the middle for runtime as expected. Furthermore, $D=3$ helps bound the size of $\allclusters_\ell$ and hence the cost of computing the set cover.  The running time tends to decrease with bigger $k$ for both \fixedorder\ and \bottomup; the reason is that fewer merges are needed to reach the desired $k$. However, for \hybrid, since larger $k$ makes the candidate pool larger and might brings in more calculation in the second phase (\bottomup\ phase), the run time for \hybrid\ tends to get closer to \bottomup.

The average value of \fixedorder\ is lower than the value of \bottomup\ or \hybrid\ as explained in Section~\ref{sec:algorithms}, although gets better with larger $k$ in Figure~\ref{fig:VaryK-val}.

\textbf{Effect of coverage parameter $L$~}
Figure~\ref{fig:VaryL-time} shows that running time of all algorithms increase as the number of elements to be covered $L$ increases. Since \fixedorder\ depends linearly on $L$, it is less affected by $L$, whereas \bottomup\ treats individual elements as clusters and may incur quadratic time w.r.t. $L$. For \hybrid, with the restriction of the size of the candidate pool determined by $k$, the run time increase is slower than \bottomup\ and is comparable with \fixedorder. Note that in Figure~\ref{fig:VaryL-val}, the upper bound decreases since with $L$ increasing, the average value of the top-$L$ elements decreases. All three algorithms seem to be close in terms of average values, but \bottomup\ has the highest value most of the times and \hybrid\ usually gets results close or equal to \bottomup.

\textbf{Effect of distance parameter $D$.~}
In Figure~\ref{fig:VaryD-time}, \fixedorder\ is mostly unaffected by $D$ since the distance value is checked only once when an element is considered. \hybrid\ is relatively constant as well given that when the distance check starts, the number of unchecked tuples is limited by the candidate pool. For \bottomup\,  as $D$ increases, the run time drops first and then climbs. It may be caused by the existence of a balance point on number of calculations between distance insurance (phase 1 in \bottomup) and greedy merge (phase 2 in \bottomup).
\par
The average value of the output (the value of objective function) is highest when $D = 1$ (since singleton clusters are collected for $L = k = 20$), then drops with $D$ going up as shown in Figure~\ref{fig:VaryD-val}.

\textbf{Effect of number of attributes $m$.~}
Varying the number of grouping attributes $m$ also illustrates the effect of varying input data size. Since our algorithms run on the output of an aggregate query, as $m$ increases, our input data size $|S| = n$ is likely to increase (for the $m$ values in Figure~\ref{fig:VaryM-precomp} and \ref{fig:VaryM-time}, the size of the input ranges from 140 to 280). 
When a new query comes, the system performs an initialization step of constructing clusters and the semi-lattice structure. This initialization time is shown in Figure~\ref{fig:VaryM-precomp}. This step is performed only once per query, varying $k$ and $D$ does not need another initialization. 
Our implementation takes from $10ms$ when $m = 4$ to about $1s$ when $m = 10$.
Note that this is the number of group-by attributes in the top-$k$ aggregate query, not in the original dataset. So it is likely to have a small value $\leq10$. Figure~\ref{fig:VaryM-time} has the running time of the algorithms for $k=L=20, D=3$ and shows that all the algorithms return results in real time (in a few ms) after the initialization step.

\cut{

To compare the two optimization criteria, \maxval\ and \minsize, we  evaluated variants of greedy \levelbased\ algorithms varying $L = k$ at $D=3$ (since $L = k$, no merge is needed). 
For \maxval, we picked the next cluster that maximizes the objective of maximum average value; for \minsize, the cluster with minimum number of redundant elements  was chosen. 
As expected, the greedy algorithms do better for their respective criteria (higher is better in Figures~\ref{fig:maxmin-val}, and lower is better in Figures~\ref{fig:maxmin-size}), and a \emph{good} solution may be different for these two objectives.

}

\begin{figure}[t]
\centering
\subfloat[{\scriptsize Runtime vs.\ $k$}]{
\includegraphics[width=0.46\linewidth]{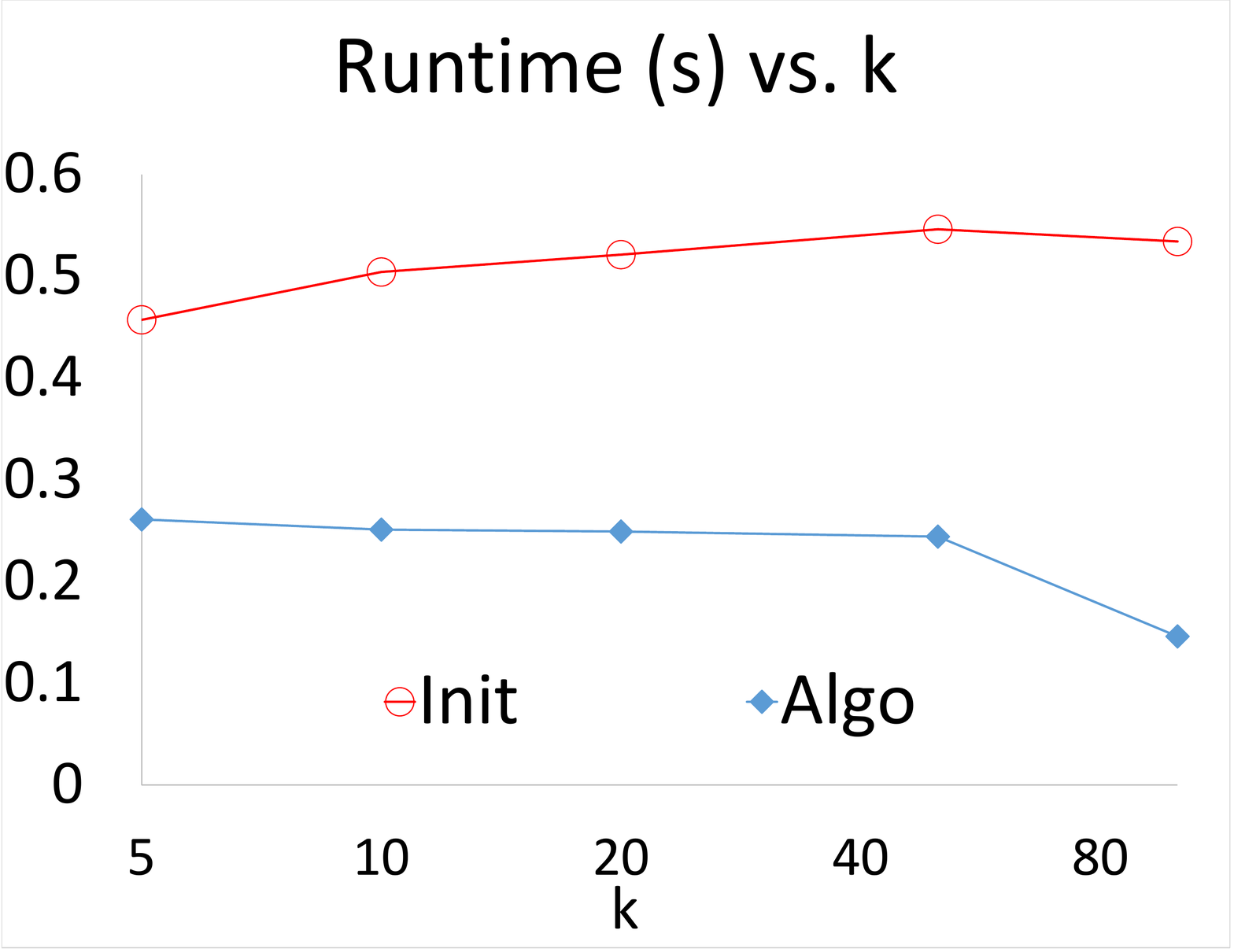}
\label{fig:pre-varyK}}
\hfill
\subfloat[{\scriptsize Runtime comparison, $N=7000$}]{
\includegraphics[width=0.46\linewidth]{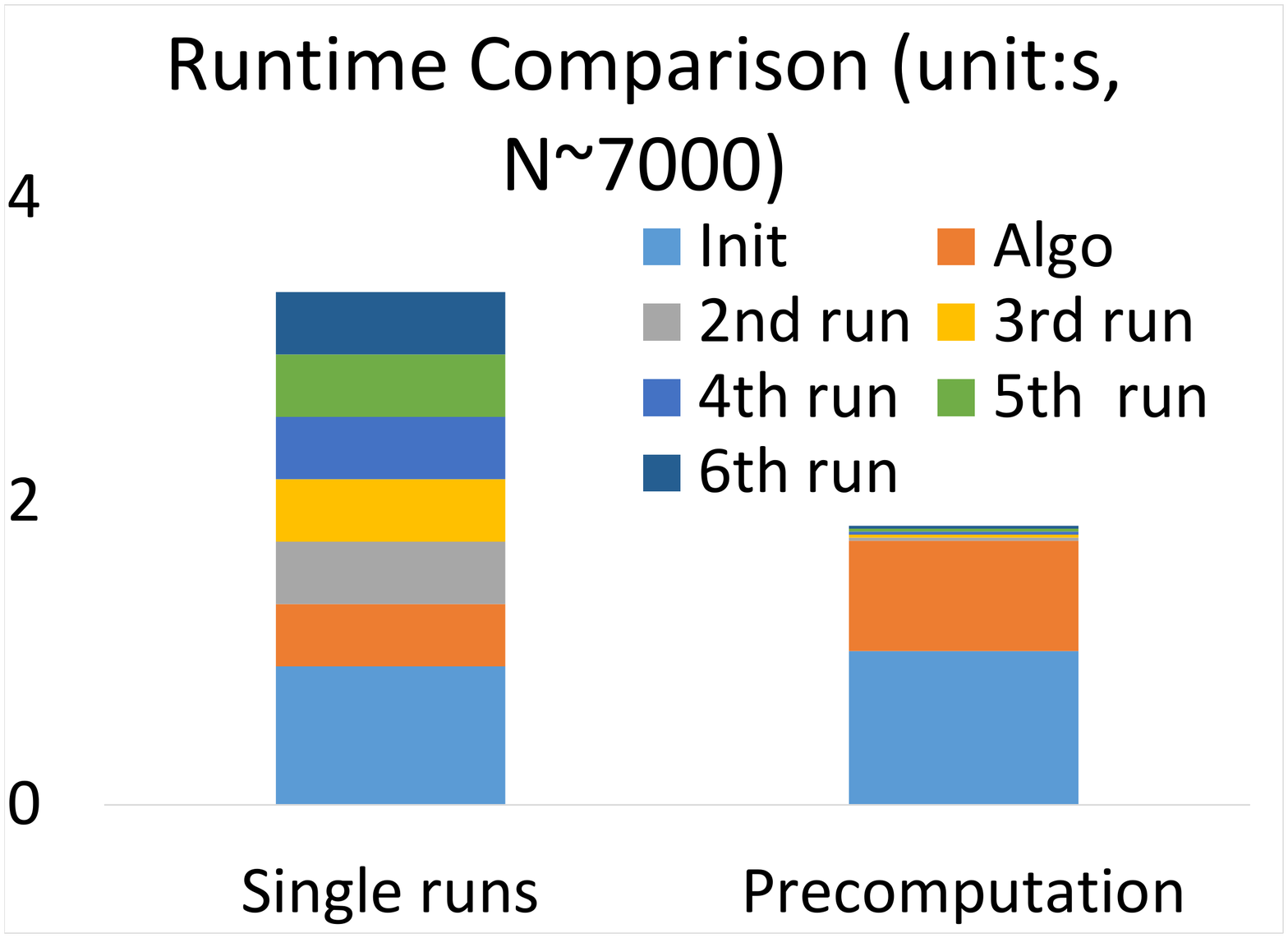}
\label{fig:pre-comparison}}
\vspace{-0.3cm}
\subfloat[{\scriptsize Single runtime vs.\ $L$}]{
\includegraphics[width=0.46\linewidth]{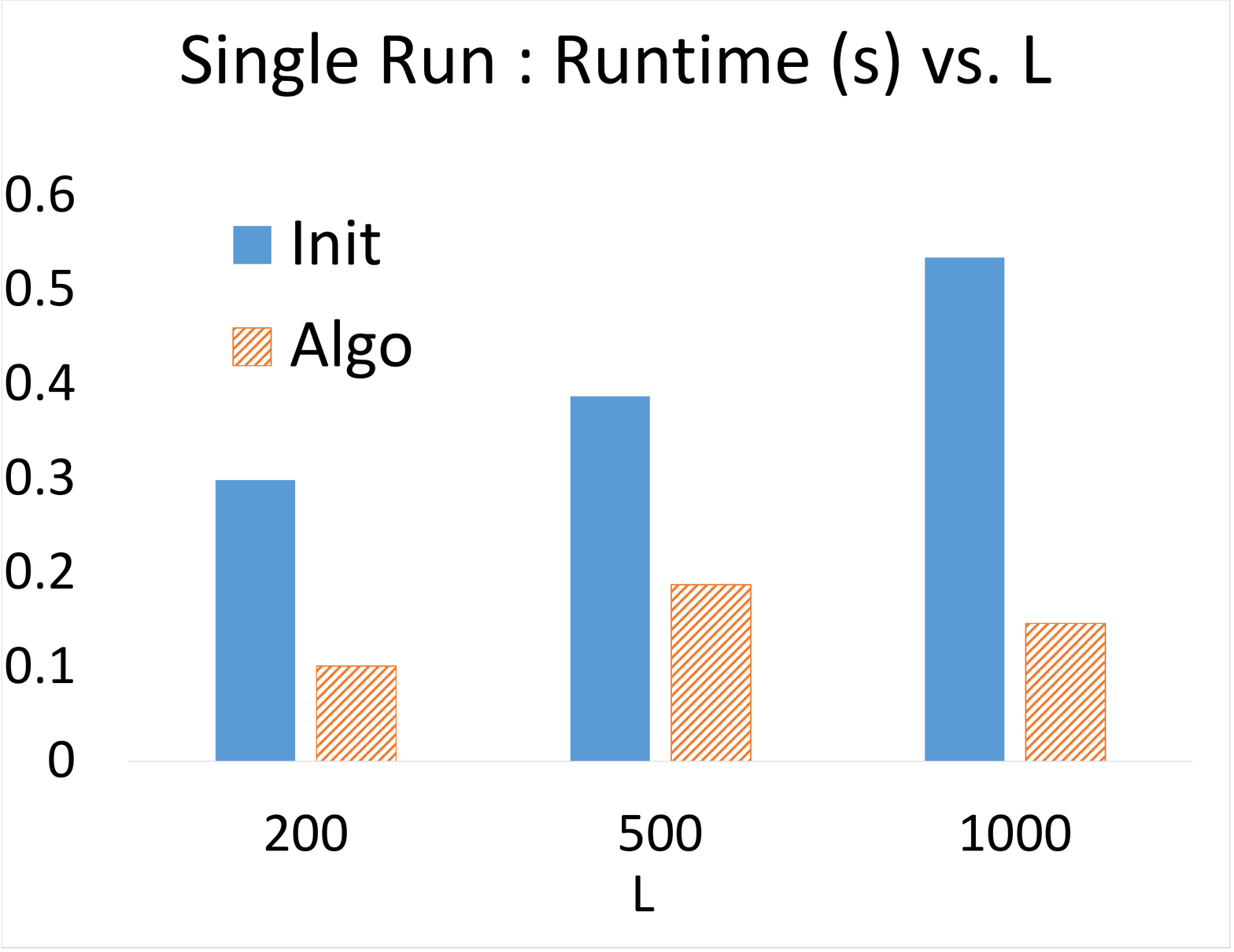}
\label{fig:pre-singleVaryL}}
\hfill
\subfloat[{\scriptsize Runtime vs.\ $L$; with precomputation}]{
\includegraphics[width=0.46\linewidth]{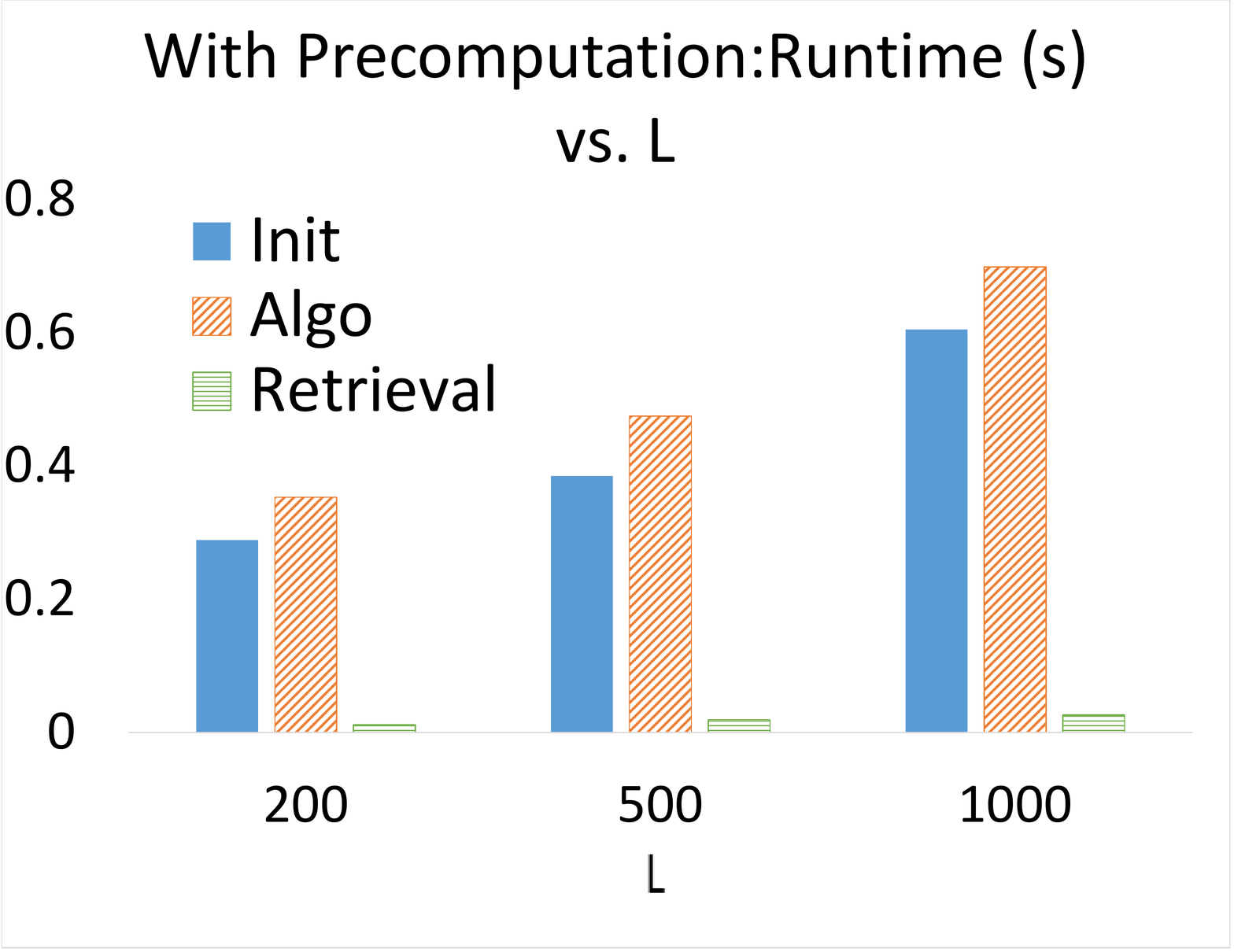}
\label{fig:pre-multiVaryL}}
\vspace{-0.3cm}
\subfloat[{\scriptsize Single runtime vs.\ $N$}]{
\includegraphics[width=0.46\linewidth]{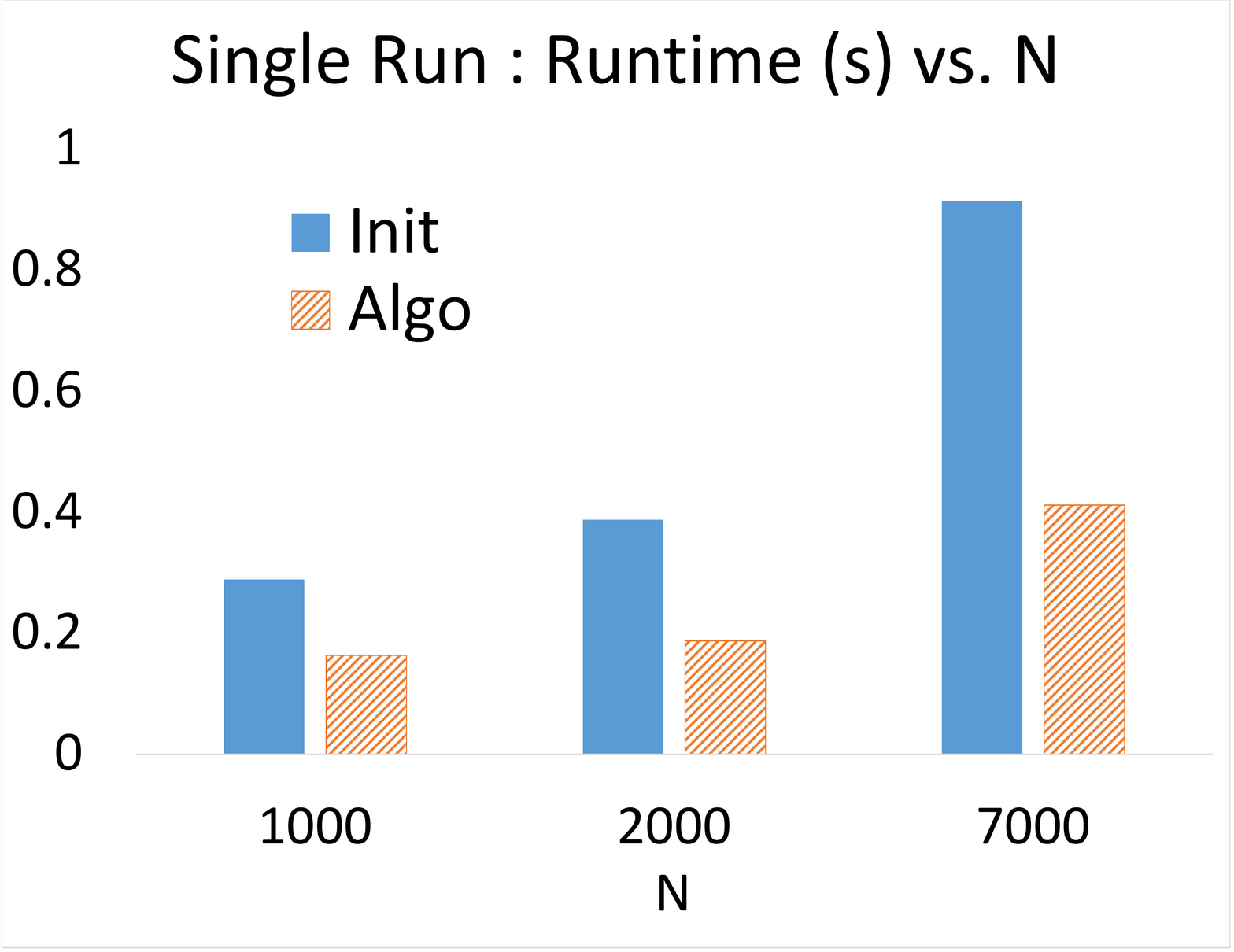}
\label{fig:pre-singleVaryN}}
\hfill
\subfloat[{\scriptsize Runtime vs.\ $N$; with precomputation}]{
\includegraphics[width=0.46\linewidth]{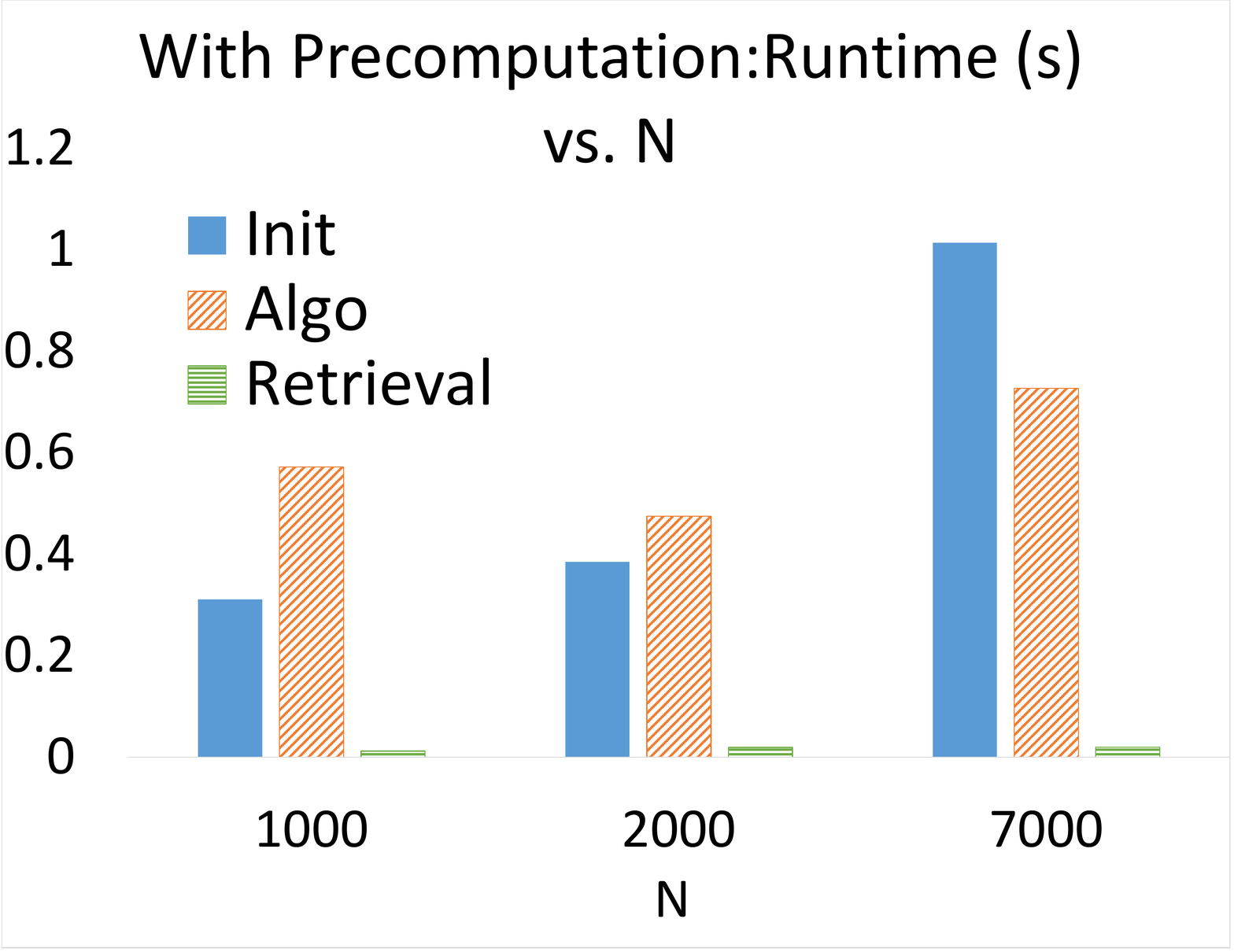}
\label{fig:pre-multiVaryN}}
\vspace*{-2ex}
\revb{\caption{Experimental results varying parameters, and with or without
  precomputation}}
\label{fig:preexpts}
\end{figure}

\vspace{-2mm}
\subsection{Cost and Benefit of Precomputation}\label{sec:exp-precomputation}

The performance evaluation for precomputation is shown by varying $k$, $L$ and $D$ separately, and comparing the running time of \hybrid\ between precomputation implementation and non-precomputation (single) implementation. 

\textbf{Effect of size parameter $k$.~}
In this experiment, $L=1000$, $D=2$ and $N=2087$ are fixed. Five $k$ values are chosen: $5,10,20,50,100$. The running time result is shown in Figure~\ref{fig:pre-varyK}: the initialization time hardly changes with $k$ growing since $k$ does not affect the initialization process. Given that a larger $k$ requires less operations in \bottomup\ phase to reach the target $k$, the running time for the algorithm (\hybrid) has a descending trend.

\textbf{Effect of coverage parameter $L$.~}
The fixed parameters are $k=20$, $D=2$, and $N=2087$. Three $L$ values are selected for the experiment:$L=200,500,1000$. The running time results for single version and precomputation version are presented in Figures~\ref{fig:pre-singleVaryL} and \ref{fig:pre-multiVaryL}. Both implementations have rising trend with respect to $L$ and share similar initialization times as expected. Although under the same parameter combinations, algorithm runtime for single implementation is much lower than precomputation time in the other implementation (about 1/3 to 1/4), but the retrieval time for precomputation implementation is extremely short (tens of milliseconds), which can make up for the time in multiple runs.

\textbf{Effect of total elements N.~}
Here three parameters $k, L, D$ are fixed as $k=20$, $L=500$ and $D=2$. We varied total input elements to test the system's performance with relatively higher capacity: $N=927,2087$ and $6955$. The running time result is shown in Figures~\ref{fig:pre-singleVaryN} and \ref{fig:pre-multiVaryN}. The changing trends are similar with those in 
Figures~\ref{fig:pre-singleVaryL} and \ref{fig:pre-multiVaryL}, but a significant increase for the initialization time can be observed with $N$ growing. This is caused by materializing more possible clusters brought by variety of tuples.

\textbf{Single run vs. multiple runs.~}
From Figure~\ref{fig:pre-singleVaryL}, \ref{fig:pre-multiVaryL}, \ref{fig:pre-singleVaryN} and \ref{fig:pre-multiVaryN}, the information is enough for comparing precomputation and non-precomputation versions on both single run and multiple runs scenario - For a single run, precomputation process is unused, making precomputation version the slower and more expensive choice; For multiple runs with similar setup, the precomputation version has increasingly more benefits brought by the rapid retrieval process taking tens of milliseconds. In order to provide a quantitative comparison, we provide Figure~\ref{fig:pre-comparison} with $N=6955$: if only a single run is required, the single version of \hybrid\ is clearly faster and cheaper than the precomputation version. However, when the third run finishes, the precomputation version is already faster; When all six runs finish, the single version takes about two times in terms of running time compared with the precomputation version. 

\begin{figure}[h]
\vspace{-2ex}
\centering
\subfloat[{\scriptsize Initialization running time with and without optimization, coverage $L$ varies}]{
\includegraphics[width=0.46\linewidth]{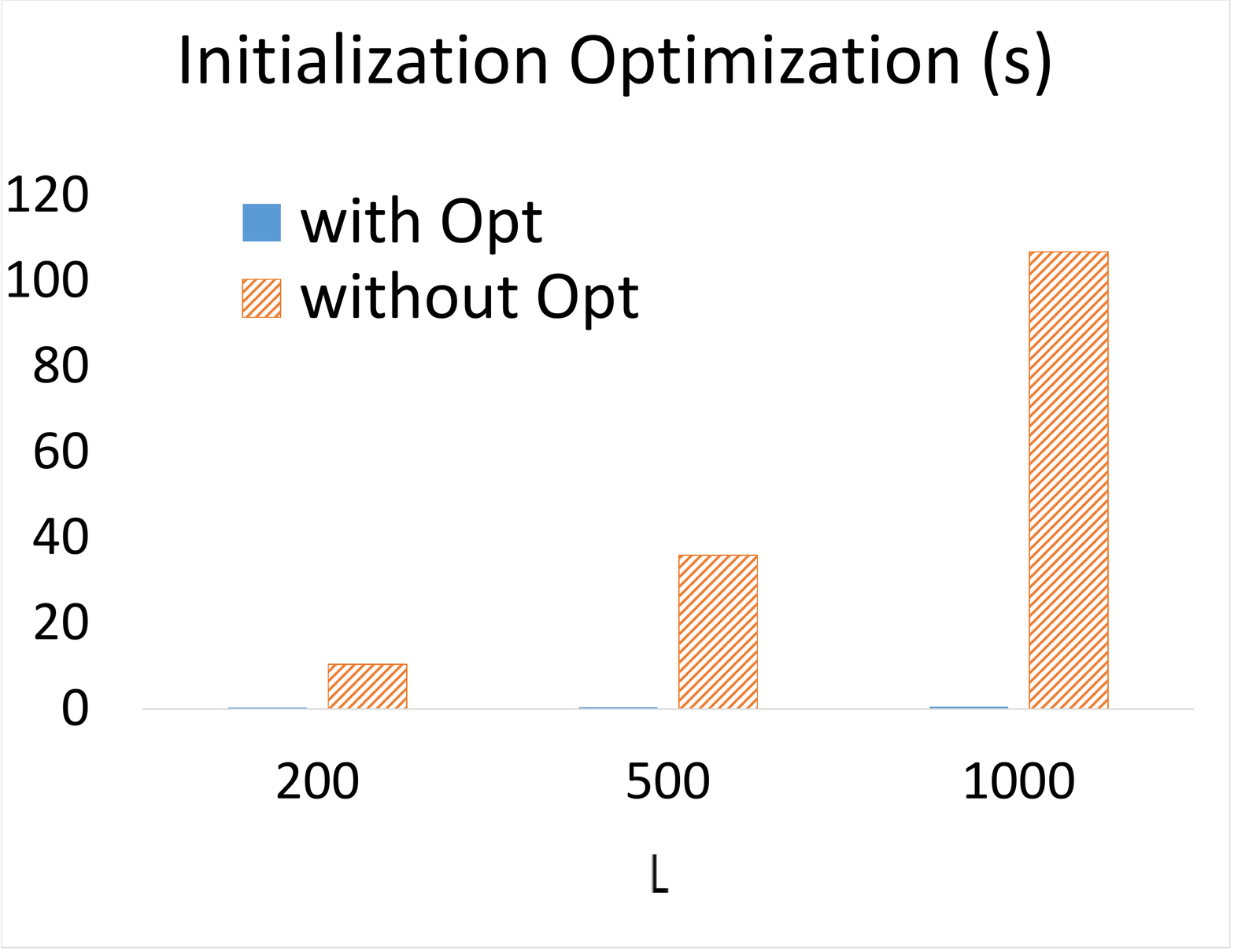}
\label{fig:opt-init}}~~
\hfill
\subfloat[{\scriptsize Algorithm running time with and without optimization,  coverage $L$ varies}]{
\includegraphics[width=0.46\linewidth]{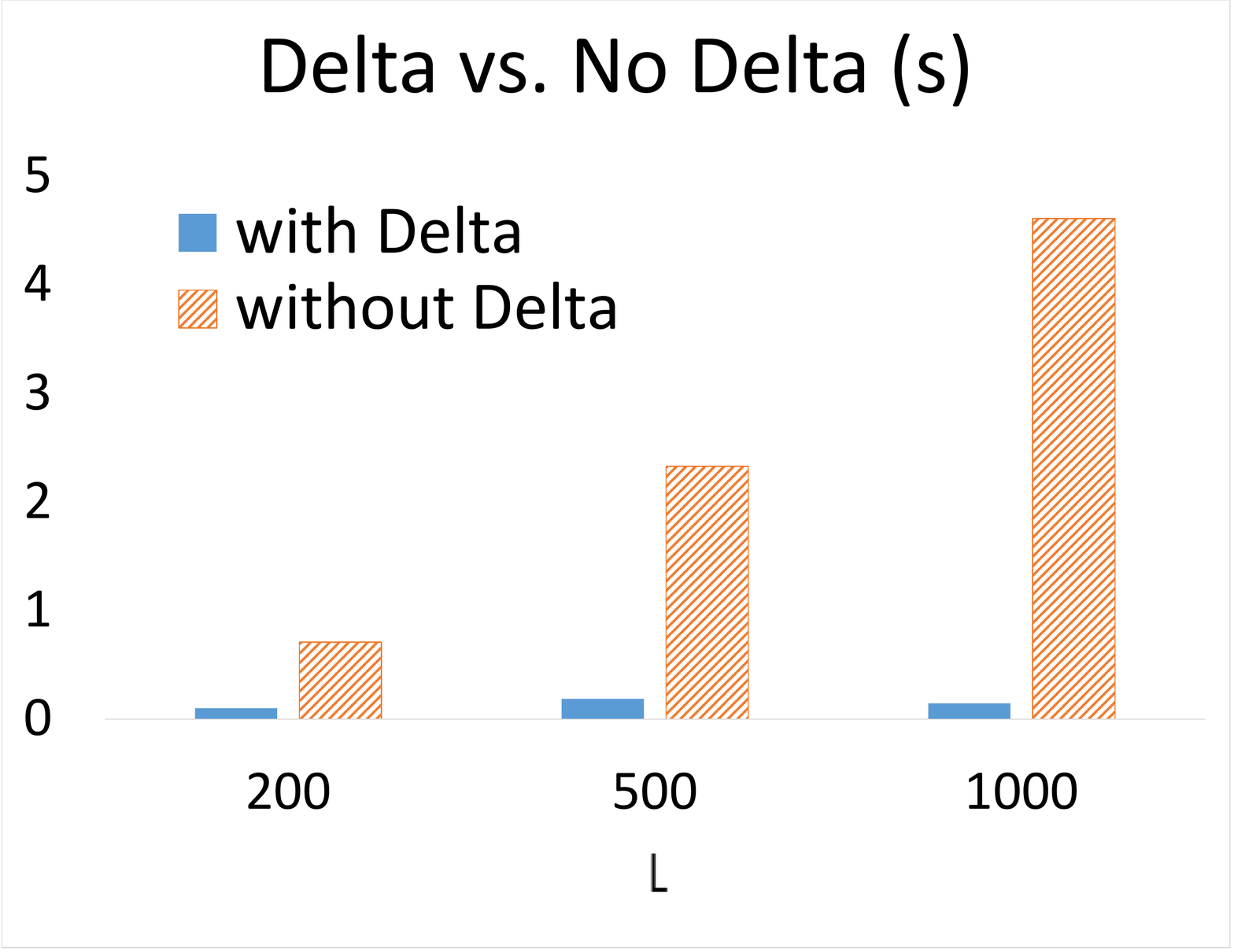}
\label{fig:opt-delta}}
\vspace{0cm}
\vspace{-1.5ex}
\revb{\caption{Experiment on effects of optimizations}}
\label{fig:optexpts}
\end{figure}

\textbf{Timing for Guidance Visualization.~}
We evaluated the running time for the generation of guidance visualization under different queries. The generation times are similar among different number of attributes - 20-40 milliseconds when the number of attributes is from 4 to 10 with $N=2087$ in MovieLens dataset, meeting the requirement for interactive performance.

\subsection{Benefit of Optimizations}\label{sec:exp-opt}




\textbf{Cluster generation and mapping to tuples.~}
Since $L$ is the only factor that affects the initialization time when the input size $N$ is fixed, in this experiment, $L$ varies among $200,500$ and $1000$ while others are fixed: $k=20, D=2, N=2087$. The result is presented in Figure~\ref{fig:opt-init}. Only the running time of initialization is drawn because the optimizations in this section only affect the initialization time. The optimizations - cluster generation and cluster-tuple mapping - provide significant performance improvement by cutting down the running time from $>100$s for $L=1000$ to $0.5$s.

\textbf{Delta Judgment.~}
The effect by introducing \emph{Delta Judgment} is shown in Figure~\ref{fig:opt-delta}. Given that $L$ is also the most effective variable to affect the running time, the experimental settings are the same as the experiment for Figure~\ref{fig:opt-init}.
However, only the running time of the algorithm is plotted since \emph{Delta Judgment} has no effect on the running time of initialization. The result in Figure~\ref{fig:opt-delta} shows that the \emph{Delta Judgment} successfully improves the algorithm's efficiency from $4.6$s to $0.15$s when $L=1000$, which is the slowest case in the experiment in this section.

\begin{figure}[t]
\vspace{-2ex}
\centering
\subfloat[{\scriptsize TPC-DS: Single running time vs. coverage $L$}]{
\includegraphics[width=0.48\linewidth]{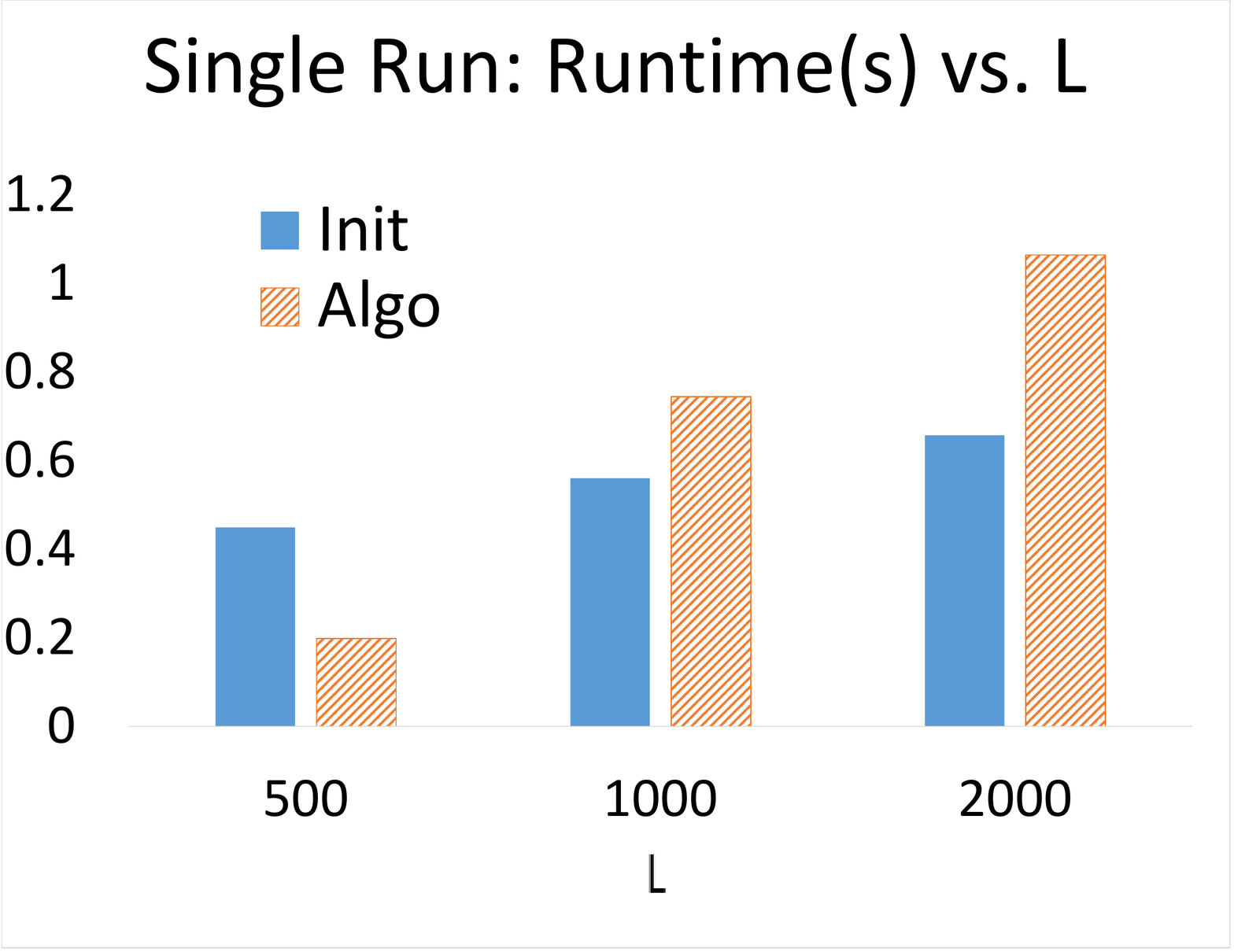}
\label{fig:tpcd-singleVaryL}}~~
\hfill
\subfloat[{\scriptsize TPC-DS: With precomputation running time vs. coverage $L$}]{
\includegraphics[width=0.48\linewidth]{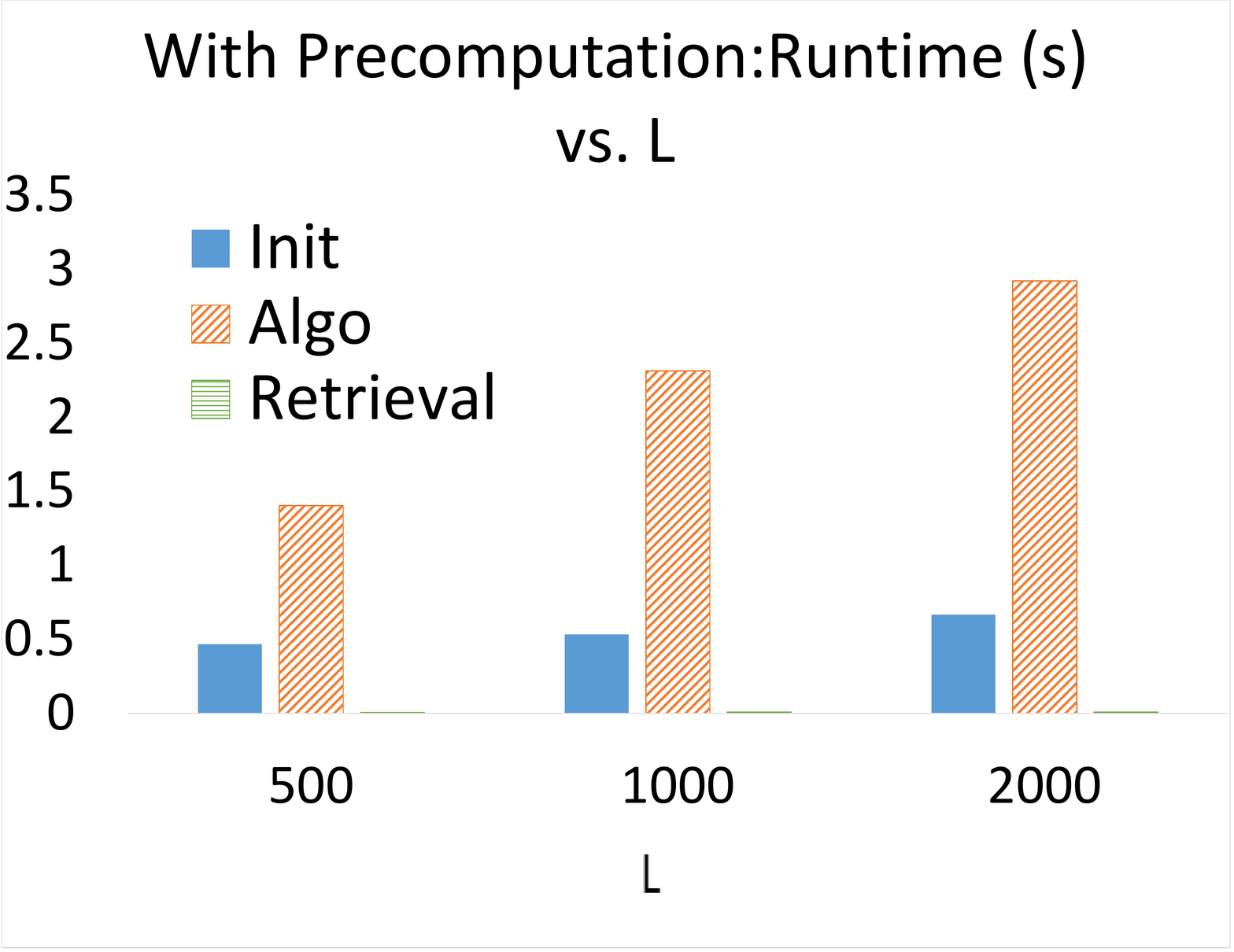}
\label{fig:tpcd-multiVaryL}}
\vspace{-1ex}
\revb{\caption{TPC-DS experimental results varying parameters and with/without precomputations.}}
\label{fig:tpcdexpts}
\end{figure}

\subsection{Scalability with a Larger Dataset}\label{sec:exp-tpcds}

In order to evaluate the scalability of our algorithms 
we perform an experiment with TPC-DS dataset on \emph{Store\_Sales} table. The parameters are set to $k=20, D=2$ and $N=47361$. Coverage parameter $L$ varies among $500,1000$ and $2000$. Both single and precomputation version are evaluated using this set of parameters. From the results shown in Figure~\ref{fig:tpcd-singleVaryL} and Figure~\ref{fig:tpcd-multiVaryL},  the initialization time is interactive - about $1$s for the the largest parameters: $L=2000$ and $N=47361$. However, even for the single version, the running time of the algorithm increases to more than 1s compared with $~200$ms from results in Figures~\ref{fig:pre-singleVaryN} and \ref{fig:pre-multiVaryN},
and for the precomputation version it increases to $\sim2.5$s. Although the running time increases, the total running time ($\sim3.5$s) for precomputation is still interactive. Note that the size of the answers ($N$) output by a query is likely to be much smaller than the size of the dataset, even for a big dataset.

\cut{

\cut{
\begin{figure}[t]
\centering
\begin{minipage}[t]{.47\textwidth}
  \centering
\subfloat[{\scriptsize Total distance between matched visualizations and default visualizations}]{
\includegraphics[width=0.42\linewidth]{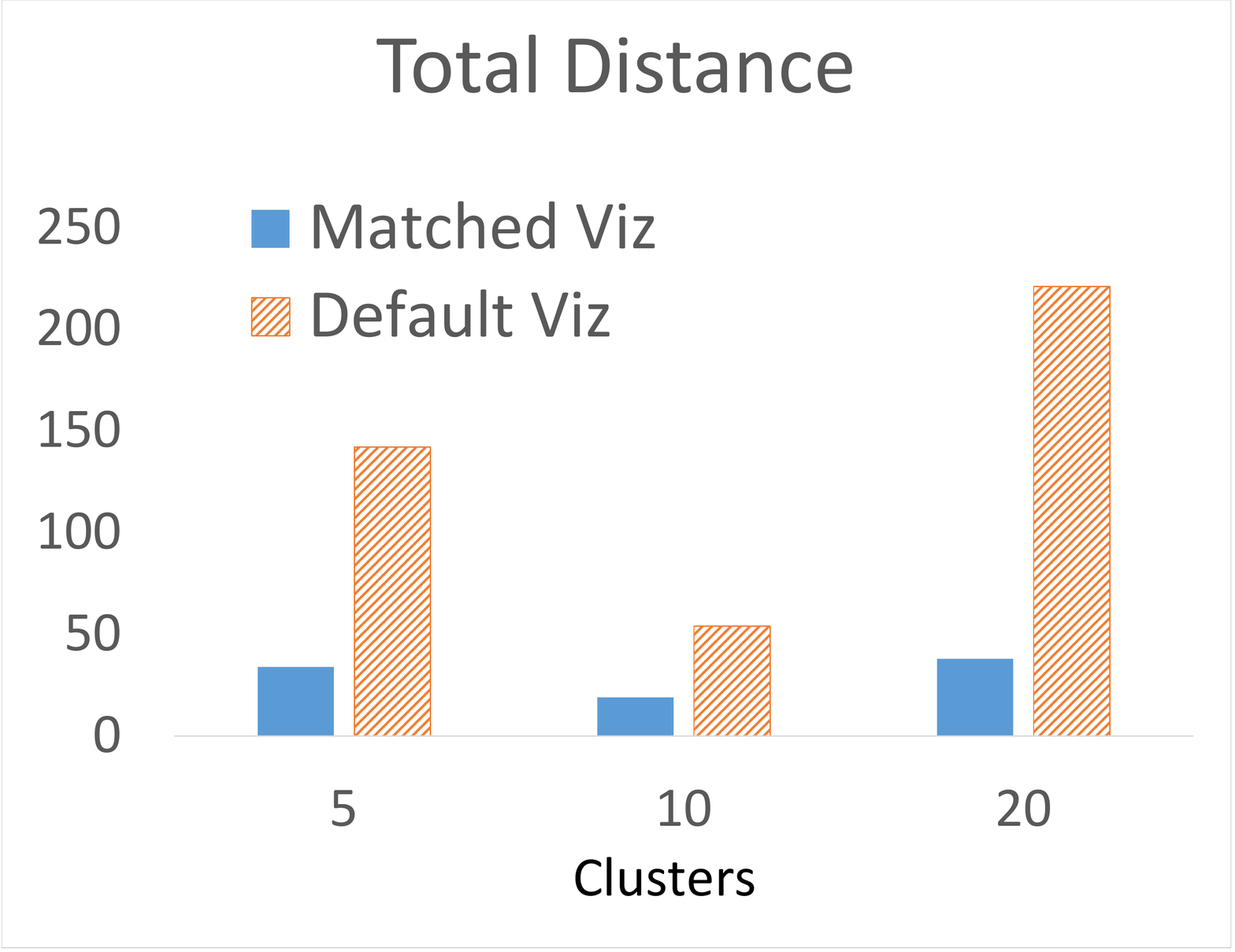}~~~~
\label{fig:vizexpts-distancecompare}}
\hspace{0.02\linewidth}
\subfloat[{\scriptsize Amount of crossings among bands between matched visualizations and default visualizations}]{
\includegraphics[width=0.42\linewidth]{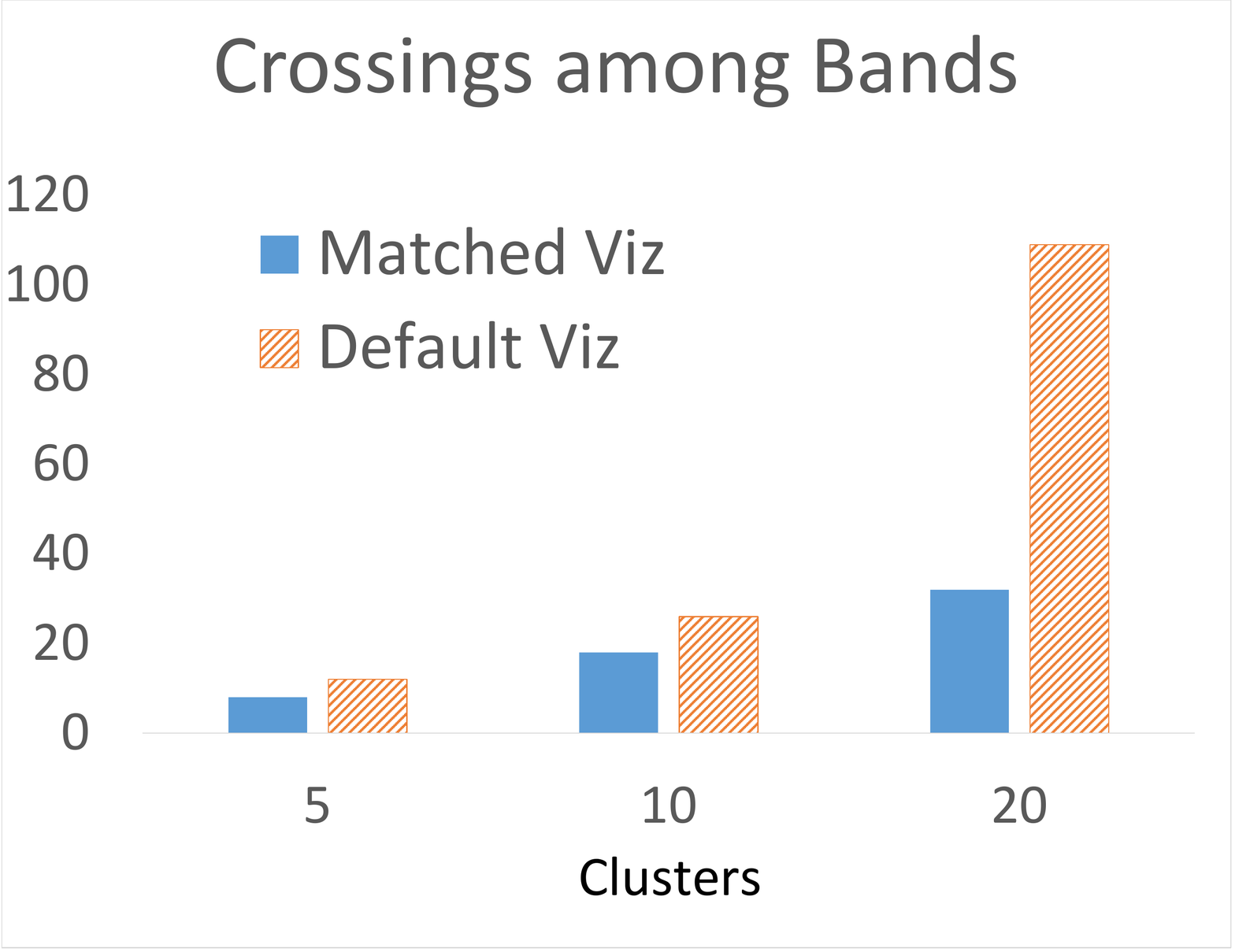}~~~~
\label{fig:vizexpts-crosscompare}}
\vspace{-0.3cm}
\end{minipage}\hfill
\caption{Experiment on Comparison Visualizations}
\vspace{-0.2cm}
\label{fig:vizexpts}
\end{figure}
}

The following experiment evaluate effectiveness of our visualization approaches for interactive explorations.

\cut{
\par
\textbf{Performance of Comparison Visualization.} We tested the running time for calculating and generating the visualization for both the applied algorithm~\cite{geomans2009} and the brute-force algorithm under $k=10, L=15,20$ and $D=2$ in MovieLens dataset with $N=2087$. In this test, both algorithms have similar figure drawing time (~20ms) since they have identical data for figure generation (both of them get the optimal answer), but difference between the calculation time is enormous---the bipartite matching algorithm takes less then 10ms while brute-force takes more than 2s.
\par
The quality of visualizations produced by bipartite matching is shown in Figure~\ref{fig:vizexpts-distancecompare} and Figure~\ref{fig:vizexpts-crosscompare}, as the ``matched visualization,'' in comparison to the ``default visualization.''  For the default visualization, we use the sequences of clusters as returned by successive runs of the clustering algorithm, where clusters for both sides are ordered by value. Parameter sets are $D=2, (k,(L_1,L_2))=(5,(8,10)), (10,(15,20))$ and $(20,(30,40))$ where $L_1$ and $L_2$ are $L$s for the two answer sets.  Figure~\ref{fig:vizexpts-distancecompare} shows that bipartite matching is very effective in reducing the ``clutter'' in visualization, as measured by our distance metric in Section~\ref{sec:comparison_opt} (note that the distances are generally not comparable among different $k$s).  In addition to this metric, we also counted the number of crossings among bands (connections between left and right clusters) and plot the result in Figure~\ref{fig:vizexpts-crosscompare}.  It is clear that our approach also succeeds in cutting down the amount of crossings. 
}
\par
\textbf{Performance of Guidance Visualization}
We show the running time for the generation of guidance visualization under different queries. The generation times are similar among different number of attributes - 20-40 milliseconds when the number of attributes is from 4 to 10 with $N=2087$ in MovieLens dataset. This shows that our implementation meets the interactivity requirement for the guidance visualization. 


}

\cut{
\subsection{Fixed-Order vs. Random-Order}\label{sec:app-exprandomorder}

\randomorder\ is a similar algorithm with the \fixedorder\ algorithm except that \randomorder\ randomly picks an element inside top-$L$ each round. For the sake of comparing the data quality between the two algorithms, we record the results for \randomorder\ recursively and build a scatter plot comparing \fixedorder\ and \randomorder\ under different parameters. The parameters are $k=20, D=2, N=2087$, while $L=100,200$ and $500$. for each combination, we run \randomorder\ for 100 times and record all output average values and add in the values of \fixedorder\ (\red{the plot is shown in the full version~\cite{fullversion} due to space constraints}).
\red{Our result shows that} when $L$ gets larger, the average values of \randomorder\ is increasingly higher than the results given by \fixedorder. However, the deviation for \randomorder\ is large as well. As a result, in cases where parameters are small so that the running time would be small as well, \randomorder\ phase can replace \fixedorder\ phase in \hybrid\ (into \rhybrid). The \randomorder\ could run several times, pick the best run followed by the \bottomup\ phase.
}

\cut{
\begin{figure}[ht]
\includegraphics[width=0.75\linewidth]{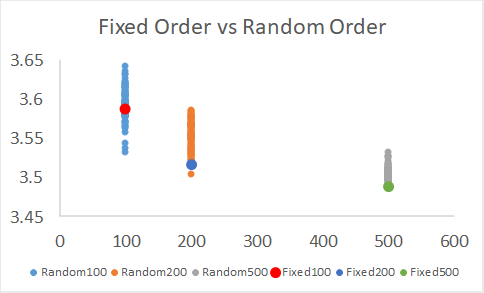}
\caption{Random-Order vs. Fixed-Order (Section~\ref{sec:app-exprandomorder})}
\vspace{-0.5cm}
\label{fig:randomfixed}
\end{figure} 
}

\cut{
We present experiments to evaluate our system and algorithms in this section. Our system is built in Java \& Scala and HTML/CSS/JavaScript based on Play Framework 2.4. The media used in the communication between front-end and back-end is JSON. The system will extract parameters $k$, $L$ and $D$ from users' inputs. And it uses PostgreSQL 9.3 to retrieve original results for normal SQL input queries. Then the system initializes cluster node objects and element node objects based on the original results. It uses these objects with the input parameters to compute the final results and sent to the front-end. All experiments were run locally on a 64-bit OSX 10.11.4 with Intel(R) Core(R) i7 (8 GB RAM, 2.6 GHz). We use the MovieLens 100K datasets for experiments.
\par
For simplicity of the evaluation, we prepare a big rating table from the MovieLens dataset by joining all the tables together. For every tuple in this rating table, it has 33 attributes. Among these attributes, there are three types. The first type of attributes is binary, which only contains value 1 and 0. The second type of attributes contains discrete numerical values. The last type is categorical, which contains different strings as values. The purpose of this preparation is that we can directly run aggregate query on a complex table to avoid the influence of other operations.
\par
The distance function we choose here is the same as \ref{def:distance}. The query we use in this section is as following:

\newsavebox\sqlexp
\begin{lrbox}{\sqlexp}\begin{minipage}{\textwidth}
\lstset{language=SQL, basicstyle=\ttfamily, deletekeywords={year,month,action},tabsize=2}
\begin{lstlisting}[mathescape]
SELECT 
	year, month, age_range, gender,
	occupation_name, genres_action,
	genres_comedy, genres_drama,
	avg(rating) as $\val$ 
FROM big_rating_table 
GROUP BY 
	year, month, age_range, gender, 
	occupation_name, genres_action, 
	genres_comedy, genres_drama 
HAVING count(*) > 50 
Order by $\val$ DESC
\end{lstlisting}
\end{minipage}\end{lrbox}
\resizebox{0.85\textwidth}{!}{\usebox\sqlexp}

\sub section{Running Time Evaluation}
In this section, we will use the same query with different values of $k$, $L$ and $D$. The Figure \ref{table: changeD} and Figure \ref{table: changeL} show the result of two different algorithms.

The start point of parameters is $k=5$, $L=40$ and $D=2$. In this query, we have 8 attributes. So we change $D$ from 2 to 7. In this experiment, the initialization time is stable. The running time of the Top-Down algorithm is much faster than the brute-force search. For the performance of output answer score will be discussed in the result Section \ref{sec:result eva}. Then we use fixed $k=5$ and $D=3$ and change $D$ from 10 to 50. In this experiment, the Lattice Top-Down algorithm has no doubt advantages.

\input{experimentTB}
\begin{table*}[h]
\centering
\caption{Experiment of Changing $L$ from 10 to 50 ($k=5,D=3$)}
\label{table: changeL}
\begin{tabular}{|l|l|l|l|l|l|}
\hline
\ & 10     & 20     & 30     & 40      & 50       \\ \hline
Brute-Force Running Time & 0.033s & 0.47s  & 4.287s & 38.779s & 252.955s \\ \hline
Lattice Top-Down Running Time & 0.01s  & 0.022s & 0.012s & 0.018s  & 0.029s   \\ \hline
Initialization Time & 0.359s & 2.828s & 5.786s & 9.448s  & 32.931s  \\ \hline
Brute-Force Score  & 20.501 & 20.237 & 20.107 & 19.951  & 19.675   \\ \hline
Lattice Top-Down Score & 20.501 & 20.059 & 19.828 & 19.681  & 19.662   \\ \hline
\# Brute-Force Output & 5      & 5      & 5      & 5       & 5        \\ \hline
\# Lattice Top-Down Output & 5      & 5      & 5      & 5       & 5        \\ \hline
\end{tabular}
\end{table*}

\sub section{Result Evaluation}\label{sec:result eva}
This section we will compare the results based on the score of the final output. Same as the running time evaluation, the start point of parameters is $k=5$, $L=40$ and $D=2$. In this query, we have 8 attributes. So we change $D$ from 2 to 7. From the evaluation, we find that the result computed by the Top-Down algorithm works well when $D$ is small. However, when $D$ approaches the number of attributes, the score and the number of output clusters drop rapidly. Another experiment is to keep $k$ and $D$ constant, and change the value of $L$ from 10 to 50. The score got from the Top-Down algorithm is acceptable and only has slight difference from the score of the optimal result. One factor that makes the result of this experiment positive may be because we select a ``sweet point'' of $D$ value.
}

\reva{\section{User Study and Survey}\label{sec:user-study}

We conducted a user study with the following high-level goals: (1)~to
compare our approach with an alternative that adapts decision trees~\cite{quinlan1986induction} and (2)~to
evaluate the utility of user-specified parameters in our approach.
Specifically, we want to know: (1)~whether our new problem formulation
provides any advantage over adapting existing methods to the same
usage scenarios; and (2)~whether allowing user-specified parameters in
our problem formulation is warranted in order to capture the range of
different usage scenarios and/or user preferences.  In addition, we
informally solicited feedback during the demonstration of our system
at \emph{SIGMOD} 2018~\cite{qagviewdemo2018}  to assess the effectiveness of our
interactive features in Sections~\ref{sec:guidance}.

\subsection{User Study Setup}

\textbf{Dataset and queries.~~}
All data are drawn from the \emph{MovieLens} \emph{RatingTable} as
described in Section~\ref{sec:experiments}.  Queries are based on the
same aggregate query template introduced as in Example~\ref{eg:intro}, with an additional
\texttt{WHERE} condition and variations in query constants and
group-by attributes across user tasks. 

\textbf{Adapted decision tree.~~}
As discussed in Section~\ref{sec:related}, no existing method suits
our problem setting.  After exploring various possibilities, we
decided to adapt the method of decision
trees~\cite{quinlan1986induction} as it offers the closest match with
our application scenarios.  The structure of a decision tree naturally
induces summaries of top-$L$ tuples in the form of predicates, which
are easier for users to interpret than other classifiers.  It is also
discriminative, as opposed to simply running clustering algorithms
over the top-$L$ tuples while ignoring low-value tuples.\cut{  Finally, it
is possible to control the complexity of the tree.}
We use the standard implementation provided by Python's
\texttt{scikit-learn} package~\cite{pedregosa2011scikit}; we tune the height
of the decision tree such that the number of ``positive''
leaf nodes (wherein top-$L$ tuples are the majority) as close as
possible to, but no greater than, $k$.
Note that the cluster patterns under this approach can be more complex
than ours, as they may involve non-equality comparisons and negations.
This additional complexity increases discrimination, but
makes the patterns more difficult to interpret and
internalize---a hypothesis we shall test with our study.

\textbf{Tasks.~~}
Each study subject is asked to carry out three groups of tasks
(\emph{task groups}): (i) \emph{varying-method}, (ii)
\emph{varying-$k$}, and (iii) \emph{varying-$D$}.  The first group is
designed to compare our approach and decision trees.  The last two are
designed to evaluate the utility of making parameters $k$ and $D$ in
our approach specifiable by users.\footnote{\reva{We do not evaluate
    the utility of making $L$ user-specifiable, as it should be
    evident that what ``top'' tuples mean depends on the
    situation---e.g., a small $L$ means interest in characterizing
    really high-valued tuples, while a larger $L$ means interest in
    tuples whose values are ``good enough.''}}  To account for the
possible learning effect, we sequence the task groups differently
among study subjects---half go through the sequence varying-(method,
$k$, $D$), while the remaining go through varying- ($k$, $D$, method).

Before each task group, we familiarize the subject with the
aggregate query result as well as the tasks; Then, we give the subject a series of questions, organized
into three sections in order.  Each question asks the subject to
classify a given tuple, whose value is hidden, into one of three
categories: ``top'' (value among the top $L$ of all tuples), ``high'' (value above or equal to the average, but outside the top $L$), and ``low'' (value below average).  The three sections are based on the same
``working set'' of clusters, but differ in the information the
subject can access:
\begin{itemize}[leftmargin=*]%
\setlength{\topsep}{0pt}%
\setlength{\partopsep}{0pt}%
\setlength{\itemsep}{0pt}%
\setlength{\parskip}{0pt}%
\setlength{\parsep}{0pt}%
\item \emph{Patterns-only}, 6 questions: The subject can see the clusters and their associated patterns, but not the membership within clusters or the table of all
  query result tuples.  This section is designed to test how well the
  cluster patterns help users understand the data. 
\item \emph{Memory-only}, 6 questions: The subject cannot access any information; all questions must be answered from memory.  This section is designed to test the
  extent to which users can internalize the insights learned from the
  cluster patterns for later use.  We ensure that these six tuples are
  distinct from those chosen before.
\item \emph{Patterns+members}, 8 questions: The subject can see the
  clusters patterns as well as the covered result tuples.  This section is designed to test how our full-fledged
  cluster UI can help user explore data.  The 8 tuples are chosen and
  reordered randomly from the 12 tuples used in the previous two
  sections.
\end{itemize}

After these three sections,  we
present two sets of clusters: one is the working set, the other is obtained under a
different setting (but for the same aggregate query and $L$) for
comparison.  We then ask the subject to choose a preferred set for the tasks just performed. For a \emph{varying-method} task group, the cluster to compare
is produced by decision trees, under the same $k$ setting ($D$ does not apply to
decision trees); For a \emph{varying-$k$} task group, the cluster to compare is
produced by our approach under another $k$, while
other parameters remain the same; For a \emph{varying-$D$} task group, the cluster to compare is
produced by our approach under another $D$.

\textbf{Participants and assignment of tasks.~~}
There are 16 participants - 14 of them are graduate students
at Duke University (12 in computer science and 2 others), while the
remaining 2 are Duke undergraduates. They all have some prior experience working with tabular data and are capable of handling tasks in our user study.

Recall that each of the three task groups compares two sets of
clusters. There are $2^3=8$
possible assignments in total.  We
assign two subjects to each of these $8$ possibilities, each goes through one of the two task group sequences. Finally, we ensure that tuples in our questions are equally distributed among all subjects.

\textbf{Metrics.~~}
We record the time for each subject to complete each of the three
sections in each of the three task groups.  We evaluate the accuracy
of answers using the standard accuracy measure of
$\smash{\frac{TP+TN}{TP+FP+FN+TN}}$ based on confusion
matrices~\cite{fawcett2006introduction}, and we define two variants:
\emph{T-accuracy} focuses on discerning the top tuples from the rest,
where ``positive'' means being in top $L$; \emph{TH-accuracy} focuses
on discerning the top and high tuples from the low ones, where
``positive'' means being in either top or high category.

\cut{

\textbf{Dataset and queries.~~}
All data are drawn from the \emph{MovieLens} \emph{RatingTable} as
described in Section~\ref{sec:experiments}.  Queries are based on the
same aggregate query template introduced therein, with an additional
\texttt{WHERE} condition and variations in query constants and
group-by attributes across user tasks; see~\cite{fullversion} for
details.

\textbf{Adapted decision tree.~~}
As discussed in Section~\ref{sec:related}, no existing method suits
our problem setting.  After exploring various possibilities, we
decided to adapt the method of decision
trees~\cite{quinlan1986induction} as it offers the closest match with
our application scenarios.  The structure of a decision tree naturally
induces summaries of top-$L$ tuples in the form of predicates, which
are easier for users to interpret than other classifiers.  It is also
discriminative, as opposed to simply running clustering algorithms
over the top-$L$ tuples while ignoring low-value tuples.  Finally, it
is possible to control the complexity of the tree.  We used the
standard decision tree implementation provided by Python's
\texttt{scikit-learn} package~\cite{pedregosa2011scikit}; given $k$,
the maximum number of clusters to produce, we tune the height
parameter of the decision tree such that the number of ``positive''
leaf nodes (wherein top-$L$ tuples are the majority) as close as
possible to, but no greater than, $k$.

Note that the cluster patterns under this approach can be more complex
than ours, as they may involve non-equality comparisons and negations.
This additional complexity increases the discriminative power, but
makes the patterns more difficult for users to interpret and
internalize---a hypothesis that we shall test with our study.

\textbf{Tasks.~~}
Each study subject is asked to carry out three groups of tasks: the
\emph{varying-method group}, \emph{varying-$k$ group}, and
\emph{varying-$D$ group}.  The third first group is designed to
compare our approach and decision trees.  The last two are designed to
evaluate the utility of making parameters $k$ and $D$ in our approach
specifiable by users.\footnote{\reva{We do not evaluate the utility of
    making $L$ user-specifiable, as it should be evident that what
    ``top'' tuples mean depends on the situation---e.g., a small $L$
    means the user is interested in characterizing really high-valued
    tuples, while a larger $L$ may mean the user is interested in
    tuples whose values are ``good enough.''}}  To account for the
possible effect of users learning and getting better with our
approach, we sequence the task groups differently among study
subjects---half of them go through the sequence (varying-method,
varying-$k$, varying-$D$), while the remaining half go through
(varying-$k$, varying-$D$, varying-method).

All tasks within one group are based on the same aggregate query.
Before beginning the task group, we familiarize the subject with the
aggregate query and result as well as the tasks to perform; we also
show all query result tuples in a table, with top $L$ tuples
highlighted for convenience.  Then, we remove the table of all query
result tuples, and give the subject a series of questions, organized
into three sections, in order.  Each question asks the subject to
classify a given tuple, whose value is hidden, into one of three
categories: ``top'' (the tuple is among the top $L$ of all query
result tuples), ``high'' (the tuple has value above or at the average
among all tuples, but is outside the top $L$), and ``low'' (the tuple
has below-average value).  The three sections are based on the same
``working set'' of clusters, but differ in what information the
subject can access when answering questions:
\begin{itemize}[leftmargin=*]
\itemsep0em
\item \emph{Patterns-only}, 6 questions: When answering these
  questions, the subject can see the clusters and their associated
  patterns, but not the membership within clusters or the table of all
  query result tuples.  This section is designed to test how well the
  cluster patterns help users understand the data.  The 6 tuples to be
  classified are chosen randomly and evenly across the top, high, and
  low categories, and are ordered randomly.  We do not reveal to the
  subject how these tuples are distributed among the three categories,
  as with questions in other sections below.
\item \emph{Memory-only}, 6 questions: The subject can see neither the
  clusters or the table of all query result tuples; all questions must
  be answered from memory.  This section is designed to test the
  extent to which users can internalize the insights learned from the
  cluster patterns for later use.  The 6 tuples are chosen in the same
  way as in the patterns-only section, but we ensure that they are
  distinct from those chosen before.
\item \emph{Patterns+members}, 8 questions: The subject can see the
  clusters, their associated patterns, as well as the result tuples
  they cover; but the table of all query result tuples remains
  inaccessible.  This section is designed to test how our full-fledged
  cluster UI can help user explore data.  The 8 tuples are chosen and
  reordered randomly from the 12 tuples used in the previous two
  sections, such that 4/2/2 are from the top/high/low categories,
  respectively.
\end{itemize}

After these three sections are done, to conclude the task group, we
present two sets of clusters side-by-side: one is the working set that
the subject has been using, and the other one is obtained under a
different setting (but for the same aggregate query and same $L$) for
comparison.  We then ask the subject to choose which set of clusters
would be preferred for the tasks just performed.
\begin{itemize}[leftmargin=*]
\itemsep0em
\item For a \emph{varying-method} task group, the clusters to compare
  are produced by our approach (using \hybrid) and by the method of
  decision trees, under the same $k$ setting ($D$ does not apply to
  decision trees).
\item For a \emph{varying-$k$} task group, the clusters to compare are
  produced by our approach under two different $k$ settings, while
  other parameters remain the same.
\item For a \emph{varying-$D$} task group, the clusters to compare are
  produced by our approach under two different $D$ settings, while
  other parameters remain the same.
\end{itemize}

\textbf{Participants and assignment of tasks.~~}
There are 16 participants in total.  14 of them are graduate students
at Duke University (12 in computer science and 2 others), while the
remaining 2 are Duke undergraduates. They have varying degrees of knowledge about databases and SQL language, but all have some prior experience working with tabular data and are capable of handling all
tasks in our user study.

Recall that each of the three task groups compares two sets of
clusters.  While every subject sees both sets at the end of the task
group, the questions earlier in the task group are based on one
working set chosen between the two.  There are a total of $2^3=8$
possibilities for assigning working sets to the three task groups.  We
assign two subjects to each of these $8$ possibilities.  As discussed
earlier, to account for the learning effect, we make one of these
subjects go through the sequence (varying-method, varying-$k$,
varying-$D$) and the other (varying-$k$, varying-$D$, varying-method).
Finally, we ensure that tuples used in our questions appear equal
number of times over tasks across all subjects.

\textbf{Metrics.~~}
We record the time it takes for each subject to complete each of the
three sections in each of the three task groups.  We evaluate the
accuracy of answers to the questions using the standard accuracy
measure of $\smash{\frac{TP+TN}{TP+FP+FN+TN}}$ based on confusion
matrices~\cite{fawcett2006introduction}, but we define two variants:
\emph{T-accuracy} focuses on the ability to discern the top tuples
from the rest, where ``positive'' means being in top $L$;
\emph{TH-accuracy} focuses on the ability to discern the top and high
tuples from the low ones, where ``positive'' means being in either top
or high category.

}

\subsection{User Study Results}

Table~\ref{tbl:user-study} summarizes both the quantitative results
(subjects' performance in terms of time and accuracy for classifying
tuples into categories) and qualitative results (subjects' preferences
between the clustering outputs compared) of our user study.

\begin{table*}[t]
  \centering\small\setlength\fboxsep{1pt}
  \begin{tabular}{|cr|cc|cc|cc|}
    \hline
    \multirow{2}{*}{} & \multirow{2}{*}{Task group}
        & \multicolumn{2}{c|}{Varying-method} & \multicolumn{2}{c|}{Varying-$k$} & \multicolumn{2}{c|}{Varying-$D$}\\
        & & Decision tree & Our method & $k=5$ & $k=10$ & $D=1$ & $D=3$\\
    \hline
    \multirow{3}{*}{Patterns-only}
        & Time/question & $25.7\pm6.6$ & \fbox{$23.5\pm6.5$} & \fbox{$19.6\pm6.6$} & $22.5\pm5.9$ & $13.5\pm3.1$ & \fbox{$9.6\pm2.9$}\\
        & T-accuracy & $0.792\pm0.121$ &\fbox{$0.854\pm0.132$} & $0.771\pm0.139$ & \fbox{$0.833\pm0.088$} & \fbox{$0.771\pm0.087$} & $0.750\pm0.108$\\
        & TH-accuracy   & $0.646\pm0.075$ & \fbox{$0.854\pm0.075$} & $0.625\pm0.121$ & \fbox{$0.708\pm0.121$} & $0.771\pm0.087$ & \fbox{$0.813\pm0.098$}\\
    \hline
    \multirow{3}{*}{Memory-only}
        & Time/question & $9.6\pm3.8$ & \fbox{$8.3\pm2.9$} & $11.1\pm4.3$ & \fbox{$9.8\pm4.2$} & $8.4\pm2.1$ & \fbox{$6.6\pm3.4$} \\
        & T-accuracy   & $0.625\pm0.150$ & \fbox{$0.792\pm0.121$} & \fbox{$0.771\pm0.139$}
        & $0.667\pm0.125$ & $0.646\pm0.116$ & \fbox{$0.667\pm0.108$}\\
        & TH-accuracy   & $0.625\pm0.174$ & \fbox{$0.792\pm0.121$} & \fbox{$0.667\pm0.108$} & $0.625\pm0.121$ & $0.771\pm0.139$ & \fbox{$0.875\pm0.083$}\\
    \hline
    \multirow{3}{*}{Patterns+members}
        & Time/question & $22.9\pm4.2$ & $23.7\pm2.4$ & \fbox{$21.0\pm6.2$} & $22.5\pm4.2$ & \fbox{$13.5\pm1.9$} & $14.3\pm3.3$\\
        & T-accuracy   & $0.922\pm0.165$ & \fbox{$0.953\pm0.061$} & $0.938\pm0.088$ & $0.953\pm0.061$ & $0.906\pm0.104$ & \fbox{$0.953\pm0.061$}\\
        & TH-accuracy   & $0.750\pm0.088$ & \fbox{$0.844\pm0.104$} & $0.938\pm0.088$ & \fbox{$0.97\pm0.054$} & $0.844\pm0.054$ & \fbox{$0.922\pm0.087$}\\
    \hline
    \multicolumn{2}{|r|}{Overall preferred} & $12.5\%$ & \fbox{$87.5\%$} & $43.8\%$ & \fbox{$56.2\%$} & $37.5\%$& \fbox{$62.5\%$}\\
    \hline
  \end{tabular}
 \vspace{-2.5ex}
  \caption{\label{tbl:user-study}\reva{Summary of results from the user
    study.  Times are in seconds, and accuaries are between $0$ and
    $1$; we report average and standard deviation over all subjects.
    Better performances (shorter times and higher accuracies) and
    stronger preferences are highlighted with box enclosures, unless
    the advantage is too small.}}

\vspace{-4ex}
\end{table*}


\textbf{Varying-method task group.~~}
For this task group, we set $L=50,k=10,D=1$ for our approach, and
$L=50,k=10$ for the method based on decision trees.  For this
scenario, tree depth of $7$ gives exactly $10$ positive leaf nodes.

First, note that among the three sections, memory-only is the fastest,
patterns-only is considerably slower, and patterns+members is the
slowest.  This observation holds both for our approach and for
decision trees (as well as under each setting of other task groups).
This universal trend can be intuitively explained by the fact that
users tend to spend more time on a question if more information is
presented to them.

As for accuracy, 
patterns+members has the highest accuracy, and patterns-only is usually no worse than
memory-only.  This trend also makes intuitive sense as users are
generally able to achieve higher accuracy if aided with more
information.  Across settings, patterns+members is always nearly
perfect, as expected.

Comparing our approach and decision trees in terms of time spent by
study subjects, our approach is consistently faster over the three
sections.  The biggest advantage is seen in the patterns-only section,
suggesting that our patterns are much easier to apply. The advantage is less pronounced in the
other two sections.  For patterns+members, a
possible explanation is that users spend bulk of the time examining
detailed memberships.  For memory-only, our conjecture is that
decision tree patterns are so difficult to recall that our
subjects realized quickly that spending more time did not help.

In terms of accuracy, our approach is better than decision trees for
the patterns-only and memory-only sections (recall that
patterns+members is always nearly perfect across settings).  It is
understandable for decision trees to have lower TH-accuracy, because
they are trained to separate only the top tuples from the rest, while
our approach considers the values of all tuples covered by the
patterns.  On the other hand, while the T-accuracy for decision trees
is good for patterns-only, it drops significantly for memory-only,
because decision tree patterns are difficult for users to memorize.  In
comparison, the accuracy of our approach degrades very little from
patterns-only to memory-only, which is evidence that users can
internalize insights from our simple patterns very well.

Finally, when asked which method they prefer, the overwhelming
majority of the subjects (14/16) chose our approach over decision trees.  The
key reason cited was the simplicity of our patterns.


\textbf{Varying-$k$ task group.~~}
In this task group, we fix $L=30$ and $D=1$, and compare $k = 5$ vs.\
$k = 10$.  Note that with the bigger $k$, we expect to have more
clusters with more specific patterns, leading to higher discrimination
but more complex summaries.

From Table~\ref{tbl:user-study}, we see that the bigger $k$ leads to
more time spent as long as patterns are accessible to the subjects,
i.e., for patterns-only and patterns+members.  However, for
memory-only, the bigger $k$ actually results in less time spent; one
conjecture is that complex summaries are so difficult to recall from
memory that some subjects simply stopped trying and resorted to
guessing.  This observation is consistent with the low accuracies seen
under the bigger $k$ for memory-only, further discussed below.

In terms of accuracy, favor turns from the smaller $k$ to the bigger
$k$ for patterns-only and patterns+members, pointing to a clear
trade-off between time and accuracy.  On the other hand, for
memory-only, the trend is reversed: accuracies under the bigger $k$
drop dramatically and become lower than under the smaller $k$, because
the subjects had trouble recalling the summaries from their memory.
In comparison, under the smaller $k$, accuracies for memory-only are
at least as good as those for patterns-only.

Finally, when asked whether they prefer the smaller or bigger $k$, a
slight majority of the subjects prefer the bigger, but still a
significant fraction ($7/16$) prefer the smaller.
There is no clear winner here, unlike the case for the varying-method
task group.

\textbf{Varying-$D$ tasks.~~}
We fix $L=10$ and $k=7$, and compare $D=1$ vs.\ $D=3$.  $D=1$
represents a looser constraint, and in this case leads to detailed summaries and higher discriminative power; the trade-off,
of course, is that patterns appear less diverse.

As we can see from Table~\ref{tbl:user-study}, the bigger $D$ leads to
faster answer speed and higher accuracy in most cases, with just two
exceptions: the smaller $D$ is more accurate in terms of T-accuracy
for patterns-only, and it is faster for patterns+members.  Both can be
explained by the fact that, here some clusters produced by
the bigger $D$ happen to have more general patterns and cover more
tuples.  Without access to cluster membership, T-accuracy would suffer
because these clusters may cover some high-valued (but necessarily
top-valued) tuples.  With access to cluster membership, T-accuracy
would not be a problem, but more tuples take longer to examine.

Although the performance results appear to favor the bigger $D$
(looser constraint), preferences are divided.  A
majority of the subjects do prefer the bigger $D$, but still a sizable
number of them ($6/16$) prefer the smaller $D$, which
produces more diverse patterns.

\textbf{Learning effect.~~}
We assess the possible learning effect by comparing the quantitative
result within one experimental sequence (varying-method first, then
varying-$k$ and varying-$D$), and compare with the results in
Table~\ref{tbl:user-study}.  The differences are minor, and the
relative ordering of approaches by performance largely stays the same,
so the conclusions drawn above still stand.  
Details are shown in Appendix~\ref{sec:app-userstudy}.

\subsection{Informal User Survey Results}\label{sec:us-sigmod}

To measure the effectiveness of the interactive feature described in
Sections~\ref{sec:guidance}, we asked attendees who visited our demo
booth at \emph{SIGMOD} 2018 to fill out an informal survey.  We
received $18$ responses, and the results are summarized below:

\begin{center}\small
\vspace{-1ex}
  \begin{tabular}{r|c|c|c|c}
    Did you find the
    & Yes, very & Yes & Not that & Not\\
    visualizations helpful?
    & much & & much & at all\\\hline
    For parameter selection & $4$ & $13$ & $1$ & $0$\\
  \end{tabular}
\vspace{-1ex}
\end{center}

The vast majority of the responses are positive.  Some constructive
criticisms were offered too.  One pointed out that the visualization
for guiding interactive parameter selection still required extensive
explanation before users can understand and benefit from it.  Another
pointed out that instead of showing all choices of $k$ and $D$ in this
visualization, it might be possible to use the data behind this
visualization to narrow down the choices further.

\subsection{Summary and Discussion}

The high-level findings are: (1)~our approach is more suitable to the
designed tasks than the decision trees, thanks to the simplicity of
our patterns by design; (2)~while more specific and detailed clusters
can offer better accuracy, this advantage dissipates when users no
longer see the cluster patterns directly, because they are much less
memorable; (3)~parameters $k$ and $D$ affect the complexity of our
clustering results and present various trade-offs (e.g., accuracy vs.\
efficiency), so users have different preferences.

It is also worth noting that while we did not explicitly compare with
the approach of simply showing the top $L$ tuples with no
summarization at all, which can be seen as an extreme case
where $k=L$ and $D=1$.  Hence, the general observation we made when
comparing parameter settings applies here too: showing the top $L$
tuples alone would provide the most detailed information, but that
would be very difficult to use and memorize.

\cut{

\begin{figure*}[t]
\centering
\begin{minipage}[t]{.95\textwidth}
  \centering
\subfloat[{\scriptsize Average time consumption for varying $k$}]{
\includegraphics[width=0.28\linewidth]{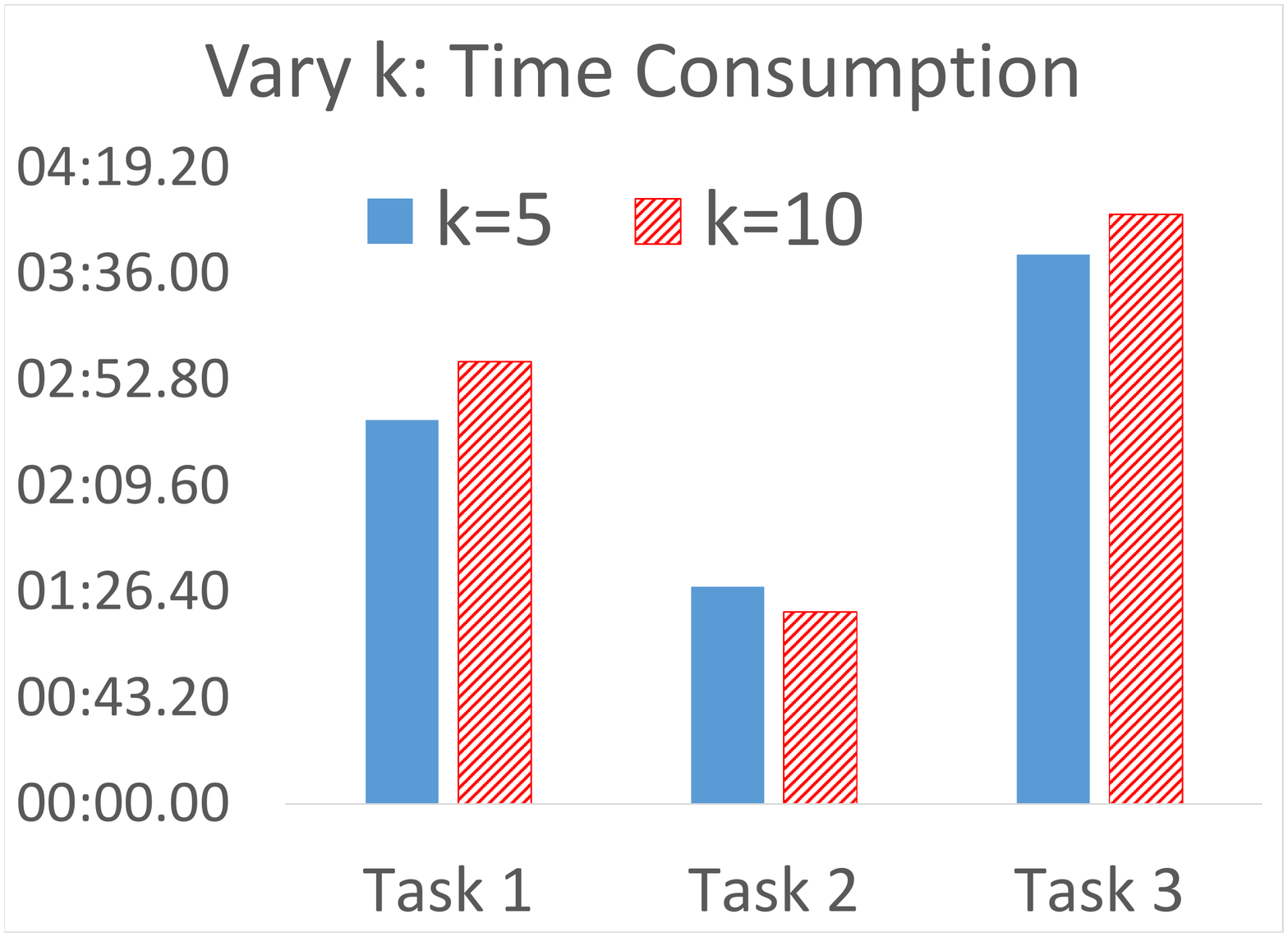}~~~~
\label{fig:us-ktime}}
\hspace{0.02\linewidth}
\subfloat[{\scriptsize Accuracy for varying $k$ with clusters}]{
\includegraphics[width=0.28\linewidth]{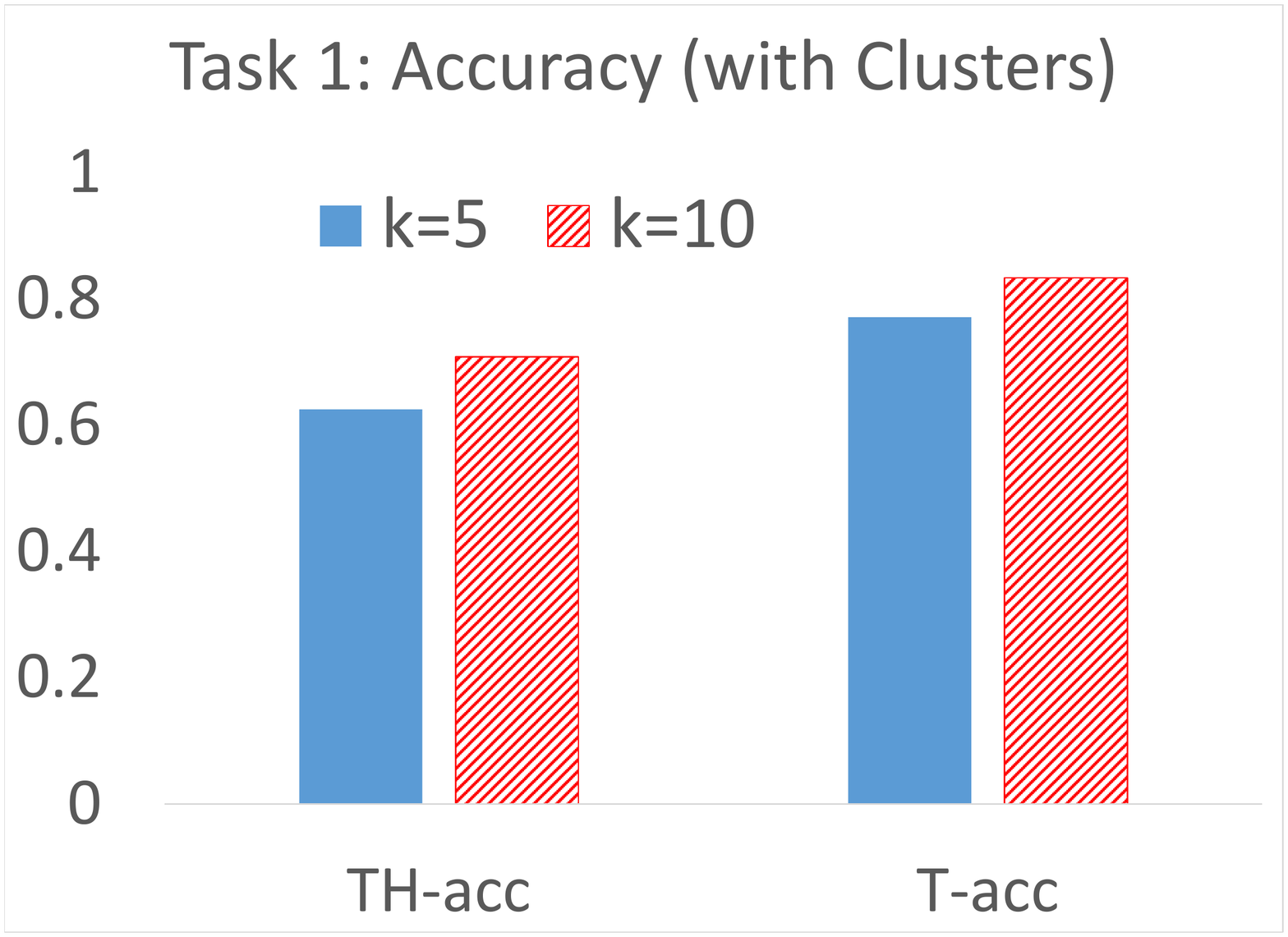}~~~~
\label{fig:us-k1acc}}
\hspace{0.02\linewidth}
\subfloat[{\scriptsize Accuracy for varying $k$ with memory}]{
\includegraphics[width=0.28\linewidth]{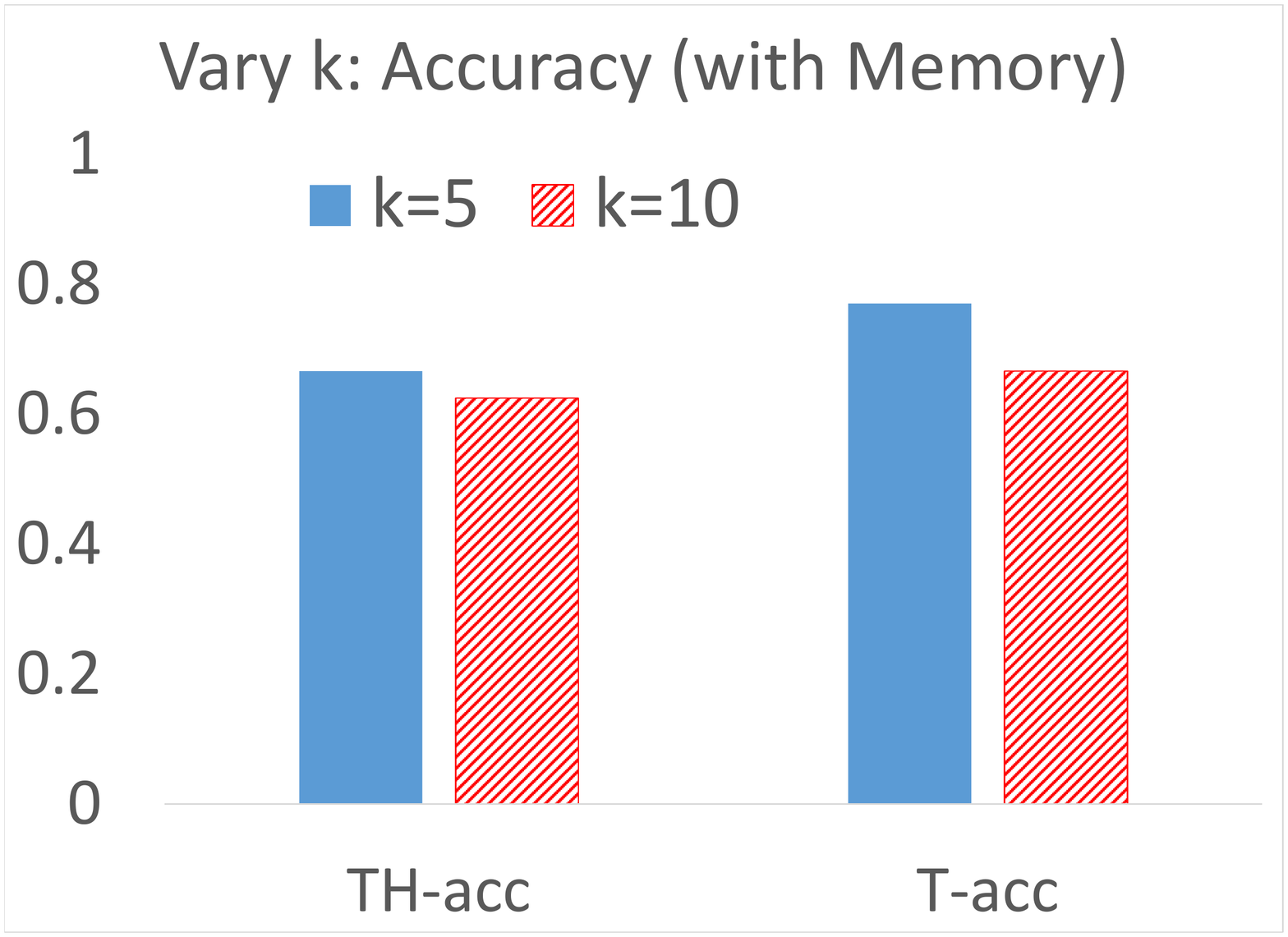}~~~~
\label{fig:us-k2acc}}
\vspace{-0.3cm}
\end{minipage}\hfill
\begin{minipage}[t]{.95\textwidth}
   \centering
\subfloat[{\scriptsize Average time consumption for varying $D$}]{
\includegraphics[width=0.28\linewidth]{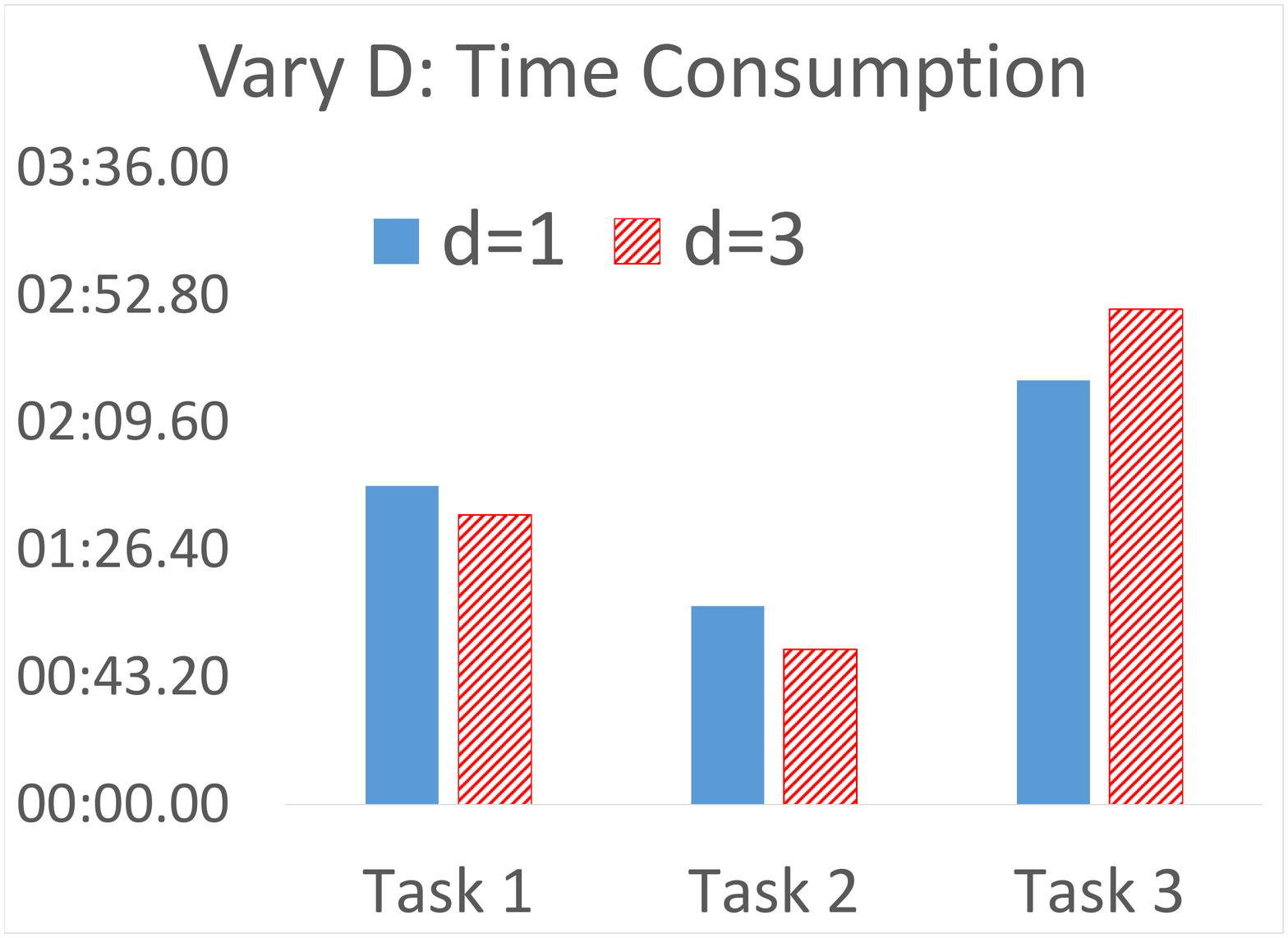}~~~~
\label{fig:us-dtime}}
\hspace{0.02\linewidth}
\subfloat[{\scriptsize Accuracy for varying $D$ with clusters}]{
\includegraphics[width=0.28\linewidth]{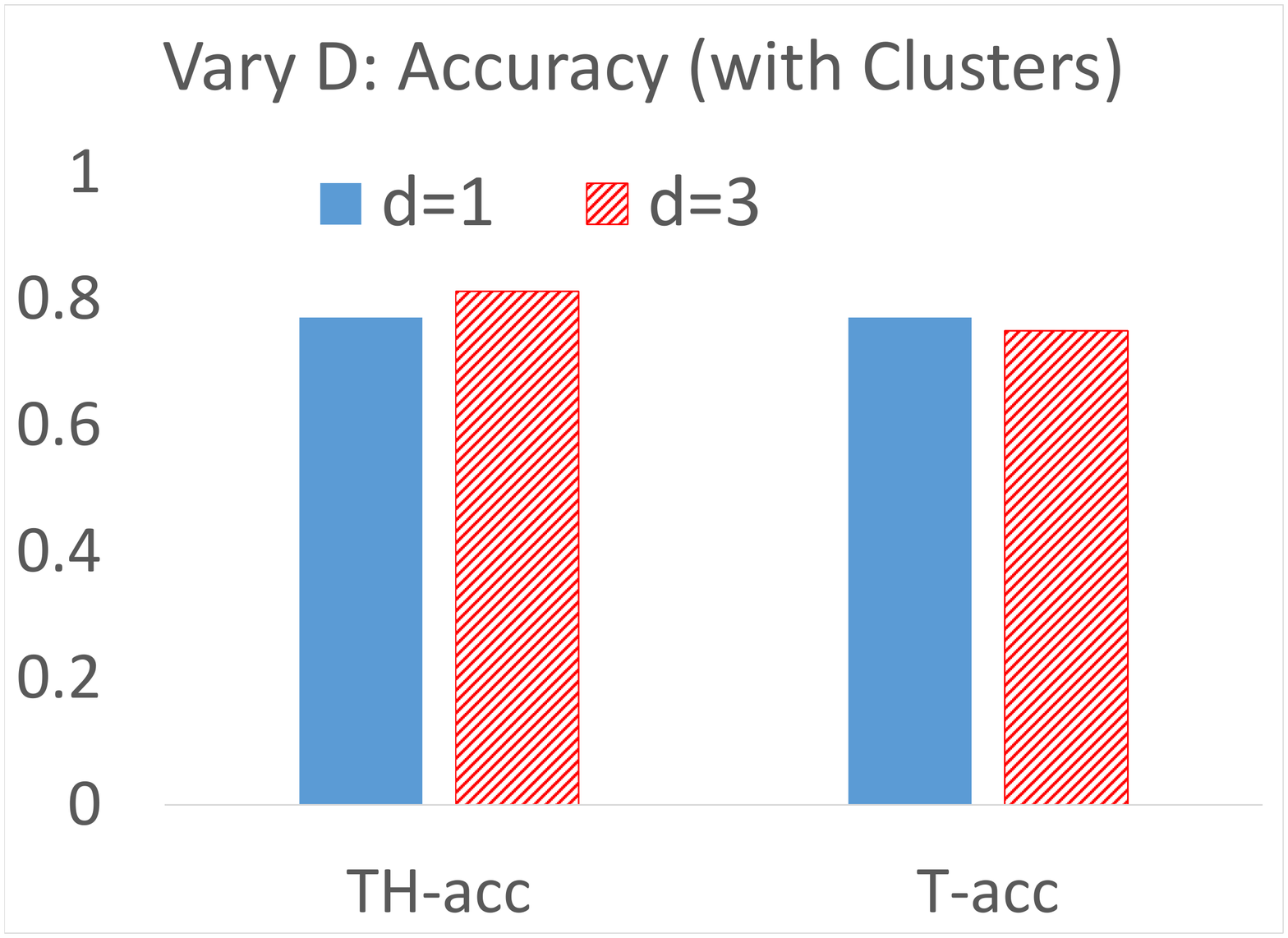}~~~~
\label{fig:us-d1acc}}
\hspace{0.02\linewidth}
\subfloat[{\scriptsize Accuracy for varying $D$ with memory}]{
\includegraphics[width=0.28\linewidth]{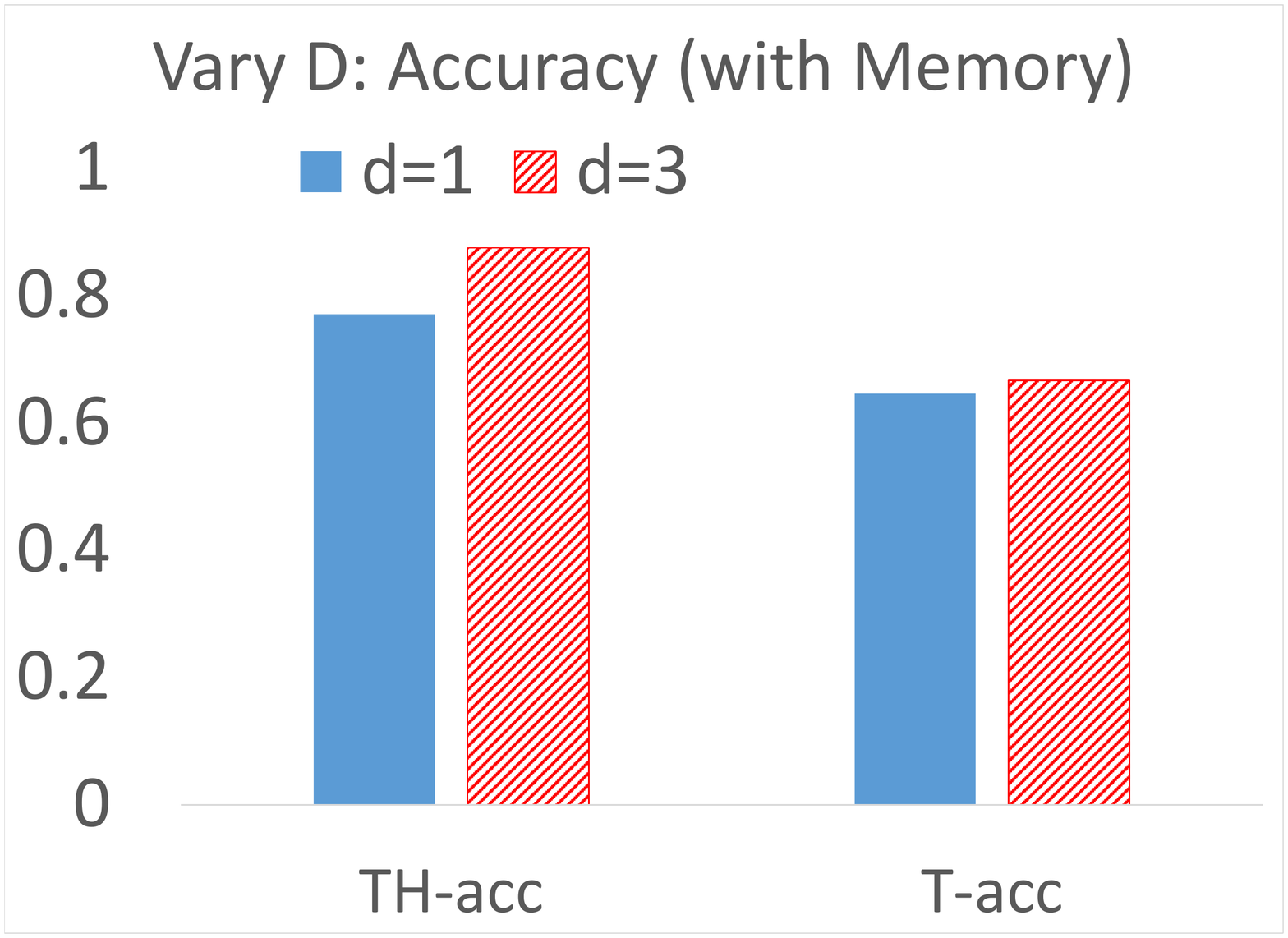}~~~~
\label{fig:us-d2acc}}
\vspace{-0.3cm}
\end{minipage}\hfill
\begin{minipage}[t]{.95\textwidth}
  \centering
\subfloat[{\scriptsize Average time consumption for varying algorithm}]{
\includegraphics[width=0.28\linewidth]{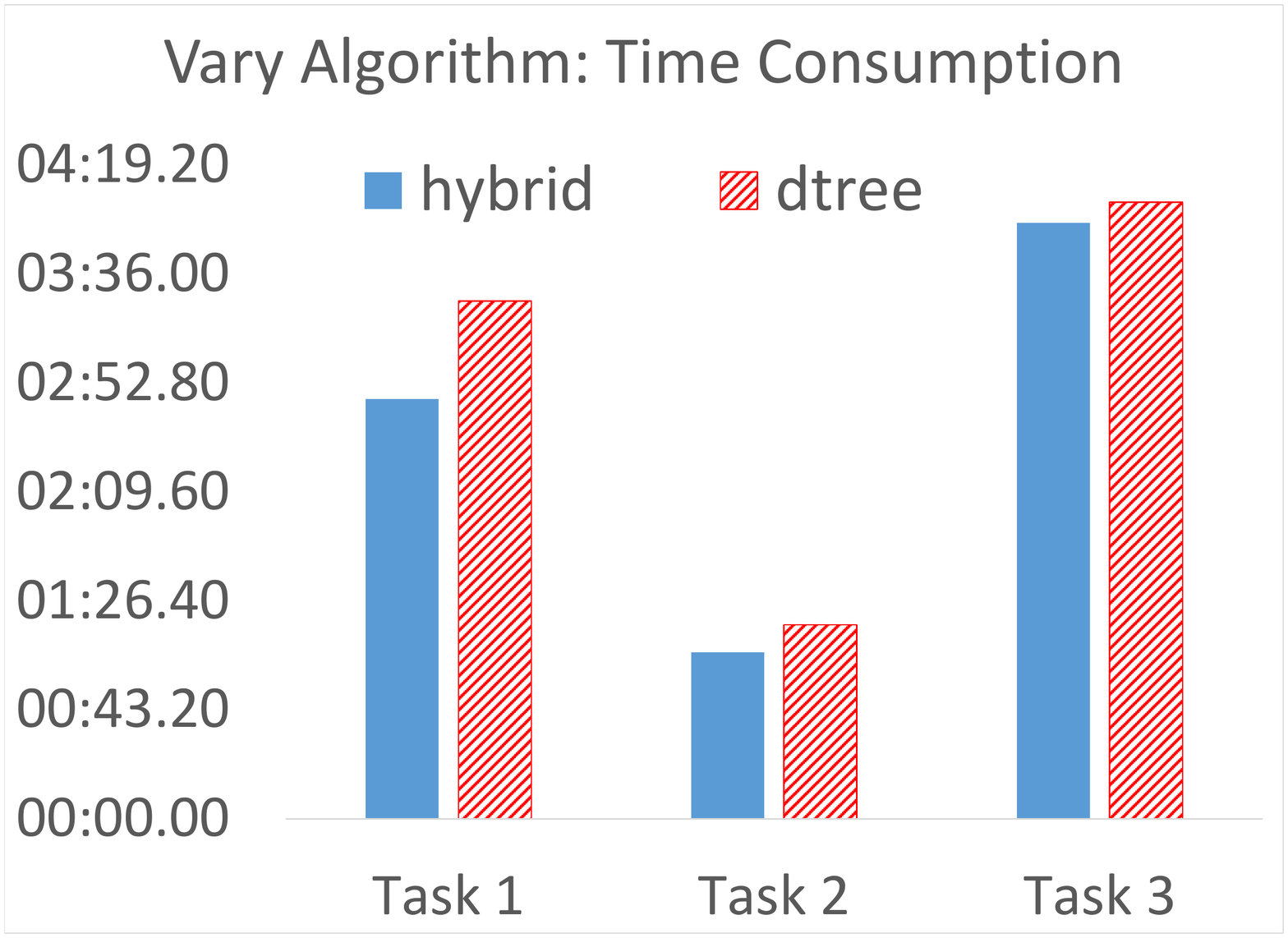}~~~~
\label{fig:us-comtime}}
\hspace{0.02\linewidth}
\subfloat[{\scriptsize Accuracy for varying algorithm with clusters}]{
\includegraphics[width=0.28\linewidth]{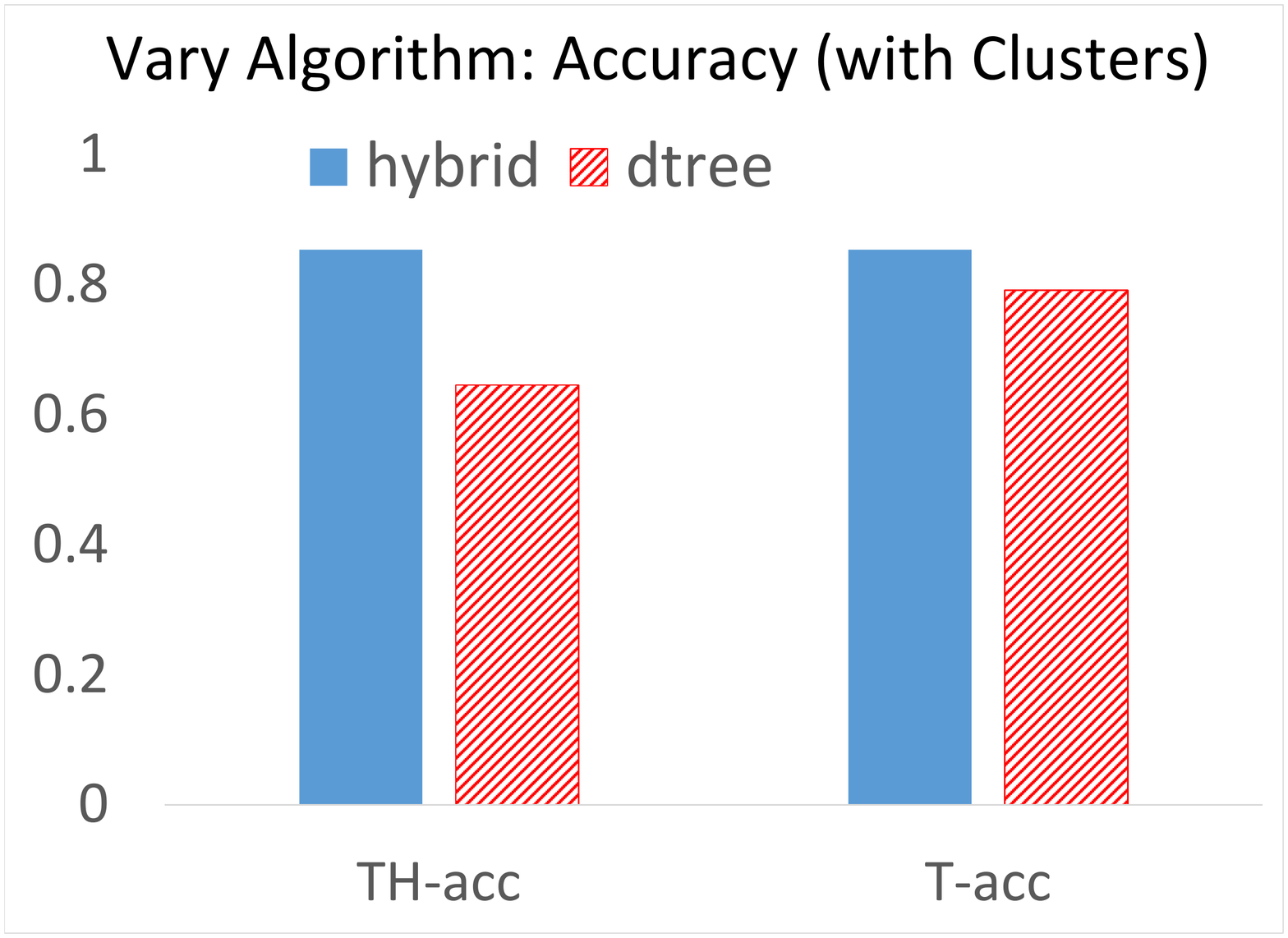}~~~~
\label{fig:us-com1acc}}
\hspace{0.02\linewidth}
\subfloat[{\scriptsize Accuracy for varying algorithm with memory}]{
\includegraphics[width=0.28\linewidth]{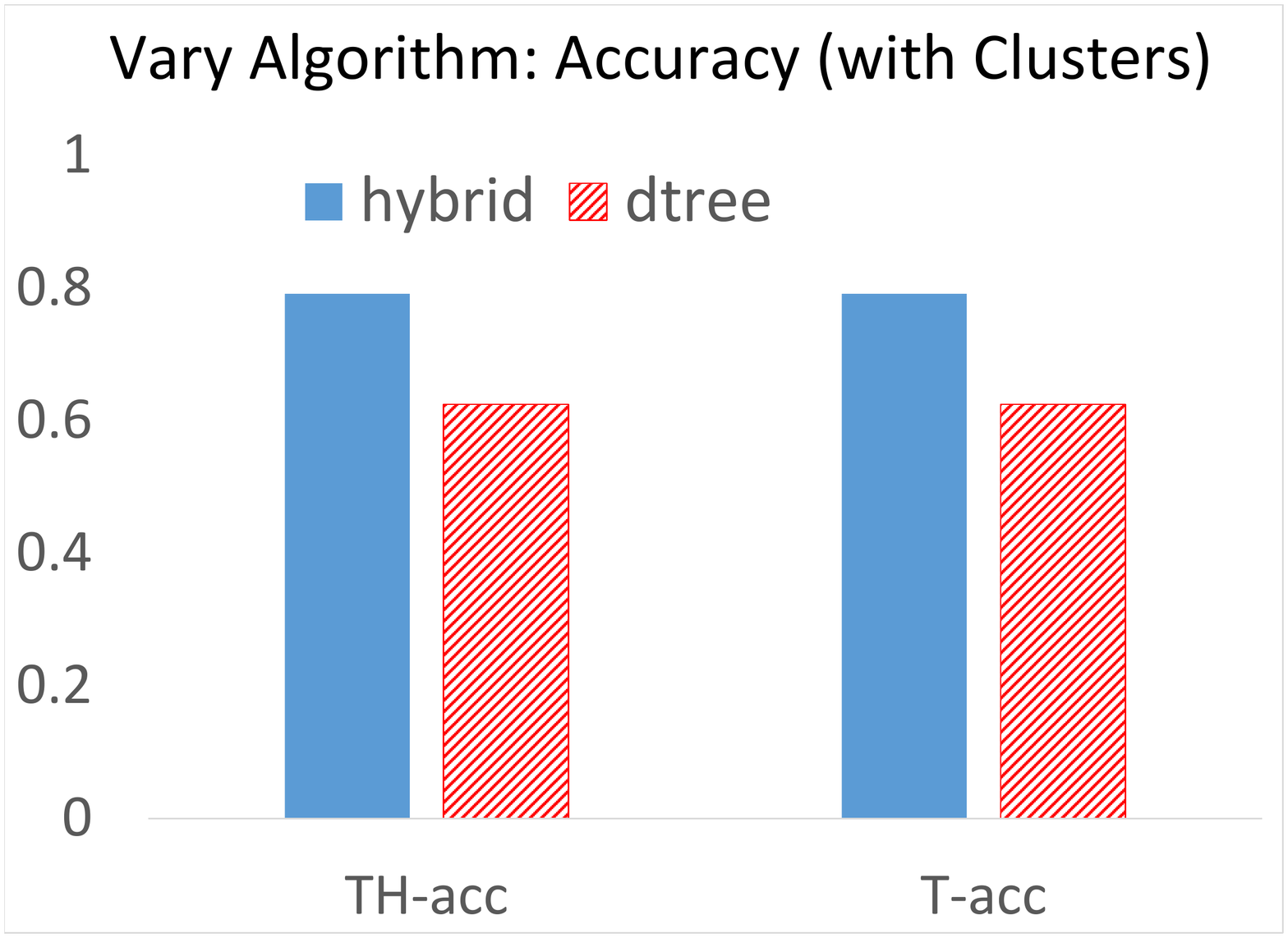}~~~~
\label{fig:us-com2acc}}
\vspace{-0.3cm}
\end{minipage}\hfill
\caption{User study data for all tasks and task sections}
\label{fig:usdata}
\vspace{-0.5cm}
\end{figure*}

}

}

\vspace{-0.8mm}

\section{Conclusions}\label{sec:conclusions}
In this paper, we presented a framework for summarization and exploration of high-valued aggregate query answers 
maintaining properties like diversity, small size, and coverage of original top elements. 
We studied optimization problems for these tasks, 
developed efficient algorithms and optimizations, evaluated the approach using real and benchmark datasets, and showed that our implementation is capable of interactive exploration in real time. \ansa{We also conducted a user study showing that our constraints are useful and our results are preferred. } 
\reva{There are several directions for future work. While we mainly focused on categorical attributes, for numeric attributes one can consider other distance functions (\eg, $L_p$ norms) and description of clusters (\eg, ranges). Due to the presence of multiple competing constraints and complex objective function, obtaining formal approximation guarantees will be a challenging theoretical problem.} \revb{One can also consider objective functions other than average and see how our algorithms (that can be adapted to other objectives) perform.}  Other sophisticated visualizations to better help the users can also be explored. 

\cut{
We studied the problem of summarizing top aggregate query answers in a two-layered framework in this paper, which maintains multiple properties of the summary (diversity, small size, coverage of original top elements) by clustering elements together using their common attribute values. We showed NP-hardness results, investigated the semilattice structure on the clusters, and used these properties to design efficient heuristic approaches that we evaluated experimentally. We also implemented the framework in a prototype with a user interface where the user can explore the clusters by varying input parameters.  
There are several directions to improve the algorithms, implementations, and the graphical interface in the system. Selecting or merging clusters greedily may result in a sub-optimal first choice, leading to a poor overall solution. It will be interesting to investigate other greedy strategies incorporating randomization,  multiple starts, and backtracking. The pre-processing  can be explored further to mitigate the need of explicitly computing the clusters. 
We also plan to extend our two-layered framework into multiple layers using the graphical interface and study how to suggest users good values of the parameters to facilitate user interactions.
}

\cut{
\par

Due vast answer sets retrieved by aggregate queries from big datasets, users desire a way to get answers that contain less data but have more information. The old way for users is to only review the top-k tuples. However, the top-k answers can be very similar, and users might lose interesting information that appear beyond the top-k answers.
 
In this paper, we issue a new problem: how to output more informative results for aggregate queries. Considering relevance, diversity and coverage of output answers, we give our solution in terms of a two-layered ranking framework. The first layer is clusters, and the second is elements. Each cluster covers a few elements that are similar to each other. Then we generalize the problem as a graph problem and prove that the generalized problem is NP-hard. We propose a greedy algorithm that works well in practice. We also implemented a system ClusterDB with a user-friendly interface that helps system give instant feedback to users and help them to tune parameters efficiently.

We may be able to improve the SemiLattice Top-Down Algorithm by combining different strategies for the Split function. For example, at a certain step, if we assume that a cluster can be split into 6 clusters, which leads the total number of clusters to become 11. However, since the $k$ equals to 10, the final output can only have 6 clusters. If we can find a better strategy, that cluster may be able to split to 5 sub-clusters which makes the final output have just 10 clusters. The later situation may have a better score of output.

From the implementation perspective, the system spends lots of time on the initialization process. We may find a better way to improve the performance since it becomes the bottleneck when we use the Top-Down Algorithm. 
\par
On the theoretical side, it will be interesting to find out the exact complexity of this problem taking into account the semilattice structure on the clusters. It will also be interesting to explore the semilattice structure further to obtain a more efficient algorithm that works well in practice.

\red{notes}
\begin{itemize}
\item arbitrary distance function or is the fixed distance function ok?
\end{itemize}
}

\end{sloppypar}




\begin{sloppypar}

\bibliographystyle{abbrv}
\bibliography{ranking}  

\begin{thebibliography}{10}

\bibitem{DBLP:conf/kdd/AbbassiMT13}
Z.~Abbassi, V.~S. Mirrokni, and M.~Thakur.
\newblock Diversity maximization under matroid constraints.
\newblock In {\em {KDD}}, pages 32--40, 2013.

\bibitem{agrawal2009diversifying}
R.~Agrawal, S.~Gollapudi, A.~Halverson, and S.~Ieong.
\newblock Diversifying search results.
\newblock In {\em Proceedings of the second ACM international conference on web
  search and data mining}, pages 5--14. ACM, 2009.

\bibitem{Borodin+2012}
A.~Borodin, H.~C. Lee, and Y.~Ye.
\newblock Max-sum diversification, monotone submodular functions and dynamic
  updates.
\newblock In {\em PODS}, pages 155--166, 2012.

\bibitem{Carbonell+1998}
J.~Carbonell and J.~Goldstein.
\newblock The use of mmr, diversity-based reranking for reordering documents
  and producing summaries.
\newblock In {\em SIGIR}, pages 335--336, 1998.

\bibitem{ChenLi2007}
Z.~Chen and T.~Li.
\newblock Addressing diverse user preferences in sql-query-result navigation.
\newblock In {\em SIGMOD}, pages 641--652, 2007.

\bibitem{cormen2009introduction}
T.~H. Cormen.
\newblock {\em Introduction to algorithms}.
\newblock MIT press, 2009.

\bibitem{DengFan2014}
T.~Deng and W.~Fan.
\newblock On the complexity of query result diversification.
\newblock {\em ACM TODS}, 39(2):15:1--15:46, May 2014.

\bibitem{drosou2012disc}
M.~Drosou and E.~Pitoura.
\newblock Disc diversity: result diversification based on dissimilarity and
  coverage.
\newblock {\em PVLDB}, 6(1):13--24, 2012.

\bibitem{ElGebaly:2014}
K.~El~Gebaly, P.~Agrawal, L.~Golab, F.~Korn, and D.~Srivastava.
\newblock Interpretable and informative explanations of outcomes.
\newblock {\em PVLDB}, 8(1):61--72, Sept. 2014.

\bibitem{FanWW13}
W.~Fan, X.~Wang, and Y.~Wu.
\newblock Diversified top-k graph pattern matching.
\newblock {\em {PVLDB}}, 6(13):1510--1521, 2013.

\bibitem{fawcett2006introduction}
T.~Fawcett.
\newblock An introduction to roc analysis.
\newblock {\em Pattern recognition letters}, 27(8):861--874, 2006.

\bibitem{Fraternali+2012}
P.~Fraternali, D.~Martinenghi, and M.~Tagliasacchi.
\newblock Top-k bounded diversification.
\newblock In {\em SIGMOD}, pages 421--432, 2012.

\bibitem{garey1988one}
M.~R. Garey, R.~E. Tarjan, and G.~T. Wilfong.
\newblock One-processor scheduling with symmetric earliness and tardiness
  penalties.
\newblock {\em Mathematics of Operations Research}, 13(2):330--348, 1988.

\bibitem{geomans2009}
M.~X. Geomans.
\newblock {\em Lecture notes on bipartite matching}.
\newblock Massachussets Institute of Technology, 2009.

\bibitem{DBLP:conf/edbt/GkorgkasVDN15}
O.~Gkorgkas, A.~Vlachou, C.~Doulkeridis, and K.~N{\o}rv{\aa}g.
\newblock Finding the most diverse products using preference queries.
\newblock In {\em EDBT}, pages 205--216, 2015.

\bibitem{GolabKLSS15}
L.~Golab, F.~Korn, F.~Li, B.~Saha, and D.~Srivastava.
\newblock Size-constrained weighted set cover.
\newblock In {\em {ICDE}}, pages 879--890, 2015.

\bibitem{GollapudiD2009}
S.~Gollapudi and A.~Sharma.
\newblock An axiomatic approach for result diversification.
\newblock In {\em WWW}, pages 381--390, 2009.

\bibitem{harel1984fast}
D.~Harel and R.~E. Tarjan.
\newblock Fast algorithms for finding nearest common ancestors.
\newblock {\em SIAM Journal on Computing}, 13(2):338--355, 1984.

\bibitem{movielenspaper}
F.~M. Harper and J.~A. Konstan.
\newblock The movielens datasets: History and context.
\newblock {\em ACM Trans. Interact. Intell. Syst.}, 5(4):19:1--19:19, Dec.
  2015.

\bibitem{hartigan1975clustering}
J.~A. Hartigan.
\newblock Clustering algorithms.
\newblock 1975.

\bibitem{huang1998extensions}
Z.~Huang.
\newblock Extensions to the k-means algorithm for clustering large data sets
  with categorical values.
\newblock {\em Data mining and knowledge discovery}, 2(3):283--304, 1998.

\bibitem{jagadish1999semantic}
H.~Jagadish, J.~Madar, and R.~T. Ng.
\newblock Semantic compression and pattern extraction with fascicles.
\newblock In {\em VLDB}, volume~99, pages 7--10, 1999.

\bibitem{jagadish2004itcompress}
H.~Jagadish, R.~T. Ng, B.~C. Ooi, and A.~K. Tung.
\newblock Itcompress: An iterative semantic compression algorithm.
\newblock In {\em Data Engineering, 2004. Proceedings. 20th International
  Conference on}, pages 646--657. IEEE, 2004.

\bibitem{JoglekarGP16}
M.~Joglekar, H.~Garcia{-}Molina, and A.~G. Parameswaran.
\newblock Interactive data exploration with smart drill-down.
\newblock In {\em ICDE}, pages 906--917, 2016.

\bibitem{LlewellynTT93}
D.~C. Llewellyn, C.~A. Tovey, and M.~A. Trick.
\newblock Erratum: Local optimization on graphs.
\newblock {\em Discrete Applied Mathematics}, 46(1):93--94, 1993.

\bibitem{schmidt2008sankey}
S.~M.
\newblock The sankey diagram in energy and material flow management.
\newblock {\em Journal of industrial ecology}, 12(1):82--94, 2008.

\bibitem{murphy2006naive}
K.~P. Murphy.
\newblock Naive bayes classifiers.
\newblock {\em University of British Columbia}, 18, 2006.

\bibitem{nambiar2006making}
R.~O. Nambiar and M.~Poess.
\newblock The making of tpc-ds.
\newblock In {\em Proceedings of the 32nd international conference on Very
  large data bases}, pages 1049--1058. VLDB Endowment, 2006.

\bibitem{Nguyen+15}
T.~T. Nguyen, Q.~V.~H. Nguyen, M.~Weidlich, and K.~Aberer.
\newblock Result selection and summarization for web table search.
\newblock In {\em ICDE}, pages 231--242, 2015.

\bibitem{pedregosa2011scikit}
F.~Pedregosa, G.~Varoquaux, A.~Gramfort, V.~Michel, B.~Thirion, O.~Grisel,
  M.~Blondel, P.~Prettenhofer, R.~Weiss, V.~Dubourg, et~al.
\newblock Scikit-learn: Machine learning in python.
\newblock {\em Journal of machine learning research}, 12(Oct):2825--2830, 2011.

\bibitem{QinYC12}
L.~Qin, J.~X. Yu, and L.~Chang.
\newblock Diversifying top-k results.
\newblock {\em {PVLDB}}, 5(11):1124--1135, 2012.

\bibitem{quinlan1986induction}
J.~R. Quinlan.
\newblock Induction of decision trees.
\newblock {\em Machine learning}, 1(1):81--106, 1986.

\bibitem{RaviRT94}
S.~S. Ravi, D.~J. Rosenkrantz, and G.~K. Tayi.
\newblock Heuristic and special case algorithms for dispersion problems.
\newblock {\em Operations Research}, 42(2):299--310, 1994.

\bibitem{rubner2000earth}
Y.~Rubner, C.~Tomasi, and L.~J. Guibas.
\newblock The earth mover's distance as a metric for image retrieval.
\newblock {\em International journal of computer vision}, 40(2):99--121, 2000.

\bibitem{Sarawagi00}
S.~Sarawagi.
\newblock User-adaptive exploration of multidimensional data.
\newblock In {\em VLDB}, pages 307--316, 2000.

\bibitem{Tao09}
Y.~Tao.
\newblock Diversity in skylines.
\newblock {\em {IEEE} Data Eng. Bull.}, 32(4):65--72, 2009.

\bibitem{movielensdata}
\url{http://grouplens.org/datasets/movielens/}.

\bibitem{movielens}
\url{http://movielens.org}.

\bibitem{Vardi1982}
M.~Y. Vardi.
\newblock The complexity of relational query languages (extended abstract).
\newblock In {\em Proceedings of the Fourteenth Annual ACM Symposium on Theory
  of Computing}, STOC '82, pages 137--146, 1982.

\bibitem{VeeSSBA08}
E.~Vee, U.~Srivastava, J.~Shanmugasundaram, P.~Bhat, and S.~Amer{-}Yahia.
\newblock Efficient computation of diverse query results.
\newblock In {\em ICDE}, pages 228--236, 2008.

\bibitem{Vieira+ICDE2011}
M.~R. Vieira, H.~L. Razente, M.~C.~N. Barioni, M.~Hadjieleftheriou,
  D.~Srivastava, C.~Traina, and V.~J. Tsotras.
\newblock On query result diversification.
\newblock In {\em ICDE}, pages 1163--1174, 2011.

\bibitem{wagstaff2001constrained}
K.~Wagstaff, C.~Cardie, S.~Rogers, S.~Schr{\"o}dl, et~al.
\newblock Constrained k-means clustering with background knowledge.
\newblock In {\em ICML}, volume~1, pages 577--584, 2001.

\bibitem{qagviewdemo2018}
Y.~Wen, X.~Zhu, S.~Roy, and J.~Yang.
\newblock Qagview: Interactively summarizing high-valued aggregate query
  answers.
\newblock In {\em Proceedings of the 2018 International Conference on
  Management of Data, {SIGMOD} Conference 2018, Houston, TX, USA, June 10-15,
  2018}, pages 1709--1712, 2018.

\bibitem{Xin+2006}
D.~Xin, H.~Cheng, X.~Yan, and J.~Han.
\newblock Extracting redundancy-aware top-k patterns.
\newblock In {\em KDD}, pages 444--453, 2006.

\bibitem{Yu+EDBT2009}
C.~Yu, L.~Lakshmanan, and S.~Amer-Yahia.
\newblock It takes variety to make a world: Diversification in recommender
  systems.
\newblock In {\em EDBT}, pages 368--378, 2009.

\bibitem{Zaharioudakis:2000}
M.~Zaharioudakis, R.~Cochrane, G.~Lapis, H.~Pirahesh, and M.~Urata.
\newblock Answering complex sql queries using automatic summary tables.
\newblock In {\em SIGMOD}, pages 105--116, 2000.

\bibitem{Zheng2012}
W.~Zheng, X.~Wang, H.~Fang, and H.~Cheng.
\newblock Coverage-based search result diversification.
\newblock {\em Information Retrieval}, 15(5):433--457, 2012.

\bibitem{ZhuGGA07}
X.~Zhu, A.~B. Goldberg, J.~V. Gael, and D.~Andrzejewski.
\newblock Improving diversity in ranking using absorbing random walks.
\newblock In {\em HLT-NAACL}, pages 97--104, 2007.

\bibitem{Ziegler+WWW2005}
C.-N. Ziegler, S.~M. McNee, J.~A. Konstan, and G.~Lausen.
\newblock Improving recommendation lists through topic diversification.
\newblock In {\em WWW}, pages 22--32, 2005.

\end{thebibliography}

pagebreak

\appendix
\section{Appendix}\label{sec:appendix}
\cut{
\section{Details from Section~3}\label{app:prelim}
\subsection{Proof of Proposition~\ref{prop:trivial}}\label{app:prop:trivial}
\begin{proof}[Proof of Proposition~\ref{prop:trivial}]
(1) {\bf \maxval:} 
Suppose an optimal feasible solution $\soln$ covers elements $\cov(\soln)$, and the trivial solution covers the set of all elements $S$. The ratio of their values is $(\frac{\sum_{t \in \cov(\soln)}\val(t)}{|\cov(\soln)|})/(\frac{\sum_{t \in S}\val(t)}{n})$ $\leq n$, since $\cov(\soln) \subseteq S$ and $|\cov(\soln)| \geq 1$.  
\par
To see that the bound is tight, consider $L = k = 1$, $D = 0$, and only one attribute where each element in $S$ has different value (so the options are the singleton clusters or the trivial solution). Further, the max element has weight 1, and the rest has weight 0. The optimal solution outputs the max element as a singleton cluster with value $1$, whereas the trivial solution has value $\frac{1}{n}$, which is worse by a factor of $n$.
\par
(2) {\bf \minsize:} There are  $n - L$ redundant elements in the trivial solution, whereas the optimal solution may have 0 (therefore, no multiplicative approximation is possible). 
\par
For a tight example, consider $D = 0$ and $L = k$, the top-$k$ original elements form an optimal solution with zero redundant elements (Proposition~\ref{prop:top-k-no-D})), whereas the trivial solution includes $n - L$ redundant elements.
\end{proof}
}

\cut{
\section{Details from Section~4}\label{app:semilattice}

\subsection{Proof of Proposition~\ref{prop:metric}}\label{app:prop:metric}
\begin{proof}[Proof of Proposition~\ref{prop:metric}]
Consider three clusters (including elements as singleton clusters) $C_1, C_2, C_3$. Suppose $d(C_1, C_2) = \ell$, \ie, $\ell$ of the attributes contribute to the distance function. Fix such an attribute $A$. There are three possibilities. {\bf (1)} $C_1[A] = C_2[A] = *$. If $C_3[A] = *$, it contributes 1 to both $d(C_1, C_3), d(C_2, C_3)$. If $C_3[A] \neq *$, \ie, an attribute value, then also it contributes 1 to both $d(C_1, C_3), d(C_2, C_3)$.~~ {\bf (2)} $C_1[A] \neq *$, $C_2[A] \neq *$, and $C_1[A] \neq C_2[A]$. If $C_3[A] = *$, it contributes 1 to both $d(C_1, C_3), d(C_2, C_3)$. If $C_3[A] \neq *$, \ie, an attribute value, then it contributes 1 to at least one of $d(C_1, C_3), d(C_2, C_3)$.~~ {\bf (3)} Without loss of generality,  $C_1[A] = *$ and $C_2[A] \neq *$. If $C_3[A] = *$, it contributes 1 to both $d(C_1, C_3), d(C_2, C_3)$. If $C_3[A] \neq *$, \ie, an attribute value, then it contributes 1 to at least $d(C_1, C_3)$. Hence, $d(C_1, C_2) \leq d(C_1, C_3) + d(C_2, C_3)$. Further, the distance function is non-negative and symmetric. Hence it is a metric.
\end{proof}
}

\subsection{Proof of Proposition~\ref{prop: distance:mono}}\label{app:prop:distance:mono}
\begin{proof}[Proof of Proposition~\ref{prop: distance:mono}]
Consider any cluster $C \in SC$ other than $C_1$. Suppose $d(C_1, C) = \ell$, \ie, $\ell$ of the attributes contribute to the distance function. Fix such an attribute $A$. There are three possibilities. {\bf (1)} $C_1[A] = C[A] = *$. If $C_2[A] = *$, it contributes 1 to $d(C_2, C)$. If $C_2[A] \neq *$, \ie, an attribute value, then also it contributes 1 to $d(C_2, C)$.~~ {\bf (2)} $C_1[A] \neq *$, $C[A] \neq *$, and $C_1[A] \neq C[A]$. If $C_2[A] = *$, it contributes 1 to  $d(C_2, C)$. If $C_2[A] \neq *$, \ie, an attribute value, then it must be the same as $C_1[A]$, and therefore contributes to $d(C_2, C)$.~~ {\bf (3)} $C_1[A] = *$ and $C_1[A] \neq *$. Then $C_2[A] = C_1[A] = *$ and  it contributes 1 to  $d(C_2, C)$. ~~ {\bf (3)} $C_1[A] \neq *$ and $C[A] = *$. Then either $C_2[A] = C_1[A]$ or, $C_2[A] = *$. In either case, it contributes 1 to $d(C_2, C)$. Summing over all $A$ and considering all $C$, $\lambda' \geq \lambda$.
\end{proof}

\cut{
\subsection{Proof of Proposition~\ref{prop:clusters-at-same-level}}\label{app:prop:clusters-at-same-level}
\begin{proof}[Proof of Proposition~\ref{prop:clusters-at-same-level}]
Incomparability of the clusters at a level is obvious, since they must differ in at least one position. Any such cluster has exactly $\ell$ $*$-s, and exactly $m-\ell$ attribute values. For any two clusters at a level $\ell$, at least one of the non-$*$ attribute values must be different. Taking into account $\ell$ $*$-s and one different attribute value, the distance between any two clusters is at least $\ell + 1$.
\end{proof}
}

\cut{

\subsection{Proposition~\ref{prop:minsize-mono}}\label{app:opt-mono}

\begin{proposition}\label{prop:minsize-mono}
For the \minsize\ objective, the following properties and observations hold when $\ell \geq D-1$ and therefore both levels $\ell, \ell+1$ satisfy distance constraint $D$ (by Proposition~\ref{prop:clusters-at-same-level}):
\begin{enumerate}
\item[(A)] For any set of clusters $\soln$, a cluster $C \in \soln$, and  $C'$ - an ancestor of $C$ in the semi-lattice,  the number of redundant elements in $\soln$ is not more than that of $(\soln \setminus C) \cup \{C'\}$, when $C$ is replaced by its ancestor $C'$.
\item[(B)] If $k \geq L$, if $\soln^*_{\ell}$ denotes an optimal solution when the clusters are restricted to level $\ell$, then the number of redundant elements in $\soln^*_\ell$ is \emph{not more than} that in $\soln^*_{\ell+1}$. 
\item[(C)] If $k < L$, and if both levels $\ell$ and $\ell+1$ have a feasible solution, then the number of redundant elements in $\soln^*_\ell$ \emph{can be more than} that in $\soln^*_{\ell+1}$.
\item[(D)] The global optimal solution for the overall \minsize\ objective for both $k \geq L$ and $k < L$ may be spread across multiple levels. 
\end{enumerate}
\end{proposition}
\begin{proof}[Proof of Proposition~\ref{prop:minsize-mono}]
\textbf{(A)} This is obvious by the definition of coverage for clusters.
\par
\textbf{(B)} $k \geq L$. Consider an optimal solution $\soln^*_{\ell+1}$ restricted to level $\ell+1$. Consider the clusters $\soln_{\ell}$ at level $\ell$ that the clusters in $\soln^*_{\ell+1}$ cover.  Since $\soln^*_{\ell+1}$ is a feasible solution, it covers all the top-$L$ elements, and therefore, the clusters in $\soln_{\ell}$ must cover the top-$L$ elements as well. $\soln^*_{\ell+1}$ and $\soln_{\ell}$ have the same number of redundant elements, but $\soln_{\ell}$ may contain more than $k$ clusters and may not be a feasible solution. 
Now for each of the top-$L$ elements $t_1, \cdots, t_{L}$, choose one cluster in $\soln_{\ell}$ that covers $t_i$, $i \in [1, L]$. Let this new solution be $\soln \subseteq \soln_{\ell}$. Now $\soln$ is a feasible solution since it contains at most $L$ clusters (due to possible overlaps), whereas the size constraint $k$ is $\geq L$. Further, since $\soln \subseteq \soln_{\ell}$, the number of redundant elements in $\soln$ is $\leq$ that in $\soln_{\ell}$. Since $\soln$ is a feasible solution, any optimal solution $\soln^*_{\ell}$ cannot contain more redundant elements than that in $\soln$, hence the the number of redundant elements in $\soln^*_\ell$ is not more than that in $\soln^*_{\ell+1}$.
\par
\textbf{(C)} $k < L$. We give an example. Consider three elements and their values $t_1 = (a_1, b_1, c_1, d_2): 1$, $t_2 = (a_2, b_1, c_1, d_1): 1$, $t_3 = (a_1, b_2, c_2, d_1): 1$, and $t_4 = (a_3, b_1, c_1, d_3): 0$. $L = 3$ and $k = 2$. Consider levels $\ell = 2$ (with two $*$-s) and $\ell+1 = 3$ (with three $*$-s). The only feasible solution that can cover $t_1, t_2, t_3$ by two clusters with two $*$-s must include $(*, b_1, c_1, *)$ to cover $t_1, t_2$,  and any cluster with two $*$-s to cover $t_3$ (since $t_1, t_3$ and $t_2, t_3$ have only one attribute value in common). But $(*, b_1, c_1, *)$ also covers the redundant element. Hence $S_2^*$ has one redundant element. At level $3$ with three $*$-s, the optimal solution $S_3^*$ is $(a_1, *, *, *), (a_2, *, *, *)$ with zero redundant element.  Hence $S_3^*$ has fewer redundant element than $S_2^*$. Further, $S_2^*$ can be made arbitrarily worse by adding many elements of the form $(-, b_1, c_1, -)$ with unique values for the first and the fourth attributes.
\par
\textbf{(D)} Here we give two  examples. (a) For $k \geq L$, consider three elements and their values $(a_1, b_1): 1$, $(a_1, b_2): 1$, $(a_1, b_3): 0$, and $(a_2, b_1): 0$. $L = k = 2$ and $D = 2$. The optimal solution is $(a_1, b_1), (*, b_2)$ with zero redundant elements. Any other feasible solution for $D = 2$ with one or two clusters will have at least one redundant element. (b) For $k < L$, consider four elements and their values $(a_1, b_1): 1$, $(a_1, b_2): 1$, $(a_2, b_3): 1$, $(a_2, b_4): 0$ and $(a_3, b_3): 0$. $L = 3$ and $k = 2$. The optimal solution is $(a_1, *), (a_2, b_3)$ with zero redundant elements. But there is no feasible solution at level 0 (at least three singleton clusters are needed), and any feasible solution at level 1 or 2 will have at least one redundant element. 
\end{proof}
}

\cut{
\subsection{An example where the trivial solution is optimal}\label{app:trivial-opt}
\begin{example}\label{eg:trivial-good}
\emph{$k < L$ and $D = 0$:} Consider six elements with their values: $(a_1, b_1): 19$, $(a_1, b_2): 18$, $(a_1, b_3): 2$, $(a_2, b_4): 20$, $(a_2, b_5): 0$, $(a_3, b_6): 17$. Assume $k = 2$ and $L = 3$, \ie, top-3 elements with values 20, 19, 18 have to be covered by at most 2 clusters. If $(a_1, *)$ and $(a_2, *)$ are chosen, the average value is 11.4, whereas if the trivial solution $(*, *)$ is chosen, the average is 12.33, and in this case this is the optimal solution.
\par
\emph{$k \geq L$ and arbitrary $D$.} Consider three elements with their values: $(a_1, b_1): 10$, $(a_1, b_2): 10$, $(a_1, b_3): 0$, $(a_2, b_4): 9$. Assume $D = 2$, $k = L  = 2$, \ie, top-2 elements with values 10 have to be covered by at most 2 clusters. The singleton clusters cannot be chosen as they have distance 1. The only feasible solution with one $*$ is the single cluster $(a_1, *)$ with average value $\frac{20}{3} = 6.67$. The trivial solution $(*, *)$ have value $\frac{29}{4} = 7.25$, and in this case this is the optimal solution. 
\end{example} 
In the above example, the clusters with fewer $*$-s in the lower level are not enough to capture the properties of the top-$L$ elements as they also include elements with much smaller values, therefore, they are not 
better solutions than the trivial one for the \maxval\ objective.

\section{Details from Section~5}
}

\cut{
\subsection{Proposition~\ref{prop:top-k-no-D}}\label{app:prop:top-k-no-D}
\begin{proposition}\label{prop:top-k-no-D}
If the distance threshold $D$ is 0, and if $k \geq L$, the original top-$L$ elements as singleton clusters is an optimal solution for both \maxval\ and \minsize\ objectives. 
\end{proposition}
\begin{proof}[Proof of Proposition~\ref{prop:top-k-no-D}]
\textbf{\maxval:}
Suppose the optimal solution has chosen clusters $SC$ that cover the elements $\cov(SC)$,  let $t_1, \cdots, t_L$ be the original top-$L$ elements with $\val(t_1) \geq \cdots \geq \val(t_L)$. If $SC = \{t_1, \cdots, t_L\}$, then the optimal solution has the same value as the $k$ singleton clusters $\{t_1\}, \cdots, \{t_L\}$, showing the latter is also optimal.  Otherwise, there are elements $e_1, \cdots, e_p$ in $\cov(SC) \setminus \{t_1, \cdots, t_L\}$. For all such $e_j$, $j \in [1, p]$,
$\val(e_j) \leq \val(t_L) \leq\frac{\sum_{i=1}^k{\val(t_i)}}{L}$. 
Then the average value of $SC$ is
$\frac{\sum_{i=1}^L{\val(t_i)} + \sum_{j = 1}^p \val(e_j)}{L + p}$ $\leq$ $\frac{\sum_{i=1}^L{\val(t_i)} + p * \frac{\sum_{i=1}^L{\val(t_i)}}{L}}{L + p}$
$\leq$ $\frac{\sum_{i=1}^L{\val(t_i)}}{L}$.
This shows that the top-$L$ singleton clusters also forms an optimal solution. Further, the inequality will be strict if any of the elements in $\cov(SC)$ has value strictly $< \val(t_L)$, contradicting that $SC$ is an optimal solution. 
\par
\textbf{\minsize:} Clearly, the top-$L$ singleton clusters have zero redundant elements, which forms the optimal solution for \minsize.
\end{proof}
}

\cut{
\subsection{Proposition~\ref{prop:feasible-exists}}\label{app:prop:feasible-exists}
\begin{proposition}\label{prop:feasible-exists}
For $k \geq L$ and non-zero $D$, a non-trivial feasible solution always exists.  (proof in Appendix~\ref{app:prop:feasible-exists})
\end{proposition}

\begin{proof}[Proof of Proposition~\ref{prop:feasible-exists}]
For each of the top-$L$ elements, pick one of its ancestors at level $D-1$). Let us call this set of clusters $\soln_L$. Clearly, $\soln_L$ covers the top-$L$ elements and has size $\leq L \leq k$ (two top-$L$ elements may pick the same ancestor). Any two clusters in level $D-1$ (and therefore in $\soln_L$) are incomparable  and have distance $\geq D$ (from Proposition~\ref{prop:clusters-at-same-level}). Hence $\soln_{L}$ is a feasible solution. Since $D \leq m$, $D-1 \leq m-1$, so the solution is non-trivial as well.
\end{proof}
}

\subsection{NP-hardness Proofs}\label{sec:app-nphard}
\begin{theorem}\label{thm:opt-NP-hard}
The optimization problem for the objective of \maxval\ 
for the case when $k \geq L$ and $D > 0$ is NP-hard. 
\end{theorem}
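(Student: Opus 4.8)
\textbf{Proof Plan for Theorem~\ref{thm:opt-NP-hard}.}

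The plan is to reduce from the \textbf{tripartite vertex cover} problem, which asks, given a 3-partite graph $G$ with partition classes $(X,Y,Z)$ and a bound $M$, whether $G$ has a vertex cover of size at most $M$; this is known to be NP-hard~\cite{LlewellynTT93}. I will build an instance $S$ of our problem with only $m=3$ attributes $A_X, A_Y, A_Z$, and set $k=M$, $L = 2N_e$ (where $N_e$ is the number of edges of $G$), and $D=3$ (the maximum possible distance for $m=3$). The crux is to encode vertices of $G$ as potential clusters $(x,*,*)$, $(*,y,*)$, $(*,*,z)$ — these are exactly the clusters at level $\ell = 2 = D-1$, so by Proposition~\ref{prop:clusters-at-same-level}-type reasoning they pairwise satisfy both the distance constraint $D=3$ and incomparability — and to encode edges of $G$ as the top-$L$ tuples that must be covered. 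Specifically, an edge $(x,y)$ gives two tuples $(x,y,Z^1_{xy})$ and $(x,y,Z^2_{xy})$ with fresh $A_Z$-values and value $1$; symmetrically for edges of the other two types. A vertex cover of $G$ then translates directly into a feasible solution covering all top-$L$ elements, and vice versa.

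The main technical work, and the expected main obstacle, is the ``if'' direction: showing that \emph{any} high-value feasible solution must consist only of the ``nice'' two-star clusters of the form $(x,*,*)$, $(*,y,*)$, $(*,*,z)$, i.e.\ must correspond to a genuine vertex cover. This requires padding the instance with \emph{redundant} tuples (value $0$, outside the top $L$) carefully chosen so that (a) the trivial all-$*$ cluster has value close to $0$ and is thus ruled out; (b) any cluster of the form $(*,*,Z^i_{xy})$ picks up a large number $N_r = 2 N_e N_v$ of dedicated value-$0$ redundant tuples, dragging the average below the target threshold, so such clusters cannot be used; and (c) clusters with zero or one $*$ are forced out by the distance constraint $D=3$ — e.g.\ if $(x,y,Z^1_{xy})$ or $(x,y,*)$ is chosen, then covering the partner tuple $(x,y,Z^2_{xy})$ forces a cluster at distance $\le 2$, a contradiction. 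Setting the threshold at exactly $\frac{2N_e}{2N_e+M}$ then makes ``$G$ has a vertex cover of size $\le M$'' equivalent to ``$S$ has a feasible solution of value $\ge \frac{2N_e}{2N_e+M}$,'' because the $2N_e$ unit-valued top tuples are always covered and each chosen two-star cluster adds exactly one value-$0$ redundant tuple (a $\gamma$-tuple created per vertex).

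The plan for the write-up is: (1) describe the construction $S$ precisely (top-$L$ tuples from edges, $\gamma$-redundant tuples from vertices, and the $N_r$-fold redundant tuples per special $A_Z$/$A_Y$/$A_X$ value); (2) set parameters $k=M$, $L=2N_e$, $D=3$; (3) prove the ``only if'' direction by the direct translation vertex cover $\to$ clusters, computing the value as $\frac{2N_e}{2N_e+M'}\ge\frac{2N_e}{2N_e+M}$; (4) prove the ``if'' direction through the sequence of structural claims (A) no $(*,*,Z^i_{xy})$-type cluster, (B) no zero-$*$ cluster, (C) no one-$*$ cluster — each argued via the redundant-tuple count or the $D=3$ distance constraint — concluding the solution corresponds to a size-$\le M$ vertex cover. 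I would remark at the end that the NP-hardness of the $k\ge L$ \emph{optimization} problem is genuinely harder than the $k<L$ \emph{decision} problem of Theorem~\ref{thm:decision-nphard}, precisely because here the coverage requirement alone is trivially satisfiable (any level-$(D-1)$ ancestors work), so all the difficulty has to be smuggled into the objective value via the redundant-tuple gadgets.
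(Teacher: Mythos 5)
Your plan is correct and is essentially identical to the paper's own proof: the same reduction from tripartite vertex cover with $m=3$ attributes, the same parameters $k=M$, $L=2N_e$, $D=3$, the same gadgets (two tuples per edge, one value-$0$ $\gamma$-tuple per vertex, and $N_r=2N_eN_v$ padding tuples per fresh attribute value), the same threshold $\frac{2N_e}{2N_e+M}$, and the same structural claims (A)--(C) for the ``if'' direction. Nothing further to add.
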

\begin{proof}
The reduction is from the problem of finding a minimum vertex cover in a tri-partite graph 
$G$ with partitions $(X, Y, Z)$, which has shown to be NP-hard in \cite{LlewellynTT93}. The goal in this problem 
is to decide if the input graph has a vertex cover of size $\leq M$, \ie, a subset of vertices $T \subseteq X \cup Y \cup Z$ and  $|S| \leq M$ such that any edge in $G$ has at least one endpoint in $T$\footnote{\cite{GolabKLSS15} gives a reduction from the tri-partite vertex cover problem for \emph{size-constrained weighted set cover} (given weights on the subsets, a size constraint $k$, a coverage fraction $s$, the goal is to return up to $k$ sets that together contain at least $sn$ elements and whose sum of weights is \emph{minimal}). In contrast, in our setting, the weights are assigned on elements (not on subsets); the goal is to select at most $k$ subsets with \emph{maximum} value with the distance and other restrictions,  that cover top-$L$ original  elements; and we show NP-hardness of the decision problem even if the elements are unweighted.}. First we give the proof for {\bf \maxval}. Suppose $G$ has $N_e$ edges and $N_v$ vertices.
\par
Given such a graph $G$ and a bound $M$, we construct a database instance $S$ as follows. There are three attributes $A_X, A_Y, A_Z$.
(i) \emph{Top-$L$ tuples from the edges of $G$:} Any edge of the form $(x, y), x \in X, y \in Y$ forms two tuples $(x, y, Z^1_{xy})$ and $(x, y, Z^2_{xy})$, where $Z^1_{xy}, Z^2_{xy}$ are two unique values of the $A_Z$ attribute for the edge $(x, y)$ in $G$. Similarly, an edge $(y, z), y\in Y, z \in Z$ forms two tuples $(X^1_{yz}, y, z)$, $(X^2_{yz}, y, z)$, and an edge $(x, z), x \in X, z \in Z$ forms a tuple $(x, Y^1_{xz}, z)$. $(x, Y^2_{xz}, z)$. The weights of these tuples are 1. (ii) \emph{Redundant  tuples from the  vertices of $G$:} For any vertex $x \in A_X$ in $G$, create a redundant tuple $(x, \gamma^2_{x}, \gamma^3_{x})$. These redundant tuples have weight 0. Similarly, form redundant tuples for vertices $y \in A_Y$: $(\gamma^1_{y}, y, \gamma^3_{y})$, and for vertices $z \in A_Z$: $(\gamma^1_{z}, \gamma^2_{z}, z)$ with weight 0. 
(iii) \emph{More redundant  tuples:} For each $Z^1_{xy}$, form $N_r = 2*N_e *N_v$ redundant tuples of the form $(-, -, Z^1_{xy})$ with weight 0, where the positions with $-$ are filled with unique attribute values. Similarly, form redundant $N$ tuples for each of $Z^2_{xy}, Y^1_{xz}, Y^2_{xz}, X^1_{xz}$, and $X^2_{yz}$, placing these attribute values in their corresponding positions.
\par
 We set $k = M$, $L = 2 \times N_e$, $D = 3$. Note that only the tuples from the edges of the form $(x, y, Z^1_{xy})$, $(x, y, Z^2_{xy})$, $(x, Y^1_{xz}, z)$, $(x, Y^2_{xz}, z)$, $(X^1_{yz}, y, z)$, $(X^2_{yz}, y, z)$ form the top-$L$ original elements and have to be covered. We claim that $G$ has a vertex cover of size $\leq M$ if and only if $S$ has a solution, a set of clusters $SC$, of value $\geq \frac{2N_e}{2N_e + M}$, where $N_e$ is the number of edges in $G$.
\par
(only if) Suppose $G$ has a vertex cover $T$ of size $M' \leq M$. For any $x \in X \cap T$, choose the cluster $(x, *, *)$ in $SC$; similarly for $y \in Y \cap T$ and $z \in Z \cap T$, choose the clusters $(*, y, *)$ or $(*, *, z)$ respectively in $SC$. These clusters have mutual distance $= 3$, are incomparable, have size $\leq M$, and cover all top-$L$ elements. Each such cluster also covers a redundant element (with $\gamma$ attribute value) of value 0. Therefore, the value of the solution is $\frac{2N_e \times 1 + M' \times 0}{2N_e + M'} \geq \frac{2N_e}{2N_e + M}$. 
\par
(if) Suppose $S$ has a solution $SC$ of value $\geq \frac{2N_e}{2N_e + M}$. Without loss of generality, any cluster in $SC$ covers at least one of the top-$L$ elements with value 1, otherwise it can be discarded without increasing the value of the average or size/distance/coverage of the solution (the redundant elements have value 0 and can only reduce the average). Also note that the trivial solution $(*,*, *)$ cannot be chosen, since it has value $\frac{2N_e + 0}{2N_e + N_v + 2N_e * N_r} $ $\leq \frac{2N_e}{2N_e + 2N_e * N_r}$ $= \frac{1}{1 + N_r} = \frac{1}{1 + 2N_e N_v}$, which is strictly less than the assumed value of $SC$ $\geq \frac{2N_e}{2N_e + M}$. 
\par
\emph{(A) None of the chosen clusters in $SC$ can be of the form  $(*, *, Z^1_{xy})$ (similarly for $Z^2_{xy}, Y^1_{xz}, Y^2_{xz}, X^1_{xz}$).} Suppose one cluster in $SC$ is $(*, *, Z^1_{xy})$. Then it covers $N_r$ redundant tuples that are not covered by any other cluster in $SC$. Suppose $SC$ has $N'$ redundant tuples all together from all other clusters. Then the average value of $SC$ is $\frac{2N_e + 0}{2N_e + N' + N_r}$ $\leq \frac{2N_e}{2N_e + N_r}$ $= \frac{2N_e}{2N_e + 2N_e N_v}$ = $\frac{1}{1 + N_v}$, which is strictly less than $\frac{2N_e}{2N_e + M}$, the assumed value of $SC$ since $M \leq N_v$.
\par
Next we argue that each cluster in $SC$ can have exactly two $*$-s, combining with (A) above, must be of the form $(x, *, *), (*, y, *), (*,*, z)$.
\par
\emph{(B) The clusters in $SC$ cannot have zero $*$-s.} Suppose without loss of generality that for a top-$L$ tuple  $(x, y, Z^1_{xy})$, the singleton cluster  $(x, y, Z^1_{xy})$ has been chosen in $SC$. Due to the incomparability condition, none of $(x, y, *), (x, *, *), (*, y, *)$ can belong to $SC$. Hence, to cover the other top-$L$ tuple $(x, y, Z^2_{xy})$, one of $(x, y, Z^2_{xy}), (x, *, Z^2_{xy}), (*, y, Z^2_{xy})$ has to be chosen $(*, *, Z^2_{xy})$ cannot be chosen due to (A) above. However, these three clusters have distance 1, 2, 2 respectively from $(x, y, Z^1_{xy})$ violating the distance constraint $D = 3$.
\par
\emph{(C) The clusters in $SC$ cannot have one $*$-s.} (i) Suppose for a top-$L$ tuple  $(x, y, Z^1_{xy})$, the cluster  $(x, y, *)$ has been chosen in $SC$. Due to the incomparability condition, none of $(x, *, *), (*, y, *)$ can belong to $SC$, and any cluster with 1 or zero $*$ to cover top-$L$ tuples from edges of the form $(x, y')$ or $(x', y)$ in $G$ will have distance $\leq 2$ with $(x, y, *)$, violating $D = 3$. If $(x, *, Z^1_{xy})$ is chosen (same for $(*, y, Z^1_{xy})$), to cover the other top-$L$ tuple $(x, y, Z^2_{xy})$, one of $(x, y, Z^2_{xy}), (x, *, Z^2_{xy}), (*, y, Z^2_{xy})$ has to be chosen. Since the first two have distance 1 and 2 respectively from $(x, *, Z^1_{xy})$ violating the distance constraint $D = 3$, $(*, y, Z^2_{xy}$ must also belong to $SC$. However, $(x, *, Z^1_{xy})$ and $(*, y, Z^2_{xy})$ together rule out covering clusters for other top-$L$ tuples from edges of the form $(x, y')$ or $(x', y)$ in $G$, which will have distance $\leq 2$ from either of these two clusters, violating $D = 3$.
\par
Hence the clusters in $SC$ must be of the form $(x, *, *)$, $(*, y, *)$, or $(*, *, z)$, which corresponds to a vertex cover in $G$. Suppose there are $K$ clusters in $SC$. Then the value of $SC$ is $\frac{2N_e + 0}{2N_e + K} \geq \frac{2N_e}{2N_e + M}$ by our assumption, hence $K \leq M$, and $G$ has a vertex cover of size at most $M$.
\cut{
\par
The proof for \textbf{\minsize} is almost the same and simpler, where we argue that $G$ has a vertex cover of size $M$ if and only if $S$ has a feasible solution $SC$ covering $M$ redundant elements. 
}
\end{proof}

\begin{theorem}\label{thm:decision-nphard}
The decision version of whether a feasible non-trivial solution exists for the problem defined in Definition~\ref{def:problem} is NP-hard,even with three attributes, $D = 0$, $L = n$, and uniform weights of the elements \footnote{The top-$k$ original elements may not constitute an optimal solution for $D = 0$ when $L > k$.}. 
\end{theorem}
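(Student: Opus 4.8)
The plan is to reduce from the \textbf{tripartite vertex cover} problem, following the same template as the NP-hardness proof for \maxval\ (Theorem~\ref{thm:opt-NP-hard}) but with a simpler gadget since here we only need to decide feasibility. Given a tripartite graph $G$ with parts $(X, Y, Z)$ and a bound $M$, I would build a relation $S$ with three attributes $A_X, A_Y, A_Z$ as follows: each edge $e$ of $G$ contributes exactly one tuple whose two endpoint-vertices occupy the corresponding two attributes and whose third attribute holds a fresh unique value $V_e$ that appears in no other tuple. For an edge $(x,y)$ with $x\in X, y\in Y$, this is $(x, y, Z_{xy})$; symmetrically for edges of types $(y,z)$ and $(x,z)$. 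Then I set $k = M$, $D = 0$, and $L = n$ (the total number of tuples), so every tuple must be covered, and all tuples get uniform weight $1$. The claim is: $G$ has a vertex cover of size $\le M$ iff $S$ has a non-trivial feasible solution, i.e.\ a set of at most $k$ incomparable clusters (not the single all-$*$ cluster) covering all tuples.

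For the forward direction, given a vertex cover $T$ of size $\le M$, I would map each $x\in X\cap T$ to the cluster $(x,*,*)$, each $y\in Y\cap T$ to $(*,y,*)$, and each $z\in Z\cap T$ to $(*,*,z)$. Since $T$ is a vertex cover, every edge tuple is covered by the cluster of one of its endpoints; the number of clusters is $\le |T| \le M = k$; any two such single-attribute clusters are incomparable (neither covers the other, as they fix different single attributes to concrete values); $D=0$ imposes no distance constraint; and none of these clusters is the trivial all-$*$ one. So this is a non-trivial feasible solution.

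The converse is the part requiring the argument about the semilattice structure: I need to show any non-trivial feasible solution can be transformed, without increasing size, into one consisting only of clusters of the form $(x,*,*),(*,y,*),(*,*,z)$, which then directly yields a vertex cover of size $\le k = M$. The key observation is that the unique value $V_e$ in the third coordinate of tuple $t_e$ ensures that any cluster covering $t_e$ other than a two-$*$ "vertex cluster" (one of the three single-attribute clusters through $t_e$'s endpoints) covers \emph{only} $t_e$ among all tuples in $S$. Hence I can replace any such over-specific cluster (zero-$*$ or a one-$*$ cluster still fixing $V_e$) by one of the two vertex clusters through $t_e$: this still covers $t_e$ (and possibly more), does not remove coverage of any other tuple (the replaced cluster covered nothing else), and the size does not increase. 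Doing this repeatedly, and also replacing the remaining one-$*$ clusters of the form $(x,y,*)$ by $(x,*,*)$ (which covers a superset), we arrive at a solution all of whose clusters are vertex clusters; incomparability is automatically preserved since we only ever move upward in the semilattice to a level-$2$ cluster. The resulting set of at most $M$ vertex clusters corresponds to a set of at most $M$ vertices of $G$ that touches every edge, i.e.\ a vertex cover of size $\le M$.

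The main obstacle is making the replacement argument airtight: I must carefully verify that after replacing a cluster by a vertex cluster, (i) incomparability still holds among \emph{all} pairs in the new solution (including pairs where one cluster was already a vertex cluster), and (ii) we never accidentally need to delete a cluster that was the sole cover of some tuple. Point (i) is handled because a cluster $(x,*,*)$ can only be covered by, or cover, the all-$*$ cluster or itself among clusters that appear in our normalized solution; since the trivial cluster is excluded, vertex clusters on distinct vertices are pairwise incomparable. Point (ii) is exactly where the uniqueness of $V_e$ is essential, so I would state that lemma explicitly and note that $D=0$ and $L=n$ are what let the reduction go through without any weight or distance machinery. The footnote remark — that the top-$k$ original tuples need not be optimal when $L > k$ — is immediate from this construction, since the top-$k$ tuples (all weight $1$, any $k$ of them) will typically fail to cover all $n$ tuples, whereas a feasible clustering exists exactly when a small vertex cover does.
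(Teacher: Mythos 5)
Your proposal is correct and follows essentially the same route as the paper's proof: a reduction from tripartite vertex cover in which each edge becomes one tuple with a fresh unique value in its third coordinate, $k=M$, $L=n$, $D=0$, uniform weights, and the converse direction normalizes any non-trivial solution into single-attribute ``vertex'' clusters using exactly the uniqueness argument you describe. The replacement details you flag (incomparability of the resulting vertex clusters and preservation of coverage) are handled the same way in the paper, so there is no gap.
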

\begin{proof}[Proof of Theorem~\ref{thm:decision-nphard}]
The reduction is again from the problem of finding a minimum vertex cover in a tri-partite graph 
$G$ with partitions $(X, Y, Z)$ (see the proof of Theorem~\ref{thm:opt-NP-hard}).
\cut{
 which has shown to be NP-hard in \cite{..}. The goal in this problem 
is to decide if the input graph has a vertex cover of size $\leq M$, \ie, a subset of vertices $T \subseteq X \cup Y \cup Z$ and  $|S| \leq M$ such that any edge in $G$ has at least one endpoint in $T$.
}
\par
Given such a graph $G$ and a bound $M$, we construct a database instance $S$ as follows. There are three attributes $A_X, A_Y, A_Z$.
Any edge of the form $(x, y), x \in X, y \in Y$ forms a tuple $(x, y, Z_{xy})$, where $Z_{xy}$ is a unique value of the $A_Z$ attribute for the edge $(x, y)$ in $G$. Similarly, an edge $(y, z), y\in Y, z \in Z$ forms a tuple $(X_{yz}, y, z)$, and an edge $(x, z), x \in X, z \in Z$ forms a tuple $(x, Y_{xz}, z)$. The total number of tuples in the relation $S$ is $n$ and each tuple has the same weight $1$. We set $k = M$, $L = $ the number of edges in $G$, and claim that $G$ has a vertex cover of size $\leq M$ if and only if $S$ has a non-trivial solution, a set of clusters $SC$, of size $\leq k$.
\par
(only if) Suppose $G$ has a vertex cover $T$ of size $\leq M$. For any $x \in X \cap T$, choose the cluster $(x, *, *)$ in $SC$; similarly for $y \in Y \cap T$ and $z \in Z \cap T$, choose the clusters $(*, y, *)$ or $(*, *, z)$ respectively in $SC$. Since $T$ is a vertex cover, $SC$ will cover all tuples in $S$ and has size $\leq M = k$.
\par
(if) Suppose $S$ has a non-trivial solution $SC$ of size $\leq k = M$. The clusters in $SC$ can have $*$ in zero, one, or two positions (all three positions cannot be $*$ since $SC$ is a non-trivial solution). We will argue that any cluster in $SC$ can be replaced by a cluster with two $*$ and a vertex from $G$ forming another feasible non-trivial solution without increasing the size of $SC$ (the size may decrease). Consider any tuple of the form $t = (x, y, Z_{xy})$ in $S$ (the other two cases $(x, Y_{xz}, z)$ and $(X_{yz}, y, z)$ follow similarly). If $t$ is covered by a cluster of the form $(x, y, Z_{xy})$, $(x, y, *)$, $(x, *, Z_{xy})$, $(*, y, Z_{xy})$, or, $(*, *, Z_{xy})$, such clusters cannot cover any other tuple in $S$ since $Z_{xy}$ is unique, so replace such clusters by $(x, *, *)$ or $(*, y, *)$. After this is repeated for all tuples in $S$, the only types of clusters remaining in $SC$ be of the form $(x, *, *)$, $(*, y, *)$, or $(*, *, z)$, which corresponds to a vertex cover in $G$, and the size of the solution has not increased.
\end{proof}

\cut{
\subsection{Connection with the Red-Blue Set Cover Problem}\label{app:red-blue}
The \emph{red-blue cover problem} is the following: The elements are marked with either red or blue colors, the goal is to find a collection of sets that cover all blue elements, while covering the minimum possible red elements. Carr et al.\cite{Carr+2000} showed that this problem cannot be approximated within $O(2^{\log^{1-\delta}N})$, where $\delta = 1/\log\log^cN$ for any constant $c < \frac{1}{2}$, where $N$ is the number of subsets (\ie, this problem is strictly harder than the set-cover problem that has a logarithmic approximation). This hardness holds even when every set contains one blue and two red elements. Although this does not give a hardness result for the $\FindClusters{-}\minsize(\ell, k, L)$ procedure due to the restricted semi-lattice structure, which has a similar goal of minimizing the number of redundant elements while covering all top-$L$ elements, this suggests that finding a good approximation is unlikely even for the \FindClusters\ procedure.
}

\subsection{Architecture and GUI}\label{sec:architecture}
\begin{figure}[t]
\centering
\includegraphics[width=1\linewidth]{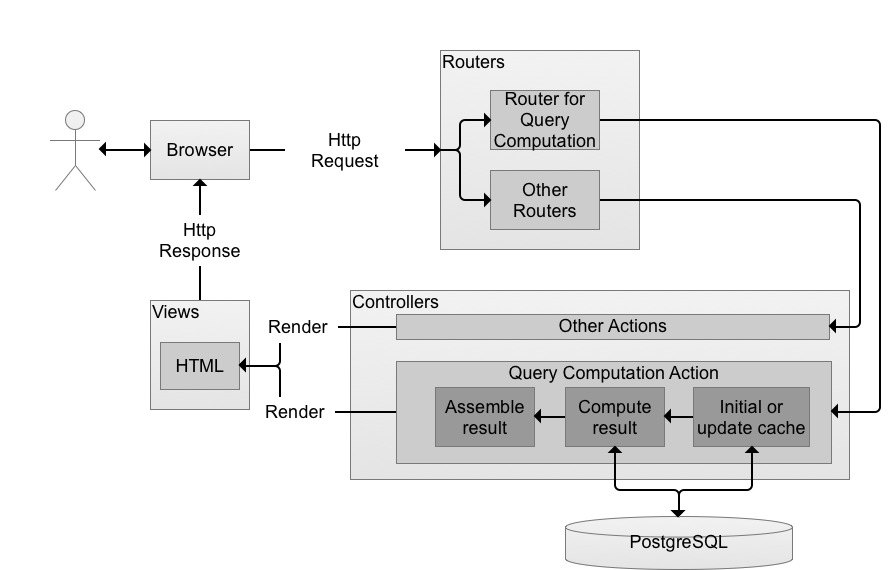}
\caption{Diagram of System Architecture.}
\label{fig:architecture}
\end{figure}

The architecture of the system is shown in Figure~\ref{fig:architecture}. 
The interface on the browser sends HTTP requests  to the server based on the actions of the 
user.  For different requests, the server uses the router module to determine which action in the controller module will be performed. After an action processes the request, it  sends the result to the view module. Then the view module partially renders the page and sends it back to the browser.
\par
The interface consists of two parts: a \emph{tool panel} on the left, and a \emph{working space} on the right. When the user opens the GUI, it connects to the DBMS through the router and the controller modules.
The user can see all the databases in the DBMS, and get a preview of any table in any database using the tool panel to have an idea on the schema and data.
%
Then the user enters an aggregate SQL query (in the query box), the parameters
 $k, L, D$ (either in the query box or in their respective boxes), and clicks on \emph{`go'}.
 Then the \emph{SQL result} section  below the SQL input area is updated with the output. 
\par
The action for computing the output clusters has three stages. \emph{First,} it retrieves the original SQL result from PostgreSQL. Then it initializes or updates the cache, which stores some data structures that are frequently used while computing the clusters. 
If the underlying aggregate query is new, the system will fully update the cache. If only parameters  $k, L$, or $D$ are changed, the system may partially update the cache. The system uses both memory and PostgreSQL as cache. \emph{Second,} based on the input parameters, the system selects an algorithm to compute and optimize the result. \emph{Third,} the system assembles the result as a JSON string and sends it back to the browser. Finally, the SQL result section on the GUI is updated without refreshing this page.


\cut{
\begin{figure*}[t]
\label{fig: overviewGUI}
\centering
\includegraphics[width=0.9\linewidth]{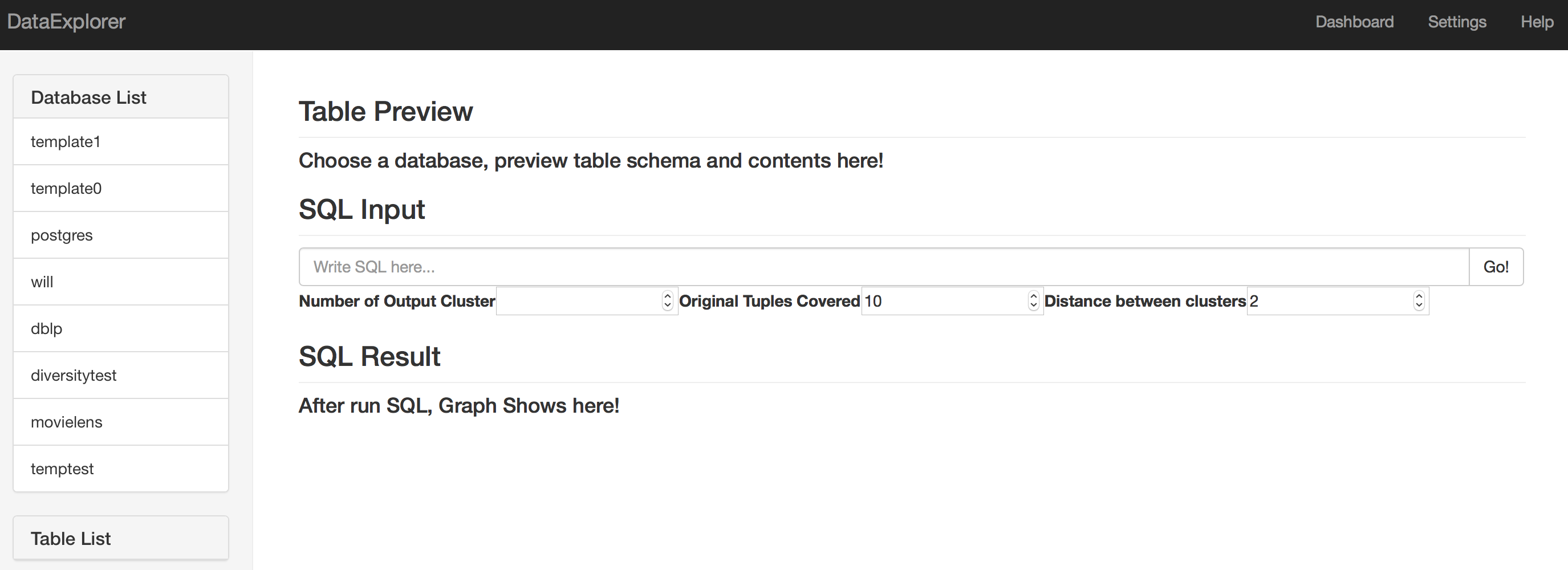}
\caption{Overview of the user interface}
\end{figure*}
}

  \cut{

The interface consists of two parts. One on the left is a tool panel, and the part on the right is a working space. After connect a database system, all database names shows as a list on the tool panel. users can select a database that they want to connect with. Then all the tables in that database will show as a list below the list of databases. Users can select a table for preview, a snapshot of first five row of that table will show on the working space. After users have a picture of schema and data, they can type a SQL-like query in the input area which is just below the preview section. 

Users can add COVERAGE $L$ and DISTANCE $D$ as defined earlier. The parameter of LIMIT clause will treat as the value of parameter $k$. Besides supporting SQL-like language, users can use the input area to tune the parameters explicitly with the normal SQL aggregate query.

After a user enters a query and parameters, the browser will sent a request to the server. After the server process the request, it sends back a HTTP response to the browser. Then the SQL result section which is just below the SQL input area  will be updated. 


Browsers can send HTTP requests based on behaviors of users. For different requests, the server uses the router module to determine which action in the controller module will be performed. After an action processes the request, it will send the result to the view module. Then The view module will partially render the page and send back to browsers.

The action for computing a query, it usually has three stages. First, it will retrieve the original SQL result from PostgreSQL and initialize or update the cache which stores some data structures that may be frequently used on the next stage. If the query is totally new, the system will fully update the cache. If only parameters have been changed, the system may partially update the cache. The system uses both memory and PostgreSQL as the cache. Second, based on the user input, the system will wisely select an algorithm to compute and optimize the result. Third, the system will assemble the result as a Json string and sent back to the browser. Then the SQL result section on the page will be updated without refreshing this page.
}

\cut{
\section{Details from Section~8}\label{app:visualization}

The formal definition discussed in the visualization section is a clear case. However, choosing the horizontal dimension instead of the vertical dimension to represent tuples simplifies the problem. If the vertical dimension is used for each cluster's weight, a more complex but interesting problem can be defined.

\subsection{Complex definition}

The two cluster sets are $S_a$ and $S_b$, where $S_a$ represents for the former cluster set and $S_b$ is the new cluster set. Clusters belong to the former cluster set are $S_a = \big\{ c_{a1},c_{a2},...,c_{am} \big\}$ while the new clusters are  $S_b = \big\{ c_{b1},c_{b2},...,c_{bn} \big\}$. The number of tuples in each cluster in $S_a$ is $L_a=\big\{ l_{a1},l_{a2},…,l_{am}\big\}$ and $L_b=\big\{ l_{b1},l_{b2},…,l_{bn} \big\}$ is for $S_b$. The mapping between two cluster sets is built as a $m\times n$ matrix $M$. The value of $m_{ij}$ means the number of shared tuples between cluster $c_{ai}$ and cluster $c_{bj}$. The height unit is $c$. The vertical position of $S_a$ is $H_a=\big\{h_{a1},h_{a2},…,h_{am}\big\}$ and the position of $S_b$  is $H_b=\big\{h_{b1},h_{b2},…,h_{bn}\big\}$. $h_{ai}$ is the vertical position of the center of $c_{ai}$ and $h_{bi'}$ is the vertical position of the center of $c_{bi'}$. $H_a$ is already given and fixed, satisfying that the top edge of the first cluster on the left hand side is at the origin vertical level. The vertical positions (both in $H_a$ and $H_b$) have to satisfy two conditions:

(1) The top edge of every cluster has to be below the origin height level, i.e, 
\[h_{ai}-l_{ai}/2\geq0, i\in[1,m]\]
\[h_{bi'}-l_{bi}/2\geq0, i'\in[1,n]\]

(2) Overlap between two clusters is not allowed,i.e.
\[|h_{ai}-h{aj}|\geq (l_{ai}+{l_{aj}})/2, i,j\in[1,m], i\neq j\]
\[|h_{bi'}-h{bj'}\geq(l_{bi'}+l_{bj'})/2, i',j'\in[1,n], i'\neq j'\]
The distance for a given pair $\big(c_{ai},c_{bi'}\big)$ is defined as \[d_{ii'}= |h_{ai}-h_{bi'}|\times m_{ii'}\]
The optimization problem can be defined as:
\begin{definition}
Given $S_a$, $S_b$, $L_a$, $L_b$, $M$ and $H_a$, find a set of position $H_b$, s.t.
\[D=\sum_{i=1}^{m}\sum_{i'=1}^{n}d_{ii'}\] 
is minimized.
\end{definition}

\subsection{NP-hardness}
There is a branch of job scheduling problem called single machine earliness and tardiness job scheduling problem. Baker, et al.\cite{baker1990} had an important review on various models of earliness and tardiness job scheduling problems. \cite{agnetis2004} and \cite{ng2006} brought up the concept of single machine job scheduling with competing agents and gave the NP-hardness proof, but in their works the competing agents have different job queue (just share the same execution machine) so they are not perfect match to our problem.

The proof of NP-hardness of our weighted cluster ordering problem (WCOP) is shown as follows: First, we would like to show the NP-hardness of a simpler and specified version of our problem, then demonstrate the np-hardness of our weighted cluster ordering problem based on that simpler problem's np-hardness. 

An equivalent single machine earliness-tardiness job scheduling problem with weighted cluster ordering problem can be written as follows (denoted as ETSP): a set of $n$ jobs $\{J_1,J_2,...,J_n\}$ is given. There are m agents in total asking for the result from those jobs. For job $J_i$ , there are $m_i$ agents ask for its result with due date $\{d_{i1},d_{i2},...,d_{im_i}\}$ along with a set of positive weighted earliness and tardiness penalty parameter $\alpha_{i1},\alpha_{i2},...,\alpha_{im_i}$, where $m\geq m_i\geq1$ for any $i$. Each job has a process time $l_i$. The jobs cannot be preempted and once a job starts, it can not be interrupted till it's completion. Idle time between jobs is not allowed, and all jobs are available from time zero. The earliness-tardiness penalty for job $J_i$ is $\sum_{j=1}^{m_i} \alpha_{ij}|t_i - d_{ij}|$, where $t_i$ is the completion time for job $J_i$. The total penalty for ETSP can be written as $C = \sum_{i=1}^{n} \sum_{j=1}^{m_i} \alpha_{ij}|t_i - d_{ij}|$. Given job set J, agent set M, due dates D, earliness and tardiness parameter A, process time set L, the optimization problem for ETSP is: find an assignment $O$ of job set J, such that penalty C is minimized. The decision problem is: Given a cost C', does there exist an assignment $O$ of job set J, such that total penalty $C = C'$? 

The reduction is from a variation of single machine earliness-tardiness job scheduling problem (ETSP) defined as 1-ETSP: a set of $n$ jobs $\{J_1,J_2,...,J_n\}$ is given. ETSP's definition is given as follows: there are $m$ agents in total asking for the result from those jobs. For job $J_i$ , there are $m_i$ agents ask for its result with due date $\{d_{i1},d_{i2},...,d_{im_i}\}$ along with a set of positive weighted earliness and tardiness penalty parameter $\alpha_{i1},\alpha_{i2},...,\alpha_{im_i}$, where $m\geq m_i\geq1$ for any $i$. Each job has a process time $l_i$. The jobs cannot be preempted and once a job starts, it can not be interrupted till it's completion. Idle time between jobs is not allowed, and all jobs are available from time zero. The earliness-tardiness penalty for job $J_i$ is $\sum_{j=1}^{m_i} \alpha_{ij}|t_i - d_{ij}|$, where $t_i$ is the completion time for job $J_i$. The total penalty for ETSP can be written as $C = \sum_{i=1}^{n} \sum_{j=1}^{m_i} \alpha_{ij}|t_i - d_{ij}|$. Given job set J, agent set M, due dates D, earliness and tardiness parameter A, process time set L, the optimization problem for ETSP is: find an assignment $O$ of job set J, such that penalty C is minimized. The decision problem is: Given a cost C', whether there exists an assignment $O$ of job set J, such that total penalty $C = C'$. In 1-ETSP, all $m_i$ equal to 1 and all $\alpha$s are set to 1 as well. The decision problem of 1-ETSP is a proven NP-complete problem proved by by M. Garey, R. Tarjan and G. Wilfong \cite{garey1988}.

We create an instance of WCOP $W$ from the given 1-ETSP conditions $E$ as follows: midpoints of left hand side clusters are $d_i$ which are the deadlines for jobs. $C_b = {J_1,J_2,...,J_n}$ where $J_1,J_2,...J_n$ are jobs. $m_{ij}$ in WCOP is determined by whether the agent $M_i$ asks for job $J_j$ - if true, $m_{ij}=1$ and otherwise $m_{ij}=0$. $l_{bj}$ for the WCOP instance is set to $l_j$ in 1-ETSP. $l_{a1} = 2*d_{1}, l_{ai} = 2d_i - 2d_{i-1}-l_{i-1} (i>1)$. $d_{ij}$ in WCOP is $m_{ij}\times |t_i - d_{ij}|$.

(if) Suppose $W$ has a solution $W_c$ with minimum distance as $D_c$. For 1-ETSP, set the assignment $O_J = O_C$ where $O_C$ is the assignment of clusters in solution $W_c$. Since the WCOP instance $W$ is constructed as $d_{ij}$ (in WCOP) is $ m_{ij}\times |t_i - d_{ij}| $ (in 1-ETSP). and 1-ETSP's cost for each pair $c_{ij} $is defined as $m_{ij}\times |t_i  - d_{ij}|$ as well, within the same assignment ($O_J = O_C$), the two costs share the same value, i.e., the solution for the 1-ETSP instance is $D_c$.

(only if) Suppose $E$ has a solution $E_j$ with minimum cost as $C_j$ under the assignment $O_J$, for WCOP, we can set the assignment $O_C = O_J$. Since the WCOP instance $W$ is constructed as $d_{ij}$ (in WCOP) is $ m_{ij}\times |t_i - d_{ij}| $ (in 1-ETSP). and 1-ETSP's cost is defined as $m_{ij}\times |t_i  - d_{ij}|$ as well, with the same assignment ($O_C = O_J$), they share the same value, i.e., the minimum distance for the WCOP instance is $D_c = C_j$.
}

\subsection{Omitted Pseudocodes from Sections~\ref{sec:fixedorder}}\label{sec:app-fixedorder}

Algorithm~\ref{algo:fixedorder} describes details of the \fixedorder\ algorithm introduced in Section~\ref{sec:fixedorder}

\begin{algorithm}[ht]
{
\begin{algorithmic}[1]
\Require Size, coverage, and distance constraints $k, L, D$
\State{$\soln = \emptyset$.}
\For{$i = 1$ to $L$}
\State{Let $t_i$ be the $i$-th top element.}
\If{$t \in \cov(\soln)$}
		\State{\emph{/* $t_i$ is already covered, do nothing. */}}
\ElsIf{$|\soln| < k$}
		\If{$t_i$ is at distance $\geq D$ from all clusters in $\soln$}
			\State{$\soln = \soln \cup \{t_i\}$.}
		\Else
			\State{Let $P_D$ be the pairs of clusters $(C, \{t_i\})$ where $C \in \soln$ such that the distance between $C, t_i$ is $< D$.}
			\State{Perform $\updatesolution(\soln, P_D)$.}
		\EndIf
\Else\emph{/* merge $t_i$ with one of the clusters in $\soln$ */}
		\State{Let $P$ be the pairs of clusters $(C, \{t_i\})$, $\forall C \in \soln$.}
		\State{Perform $\updatesolution(\soln, P)$.}
\EndIf
\EndFor
\State\Return $\soln$
\caption{The \fixedorder\ algorithm}
\label{algo:fixedorder}
\end{algorithmic}
}
 \end{algorithm}

\subsection{Qualitative Evaluation with Other Related Approaches}\label{sec:expt-compare-others}
For the query in Example~\ref{eg:intro},  we here compare the results adapting approaches in three related papers \cite{JoglekarGP16, QinYC12, drosou2012disc} that are most relevant to this work (\ie, consider either summarization or diversification). 
Comparison with the MMR-based $\lambda$-parameterized diversification framework (no summarization) from  \cite{Vieira+ICDE2011} is also presented in this section. 
We use the brute-force or an efficient algorithms from these papers using $k = 4$, $D = 2$, and $L = 10$. As expected, the objectives proposed in these papers do not serve the purpose of summarizing aggregate query answers that we study in this paper.

\subsubsection[Comparison with smart drill-down]{Comparison with smart drill-down \cite{JoglekarGP16}}\label{expt:smart-drill-down} 
 The goal in \cite{JoglekarGP16} is to find an ordered set of $k$ rules (clusters with $*$ in our framework) $R$ with maximum score  ${\tt score(R)  = \sum_{r \in R} MCount(r, R) \times W(r)}$, where the marginal count ${\tt MCount(r, R)}$ denotes the number of tuples covered by $r$ but not covered by preceding rules in  $R$, and $W(r)$ denotes the number of non-$*$ attributes in $r$. The two parameters focus on diversity (by preferring rules with high ${\tt MCount}$) and good-ness (by preferring more specific rules) of rules. To compare it with our framework with {\em relevance} (summarizes aggregate results where tuples have values), we update the scoring function as ${\tt score(R)  = \sum_{r \in R} MCount(r, R) \times W(r) \times \val(r)}$, where $\val$ denotes the average value of the elements covered by $r$ that are uncovered so far by previous rules. Our framework also considered {\em coverage} of top-$L$ elements, so we evaluated a greedy algorithm (shown to perform well in \cite{JoglekarGP16}) on two inputs: (i) all elements in the aggregate query results, and (ii) top-$L$ elements and obtained the following results:
 
 \begin{center}
 {\centering
{\scriptsize
\begin{tabular}{|c|c|c|c||c|}
\hline
{\tt hdecade} & {\tt agegrp} & {\tt gender} & {\tt occupation} & {\tt \bf avg score}   \\\hline\hline
\rowcolor{Gray}
\multicolumn{5}{|c|}{{\bf Smart drill-down on top-10 elements}}\\\hline
* 	& 20s 	&  M 	& * 	& 3.47 \\\hline
* 	& 10s 	& M  	& Student 	& 3.63 \\\hline
1995 	& 30s 	& F  	& Educator 	& 3.70 \\\hline\hline
\rowcolor{Gray}
\multicolumn{5}{|c|}{{\bf Smart drill-down on all elements}}\\\hline
1995 	& *  	&  M 	& * 	& 3.18 \\\hline
* 	& 20s 	& M  	& *  	& 3.47 \\\hline
1995 	& * 	& F  	& * 	& 3.24 \\\hline
* 	& 10s 	& M  	& Student 	& 3.63 \\\hline
\end{tabular}\\
}
}
\end{center}

Comparing the above tables with Figure~\ref{fig:eg-intro-clusters} and \ref{fig:eg-intro-two-layer}, we see that the average score of the rules or clusters is much less than our output. Although the above rules capture {\tt (20, M)} that is prevalent in the top-10 results in Figure~\ref{fig:eg-intro-top-bottom}, it is not a characteristics of the top-valued elements; \eg, elements ranked 44, 46, 49 (in total 18 out of 50 tuples) satisfy this rule leading to a low score.

\subsubsection[Comparison with diversified top-$k$]{Comparison with diversified top-$k$ \cite{QinYC12}}\label{expt:dtopk}
Given the elements with scores, using our terminology, the goal in \cite{QinYC12} is to find at most $k$ elements such that the distance between any two chosen elements is at least $D$ and the sum of scores is maximized. This notion considers {\em diversity} and {\em relevance}. To also include coverage like ours, we ran it on top-$L$ elements and got the following answers (by a brute-force implementation since the goal was qualitative evaluation). We show both actual value of the representative elements ({\tt score}) as well as the average value of elements within distance $D-1$ of these chosen elements ({\tt avg score}) below since intuitively the chosen elements cover these elements.

 \begin{center}
 {\centering
{\scriptsize
\begin{tabular}{|c|c|c|c|c||c|}
\hline
{\tt hdecade} & {\tt agegrp} & {\tt gender} & {\tt occu.} & {\tt score} & {\tt \bf avg score}   \\\hline\hline
\rowcolor{Gray}
\multicolumn{6}{|c|}{{\bf Diversified top-$k$ on top-10 elements}}\\\hline
1975 & 20s 	&  M 	& Student 	& 4.24 & 3.71 \\\hline
1980 	& 20s 	& M  	& Programmer 	& 4.13 & 3.77 \\\hline
1980	& 10s 	& M  	& Student	& 3.97 & 3.69 \\\hline
1995 	& 30s 	& F  	& Educator	& 3.70 & 3.52 \\\hline
\end{tabular}\\
}
}
\end{center}

Although \cite{drosou2012disc} optimizes for a different optimization goal, the above results illustrate that it does not perform well for our goal of summarization of top-valued elements with diversity. {\em First,} the average values of the ``clusters'' formed by the representative elements shown above are less than the values given by our approach in Figure~\ref{fig:eg-intro-clusters}. These chosen values include the original top-10 elements within distance 1, but also include many elements with low values within such circle. For instance, the first element above covers 6 elements (including itself) and covers the 28th element with value 3.12. {\em Second,} the representative of a cluster is one of the original elements, and we are not getting summarized common properties of the cluster using $*$-attribute values. 

\cut{

1975	20	M	Student	4.23943661971831	avgVal:3.7097565996942751 (5 in total)
1980	20	M	Programmer	4.130434782608695	avgVal:3.7703056061603691 (6 in total)
1980	10	M	Student	3.9649122807017543	avgVal:3.6860677117440098 (5 in total)
1995	30	F	Educator	3.6984126984126986	avgVal:3.5225362725362725 (3 in total)

}
 
\subsubsection[Comparison with DisC diversity]{Comparison with DisC diversity \cite{drosou2012disc}}\label{expt:disc}
Given the distance parameter $D$ and a set of elements $P$, the goal of  \cite{drosou2012disc} is to find a {\em DisC diverse subset} $S^*$ of minimum size such that each element in $P$ is at most distance $D$ from some element in $S^*$, and no two elements in $S^*$ are within distance $D$ of each other.  Like \cite{QinYC12}, this diversity notion naturally includes summarization, since an element can be assigned to the {\em cluster} corresponding to a point in $S^*$ at distance $\leq D$. To also include the notion of {\em relevance} and {\em coverage of top-$L$}, we ran it too on top-$L$ elements (a brute-force implementation) and obtained the following results. Like \cite{QinYC12},  \cite{drosou2012disc} gives clusters with smaller scores than ours (.\eg, the first tuple ``covers'' eight elements where the last one is of rank 28 and has value 3.31), and do not exhibit the common properties by $*$ values. 

 \begin{center}
 {\centering
{\scriptsize
\begin{tabular}{|c|c|c|c|c||c|}
\hline
{\tt hdecade} & {\tt agegrp} & {\tt gender} & {\tt occu.} & score & {\tt \bf avg score}   \\\hline\hline
\rowcolor{Gray}
\multicolumn{6}{|c|}{{\bf DisC diversity on top-10 elements}}\\\hline
1980 & 20s 	&  M 	& Student 	& 3.91 & 3.81 \\\hline
1985 	& 10s 	& M  	& Student 	& 3.76 & 3.66 \\\hline
1995 	& 30s 	& F  	& Educator 	& 3.70 & 3.52 \\\hline
1985 	& 20s 	& M  	& Engineer	& 3.65 & 3.62 \\\hline
\end{tabular}\\
}
}
\end{center}

\subsubsection[Comparison with MMR-based $\lambda$-parameterized approach]{Comparison with MMR-based $\lambda$-parameterized approach \cite{Vieira+ICDE2011}}\label{expt:lambda}
Running an MMR-based $\lambda$-parameterized approach in \cite{Vieira+ICDE2011}, we obtain the following results for different values of $\lambda$. Note that this is a result diversification problem and not a result summarization problem, therefore does not include a coverage or summary of the top answers, or an average score. Below, higher $\lambda$ denotes higher diversity. 

 \begin{center}
 {\centering
{\scriptsize
\begin{tabular}{|c|c|c|c|c|}
\hline
{\tt hdecade} & {\tt agegrp} & {\tt gender} & {\tt occu.} & {\tt score}    \\\hline\hline
\rowcolor{Gray}
\multicolumn{5}{|c|}{{\bf $\lambda = 0$}}\\\hline
1975	& 20s & 	M	& Student& 	4.24 \\\hline
1980	& 20s	& M	& Programmer& 	4.13  \\\hline
1980    & 10s   & M& 	Student	& 3.96  \\\hline
1980 & 	20s& 	M	& Student& 	3.91  \\\hline
\rowcolor{Gray}
\multicolumn{5}{|c|}{{\bf $\lambda = 0.2, 0.5, 0.8$}}\\\hline
1975	& 20s& 	M& 	Student& 	4.24  \\\hline	
1980	& 20s& 	M& 	Programmer	& 4.13  \\\hline
1980	& 10s& 	M	& Student	& 3.96  \\\hline
1995& 	30s& 	F	& Educator& 	3.70  \\\hline
\rowcolor{Gray}
\multicolumn{5}{|c|}{{\bf $\lambda = 1.0$}}\\\hline
1985	& 20s	& M	& Programmer	& 3.86  \\\hline	
1980	& 20s	& M	& Engineer	& 3.83  \\\hline
1985	& 10s	& M	& Student	& 3.76  \\\hline
1995	& 30s	& F & 	Educator	& 3.70  \\\hline
\end{tabular}
}
}
\end{center}


\cut{
DisC distance = 1: 
1980	20	M	Student	3.907514450867052	avgVal:3.8096829243893948 (8 in total)
1985	10	M	Student	3.764705882352941	avgVal:3.6571964005179678 (5 in total)
1995	30	F	Educator	3.6984126984126986	avgVal:3.5225362725362725 (3 in total)
1985	20	M	Engineer	3.6451612903225805	avgVal:3.6199925013343510 (6 in total)
}

\cut{ table of neighbors
Disc D= 1 (1980 20 M Student)
1975;20;"M";"Student";4.2394366197183099
1980;20;"M";"Programmer";4.1304347826086957
1980;10;"M";"Student";3.9649122807017544
1980;20;"M";"Student";3.9075144508670520
1980;20;"M";"Engineer";3.8333333333333333
1985;20;"M";"Student";3.7631578947368421
1990;20;"M";"Student";3.3259668508287293
1995;20;"M";"Student";3.3127071823204420

DTOPK D=2 (1975 20 M Student)
1975;20;"M";"Student";4.2394366197183099
1980;20;"M";"Student";3.9075144508670520
1985;20;"M";"Student";3.7631578947368421
1990;20;"M";"Student";3.3259668508287293
1995;20;"M";"Student";3.3127071823204420
}

\cut{
\section{Min-Size Objective}
\red{UPDATE!!!!}
\textbf{\minsize}: Here the objective is to find a feasible solution $\soln$ that covers as few redundant elements outside the required top-$L$ elements as possible, \ie, $\soln$ should minimize $|\cov(\soln) \setminus S_L^*|$.
\par
\begin{example}\label{eg:two-objectives}
Consider two tuples $t_1 = (a_1, b_1, c_1), t_2 = (a_1, b_1, c_2)$ with value 100, $M = 100$ tuples $r_1 = (a_1, b_1, c_3), \cdots, r_{M} = (a_1, b_1, c_{M+2})$ each having value $99$, and a single tuple $s = (a_2, b_1, c_0)$ with value 0. $L = 2, k = 1$, \ie, the goal is to cover the top elements $t_1, t_2$ with one cluster (therefore $D$ is immaterial).  For the \maxval\ objective, the optimal solution is the cluster $(a_1, *, *)$ with value $99.02$, but it selects 100 redundant elements (although each of them has very high value as well). For the \minsize\ objective, the optimal solution is the cluster $(*, b_1, *)$ with only one redundant element, but the average value is smaller: $66.67$.
\end{example}

\begin{proposition}\label{prop:trivial}
(1) For \maxval, the trivial solution gives an (multiplicative) approximation ratio of $n$, where $n$ is the total number of tuples under consideration in the output of the aggregate query $Q$, and $k$ is the upper bound on the number of clusters. (2) For \minsize, the trivial solution gives an additive approximation of $n - L$. Further, both bounds are tight.
\end{proposition}
\cut{ 
\begin{proof}
(1) {\bf \maxval:} 
Suppose an optimal feasible solution $\soln$ covers elements $\cov(\soln)$, and the trivial solution covers the set of all elements $S$. The ratio of their values is $(\frac{\sum_{t \in \cov(\soln)}\val(t)}{|\cov(\soln)|})/(\frac{\sum_{t \in S}\val(t)}{n})$ $\leq n$, since $\cov(\soln) \subseteq S$ and $|\cov(\soln)| \geq 1$.  
\par
To see that the bound is tight, consider $L = k = 1$, $D = 0$, and only one attribute where each element in $S$ has different value (so the options are the singleton clusters or the trivial solution). Further, the max element has weight 1, and the rest has weight 0. The optimal solution outputs the max element as a singleton cluster with value $1$, whereas the trivial solution has value $\frac{1}{n}$, which is worse by a factor of $n$.
\par
(2) {\bf \minsize:} There are  $n - L$ redundant elements in the trivial solution, whereas the optimal solution may have 0 (therefore, no multiplicative approximation is possible). 
\par
For a tight example, consider $D = 0$ and $L = k$, the top-$k$ original elements form an optimal solution with zero redundant elements (see \cite{fullversion}), 
whereas the trivial solution includes $n - L$ redundant elements.
\end{proof}
}

\par
Similar to the monotonic distance function, the \minsize\ objective (that aims to minimize the \emph{number} of redundant elements in a solution $\soln$) exhibits certain monotonicity properties (formal statement and the proofs are in \cite{fullversion}:
(A) is a cluster in a set of clusters $\soln$ is replaced by one of its ancestors, the number of redundant elements in $\soln$ cannot decrease; and (B) when $k \geq L$, the number of redundant elements in $\soln^*_{\ell}$ is not more than that in $\soln^*_{\ell+1}$. However, (C) when $k < L$,  the number of redundant elements in $\soln^*_\ell$ can be more than that in $\soln^*_{\ell+1}$; and (D) the global optimal solution for the overall \minsize\ objective for both $k \geq L$ and $k < L$ may be spread across multiple levels. 
}

\begin{figure}[ht]
\centering
\includegraphics[width=0.7\linewidth]{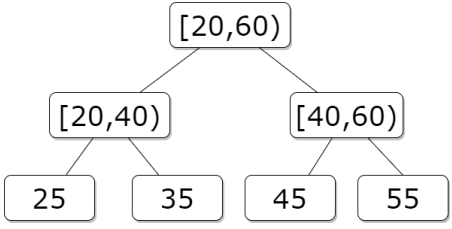}
\caption{Range tree example on age}
\label{fig:rangetreeage}
\end{figure} 

\begin{figure}[ht]
\centering
\includegraphics[width=0.9\linewidth]{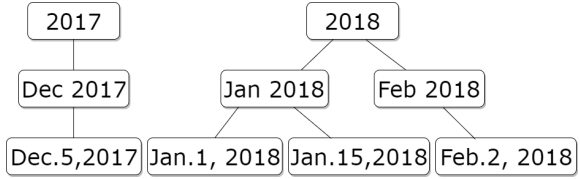}
\caption{Range tree example on date}
\label{fig:rangetreedate}
\end{figure} 

\begin{figure}[h!]
\includegraphics[width=0.9\linewidth]{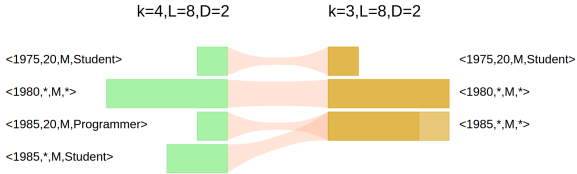}
\caption{Visualizing changes between two consecutive clustering in Example~\ref{eg:intro-clusters}: from $k=4$ to $k=3$.}
\label{fig:intro-vizcompare}
\end{figure} 

\subsection{Extension for range values}\label{app:range}
We introduced \emph{don't care} value ($*$) in Section~\ref{sec:prelim}. However, for numerical attributes (such as age) and text attributes (such as date) with given hierarchies,  ranges values can be an interesting option to present. Aiming at this goal, we set up a tree structure to represent the given hierarchy. Examples of trees for numerical attributes and text attributes are shown in Figure~\ref{fig:rangetreeage} and Figure~\ref{fig:rangetreedate}. Individual values of the attribute serve as leaves, and possible ranges serve as nodes of the tree structure. 
\par
To get the proper range required for building and calculating clusters, we just need to get the LCA of candidate leaves and nodes in the tree structure. The LCA node is unique for certain candidates and can be found directly in the tree. For example, for the example given in Figure~\ref{fig:rangetreeage}, to get the union of $[20,40)$ and $55$, we need to find the LCA node of them, which is $[20,60)$. A $log(n)$ algorithm\cite{harel1984fast} is available to find the LCA node, where $n$ is the total number of nodes. 
\par
Our framework can currently handle a given concept hierarchy in the form of a tree. How to automatically build such concept hierarchies will be an orthogonal future research direction.

\ansc{
\subsection{Details for Visualizing Successive Changes}\label{sec:visualization}
Figure~\ref{fig_sim} 
shows how a user can inspect the clusters and the elements they contain in the form of tables in our framework.
However, for interactive exploration, it is also important that the user can see how the old solution changes to a new solution (if $k,  D, $ or $L$ is updated).
When the user explores the solution space updating an input parameter, in some scenarios the solution can change marginally, whereas in others it can change drastically. To help the user understand how two consecutive solutions compare with each other, our framework produces a visualization showing the old and new solution, and how the tuples in these clusters are redistributed (also the size of the cluster, the fraction of top-$L$ tuples contained in them, etc.). For example, Figure~\ref{fig:intro-vizcompare} shows that  if $k = 4$ in Example~\ref{eg:intro-clusters} is changed to  $k = 3$, then two of the clusters will merge to form the new solution. 

\cut{
\begin{itemize}[leftmargin=*]
\itemsep0em
\item To achieve a \emph{clean} visualization, we formulate an optimization problem that minimizes the crossing of the bands showing flow of tuples from old to new clusters.
\item We give an optimal poly-time algorithm by reducing this problem to min-cost perfect matching in bipartite graphs. 
\end{itemize}
}

To support this comparison, we display the successive solutions and their overlaps (Section~\ref{sec:comparison_gui}), and formulate it as an optimization problem for a clean display (Section~\ref{sec:comparison_opt}).

\subsubsection{Visualizing Changes}\label{sec:comparison_gui}
An example visualization is shown in Figure~\ref{fig:viz-clean}. Each box corresponds to a cluster in the solution. The left hand side boxes (in green) are result clusters for the previous run, and the right hand side boxes (in yellow) are clusters under the current parameters. The width of each box is proportional to  the number of tuples contained in the cluster. Clusters connected with bands or ribbons contain shared tuples. The thicker a band is in the middle, the more common tuples it contains \footnote{This visualization shows some similarity with SANKEY diagrams that are widely used in the field of energy and material flow management\cite{schmidt2008sankey}. 
However, popular SANKEY diagram libraries (\eg, d3-sankey) focus more on managing placement among columns (horizontal positioning). For vertical positioning, they do multiple iterations to re-position objects to achieve a satisfying visualization. We do not need to consider multiple columns, and our visualization focuses on vertical positioning and ordering of objects}. The parts of the boxes in darker color correspond to the fraction of top-$L$ tuples contained in these clusters. Hovering the mouse over different regions shows details.

\begin{figure}[h!]
\includegraphics[width=0.9\linewidth]{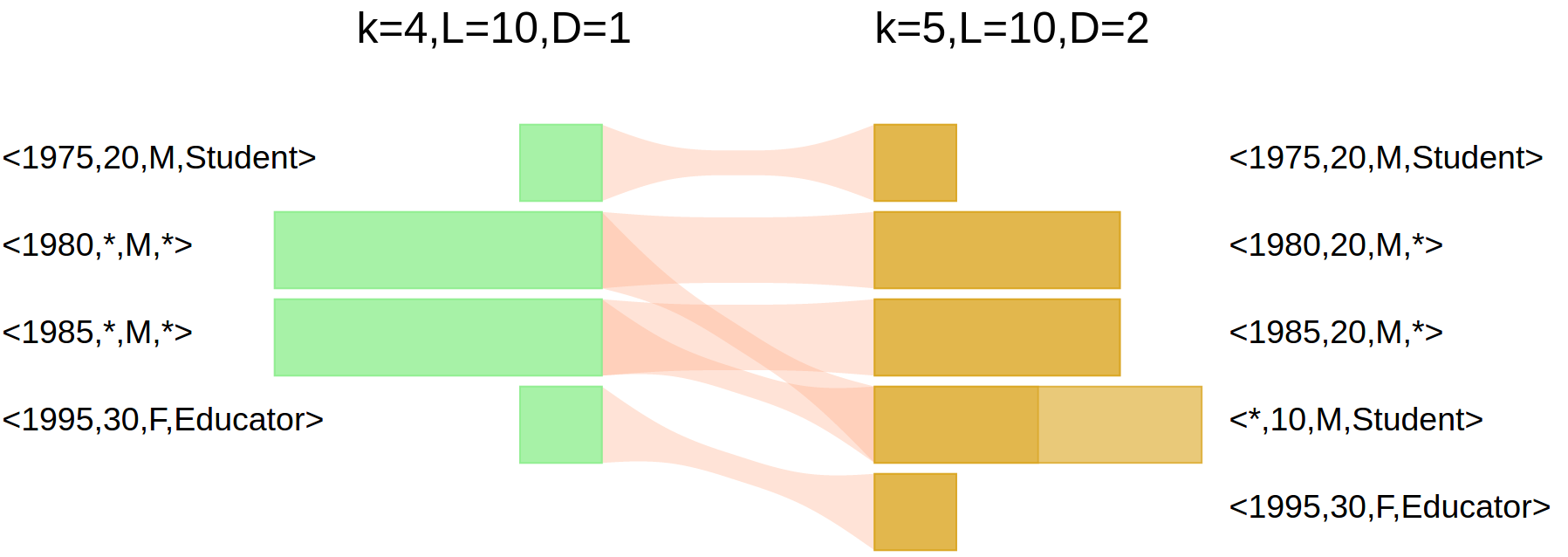}
\caption{\ansc{Solution comparison: clean visualization}}
\label{fig:viz-clean}
\end{figure} 

\begin{figure}[h!]
\includegraphics[width=0.9\linewidth]{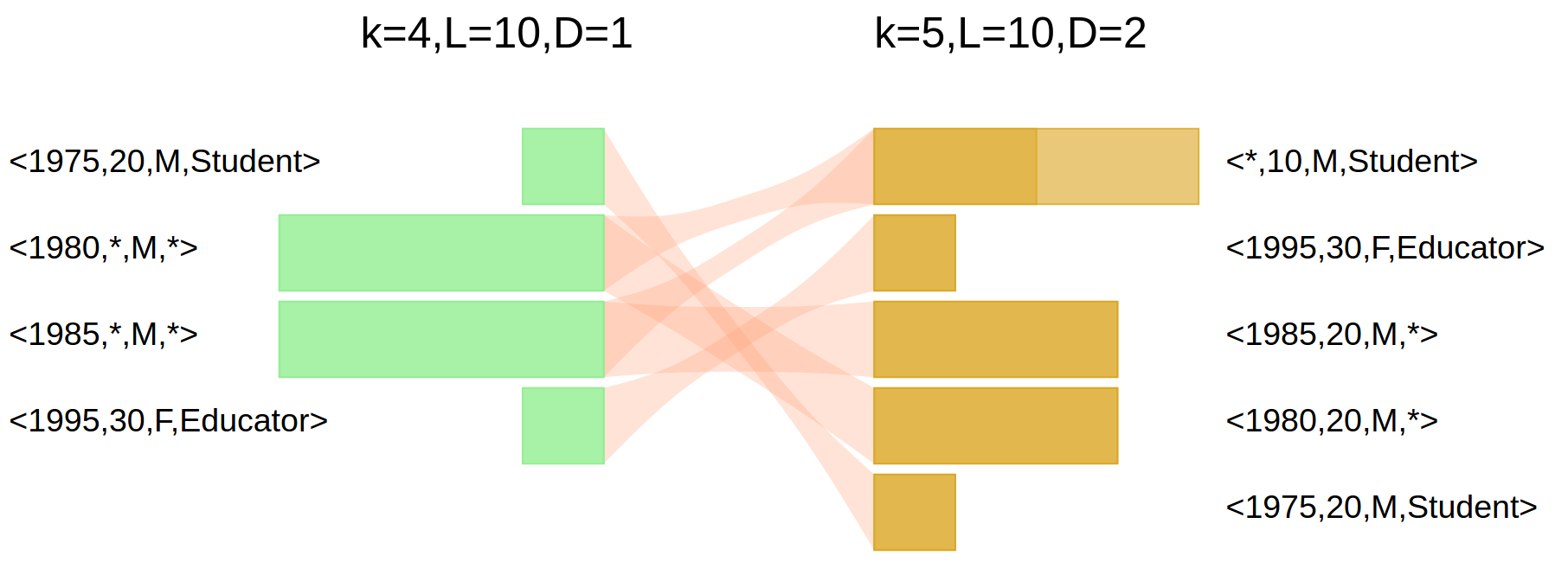}
\caption{\ansc{Solution comparison: cluttered visualization}}
\label{fig:viz-chaos}
\end{figure}

\subsubsection{Optimizing Cluster Placement}\label{sec:comparison_opt}
A careful ordering of the cluster boxes helps in displaying a cleaner visualization comparing two consecutive executions. For instance, Figure~\ref{fig:viz-chaos} is an example where placement of the clusters (boxes) leads to more \emph{crossing} of the bands, whereas  Figure~\ref{fig:viz-clean} shows a better placement. The placement can have more effect when the number of clusters (the value of $k$) is larger. Therefore, we formulate the cluster placement problem as an optimization problem that intends to minimize crossing of the bands as follows.

\textbf{Optimization problem.~~} Let $\soln_a$ and $\soln_b$ denote the old and new set of clusters respectively in two consecutive runs containing clusters $\soln_a = \big\{ c_{a1},c_{a2},...,c_{am} \big\}$ and $\soln_b = \big\{ c_{b1},c_{b2},...,c_{bn} \big\}$. Let $m_{ij}$ denote the number of shared tuples between clusters $c_{ai}$ and $c_{bj}$, and $M$ denote the set of all such $m_{ij}$ values.
\cut{
The number of tuples in each cluster in $S_a$ is $L_a=\big\{ l_{a1},l_{a2},...,l_{am}\big\}$ and $L_b=\big\{ l_{b1},l_{b2},...,l_{bn} \big\}$.
The mapping between two cluster sets is built in the form of $m\times n$ matrix $M$, where $m_{ij}$ shows the number of shared tuples between cluster $c_{ai}$ and cluster $c_{bj}$. 
}
\cut{
The visualization has two columns of boxes representing two cluster sets - previous set on the left and new set on the right -  with bands from left to right connecting both sides. All clusters (boxes) share the same unit height $1$.
} 
We assume that the first clusters in both sides are placed at the same vertical position, and there are no gaps or overlaps  between two adjacent clusters on either side. Consider two orderings of clusters on the left hand side and right hand side respectively  in terms of their starting positions $P_a = {p_{a1},p_{a2},..,p_{am}}$ (a permutation of $[0, m-1]$), and $P_b = {p_{b1},p_{b2},...,p_{bn}} $ (a permutation of $[0, n-1]$). We define a weighted \emph{earth mover's distance}\cite{rubner2000earth} $d_{ij}$ between one left cluster $c_{ai}$ and one right cluster $c_{bj}$ to evaluate the amount of crossing due to a single band (from $c_{ai}$ to $c_{bj}$) as:
\[ d_{ij} = m_{ij}\times|p_{ai}-p_{bj}|\]
Since $\soln_a$ is the previous cluster set, $P_a$ is fixed and given. The goal of the optimization problem for this visualization is to output a good ordering $P_b$ for $\soln_b$, and is formulated as follows:

\cut{
For the ordering, we have
\[ \forall  o_{ai}\in O_a, o_{bj} \in O_b, o_{ai}\in N^0, o_{bi}\in N^0\]
\[ \forall o_{ai}\in O_a, \forall o_{bj} \in O_b, 0\leq o_{ai}<m, 0\leq o_{bj}<n\]
\[ \forall i'\neq i, o_{ai}\neq o_{ai'}\]
\[ \forall j'\neq j, o_{bj}\neq o_{bj'}\]
}

\begin{definition}\label{def:opt-place}
\textbf{Optimization for placement of clusters.~~} Given old and new clusters $\soln_a$, $\soln_b$, 
their overlaps $M$, and ordering $P_a$ of the clusters on the left hand side, find an ordering $P_b$ of the clusters on the right hand side that minimizes
$D=\sum_{i=1}^{m}\sum_{j=1}^{n}d_{ij}$.
\end{definition}

\textbf{Optimal solution using bipartite matching.~}
The above optimization problem can be reduced to the minimum cost perfect matching problem in a complete bipartite graph as follows. We form a weighted complete bipartite graph $G(U \cup V, E)$, where the $n$ nodes in $U$ correspond to the $n$ clusters in $\soln_b$, and the $n$ nodes in $V$ correspond to the positions $1 \cdots n$. An edge $(u, v)$ denotes the possibility when cluster $c_{bu}$ is placed in position $v \in [1, n]$. The weight of the edge $(u, v)$ is the cost $\sum_{i=1}^md_{iu}$ = $\sum_{i=1}^m (m_{iu} \times |p_{ai} - (v-1)|)$ (if $c_{bu}$ is placed in position $v$, there will be $v-1$ clusters before it, and its position will be $v-1$), \ie, the total contribution of cluster $c_{bu}$ in the optimization objective in Definition~\ref{def:opt-place} if it is placed in position $v \in [1, n]$. Since $\soln_a$ and $P_a$ are given and fixed, this weight can be computed in polynomial time as a precomputation step for each cluster in $\soln_b$ and each position. A matching gives a positioning $P_b$ on the clusters in $\soln_b$, and the minimum cost perfect matching, which has a polynomial time algorithm\cite{geomans2009}, gives an optimal solution to our optimization problem. 
\par
We also studied an alternative formulation of the above optimization problem, where instead of the \emph{width} of the boxes being proportional to the number of tuples in clusters, the \emph{height} is proportional to the number of tuples, \ie, the starting positions $P_a$ and $P_b$ are no longer permutations of $[0, m-1], [0, n-1]$ but also depend on the height of the individual clusters. However, we found that this variant is NP-hard by a reduction from the \emph{earliness-tardiness job scheduling problem}\cite{garey1988one}. The proof and details of this alternative formulation are deferred to the extended version of this paper.

\subsubsection{Performance of Comparison Visualization}
\par
 We tested the running time for calculating and generating the visualization for both the applied algorithm~\cite{geomans2009} and the brute-force algorithm under $k=10, L=15,20$ and $D=2$ in Movielens dataset with $N=2087$. In this test, both algorithms have similar figure drawing time (~20ms) since they have identical data for figure generation (both of them get the optimal answer), but difference between the calculation time is enormous---the bipartite matching algorithm takes less then 10ms while brute-force takes more than 2s.
\par
The quality of visualizations produced by bipartite matching is shown in Figure~\ref{fig:vizexpts-distancecompare} and Figure~\ref{fig:vizexpts-crosscompare}, as the ``matched visualization,'' in comparison to the ``default visualization.''  For the default visualization, we use the sequences of clusters as returned by successive runs of the clustering algorithm, where clusters for both sides are ordered by value. Parameter sets are $D=2, (k,(L_1,L_2))=(5,(8,10)), (10,(15,20))$ and $(20,(30,40))$ where $L_1$ and $L_2$ are $L$s for the two answer sets.  Figure~\ref{fig:vizexpts-distancecompare} shows that bipartite matching is very effective in reducing the ``clutter'' in visualization, as measured by our distance metric in Section~\ref{sec:comparison_opt} (note that the distances are generally not comparable among different $k$s).  In addition to this metric, we also counted the number of crossings among bands (connections between left and right clusters) and plot the result in Figure~\ref{fig:vizexpts-crosscompare}.  It is clear that our approach also succeeds in cutting down the amount of crossings. 

\begin{figure}[t]
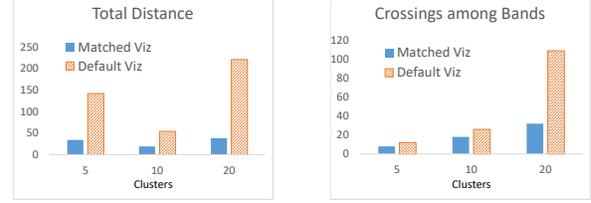

\centering
\begin{minipage}[t]{.47\textwidth}
  \centering
\subfloat[{\scriptsize Total distance between matched visualizations and default visualizations}]{
\includegraphics[width=0.42\linewidth]{figures/CompareVizCleanChaosDistance.pdf}~~~~
\label{fig:vizexpts-distancecompare}}
\hspace{0.02\linewidth}
\subfloat[{\scriptsize Amount of crossings among bands between matched visualizations and default visualizations}]{
\includegraphics[width=0.42\linewidth]{figures/CompareVizCleanChaosCross.pdf}~~~~
\label{fig:vizexpts-crosscompare}}
\vspace{-0.3cm}
\end{minipage}\hfill
\caption{Experiment on Comparison Visualizations}
\vspace{-0.2cm}
\label{fig:vizexpts}
\end{figure}

\subsubsection{Informal User Study}

As part of the informal user study we conducted at SIGMOD 2018 mentioned in Section~\ref{sec:user-study}, the satisfactory towards this comparison view is shown as below.
\begin{center}\small
  \begin{tabular}{r|c|c|c|c}
    Did you find the
    & Yes, very & Yes & Not that & Not\\
    visualizations helpful?
    & much & & much & at all\\\hline
    For parameter selection & $4$ & $13$ & $1$ & $0$\\
    For comparing old/new clusters & $7$ & $11$ & $0$ & $0$
  \end{tabular}
\end{center}

It suggests that the comparison view is helpful and useful in real life scenario. One constructive suggestion upon this view is that it would be better if the user can view tuples inside each cluster and band after clicking.

}

\cut{
\subsection{Discussion on Alternative Formulation for Comparison View}
This section will discuss the other problem formulation for the comparison view that mentioned in Section~\ref{sec:comparison_opt}, and show the NP-hardness of the alternative formulation. A change in basis is that the new problem considers the height of the box, instead of the width applied in Section~\ref{sec:visualization}, to evaluate the number of contained tuples. This problem is defined as "Weighted cluster ordering problem" (denoted as WCOP). Its formal definition and simplified problem settings are demonstrated as follows.

\subsubsection{Formal definition}
The two cluster sets are $S_a$ and $S_b$, where $S_a$ represents for the former cluster set and $S_b$ is the new cluster set. Clusters belong to the former cluster set are $S_a = \big\{ c_{a1},c_{a2},...,c_{am} \big\}$ while the new clusters are  $S_b = \big\{ c_{b1},c_{b2},...,c_{bn} \big\}$. The number of tuples in each cluster in $S_a$ is $L_a=\big\{ l_{a1},l_{a2},…,l_{am}\big\}$ and $L_b=\big\{ l_{b1},l_{b2},…,l_{bn} \big\}$ is for $S_b$. The mapping between two cluster sets is built as a $m\times n$ matrix $M$. The value of $m_{ij}$ means the number of shared tuples between cluster $c_{ai}$ and cluster $c_{bj}$. The height unit is $c$. The vertical position of $S_a$ is $H_a=\big\{h_{a1},h_{a2},…,h_{am}\big\}$ and the position of $S_b$  is $H_b=\big\{h_{b1},h_{b2},…,h_{bn}\big\}$. $h_{ai}$ is the vertical position of the center of $c_{ai}$ and $h_{bi'}$ is the vertical position of the center of $c_{bi'}$. $H_a$ is already given and fixed, satisfying that the top edge of the first cluster on the left hand side is at the origin vertical level. The vertical positions (both in $H_a$ and $H_b$) have to satisfy two conditions:
\par
(1) The top edge of every cluster has to be below the origin height level, i.e, 
\[h_{ai}-l_{ai}/2\geq0, i\in[1,m]\]
\[h_{bi'}-l_{bi}/2\geq0, i'\in[1,n]\]
\par
(2) Overlap between two clusters is not allowed,i.e.
\[|h_{ai}-h{aj}|\geq (l_{ai}+{l_{aj}})/2, i,j\in[1,m], i\neq j\]
\[|h_{bi'}-h{bj'}\geq(l_{bi'}+l_{bj'})/2, i',j'\in[1,n], i'\neq j'\]
\par
The distance for a given pair $\big(c_{ai},c_{bi'}\big)$ is defined as \[d_{ii'}= |h_{ai}-h_{bi'}|\times m_{ii'}\]
The optimization problem can be defined as:
\begin{definition}
Given $S_a$, $S_b$, $L_a$, $L_b$, $M$ and $H_a$, find a set of position $H_b$, s.t.
\[D=\sum_{i=1}^{m}\sum_{i'=1}^{n}d_{ii'}\] 
is minimized.
\end{definition}

\subsubsection{NP Hardness and Complexity}
There are two implied conditions from the definition of vertical positions for this problem: The starting points for both cluster sets are within the same vertical position; There is no interval/blank space allowed between two adjacent clusters in both cluster sets.
\par
Job scheduling problems have similar settings with our problem. There is a branch of job scheduling problem called single machine earliness and tardiness job scheduling problem. Baker, et al.\cite{baker1990sequencing} had an important review on various models of earliness and tardiness job scheduling problems. \cite{agnetis2004scheduling} and \cite{ng2006note} brought up the concept of single machine job scheduling with competing agents and gave the NP-hardness proof. But in their works the competing agents have different job queue (just share the same execution machine) so they are not perfect match to our problem.
\par
The roadmap of showing NP-hardness of our weighted cluster ordering problem (WCOP) is shown as follows: First, we would like to show the NP-hardness of a simpler and specified version of our problem (\emph{1-ETSP}, will be introduced later), then demonstrate the np-hardness of our weighted cluster ordering problem based on that simpler problem's np-hardness. 
\par
An equivalent single machine earliness-tardiness job scheduling problem with weighted cluster ordering problem can be written as follows (denoted as ETSP): a set of $n$ jobs $\{J_1,J_2,...,J_n\}$ is given. There are $m$ agents in total asking for the result from those jobs. For job $J_i$ , there are $m_i$ agents ask for its result with due date $\{d_{i1},d_{i2},...,d_{im_i}\}$ along with a set of positive weighted earliness and tardiness penalty parameter $\alpha_{i1},\alpha_{i2},...,\alpha_{im_i}$, where $m\geq m_i\geq1$ for any $i$. Each job has a process time $l_i$. The jobs cannot be preempted and once a job starts, it can not be interrupted till its completion. Idle time between jobs is not allowed, and all jobs are available from time zero. The earliness-tardiness penalty for job $J_i$ is $\sum_{j=1}^{m_i} \alpha_{ij}|t_i - d_{ij}|$, where $t_i$ is the completion time for job $J_i$. The total penalty for ETSP can be written as $C = \sum_{i=1}^{n} \sum_{j=1}^{m_i} \alpha_{ij}|t_i - d_{ij}|$. Given job set $J$, agent set $M$, due dates $D$, earliness and tardiness parameter set $A$, process time set $L$, the optimization problem for ETSP is: find an assignment $O$ of job set $J$, such that penalty $C$ is minimized. The decision problem is: Given a cost $C'$, does there exist an assignment $O$ of job set $J$, such that total penalty $C = C'$? 
\par
The proof of the equivalence between the optimization problem of ETSP and WCOP is shown as follows: we will prove that the solver for the optimization problem of ETSP can give the correct answer for WCOP, and the opposite direction is similar. Denote $f(J,D,A,L)$ as the solver for the optimization problem of ETSP. Since it can be concluded from \cite{garey1988one} that whether using mid-time (changing the due dates accordingly) or the completion time doesn't make a difference to either the problem or the solution. Here $D$ is the desired "middle time" for each job instead of the "completion time". In WCOP (using the same denotation as WCOP's definitions), the input will change to $f(S_b,D'(M,S_a,L_a),M,L_b)$, where $D'(M,S_a,L_a)$ means that the concept of "mid positions" of all clusters in $S_b$ can be determined by $M,S_a$ and $L_a$. Let $H = f(S_b,D',M,L_b)$ be the vertical middle position set of $S_b$ which is given by the solver. If the position set is not correct, i.e., there exists another position set $H'$ of $S_b$ which has smaller total distance than $H$, by setting $J = S_b, D = D', A = M, L = L_b$, the penalty of the assignment $O$ (the same as $H$ since all parameters are the same) will be higher than the assignment of $O'$ (got from $H'$), which contradicts with the assumption that solver $f$ provides the optimized assignment for ETSP. Thus, the solver of the optimization problem of ETSP will work on WCOP and vise versa. As a result, the optimization problem of ETSP and WCOP are equivalent.
\par
A simplified version of ETSP is 1-ETSP, where all $m_i$ exactly equals to 1 (instead of $m\geq m_i\geq1$ in ETSP) and all $\alpha$s are set to 1 as well. This means in 1-ETSP, each job in job set $J$ has only one distinctive due date and shares the same earliness-tardiness parameter $\alpha=1$. 
\par
The decision problem of 1-ETSP is a proven NP-complete problem proved by by M. Garey, R. Tarjan and G. Wilfong \cite{garey1988one}. The optimization problem of 1-ETSP is at least as hard as the decision problem, as a result, 1-ETSP's optimization problem is NP-hard. To show the NP-hardness of WCOP, we create an instance of WCOP $W$ from the given 1-ETSP conditions $E$ as follows: midpoints of left hand side clusters are $d_i$ which are the deadlines for jobs. $C_b = {J_1,J_2,...,J_n}$ where $J_1,J_2,...J_n$ are jobs. $m_{ij}$ in WCOP is determined by whether the agent $M_i$ asks for job $J_j$ - if true, $m_{ij}=1$ and otherwise $m_{ij}=0$. $l_{bj}$ for the WCOP instance is set to $l_j$ in 1-ETSP. $l_{a1} = 2*d_{1}, l_{ai} = 2d_i - 2d_{i-1}-l_{i-1} (i>1)$. $d_{ij}$ in WCOP is $m_{ij}\times |t_i - d_{ij}|$.
\par
(if) Suppose $W$ has a solution $W_c$ with minimum distance as $D_c$. For 1-ETSP, set the assignment $O_J = O_C$ where $O_C$ is the assignment of clusters in solution $W_c$. Since the WCOP instance $W$ is constructed as $d_{ij}$ (in WCOP) is $ m_{ij}\times |t_i - d_{ij}| $ (in 1-ETSP). and 1-ETSP's cost for each pair $c_{ij} $is defined as $m_{ij}\times |t_i  - d_{ij}|$ as well, within the same assignment ($O_J = O_C$), the two costs share the same value, i.e., the solution for the 1-ETSP instance is $D_c$.
\par
(only if) Suppose $E$ has a solution $E_j$ with minimum cost as $C_j$ under the assignment $O_J$, for WCOP, we can set the assignment $O_C = O_J$. Since the WCOP instance $W$ is constructed as $d_{ij}$ (in WCOP) is $ m_{ij}\times |t_i - d_{ij}| $ (in 1-ETSP). and 1-ETSP's cost is defined as $m_{ij}\times |t_i  - d_{ij}|$ as well, with the same assignment ($O_C = O_J$), they share the same value, i.e., the minimum distance for the WCOP instance is $D_c = C_j$.
\par
As a result, same as 1-ETSP, WCOP is an NP-hard problem.
\cut{
We introduced \emph{don't care} value ($*$) in Section~\ref{sec:prelim}. But for continuous attributes like "age", range values upon these attributes may be interesting to investigate. As a result, we come up with a hierarchy concept that may be capable of dealing with range values. For an attribute $a$, we build a range tree in a top-down manner for that attribute, with the top vertex as the median value of $a$ among all top-$L$ elements. For example, if the values for ''range'' attribute are $20,20,30,40,50$, then 30 would be chosen as the top of this range tree. Next, we develop the left child as the median of the smaller half on $a$ in top-$L$ elements and the right child as the median of the larger half on $a$ in top-$L$ elements. Keep developing the left child in the left subtree until hit the smallest value in the top-$L$ elements ($min_a$) and right child in the right until hit the largest value ($max_a$) in the top-$L$ elements. Then fill in the rest of the tree with possible values in \textbf{all} elements for attribute $a$ between $min_a$ and $max_a$. When choosing vertex, values in top-$L$ elements have higher priority, which means values outside top-$L$ will not be chosen until all top-$L$ elements' values on $a$ are used up. An example is shown in 
\par
For ranges with hierarchical relations, e.g., $[20,30]$ is contained by $[20,40]$, the LCA node in the tree for the shorter range (Node ''20'' in the second layer) will be in the subtree of the LCA node for the larger range (Node ''30'' in the top layer). Getting the LCA range of two arbitrary ranges depends on the relationship between the two LCA nodes of two ranges: if the two nodes are not the same and neither resides in the other's subtree, then the LCA of the two ranges is their direct combination. Otherwise, the LCA range of the two ranges can be obtained directly from the leftmost leaf to the rightmost leaf of the four leaves in two ranges. For example, in Fig~\ref{fig:rangetree}, for the LCA range of [20,30] and [40,50], since their LCA nodes (''20'' and ''40'' in the second layer) are neither the same nor in the subtree of each other, the LCA range is the direct combine ([20,30]$\cup$[40,50]). For [20,40] and [30,50], since they share a common LCA node (''30'' in the top layer), the LCA range is from the leftmost leaf of the four leaves (''20'') to the rightmost leaf (''50''), which is [20,50].
\par
The reason for building the range tree is that since all chosen clusters in the final answer have to contain at least one of the top-$L$ (otherwise it is impossible to be merged from or merged into), values on $a$ which is smaller than $min_a$ or greater than $max_a$ can be discarded. Besides, it can ensure that possible beginning and ending leaves can not be on the same side of a common last-but-first layer ancestor. In this way, all possible ranges can be acquired by the range tree and used for further calculations.
}

\begin{figure}[ht]
\includegraphics[width=0.9\linewidth]{figures/randomfixed.png}
\caption{Random-Order vs. Fixed-Order (Section~\ref{sec:app-exprandomorder})}
\vspace{-0.5cm}
\label{fig:randomfixed}
\end{figure} 
}

\cut{
\subsection{Fixed-Order vs. Random-Order}\label{sec:app-exprandomorder}

\randomorder\ is a similar algorithm with the \fixedorder\ algorithm except that \randomorder\ randomly picks an element inside top-$L$ each round. For the sake of comparing the data quality between the two algorithms, we record the results for \randomorder\ recursively and build a scatter plot comparing \fixedorder\ and \randomorder\ under different parameters. The parameters are $k=20, D=2, N=2087$, while $L=100,200$ and $500$. for each combination, we run \randomorder\ for 100 times and record all output average values and add in the values of \fixedorder\ (\red{the plot is shown in the full version~\cite{fullversion} due to space constraints}).
\red{Our result shows that} when $L$ gets larger, the average values of \randomorder\ is increasingly higher than the results given by \fixedorder. However, the deviation for \randomorder\ is large as well. As a result, in cases where parameters are small so that the running time would be small as well, \randomorder\ phase can replace \fixedorder\ phase in \hybrid\ (into \rhybrid). The \randomorder\ could run several times, pick the best run followed by the \bottomup\ phase.
}

\cut{
\subsection{GUI Snapshot}\label{sec:app-gui}

A snapshot of graphical user interface after the query is run and the clusters are displayed is shown in Figure~\ref{fig:snapshot}. The user can put in a SQL query and their desired $k$,$D$ and $L$ values to get the output clusters.
}

\cut{
\begin{figure}[t]
\includegraphics[width=0.75\linewidth]{figures/randomfixed.png}
\caption{Random-Order compared with Fixed-Order}
\vspace{-0.5cm}
\label{fig:randomfixed}
\end{figure} 

\subsection{Discussion on \emph{Fixed-Order} vs. \emph{Random-Order} }\label{sec:app-exprandomorder}

\randomorder\ is a similar algorithm with the \fixedorder\ algorithm expect that \randomorder\ randomly picks an element inside top-$L$ each round. For the sake of comparing the data quality between the two algorithms, we record the results for \randomorder\ recursively and build a scatter plot comparing \fixedorder\ and \randomorder\ under different parameters. The parameters are $k=20, D=2, N=2087$, while $L=100,200$ and $500$. for each combination, we run \randomorder\ for 100 times and record all output average values and add in the values of \fixedorder\ to build the plot shown in Figure~\ref{fig:randomfixed}. It can be shown that when $L$ gets larger, the average values of \randomorder\ is increasingly higher than the results given by \fixedorder. However, the deviation for \randomorder\ is large as well. As a result, in cases where parameters are small so that the running time would be small as well, \randomorder\ phase can replace \fixedorder\ phase in \hybrid\ (into \rhybrid). The \randomorder\ could run several times, pick the best run followed by the \bottomup\ phase.
}

\begin{table*}
  \centering\small\setlength\fboxsep{1pt}
  \begin{tabular}{|cr|cc|cc|cc|}
    \hline
    \multirow{2}{*}{} & \multirow{2}{*}{Task group}
        & \multicolumn{2}{c|}{Varying-method} & \multicolumn{2}{c|}{Varying-$k$} & \multicolumn{2}{c|}{Varying-$D$}\\
        & & Decision tree & Our method & $k=5$ & $k=10$ & $D=1$ & $D=3$\\
    \hline
    \multirow{3}{*}{Patterns-only}
        & Time/question & $28.8\pm7.0$ & $28.2\pm4.9$ & \fbox{$19.9\pm8.1$} & $24.8\pm7.0$ & $11.8\pm2.5$ & \fbox{$9.7\pm3.4$}\\
        & T-accuracy & $0.833\pm0.204$ &$0.833\pm0.118$ & $0.750\pm0.144$ & \fbox{$0.833\pm0.118$} & $0.750\pm0.083$ & $0.792\pm0.138$\\
        & TH-accuracy   & $0.708\pm0.072$ & \fbox{$0.917\pm0.083$} & $0.583\pm0.144$ & \fbox{$0.750\pm0.186$} & $0.792\pm0.072$ & $0.833\pm0.167$\\
    \hline
    \multirow{3}{*}{Memory-only}
        & Time/question & $9.8\pm3.0$ & $10.3\pm2.1$ & $13.8\pm4.2$ & \fbox{$11.5\pm5.3$} & $7.5\pm0.3$ & \fbox{$5.5\pm3.9$} \\
        & T-accuracy   & $0.542\pm0.072$ & \fbox{$0.708\pm0.138$} & \fbox{$0.667\pm0.204$}
        & $0.583\pm0.186$ & $0.667\pm0.204$ & $0.625\pm0.138$\\
        & TH-accuracy   & $0.625\pm0.217$ & \fbox{$0.917\pm0.083$} & $0.708\pm0.138$ & $0.667\pm0.138$ & $0.792\pm0.217$ & \fbox{$0.875\pm0.138$}\\
    \hline
    \multirow{3}{*}{Patterns+members}
        & Time/question & \fbox{$23.6\pm5.4$} & $25.3\pm1.6$ & \fbox{$17.1\pm3.1$} & $31.7\pm4.4$ & $13.5\pm2.4$ & $13.2\pm1.9$\\
        & T-accuracy   & $0.875\pm0.217$ & \fbox{$0.938\pm0.063$} & $0.969\pm0.054$ & $0.969\pm0.054$ & $0.906\pm0.104$ & $0.938\pm0.063$\\
        & TH-accuracy   & $0.750\pm0.088$ & \fbox{$0.844\pm0.136$} & $0.938\pm0.063$ & $0.97\pm0.054$ & $0.844\pm0.054$ & \fbox{$0.969\pm0.054$}\\
    \hline
    \multicolumn{2}{|r|}{Overall preferred} & $12.5\%$ & \fbox{$87.5\%$} & $43.8\%$ & \fbox{$56.2\%$} & $37.5\%$& \fbox{$62.5\%$}\\
    \hline
  \end{tabular}
  \caption{\label{tbl:user-study-vary-order}Summary of results from the user
    study when varying-method group goes first.  Times are in seconds, and accuracies are between $0$ and
    $1$; we report average and standard deviation over all subjects.
    Better performances (shorter times and higher accuracies) and
    stronger preferences are highlighted with box enclosures, unless
    the advantage is too small.}
\end{table*}

\subsection{Aggregate Queries Used in Experiments}~\label{app:exp-queries}
The aggregate queries we use for the MovieLens dataset has the following form: 
\newsavebox\sqlexp
\begin{lrbox}{\sqlexp}\begin{minipage}{\textwidth}
\lstset{language=SQL, basicstyle=\ttfamily, deletekeywords={year,month,action},tabsize=2}
\begin{lstlisting}[mathescape]
SELECT $\langle$ grouping attributes$\rangle$, avg(rating) as $\val$ 
FROM RatingTable 
GROUP BY $\langle$ grouping attributes$\rangle$
HAVING count(*) > 50 
ORDER BY $\val$ DESC
\end{lstlisting}
\end{minipage}\end{lrbox}
\resizebox{0.85\textwidth}{!}{\usebox\sqlexp}

The aggregate queries we use for TPC-DS related experiments 
share the same form with the example given above.

\begin{lrbox}{\sqlexp}\begin{minipage}{\textwidth}
\lstset{language=SQL, basicstyle=\ttfamily, deletekeywords={year,month,action},tabsize=2}
\begin{lstlisting}[mathescape]
SELECT $\langle$ grouping attributes$\rangle$, cast(avg(net_profit) 
        as int) as $\val$ 
FROM store_sales
GROUP BY $\langle$ grouping attributes$\rangle$
HAVING count(*) > 10 
ORDER BY $\val$ DESC
\end{lstlisting}
\end{minipage}\end{lrbox}
\resizebox{0.85\textwidth}{!}{\usebox\sqlexp}

\ansa{
\subsection{Detailed User Study Setup}\label{sec:app-userstudy}

\textbf{Dataset and queries.~~}
All data are drawn from the \emph{MovieLens} \emph{RatingTable} as
described in Section~\ref{sec:experiments}.  Queries are based on the
same aggregate query template introduced therein, with an additional
\texttt{WHERE} condition and variations in query constants and
group-by attributes across user tasks; 

\textbf{Adapted decision tree.~~}
As discussed in Section~\ref{sec:related}, no existing method suits
our problem setting.  After exploring various possibilities, we
decided to adapt the method of decision
trees~\cite{quinlan1986induction} as it offers the closest match with
our application scenarios.  The structure of a decision tree naturally
induces summaries of top-$L$ tuples in the form of predicates, which
are easier for users to interpret than other classifiers.  It is also
discriminative, as opposed to simply running clustering algorithms
over the top-$L$ tuples while ignoring low-value tuples.  Finally, it
is possible to control the complexity of the tree.  We used the
standard decision tree implementation provided by Python's
\texttt{scikit-learn} package~\cite{pedregosa2011scikit}; given $k$,
the maximum number of clusters to produce, we tune the height
parameter of the decision tree such that the number of ``positive''
leaf nodes (wherein top-$L$ tuples are the majority) as close as
possible to, but no greater than, $k$.

Note that the cluster patterns under this approach can be more complex
than ours, as they may involve non-equality comparisons and negations.
This additional complexity increases the discriminative power, but
makes the patterns more difficult for users to interpret and
internalize---a hypothesis that we shall test with our study.

\textbf{Tasks.~~}
Each study subject is asked to carry out three groups of tasks: the
\emph{varying-method group}, \emph{varying-$k$ group}, and
\emph{varying-$D$ group}.  The third first group is designed to
compare our approach and decision trees.  The last two are designed to
evaluate the utility of making parameters $k$ and $D$ in our approach
specifiable by users.\footnote{\reva{We do not evaluate the utility of
    making $L$ user-specifiable, as it should be evident that what
    ``top'' tuples mean depends on the situation---e.g., a small $L$
    means the user is interested in characterizing really high-valued
    tuples, while a larger $L$ may mean the user is interested in
    tuples whose values are ``good enough.''}}  To account for the
possible effect of users learning and getting better with our
approach, we sequence the task groups differently among study
subjects---half of them go through the sequence (varying-method,
varying-$k$, varying-$D$), while the remaining half go through
(varying-$k$, varying-$D$, varying-method).

All tasks within one group are based on the same aggregate query.
Before beginning the task group, we familiarize the subject with the
aggregate query and result as well as the tasks to perform; we also
show all query result tuples in a table, with top $L$ tuples
highlighted for convenience.  Then, we remove the table of all query
result tuples, and give the subject a series of questions, organized
into three sections, in order.  Each question asks the subject to
classify a given tuple, whose value is hidden, into one of three
categories: ``top'' (the tuple is among the top $L$ of all query
result tuples), ``high'' (the tuple has value above or at the average
among all tuples, but is outside the top $L$), and ``low'' (the tuple
has below-average value).  The three sections are based on the same
``working set'' of clusters, but differ in what information the
subject can access when answering questions:
\begin{itemize}[leftmargin=*]
\itemsep0em
\item \emph{Patterns-only}, 6 questions: When answering these
  questions, the subject can see the clusters and their associated
  patterns, but not the membership within clusters or the table of all
  query result tuples.  This section is designed to test how well the
  cluster patterns help users understand the data.  The 6 tuples to be
  classified are chosen randomly and evenly across the top, high, and
  low categories, and are ordered randomly.  We do not reveal to the
  subject how these tuples are distributed among the three categories,
  as with questions in other sections below.
\item \emph{Memory-only}, 6 questions: The subject can see neither the
  clusters or the table of all query result tuples; all questions must
  be answered from memory.  This section is designed to test the
  extent to which users can internalize the insights learned from the
  cluster patterns for later use.  The 6 tuples are chosen in the same
  way as in the patterns-only section, but we ensure that they are
  distinct from those chosen before.
\item \emph{Patterns+members}, 8 questions: The subject can see the
  clusters, their associated patterns, as well as the result tuples
  they cover; but the table of all query result tuples remains
  inaccessible.  This section is designed to test how our full-fledged
  cluster UI can help user explore data.  The 8 tuples are chosen and
  reordered randomly from the 12 tuples used in the previous two
  sections, such that 4/2/2 are from the top/high/low categories,
  respectively.
\end{itemize}

After these three sections are done, to conclude the task group, we
present two sets of clusters side-by-side: one is the working set that
the subject has been using, and the other one is obtained under a
different setting (but for the same aggregate query and same $L$) for
comparison.  We then ask the subject to choose which set of clusters
would be preferred for the tasks just performed.
\begin{itemize}[leftmargin=*]
\itemsep0em
\item For a \emph{varying-method} task group, the clusters to compare
  are produced by our approach (using \hybrid) and by the method of
  decision trees, under the same $k$ setting ($D$ does not apply to
  decision trees).
\item For a \emph{varying-$k$} task group, the clusters to compare are
  produced by our approach under two different $k$ settings, while
  other parameters remain the same.
\item For a \emph{varying-$D$} task group, the clusters to compare are
  produced by our approach under two different $D$ settings, while
  other parameters remain the same.
\end{itemize}

\textbf{Participants and assignment of tasks.~~}
There are 16 participants in total.  14 of them are graduate students
at Duke University (12 in computer science and 2 others), while the
remaining 2 are Duke undergraduates.  They have varying degrees of
knowledge about databases and SQL language, but all have some prior
experience working with tabular data and are capable of handling all
tasks in our user study.

Recall that each of the three task groups compares two sets of
clusters.  While every subject sees both sets at the end of the task
group, the questions earlier in the task group are based on one
working set chosen between the two.  There are a total of $2^3=8$
possibilities for assigning working sets to the three task groups.  We
assign two subjects to each of these $8$ possibilities.  As discussed
earlier, to account for the learning effect, we make one of these
subjects go through the sequence (varying-method, varying-$k$,
varying-$D$) and the other (varying-$k$, varying-$D$, varying-method).
Finally, we ensure that tuples used in our questions appear equal
number of times over tasks across all subjects.

\textbf{Metrics.~~}
We record the time it takes for each subject to complete each of the
three sections in each of the three task groups.  We evaluate the
accuracy of answers to the questions using the standard accuracy
measure of $\smash{\frac{TP+TN}{TP+FP+FN+TN}}$ based on confusion
matrices~\cite{fawcett2006introduction}, but we define two variants:
\emph{T-accuracy} focuses on the ability to discern the top tuples
from the rest, where ``positive'' means being in top $L$;
\emph{TH-accuracy} focuses on the ability to discern the top and high
tuples from the low ones, where ``positive'' means being in either top
or high category.
}

\ansa{
\subsection{Detailed Analysis for Learning Effect in User Study}
We give some brief conclusions on learning effect in Section~\ref{sec:user-study} in the main paper. In this subsection, we present the quantitative result within one experimental sequence (varying-method first, then varying-$k$ and varying-$D$) in Table~\ref{tbl:user-study-vary-order}. Our conclusion is that the learning effect does not have a huge impact on the leadership within each task group.

The learning effect takes place with the time for each question - since users in Table~\ref{tbl:user-study-vary-order} work with varying-method first, it takes ~20\% more time for patterns-only questions; when it comes to varying-$D$, since users in this group are already familiar with tasks, it takes slightly less time in all three tasks.
}

\subsection{Additional Related Work}\label{sec:more-related}
In addition to the papers discussed in Section~\ref{sec:related}, diversification of query results has been extensively studied in the literature for both query answering in databases and other applications \cite{Carbonell+1998, agrawal2009diversifying, GollapudiD2009, Ziegler+WWW2005, Yu+EDBT2009, DBLP:conf/edbt/GkorgkasVDN15, Xin+2006, FanWW13, ZhuGGA07, RaviRT94, Borodin+2012, DBLP:conf/kdd/AbbassiMT13, Tao09, DengFan2014, Vieira+ICDE2011, QinYC12, drosou2012disc}. 
One of the formalisms to capture both \emph{diversity and relevance} of a resultset is to balance these two objectives using a trade-off parameter $\lambda$ specified by the user. This approach, called \emph{MMR (Maximal Marginal Relevance)} aims to reduce redundancy while maintaining relevance of the chosen outputs for the input query, and was first used for re-ranking retrieved documents and in selecting appropriate passage for text summarization \cite{Carbonell+1998}.  Gollapudi and Sharma \cite{GollapudiD2009} studied three variants of the objective function (max-sum, max-min, mono)  
based on the MMR criterion. 
Deng and Fan  \cite{DengFan2014} studied the data complexity and combined complexity for these problems.
Vieira et al. \cite{Vieira+ICDE2011} conducted an experimental study of existing and new algorithms for the max-sum objective defined in \cite{GollapudiD2009} with some small modifications. 
\cut{
In particular, the objective in \cite{Vieira+ICDE2011} is as follows: given a set of elements $S$, trade-off parameter $\lambda$, size parameter $k$, weight function $w$, and distance function $d$, choose a subset $S' \subseteq S$, such that $|S'| = k$, and $S'$ maximizes $(k-1)(1 - \lambda) \sum_{t \in S'} w(t) + 2\lambda \sum_{t, t' \in S'}d(t, t')$.
}
Fraternali et al. \cite{Fraternali+2012} studied this objective for diversification of objects in a low-dimensional vector space. 
We compare results from Vieira et al. \cite{Vieira+ICDE2011} with our work in Appendix~\ref{expt:lambda}.\\

The \emph{diversity} criterion has been studied algorithmically as the \emph{facility dispersion problem} \cite{RaviRT94}. 
\cut{
place $k$ facilities on $N$ nodes such that some function of the distances between the facilities (max-min or max-avg) is maximized. Minimizing the distance function between facilities nodes gets the standard \emph{$k$-center} or \emph{facility location} problems that handle the \emph{coverage} criterion. Borodin et al.
}
\cite{Borodin+2012} studied the \emph{max-sum dispersion problem} and the \emph{max-sum diversification problem} (as in \cite{GollapudiD2009}) 
when the value of a subset of elements $w(S')$  is given by a monotone submodular function. Abbassi et al. \cite{DBLP:conf/kdd/AbbassiMT13} studied the diversity maximization of a set of points under matroid constraints. \emph{Diverse skyline} \cite{Tao09} is another related direction.\\
\cut{
where given a size constraint $k$, the goal is to select at most $k$ skyline points that best describe the whole skyline such that each points in the skyline has a close representative point among the chosen points.\\
}
Zhu et al. \cite{ZhuGGA07} proposed a ranking algorithm with applications in  text summarization and social network analysis. 
\cut{
that uses random walk on a graph, prefers elements that are similar to many other items and cover many different groups of elements, and can incorporate a pre-specified ranking as a prior knowledge. 
}
\cut{
They discuss applications of this approach in text summarization and social network analysis. Here the focus is on re-ranking all elements such that the top elements are different from each other and give a broad coverage of the whole element set. \red{Therefore, this approach may not be suitable for database queries when the user is interested in the top elements according to their values.} 
}
Vee et al. \cite{VeeSSBA08} studied the problem of computing diverse query results for non-aggregate queries in online shopping applications.
\cut{
, where the goal is to return a representative diverse set of top answers (with minimum total pairwise similarity) from all the tuples that satisfy the selection conditions entered by the user in a non-aggregate query (\eg, the user can search for different \emph{Honda cars} or different \emph{2007 Honda civic cars}). \cite{VeeSSBA08} assumes that there is a total \emph{diversity ordering} on the attributes, and aims to choose a set that minimizes the sum of \emph{a similarity measure} between all pairs of elements with respect to all possible prefixes of this diversity ordering (the diversity ordering ensures that the higher ordered attributes are varied first before varying the less preferred attributes).  
}
 In the area of Information Retrieval (IR), Zheng et al. \cite{Zheng2012} studied search result diversification. using  $\lambda$-parameterized MMR objective function  \cite{GollapudiD2009, Vieira+ICDE2011}, but their ``diversity score'' is defined as the sum (over possible topics) of product of importance of a subtopic to the input query and how much a document covers this topic.\\ 
 Chen and Li \cite{ChenLi2007} considered the problem of categorizing query answers using clusters on a navigational tree by exploiting the query history of the users when different users have diverse preferences. Other approaches include relational data summarization \cite{Zaharioudakis:2000} and Web table search  taking into account schema/instance diversity, table popularity, and redundancy \cite{Nguyen+15}.
For result summarization and exploration in databases, Gebaly et al. \cite{ElGebaly:2014} considered summarization of attributes using $*$ values to find factors that affect a binary (non-aggregate) attribute. Sarawagi explored (\eg, \cite{Sarawagi00}) sophisticated OLAP operators for helping the user visit unvisited interesting parts in a data cube.\\

\end{sloppypar}

\end{document}